\documentclass[11pt]{article}
\usepackage{authblk}
\author[1]{Yahya Sattar}
\author[2]{Yassir Jedra}
\author[3]{Robin Str\"asser}
\author[3]{Frank Allg\"ower}
\author[4,5]{\\Maryam Fazel}
\author[1]{Sarah Dean}

\affil[1]{Cornell University}
\affil[2]{Imperial College London}
\affil[3]{University of Stuttgart}
\affil[4]{University of Washington Seattle}
\affil[5]{Amazon Inc.}

\usepackage{ifthen}
\newcommand{\version}{arxiv} 
\usepackage[utf8]{inputenc} \usepackage[T1]{fontenc} 

\usepackage{fullpage}

\usepackage{url}            \usepackage{booktabs}       \usepackage{amsfonts}       \usepackage{nicefrac}       \usepackage{microtype}      \usepackage[dvipsnames]{xcolor}

\usepackage{natbib} 

\usepackage{enumitem}
\usepackage{amsthm}
\usepackage{amsmath}
\numberwithin{equation}{section}

\usepackage{thmtools}
\usepackage{thm-restate}
\usepackage{mdframed}

\usepackage{algorithm, algorithmicx,algpseudocode}
\usepackage{subcaption}

\usepackage{adjustbox}
\usepackage{tikz}
\usetikzlibrary{arrows.meta, positioning, fit, backgrounds, calc}
\usepackage{pgfplots}

\newcommand{\diag}{\textnormal{diag}}

\newcommand{\E}{\mathbb{E}}

\DeclareMathOperator*{\argmin}{argmin}

\DeclareMathOperator{\polylog}{polylog}

\DeclareMathOperator{\tr}{tr}

\DeclareMathOperator{\vecop}{\mathsf{vec}}
\DeclareMathOperator{\matop}{\mathsf{mtx}}

\newcommand{\T}{\top}

\newcommand{\EE}{\mathbb{E}}
\newcommand{\PP}{\mathbb{P}}
\newcommand{\RR}{\mathbb{R}}
\renewcommand{\P}{\mathbb{P}}
\newcommand{\util}{\tilde{u}}

\newcommand{\cN}{\mathcal{N}}
\newcommand{\cF}{\mathcal{F}}

\newcommand{\cE}{\mathcal{E}}

\newcommand{\cS}{\mathcal{S}}

\newcommand{\cU}{\mathcal{U}}
\newcommand{\cO}{\mathcal{O}}

\newcommand{\R}{\mathbb{R}}

\renewcommand{\Pr}{\mathbb{P}}

\newcommand{\norm}[1]{\lVert #1 \rVert}
\newcommand{\bignorm}[1]{\left\lVert #1 \right\rVert}

\newcommand{\Z}{\mathbb{Z}}

\newtheorem{definition}{Definition}[section] \newtheorem{theorem}{Theorem}[section] \newtheorem{proposition}{Proposition}[section]  \newtheorem{assumption}{Assumption}[section]  \newtheorem{remark}{Remark}[section]
\newtheorem{lemma}{Lemma}[section]
\newtheorem{example}{Example}[section]

\usepackage{tcolorbox}
\usepackage{xcolor}

\newcommand{\op}{\textup{op}}

\newcommand{\ubar}{\bar{u}}
\newcommand{\nn}{\nonumber}
\newcommand{\Gbar}{\bar{G}}

\newcommand{\Acal}{\mathcal{A}}
\newcommand{\Dcal}{\mathcal{D}}
\newcommand{\Scal}{\mathcal{S}}
\newcommand{\Ecal}{\mathcal{E}}
\newcommand{\Fcal}{\mathcal{F}}
\newcommand{\Mcal}{\mathcal{M}}
\newcommand{\Ucal}{\mathcal{U}}

\newcommand{\Ghat}{\widehat{G}}

\newcommand{\xtil}{\tilde{x}}

\newcommand{\tf}[1]{\left\|{#1}\right\|_{F}}
\newcommand{\tn}[1]{\left\|{#1}\right\|_{\ell_2}}

\newcommand{\opn}[1]{\left\|{#1}\right\|_{\op}}
\newcommand{\bgl}{{~\big |~}}
\newcommand{\distas}{\overset{\text{i.i.d.}}{\sim}}

\newcommand{\leqsym}[1]{\stackrel{\text{(#1)}}{\leq}}

\newcommand{\eqsym}[1]{\stackrel{\text{(#1)}}{=}}

\newcommand{\li}{\left<}
\newcommand{\ri}{\right>}
\usepackage{stackengine}[2013-10-15]
\usepackage{scalerel}
\newcommand\mysqrt[2][0pt]{\stretchrel{\sqrt{}}{\addstackgap [#1]{$\displaystyle\overline{#2}$}}}

\newcommand{\Ocal}{\mathcal{O}}
\newcommand{\vxhat}{\hat{\vx}}

\newcommand{\vu}{\bv{u}}

\newcommand{\vw}{\bv{w}}
\newcommand{\vx}{\bv{x}}
\newcommand{\vy}{\bv{y}}
\newcommand{\vz}{\bv{z}}

\newcommand{\Bcal}{\mathcal{B}}

\newcommand{\Gcal}{\mathcal{G}}

\newcommand{\Ical}{\mathcal{I}}

\newcommand{\Ncal}{\mathcal{N}}

\newcommand{\vA}{\bv{A}}
\newcommand{\vB}{\bv{B}}
\newcommand{\vC}{\bv{C}}

\newcommand{\vF}{\bv{F}}

\newcommand{\vK}{\bv{K}}
\newcommand{\vL}{\bv{L}}
\newcommand{\vM}{\bv{M}}
\newcommand{\vN}{\bv{N}}

\newcommand{\vP}{\bv{P}}
\newcommand{\vQ}{\bv{Q}}
\newcommand{\vR}{\bv{R}}

\newcommand{\Ab}{\bv{A}}

\newcommand{\vSigma}{\bvgrk{\Sigma}} 
\usepackage{amssymb}
\usepackage{mathrsfs}
\usepackage{hyperref}
\hypersetup{
pdftoolbar=true,        pdfmenubar=true,        pdffitwindow=false,     pdfstartview={FitH},    pdfnewwindow=true,      colorlinks=true,       linkcolor=blue,          citecolor=ForestGreen,        filecolor=magenta,      urlcolor=red,           breaklinks=false,
}
\usepackage{cleveref}

\date{}

\title{Learning and Control Beyond Linearity:\\ Towards a Non-asymptotic Theory for Bilinear Systems }
\begin{document}
\maketitle

\begin{abstract}
This tutorial provides a unified view of the emerging area of bilinear learning and control. 
Using linear systems as a benchmark, it explains what fundamentally changes in the bilinear settings, how recent theory addresses finite-sample learning and control, and how these ideas connect to broader themes in nonlinear control, representation
learning, and data-driven decision making. 
For learning, we emphasize tools that are particularly useful in the bilinear settings, such as one-sided Bernstein's inequality for dependent and heavy-tailed covariates, blocking arguments, and martingale concentration for input-dependent noise. 
We then apply these tools to obtain finite-sample learning guarantees for fully observed bilinear systems, partially observed bilinear systems, and linear systems with bilinear observations. 
For control, we discuss quadratic control from bilinear observations, where the classical separation principle fails, and review tractable approaches based on belief-space receding horizon control. 
We also cover stabilization of bilinear dynamics under state feedback using semi-definite programming, LMI relaxations, sum-of-squares methods, and Koopman-based lifting. 
We conclude by discussing connections to reinforcement learning and machine learning, and some open problems in combined learning and control of bilinear systems. 
\ifthenelse{\equal{\version}{cdc}}{
Due to limited space, proofs have
been omitted in this version and are available in an online version~\cite{sattar2026learning}: \url{https://sdean.website/bilinear_tutorial.html}
}{}
\end{abstract}
 \tableofcontents
\newpage

\section{Introduction}\label{sec:intro}
Learning and control of linear dynamical systems (time-invariant or time-varying) form the backbone of modern control theory. 
The recent surge of interest in data-driven
control and non-asymptotic analysis of linear dynamical systems~\citep{dean2019sample,simchowitz2018learning,oymak2021revisiting,sarkar2019near,tsiamis2019finite}, motivated by problems such as robot control in unknown environments, has also provided a key benchmark for reinforcement learning with continuous state and action spaces. This line of work has produced powerful tools for non-asymptotic system identification~\citep{dean2019sample,simchowitz2018learning,oymak2021revisiting,sun2022finite}, uncertainty quantification~\citep{mania2019certainty}, regret analysis~\citep{abbasi2011regret,lale2020explore,abeille2020efficient}, and controller synthesis~\citep{abbasi2019model,faradonbeh2020optimism,hazan2020nonstochastic}. However, much of what makes the linear setting analytically tractable depends on structural simplifications such as easy exploration, decoupling of estimation and control via the separation principle, and closed-form optimal controllers. 
These simplifications often fail in real-world systems. 
As a result, while the linear case remains foundational, it offers only a limited view of the challenges that arise in learning and control beyond linearity.

Bilinear dynamical systems provide a mathematically structured yet expressive framework that bridges classical linear systems and fully nonlinear dynamics. 
They arise naturally in a variety of domains from engineering, biology~\citep{bilinearbook}, quantum mechanical processes~\citep{pardalos2010optimization}, recommendation systems~\citep{koren2021advances} to sequence modeling~\citep{gu2023mamba}.
Moreover, bilinear systems approximate a much broader class of nonlinear systems via Carleman linearization~\citep{kowalski1991nonlinear} or Koopman canonical transform~\citep{surana2016koopman,goswami2017bilinear, bruder2021advantages,strasser2026overview} of control-affine nonlinear systems~\citep{svoronos1980bilinear,Lo1975bilinear}.
Bilinear systems exhibit several phenomena that do not appear in the linear setting. 
In particular, fundamental system properties such as stability, stabilizability, controllability, and observability can depend on the input sequence~\citep{sattar2022finite, sattar2025finite}. 

We will consider several canonical bilinear models, including fully observed bilinear state-space systems~\citep{sattar2022finite}, partially observed bilinear systems~\citep{sattar2025finite}, and linear systems with bilinear observation models~\citep{sattar2024learning,choi2025explore}. 
Across these formulations, we will provide a self-contained exposition of non-asymptotic learning~\citep{sattar2022finite,sattar2025finite,sattar2024learning,choi2025explore,chatzikiriakos2026endtoend} and control~\citep{chatzikiriakos2026endtoend,sattar2025sub,cao2026dual,strasser2023robust,strasser2023control,strasser2025koopman,strasser2026safedmd,strasser2025sos,strasser2025performance}. 
It is our ambition that this tutorial serves as a comprehensive reference for new researchers as to what the state-of-the-art techniques are in the area of non-asymptotic analysis of bilinear systems. 
In particular, we aim to provide a self-contained exposition reviewing the main results along with the most important proof ideas. An important goal is to make the presentation accessible to a wider audience, without requiring prior background on bilinear systems theory. 
Lastly, we strive to establish a sound theoretical understanding of learning and control problems beyond linear dynamical systems. 
We aim to achieve this by revisiting the least squares theory for bilinear systems, introducing new tools, such as one-sided concentration approaches to persistence of excitation for heavy-tailed covariates, and input-dependent Martingale bounds for controlling error terms. 
Besides technical goals, this tutorial aims to establish a common language between control theorists, machine learning theorists, and statisticians; to push the boundary of non-asymptotic learning and control beyond linear dynamics.

\subsection{Notation}

Maxima (resp.\ minima) of two numbers $a,b\in \R$ are denoted by $a\vee b =\max(a,b)$ ($a\wedge b = \min(a,b)$). For two sequences $\{a_t\}_{t\in \Z}$ and $\{b_t\}_{t\in \Z}$, we introduce the shorthand $a_t \lesssim b_t$ if there exists a universal constant $C>0$ and an integer $t_0$ such that $a_t \leq C b_t$ for every $t \geq t_0$. 
For an integer $N$, we define the shorthand $[N] := \{1,\dots,N\}$. 
Expectation (resp.\ probability) with respect to all the randomness of the underlying probability space is denoted by $\EE$ (resp.\ $\PP$).

The Euclidean norm on $\mathbb{R}^{d}$ is denoted $\|\cdot\|_{\ell_2}$,
and the unit sphere in $\R^d$ is denoted $\mathcal{S}^{d-1}$. 
The standard inner product on $\R^{d}$ is denoted $\langle\cdot,\cdot\rangle$. $\rho(X)$, $\|X\|_\op$, and $\tf{X}$ denote the spectral radius, spectral norm, and Frobenius norm of a matrix $X {\in} \R^{d \times d}$, respectively. $(v)_i$ denotes the $i$-th element of a vector $v {\in  \R^d}$. 
For a positive definite matrix $M \in \R^{d \times d}$, the Mahalanobis norm of a vector $v \in \R^d$ is given by $\norm{v}_{M} = \sqrt{v^\T M v}$. 
We use $\vecop(\cdot): \R^{d_1 \times d_2} \mapsto \R^{d_1 d_2}$ to denote the vectorization operator, which transforms a matrix into a vector, whereas $\matop(\cdot): \R^{d_1 d_2} \mapsto \R^{d_1 \times d_2}$ denotes its inverse operator, which transforms a vector into a matrix. The transpose of a matrix $M$ is denoted by $M^\T$ and $\tr M $ denotes its trace. 
For a matrix $M \in \R^{d_1 \times d_2}$, we order its singular values $\sigma_{1}(M),\dots,\sigma_{d_1 \wedge d_2}(M)$ in descending order by magnitude. 
We also write $\opn{M}$ for its largest singular value $\opn{M} \triangleq \sigma_1(M)$. 
To not carry dimensional notation, we will also use $\sigma_{\min}(M)$ for the smallest nonzero singular value.
For square matrices $M\in \R^{d\times d}$ with real eigenvalues, we similarly order the eigenvalues of $M$ in descending order as $\lambda_{1}(M),\dots,\lambda_{d}(M)$. 
In this case, $\lambda_{\min}(M)$ will also be used to denote the minimum (possibly zero) eigenvalue of $M$.
For two symmetric matrices $M, N$, we write $M \succ N$ ($M\succeq N)$ if $M-N$ is positive (semi-)definite. 
Given a vector $u \in \RR^{d_u}$ and matrices $M_0, M_1, \dots, M_{d_u} \in \RR^{d_1 \times d_2}$, we write $u \circ M := M_0 + \sum_{k=1}^{d_u} u_k M_k$.
We use $\tilde{\Ocal}(\cdot)$ and $\tilde{\Omega}(\cdot)$ to hide constants and logarithmic factors.  
Finally, $\otimes$ denote the Kronecker product.

\subsection{Setting}\label{subsec:setting}

In this tutorial, we consider the learning and control of dynamical systems that are governed by the state-space equations of the form
\begin{equation}\label{eqn:general_dynamics}
    \begin{aligned} 
        x_{t+1} &= f_{\theta}(x_t, u_t) + w_t, \\
        y_t &= g_{\theta}(x_t, u_t) + v_t,
    \end{aligned}
\end{equation}
where $x_t\in \R^{d_x}$, $u_t \in \R^{d_u}$, $y_t \in \R^{d_y}$, $w_t \in \R^{d_x}$, and $v_t \in \R^{d_y}$ denote the state, input, output, process noise, and measurement noise at time $t \geq 0$, respectively. The functions $(f_\theta,g_\theta)$ are parameterized by $\theta \in \R^{d_{\theta}}$ which may be unknown. 
The goal of this tutorial is to learn and control the dynamical systems of the form~\eqref{eqn:general_dynamics} from a single input-output trajectory $\lbrace(u_t,y_t)\rbrace_{t=0}^T$. 
Without loss of generality, we consider $x_0 = 0$ throughout the manuscript. 
The noise processes $\lbrace w_t \rbrace_{t\ge0}$ and $\lbrace v_t \rbrace_{t\ge 0}$ are assumed to be sequences of independent, zero-mean, $\sigma^2$-sub-Gaussian random vectors taking values in $\R^{d_x}$ and $\R^{d_y}$, respectively, for some variance proxy parameter $\sigma > 0$ (i.e., $\E[w_t]= 0$, and $\E[\exp(\li z,w_t \ri)] \leq \exp(\sigma^2 \tn{z}^2/2)$ for all $z \in \R^{d_x}$).

Throughout this tutorial, we focus on dynamical systems of the form \eqref{eqn:general_dynamics}, where at least one of the state-space functions $(f_\theta,g_\theta)$ is bilinear in $(x_t,u_t)$. 
Bilinearity arises when there is inherent interaction, not merely linear superposition, between the state and input.
Bilinear dynamics often arise from physical conservation laws when the control input corresponds to a flow parameter.
\begin{example}[Variable conductance]
    Consider a simple RC circuit with a capacitor $C$ connected in series to a variable resistor $R(t)$ and a fixed voltage source $v_\mathrm{in}$. 
    Let the capacitor voltage be the state $x(t)$ and defined the input  $u(t)=1/R(t)$ as variable conductance. 
    Then, Kirchhoff's Current Law states that
$C\dot x(t)
=
u(t)(v_{\mathrm{in}}-x(t))$. 
The forward Euler discretizations with step size $h$ results in the bilinear dynamics
\begin{equation*}
x_{t+1}
=
 x_t -\frac{h}{C} u_t x_t + \frac{hv_{\mathrm{in}}}{C} u_t.
\end{equation*}
\end{example}
Bilinear state evolution also appears broadly beyond physical dynamical systems.
\begin{example}[HMM posterior]
    Posterior computations in hidden Markov models (HMMs) can be written
as bilinear recursions. 
Consider an HMM with $S$ discrete states and $O$ possible observations, with state transition matrix $T\in\mathbb R^{S\times S}$ and emission matrix $E\in\mathbb R^{S\times O}$.
Let $x_t\in\mathbb R^S$ denote the vector of unnormalized forward probabilities $(x_t)_i = p(s_t=i,o_{0:t})$, 
let $y_t\in\mathbb R^O$ denote the vector of unnormalized observation probabilities $(y_{t})_o = \Pr(o_{t+1}=o, o_{0:t})$,
and let $u_t=e_{o_t}\in\mathbb R^O$ be a one-hot encoding of the observation
at time $t$. 
The forward recursion is bilinear:
\begin{equation*}
    x_{t+1} = \sum_{k=1}^{O} (u_{t+1})_k
        T\operatorname{diag}(E_{:,k})x_t,
    \quad 
    y_{t}
    =
    E^\top T x_t\:.
\end{equation*}
\end{example}
These example both motivate the general form for a multi-input-multi-output bilinear dynamical system.
\begin{definition}[Bilinear dynamical systems] 
A dynamical system of particular interest to us which is a special case of \eqref{eqn:general_dynamics} is a partially-observed bilinear dynamical systems with linear state observation (BDS-PO). 
It has the state-space representation
\begin{equation} \label{eqn:BDS-PO}
    \begin{aligned}
         x_{t+1} &= A_0 x_t + \sum_{k=1}^{d_u} (u_t)_k A_k x_t + B u_t +  w_t, \\
         y_t &= C x_t + D u_t + v_t,
    \end{aligned}
    \tag{BDS-PO}
\end{equation}
where $A_0, A_1, \dots, A_{d_u}$ denote the $d_u+1$ state matrices taking values in $\R^{d_x \times d_x}$, $B \in \R^{d_x \times d_u}$ denotes the input matrix, $C \in \R^{d_y \times d_x}$, and $D \in \R^{d_y \times d_u}$ denote the observation/measurement matrices.
All of these matrices constitute the parameter $\theta=(A_0,...,A_{d_u},B,C,D)$.
We are interested both in learning the input-output behavior of unknown bilinear systems from a single trajectory of input-output samples $\lbrace(u_t,y_t)\rbrace_{t=0}^T$
and in feedback control design.
\end{definition}

We may also expect bilinearity to arise in the measurement equation.
In general, bilinear observations occur when the measurement involves an interaction between the state and the input. 
\begin{example}[Thermal observation]
    Consider thermal observation of a circuit whose state contains the current $I_t$, as it can arise from parasitic inductance, and suppose the applied
voltage $u_t$ is the input. 
The electrical power $I_t u_t$ is dissipated as heat, so a thermal measurement may be modeled as proportional to the dissipated power, i.e.,
\begin{equation*}
y_t =\alpha I_t u_t = C x_t u_t,
\end{equation*}
where $C$ both selects $I_t$ from the overall state $x_t$ and includes the constant $\alpha$.
This thermal observation is bilinear in the state and input.
\end{example}

Bilinear observations may also occur more broadly, from personalized recommendation systems to quantum mechanical systems.
\begin{example}[Preference dynamics]
    The simplest model of personalized recommendation posits that $y=u^\top x+$noise, where $y$ is an observed preference signal like a rating, e.g., 1-5 stars, $u$ is a latent factor for a recommended item, e.g., a movie, and $x$ is a latent factor for a user to whom the item was recommended.
Supposing that user preferences are actually dynamic gives rise to a dynamical system of the form  \eqref{eqn:general_dynamics} in which the measurement equation is bilinear.
\end{example}
 
\newpage
\part{Learning}\label{part:learning}
\section{The Task of Learning Bilinear Systems}

In this section, we consider the task of learning the input-output behavior of unknown bilinear systems.
We first cast the problem as a linear regression problem with nonlinear features, approximation bias, and intricate dependencies. 
We then present the least-squares solution with decompositions into relevant error terms and give an overview of the prototypical analysis strategy.
Finally, we conclude with
notes discussing existing ideas, alternative decompositions, and comparison with linear systems. 

\subsection{Problem Formulation} \label{subsec:bilinear_feature_map}

Our objective is to identify the parameter $\theta$, describing the state-space functions $(f_\theta,g_\theta)$ which can be in particular \emph{bilinear} in $(x_t, u_t)$. 
The approach we aim to present, and which naturally applies to various families of bilinear systems, consists of reducing the dynamics to a linear form with an approximation bias of the form
\begin{align}\label{eqn:feature_map}
    \forall~ t \ge 0, \qquad y_t = G\,\phi_t + z_t + \eta_t, 
\end{align}
where $\{y_t\}_{t \geq 0}$ is a sequence of outputs (when partially observed) or states (when fully observed) taking values in $\R^{d_y}$. 
$\{\phi_t\}_{t\geq 0}$ is a sequence of dependent, nonlinear features, constructed from inputs and/or states, taking values in $\R^{d_\phi}$. 
$\{z_t\}_{t \geq 0}$ is a sequence of approximation (or truncation) biases, that are intricately dependent on past states and inputs, taking values in $\R^{d_y}$. 
$\{\eta_t\}_{t \geq 0}$ is a sequence of dependent, correlated, and possibly heavy-tailed noise processes, taking values in $\R^{d_y}$.  
$G \in \RR^{d_y \times d_\phi}$ is the feature-to-output matrix and is a-priori unknown.

\begin{remark}
    The matrix $G$ encodes information about the unknown parameter $\theta$ which, as we will see, is possible to extract using subspace methods such as Ho-Kalman \citep{ho1966effective, oymak2021revisiting}. 
    Therefore, we will mainly focus on the task of identifying $G$ and the challenges that come with that for bilinear systems.
\end{remark}

\begin{remark}
    The linear form \eqref{eqn:feature_map} emerges from expressing $y_t$ using past states and inputs. This is achieved by unrolling the dynamics up to a truncation length $\tau>0$ as
    \ifthenelse{\equal{\version}{arxiv}}{
    \begin{align*}
        y_t &= g_\theta(f_\theta(f_\theta( \cdots (f_\theta( x_{t-\tau}, u_{t- \tau})+ w_{t-\tau},u_{t-\tau +1}) \cdots  ) + w_{t-2}, u_{t-1})+ w_{t-1}, u_t) + v_t. 
    \end{align*}
    }{
    \begin{align*}
        y_t &= g_\theta(f_\theta(f_\theta( \cdots (f_\theta( x_{t-\tau}, u_{t- \tau})+ w_{t-\tau},u_{t-\tau +1}) \cdots  ) \nn \\
        &\qquad\qquad\qquad\qquad\quad+ w_{t-2}, u_{t-1})+ w_{t-1}, u_t) + v_t.
    \end{align*}
    }
\end{remark}

\begin{remark} The reader may notice that the form \eqref{eqn:feature_map} is quite similar to that often appearing in the study of \emph{linear} dynamical systems~\citep{ljung1999system, simchowitz2018learning, faradonbeh2018finite, sarkar2019near, sarkar2021finite, tsiamis2019finite, oymak2018non, ziemann2023tutorial}. 
The key difference lies in the statistical dependencies that arise between the outputs $\{y_t\}_{t \geq 0}$, features $\{\phi_t\}_{t\geq 0}$, approximation biases $\{z_t\}_{t \geq 0}$, and the noise sequence $\{\eta_t\}_{t \geq 0}$. As we shall see, these dependencies pose non-trivial challenges for the analysis of identification procedures for bilinear systems.
\end{remark}

To illustrate the intricate dependencies arising when performing a reduction of bilinear systems to a linear form with approximation bias, we consider the example of bilinear dynamical systems with partial linear observations \eqref{eqn:BDS-PO}.

\begin{example}[Bilinear dynamical systems]
Fix a history length $\tau>0$. It can be shown that the state-space representation \eqref{eqn:BDS-PO} can be alternately represented as the non-linear input-output map
\ifthenelse{\equal{\version}{arxiv}}{
\begin{equation}
\begin{aligned}\label{eqn:feature_map_BDS_PO}
    y_t &=  G \,\begin{bmatrix} u_{t} \\
 u_{t-1} \\ 
 \util_{t-1} \otimes u_{t-2} \\ 
 \util_{t-1} \otimes \util_{t-2} \otimes u_{t-3} \\ \vdots \\ 
 \util_{t-1} \otimes \util_{t-2} \otimes \cdots \otimes u_{t-\tau+1} 
 \end{bmatrix} +  C  \left( \prod_{\ell = 1}^{\tau-1} (u_{t-\ell}\circ A )  \right) x_{t-\tau+1} +  F_t \, \omega_t, 
\end{aligned}
\end{equation}
}{
\begin{equation}
\begin{aligned}\label{eqn:feature_map_BDS_PO}
    y_t &=  G \,\begin{bmatrix} u_{t} \\
 u_{t-1} \\ 
 \util_{t-1} \otimes u_{t-2} \\ 
 \util_{t-1} \otimes \util_{t-2} \otimes u_{t-3} \\ \vdots \\ 
 \util_{t-1} \otimes \util_{t-2} \otimes \cdots \otimes u_{t-\tau+1} 
 \end{bmatrix}  \\
 &+  C  \left( \prod_{\ell = 1}^{\tau-1} (u_{t-\ell}\circ A )  \right) x_{t-\tau+1} +  F_t \, \omega_t, 
\end{aligned}
\end{equation}
}
where we define,
\begin{align}
    \util_t := \begin{bmatrix} 1 \\ u_t \end{bmatrix}, \quad \omega_t &:= \begin{bmatrix} v_t^\T & w_{t-1}^\T & \cdots & 
  w_{t-\tau+1}^\T \end{bmatrix}^\T. \label{eqn:util_omega_def}
\end{align}
The precise form of the feature-to-output matrix $G$ and the error matrix $F_t$ are deferred to \textsection\ref{subsec:sysid_partial_bds}. 
\end{example}

In the first part of this tutorial, we aim to provide the key technical tools that one needs to solve the following identification problem:  Given a single trajectory of the feature-output pairs $\lbrace \phi_t, y_t \rbrace_{t=1}^{T}$ obeying \eqref{eqn:feature_map}, construct an estimate $\widehat{G}$ of $G$ such that for a prescribed level of confidence $\delta \in (0,1)$, 
\begin{align}
    \PP\left(\opn{\widehat{G} - G} \le \varepsilon \right) \ge 1 - \delta,
\end{align}
where $\varepsilon$ corresponds to the estimation error rate and typically would depend on the sample size $T$, the confidence level $\delta$, the dimensions $d_y$ and $d_\phi$, and the properties of the underlying bilinear system. 
In general, under suitable noise assumptions on the underlying dynamical system, one expects the estimation error rate to satisfy
\begin{align}\label{eq:err_rate}
    \varepsilon \propto  C_{\rm sys} \times  \sqrt{\frac{\textup{problem dimension} + \log(1/\delta)}{\textup{sample size}}}, 
\end{align}
where $C_{\rm sys}$ denotes the system dependent constants. Furthermore, as we shall see next, an error rate of the above form \eqref{eq:err_rate} will only be possible provided the sample size is sufficiently large, i.e., verifying a condition of the form
\begin{align}\label{eq:sample_size}
    T \gtrsim C_{\rm sys} \times (\textup{problem dimension} + \log(1/\delta)).
\end{align}
This is similar to what one would expect for learning linear dynamical systems \citep{ziemann2023tutorial}, except that an understanding of the optimal dependencies on the underlying system properties and dimensions are far less understood for bilinear systems. Therefore, our focus in this section will be on how to obtain rates that are at least reasonable in dimension dependence and somewhat tight in terms of $\delta$ and the sample size $T$.

\subsection{Least-Squares Regression Revisited}\label{subsec:LSE}

Given the linear form \eqref{eqn:feature_map}, a natural choice for estimating $G$ using a single finite trajectory of feature-output pairs $\lbrace \phi_t, y_t \rbrace_{t=1}^{T}$ is that of the Least-Squares Estimator (LSE), which is defined as
\begin{align} \label{eqn:ERM_Ghat}
	\widehat{G}  \in \argmin_{G \in \R^{d_y \times d_{\phi}}} \sum_{t=1}^{T} \tn{y_t - G \phi_{t}}^2. 
\end{align}
The (minimum norm) LSE of $G$ admits the closed form  
\begin{align}
    \widehat{G} := \left( \sum_{t=1}^T   y_t \phi_{t}^\top \right) \left(\sum_{t=1}^{T} \phi_{t} \phi_{t}^\top \right)^{\dagger},  
\end{align}
where $(\,\cdot\,)^\dagger$ denotes the Moore–Penrose inverse.
An important consequence of this closed form, when $\sum_{t=1}^{T} \phi_{t} \phi_{t}^\top \succ 0$, is that we can express the estimation error as
\begin{align}\label{eq:est error}
    \widehat{G} - G &= \left( \sum_{t=1}^T \left(z_t  +  \eta_t\right) \phi_t ^\top \right) \left(\sum_{t=1}^{T} \phi_{t} \phi_{t}^\top \right)^{-1}.
\end{align}
To analyze the estimation error \eqref{eq:est error}, a natural approach is to split it first into two terms, one that depends on the approximation biases, and the one that depends on the noise process.
In particular,
\ifthenelse{\equal{\version}{arxiv}}{
\begin{equation}
\begin{aligned}\label{eqn:estimation_error}
    \Vert \widehat{G} - G \Vert_\op &\le  \underbrace{\left\Vert \left( \sum_{t=1}^T  \eta_t \phi_t ^\top \right) \left(\sum_{t=1}^{T} \phi_{t} \phi_{t}^\top \right)^{-1}\right\Vert_{\op}}_{E_1}  + \underbrace{\left\Vert \left( \sum_{t=1}^T z_t  \phi_t ^\top \right) \left(\sum_{t=1}^{T} \phi_{t} \phi_{t}^\top \right)^{-1}\right\Vert_{\op}}_{E_2}, 
\end{aligned}
\end{equation}
where $E_1$ denotes the error due to noise, and $E_2$ denotes the error due to truncation.
}{
\begin{equation}
\begin{aligned}\label{eqn:estimation_error}
    \Vert \widehat{G} - G \Vert_\op &\le  E_1  + E_2 ,
\end{aligned}
\end{equation}~where
\begin{align*}
    E_1 &\triangleq  \underbrace{\left\Vert \left( \sum_{t=1}^T  \eta_t \phi_t ^\top \right) \left(\sum_{t=1}^{T} \phi_{t} \phi_{t}^\top \right)^{-1}\right\Vert_{\op}}_{\textup{Error due to noise}}, \\
     E_2  & \triangleq  \underbrace{\left\Vert \left( \sum_{t=1}^T z_t  \phi_t ^\top \right) \left(\sum_{t=1}^{T} \phi_{t} \phi_{t}^\top \right)^{-1}\right\Vert_{\op}}_{\textup{Error due to truncation}}. 
\end{align*}
}
To proceed, we further decompose the noise error and the truncation error as
\ifthenelse{\equal{\version}{arxiv}}{
\begin{equation}
\begin{aligned}\label{eqn:estimation_error_decomposition} 
    E_1 & \le \opn{\left(\sum_{t=1}^{T} \phi_{t} \phi_{t}^\top \right)^{-{1}/{2}}} \opn{\left(\sum_{t=1}^T \eta_t  \phi_t ^\top \right)\left(\sum_{t=1}^{T} \phi_{t} \phi_{t}^\top \right)^{-{1}/{2}}}, \\
    E_2  & \le \opn{\left(\sum_{t=1}^{T} \phi_{t} \phi_{t}^\top \right)^{-1}} \opn{\sum_{t=1}^T z_t  \phi_t ^\top}.
\end{aligned}
\end{equation}
}{
\begin{equation}
\begin{aligned}\label{eqn:estimation_error_decomposition} 
    E_1 & \le \opn{\left(\sum_{t=1}^{T} \phi_{t} \phi_{t}^\top \right)^{\!\!\!\!-\frac{1}{2}}} \opn{\left(\sum_{t=1}^T \eta_t  \phi_t ^\top \right)\left(\sum_{t=1}^{T} \phi_{t} \phi_{t}^\top \right)^{\!\!\!\!-\frac{1}{2}}}, \\
    E_2  & \le \opn{\left(\sum_{t=1}^{T} \phi_{t} \phi_{t}^\top \right)^{\!\!\!-1}} \opn{\sum_{t=1}^T z_t  \phi_t ^\top}.
\end{aligned}
\end{equation}
}
In the decompositions \eqref{eqn:estimation_error_decomposition}, we see that there are three terms that one needs to analyze in order to obtain a finite-time guarantee on the least-squares estimator $\widehat{G}$:
\begin{itemize}
    \item [(1)] Establishing persistence of exciting by upper bounding $\opn{\left(\sum_{t=1}^{T} \phi_{t} \phi_{t}^\top \right)^{-1}}$, which equivalently amounts to lower bounding $\lambda_{\min}\left(\sum_{t=1}^T \phi_t \phi_t^\top\right)$.
    \item[(2)] Controlling the truncation bias term $\opn{  \sum_{t=1}^T z_t  \phi_t ^\top}$ by providing an upper bound.
    \item [(3)] Controlling the self-normalized martingale term $\opn{ \left( \sum_{t=1}^T  \eta_t \phi_t ^\top \right) \left(\sum_{t=1}^{T} \phi_{t} \phi_{t}^\top \right)^{-1/2}}$ by providing an upper bound.
\end{itemize} 

The purpose of the first part of this paper is precisely to provide the tools for the analysis of these three terms, which will be the subject of the subsequent sections. Again, one may recognize that the error decomposition in \eqref{eqn:estimation_error_decomposition} is very similar to that used for linear dynamical systems \citep{ziemann2022single}.
Here, we want to emphasize that the major distinction lies in the statistical dependencies that arise between  $\lbrace \phi_t \rbrace_{t\ge 1}$, $\lbrace \eta_t \rbrace_{t\ge 1}$, and $\lbrace z_t \rbrace_{t\ge 1}$. 
What is perhaps not usual is the decomposition of the error term due to truncation, which is typically, for partially observed linear dynamical systems, upper bounded in a similar fashion to the error term due to noise as
\ifthenelse{\equal{\version}{arxiv}}{
\begin{align}
     E_2 & \le \opn{\left(\sum_{t=1}^{T} \phi_{t} \phi_{t}^\top \right)^{-{1}/{2}}} \opn{\left(\sum_{t=1}^T z_t  \phi_t ^\top \right)\left(\sum_{t=1}^{T} \phi_{t} \phi_{t}^\top \right)^{-{1}/{2}}}, \nn \\
     &\le \opn{\left(\sum_{t=1}^{T} \phi_{t} \phi_{t}^\top \right)^{-{1}/{2}}} \cdot   \left(\max_{t \in [T]} \tn{z_t} \sqrt{T d_\phi} \right). 
\end{align}
}{
\begin{align}
     E_2 & \le \opn{\left(\sum_{t=1}^{T} \phi_{t} \phi_{t}^\top \right)^{\!\!\!\!-\frac{1}{2}}} \opn{\left(\sum_{t=1}^T z_t  \phi_t ^\top \right)\left(\sum_{t=1}^{T} \phi_{t} \phi_{t}^\top \right)^{\!\!\!\!-\frac{1}{2}}} \nn   \\
      & \le \opn{\left(\sum_{t=1}^{T} \phi_{t} \phi_{t}^\top \right)^{\!\!\!\!-\frac{1}{2}}} \cdot   \left(\max_{t \in [T]} \tn{z_t} \sqrt{T d_\phi} \right). \nn
\end{align}
}
This approach, however, does not always work for bilinear systems and may yield vacuous bounds, as we discuss in more detail in \textsection\ref{sec:sysid}.

\subsection{Overview}
The remainder of Part~\ref{part:learning} is organized as follows: In \textsection\ref{sec:prels}, we will provide some of the key probabilistic tools that we will rely on for the subsequent analysis. In \textsection\ref{sec:onesided}, we provide a a generic one-sided approach for establishing a persistence of excitation with heavy-tailed covariates. 
Finally, in \textsection\ref{sec:sysid}, we specialize the previously presented tools to the various classes of bilinear systems to obtain finite-sample bounds.

\subsection{Notes}

The approach of unrolling the dynamics to obtain a linear form with approximation bias has been extensively used in the study  linear dynamical systems, particularly under partial observability \citep{oymak2021revisiting, sarkar2021finite, he2026finite}. For bilinear systems, the same approach applies  with the caveat that the resulting features are nonlinear, and have complex dependencies with the bias and noise terms \citep{berk2012identification, sattar2022finite,  sattar2024learning, sattar2025finite, chatzikiriakos2026endtoend}.   

Least-squares estimation for system identification has a rich history \citep{ljung1999system}. The decompositions that we presented have been recently analyzed quite extensively \citep{simchowitz2018learning, faradonbeh2018finite, sarkar2019near, sarkar2021finite, tsiamis2019finite, oymak2018non, ziemann2023tutorial}. The decomposition of the statistical estimation error into a self-normalized term and a persistence of excitation term is due to \cite{lai1982least, lai1983asymptotic} where they established the asymptotic properties of least squares for linear system identification. \cite{sarkar2019near} revisited this decomposition using modern concentration tools to obtain finite sample size bounds. \cite{simchowitz2018learning} was first to establish the so-called learning without mixing guarantees, i.e., finite-sample guarantees that do not degrade with the stability radius, by replacing mixing-time arguments with a block-martingale small-ball condition that lower bounds the empirical Gram matrix. 
Most of these results focus on analyzing the estimation error in the operator norm. Recently, \cite{zheng2026near, zhou2026clt} establish sharper finite sample bounds for the Frobenius norm error. In particular, \cite{zhou2026clt} rely on a different decomposition than the ones typically used prior work and presented in this tutorial, and this is precisely what leads to their refined analysis.

 \section{Concentrations, Coverings, and Martingales}\label{sec:prels}
Before we present our main analysis tools for learning bilinear systems, we present a few preliminary concentration and covering tools that are frequently used in our analysis.

\subsection{Concentration Inequalities}

Concentration inequalities are powerful tools that allow us to derive quantitative bounds on the deviation of random quantities around their expected values \citep{boucheron2003concentration}. Therefore, they are extremely useful for deriving non-asymptotic bounds for system identification, notably for linear dynamical systems \citep{matni2019tutorial, ziemann2023tutorial}. There is a plethora of such inequalities with various strengths and weaknesses and our aim is not to cover all of these, nor to explain how to derive them. Instead, we only recall some of these that happen to be useful for our exposition. 

First, we present below a one-sided Bernstein's inequality (Proposition 2.14 in \cite{wainwright2019high}). This inequality characterizes the tail probability of the deviation of a sum of random variables bounded from above by their respective variances.
    
\begin{proposition}[One-sided Bernstein's inequality] \label{thrm:onesided_bernstein}
  Let $X_1, \dots, X_n$ be independent random variables taking values in $\RR$ and satisfying for all $i \in [n]$, $X_i \le b$ almost surely. We have for all $\varepsilon > 0$, 
  \ifthenelse{\equal{\version}{arxiv}}{
  \begin{align*}
      \PP \left( \sum_{i=1}^n \left( X_i - \E[X_i]\right) \geq  n\varepsilon\right) \leq \exp \left( \frac{-n \varepsilon^2}{\frac{2}{n}\sum_{i=1}^n \E[X_i^2] + \frac{2b\varepsilon}{3}}\right). 
  \end{align*}
  }{
  \begin{align*}
      \PP \left( \sum_{i=1}^n \left( X_i {-} \E[X_i]\right) {\geq}  n\varepsilon\right) {\leq} \exp \left( \frac{-n \varepsilon^2}{\frac{2}{n}\sum_{i=1}^n \E[X_i^2] {+} \frac{2b\varepsilon}{3}}\right). 
  \end{align*}
  }
\end{proposition}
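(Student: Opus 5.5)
We follow the Chernoff (Cramér) method, using the one-sided bound $X_i \le b$ to control the moment generating function of each centered summand from above. Write $Y_i := X_i - \E[X_i]$ and fix $\lambda > 0$. Markov's inequality and independence give
\begin{align*}
\PP\left(\sum_{i=1}^n Y_i \ge n\varepsilon\right) \le e^{-\lambda n \varepsilon} \prod_{i=1}^n \E\left[e^{\lambda Y_i}\right],
\end{align*}
so it suffices to bound each factor $\E[e^{\lambda X_i}]e^{-\lambda\E[X_i]}$.

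The key step is the scalar inequality $e^{x} \le 1 + x + \frac{x^2}{2}\,\psi(x_+)$. Here $\psi(s) := 2(e^s - 1 - s)/s^2$, and we use two facts about it: $\psi$ is nondecreasing, and for $x \le 0$ one has $e^x \le 1 + x + x^2/2$. For $x = \lambda X_i \le \lambda b$ (assume $b>0$; otherwise the tail bound is easier), monotonicity gives
\begin{align*}
e^{\lambda X_i} \le 1 + \lambda X_i + \frac{\lambda^2 X_i^2}{2}\psi(\lambda b).
\end{align*}
Taking expectations and using $1+u \le e^u$ then yields
\begin{align*}
\E[e^{\lambda X_i}] \le \exp\left(\lambda \E[X_i] + \frac{\lambda^2 \E[X_i^2]}{2}\psi(\lambda b)\right).
\end{align*}
Note that we use the raw second moment $\E[X_i^2]$ rather than the variance. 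This is consistent with the statement and avoids having to centre before applying the upper bound. The mean terms cancel against $e^{-\lambda \E[X_i]}$.

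Next we bound $\psi$. For $0 < \lambda b < 3$, the series comparison $k! \ge 2\cdot 3^{k-2}$ gives $\psi(\lambda b) \le (1 - \lambda b/3)^{-1}$. Writing $v := \frac1n \sum_i \E[X_i^2]$, the tail is therefore at most
\begin{align*}
\exp\left(-\lambda n\varepsilon + \frac{n\lambda^2 v}{2(1-\lambda b/3)}\right).
\end{align*}
We choose $\lambda = \varepsilon/(v + b\varepsilon/3)$, which satisfies $\lambda b < 3$. Substituting, the exponent simplifies: $1 - \lambda b/3 = v/(v+b\varepsilon/3)$, and the exponent becomes $-\frac{n\varepsilon^2}{2(v + b\varepsilon/3)} = -\frac{n\varepsilon^2}{2v + 2b\varepsilon/3}$. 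This matches the stated bound, since $\frac{2}{n}\sum_i \E[X_i^2] = 2v$.

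The main obstacle is the justification of the scalar inequality for all $x \le \lambda b$, including large negative $x$. Here one needs that $\psi$ is nondecreasing on all of $\R$, with $\psi(0)=1$. We plan to prove this by writing $\psi(s) = 2\int_0^1 (1-t)e^{st}\,dt$, which is manifestly increasing in $s$. This integral representation also covers the case $x \le 0$ directly, so no separate treatment is needed. The remaining steps are routine algebra.
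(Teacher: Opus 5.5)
Your argument is correct and is essentially the standard proof of the result the paper cites without proof (Proposition~2.14 of Wainwright). It uses the same function $\psi(s)=2(e^s-1-s)/s^2$, the same monotonicity step $\psi(\lambda X_i)\le\psi(\lambda b)$, the same series bound $\psi(s)\le(1-s/3)^{-1}$ on $(0,3)$ via $k!\ge 2\cdot 3^{k-2}$, and the same Chernoff choice $\lambda=\varepsilon/(v+b\varepsilon/3)$.

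The one loose end is your parenthetical that $b\le 0$ is ``easier''. For $b=0$, which is the case used to derive Proposition~\ref{corr:onesided_bernstein}, the bound $\psi(\lambda b)\le\psi(0)=1$ does suffice. For $b<0$, however, the claimed bound is strictly stronger than the $b=0$ one, so two more things are needed:
\begin{itemize}
\item You must also show $(1-s/3)\psi(s)\le 1$ for $s<0$. With $t=-s$ this becomes $h(t)=\tfrac{1}{2}t^2-2t+3-(3+t)e^{-t}\ge 0$. That follows from $h(0)=h'(0)=0$ and $h''(t)=1-(1+t)e^{-t}\ge 0$, using $e^t\ge 1+t$.
\item You must dispose of the degenerate cases $v=0$ or $v+b\varepsilon/3\le 0$, which are trivial.
\end{itemize}
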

\medskip
\noindent Using Proposition \ref{thrm:onesided_bernstein}, we can immediately deduce the following result:

\begin{proposition}\label{corr:onesided_bernstein}
   Let $X_1, \dots, X_n$ be i.i.d. copies of a random variable $X$ taking values in $\RR$ with finite fourth moment. We have for all $\varepsilon  \in  (0, 1)$, 
  \ifthenelse{\equal{\version}{arxiv}}{
  \begin{align}
      \PP \left( \sum_{i=1}^n  X_i^2  \leq n(1-\varepsilon)\EE[X^2]\right) \leq \exp \left( - \frac{n \varepsilon^2}{2} \cdot \frac{\EE[X^2]^2}{\E[X^4] }\right). \nn
  \end{align}
  }{
  \begin{align}
      \PP \left( \sum_{i=1}^n  X_i^2  {\leq} n(1-\varepsilon)\EE[X^2]\right) \leq \exp \left( - \frac{n \varepsilon^2}{2} \cdot \frac{\EE[X^2]^2}{\E[X^4] }\right). \nn
  \end{align}
  }
\end{proposition}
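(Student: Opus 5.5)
Apply Proposition~\ref{thrm:onesided_bernstein} to the variables $Y_i := -X_i^2$. These are independent and satisfy $Y_i \le 0$ almost surely, so we may take $b = 0$. This choice is what makes the one-sided inequality useful: $X_i^2$ itself has no almost-sure upper bound (it may be heavy-tailed), but $-X_i^2$ is trivially bounded above, and the $\frac{2b\varepsilon}{3}$ term in the denominator disappears. The required second moment is $\EE[Y_i^2] = \EE[X^4]$, which is finite by assumption.

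The event $\sum_{i=1}^n X_i^2 \le n(1-\varepsilon)\EE[X^2]$ is equivalent to
\begin{align*}
\sum_{i=1}^n \left(Y_i - \EE[Y_i]\right) = n\EE[X^2] - \sum_{i=1}^n X_i^2 \ \ge\ n\varepsilon\,\EE[X^2].
\end{align*}
So we invoke Proposition~\ref{thrm:onesided_bernstein} with deviation level $\varepsilon' := \varepsilon\,\EE[X^2]$. Since $\varepsilon' > 0$ whenever $\EE[X^2] > 0$, this gives
\begin{align*}
\PP\left(\sum_{i=1}^n X_i^2 \le n(1-\varepsilon)\EE[X^2]\right) \le \exp\left(-\frac{n\varepsilon^2\EE[X^2]^2}{2\EE[X^4]}\right),
\end{align*}
which is exactly the claimed bound.

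It remains to handle the degenerate case $\EE[X^2] = 0$. Then $X = 0$ almost surely, so $\EE[X^4] = 0$ as well and the ratio $\EE[X^2]^2/\EE[X^4]$ has to be interpreted, for instance as $0$, in which case the right-hand side equals $1$ and the bound is trivial.

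There is no real obstacle here. The only point requiring care is the sign flip that makes the upper bound $b=0$ available, together with the rescaling of $\varepsilon$ by $\EE[X^2]$. The restriction $\varepsilon\in(0,1)$ is not needed for this derivation; it only ensures the event is nontrivial.
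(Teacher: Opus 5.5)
Your proposal is correct and follows the paper's intended route. The paper only says the result follows immediately from Proposition~\ref{thrm:onesided_bernstein}, and that step is precisely the application to $-X_i^2 \le 0$ with $b=0$ and deviation level $\varepsilon\,\EE[X^2]$ that you carry out, including the degenerate case $\EE[X^2]=0$, which the paper does not address.
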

\medskip 
 The statement in Proposition \ref{corr:onesided_bernstein} only requires a fourth moment assumption to hold. Therefore, this result is particularly powerful and will prove useful for establishing persistence of excitation results under heavy tailed covariates as we shall see in \textsection\ref{sec:onesided}.

\subsection{Covering Arguments}

In the previous subsection, we have seen concentration inequalities that hold for sums of scalar random variables. 
However, most of the quantities we aim to analyze are either matrices or vectors. To analyze such quantities, we use the so-called $\epsilon$-net arguments, which consist of reducing the analysis of matrices or vectors to simpler scalar terms for which scalar concentration inequalities can immediately apply. To start with, let us define what we mean by nets in the case of the unit sphere in $\RR^d$.
\begin{definition}[Nets]
    Let $\epsilon > 0$. A finite set $\cN \subseteq \cS^{d-1}$ is said to be an $\epsilon$-net of $\cS^{d-1}$ with respect to the norm $\tn{\cdot}$ if for all $x \in \cS^{d-1}$ there exist $x' \in \cN$ such that $\tn{x - x'} \le \epsilon$.  
    
\end{definition}

For our purposes, we will need to use an $\epsilon$-net argument to control either the operator norm $\Vert W \Vert_{\op}$ of some random matrix $W$, or its minimum eigenvalue $\lambda_{\min}(W)$. 
Below, we present two lemmas that make this argument precise.
\begin{lemma}[$\epsilon$-net argument for the operator norm]\label{lem:two-sided-net-nonsym}
    Let $W$ be a $k \times d$ random matrix, and $\epsilon \in (0,1)$. Let $\cN$ be an $\epsilon$-net of $\cS^{d-1}$ with maximal cardinality. Then, for all $\nu > 0$, we have
     \ifthenelse{\equal{\version}{arxiv}}{
    \begin{align*}
    \PP\left( \Vert W \Vert_{\op} > \nu \right) &\le \left(1+\frac{2}{\epsilon}\right)^d  \max_{x \in \cN}\PP\left( \tn{W x}
    > (1 - \epsilon)\nu \right). 
    \end{align*}
    }{
    \begin{align*}
    \PP\left( \Vert W \Vert_{\op} {>} \nu \right) {\le} \left(1{+}\frac{2}{\epsilon}\right)^{\!d}  \max_{x \in \cN}\PP\left( \tn{W x}
    > (1 - \epsilon)\nu \right). 
    \end{align*}
    }
    Furthermore, if $W$ is a $d \times d$ symmetric random matrix, and $\epsilon \in (0,1/2)$, then,
    \ifthenelse{\equal{\version}{arxiv}}{
    \begin{align*}
        \PP\left( \Vert W \Vert_{\op} > \nu \right) &\le \left(1+\frac{2}{\epsilon}\right)^d  \max_{x \in \cN}\PP\left( \vert x^\top W x\vert  > (1 - 2\epsilon)\nu \right),
    \end{align*}
    }{
    \begin{align*}
        \PP\left( \Vert W \Vert_{\op} {>} \nu \right) {\le} \left(1{+}\frac{2}{\epsilon}\right)^{\!d}  \max_{x \in \cN}\PP\left( \vert x^\top W x\vert  > (1 - 2\epsilon)\nu \right)
    \end{align*}
    }
\end{lemma}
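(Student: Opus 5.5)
The proof follows the standard net argument in two steps: a volumetric bound on the size of $\cN$, and a deterministic approximation inequality that reduces the event $\{\Vert W\Vert_\op > \nu\}$ to events indexed by points of $\cN$. A union bound then finishes.

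\textbf{Cardinality.} First we bound $|\cN| \le (1+2/\epsilon)^d$. Take $\cN$ to be a maximal $\epsilon$-separated subset of $\cS^{d-1}$. Such a set is automatically an $\epsilon$-net: a point at distance more than $\epsilon$ from every element could be added, contradicting maximality. Because the points are $\epsilon$-separated, the balls of radius $\epsilon/2$ around them are disjoint. All these balls lie inside the ball of radius $1+\epsilon/2$. Comparing volumes gives
\[
|\cN|(\epsilon/2)^d \le (1+\epsilon/2)^d ,
\]
which is the stated bound. The phrase "maximal cardinality" in the statement should be read as this maximal-separated construction.

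\textbf{Nonsymmetric case.} Let $x^\star\in\cS^{d-1}$ attain $\Vert W\Vert_\op=\tn{Wx^\star}$, and pick $x'\in\cN$ with $\tn{x^\star-x'}\le\epsilon$. Then
\[
\tn{Wx'} \ge \tn{Wx^\star} - \Vert W\Vert_\op\,\epsilon = (1-\epsilon)\Vert W\Vert_\op .
\]
Hence $\Vert W\Vert_\op \le (1-\epsilon)^{-1}\max_{x\in\cN}\tn{Wx}$. On the event $\{\Vert W\Vert_\op>\nu\}$, some $x\in\cN$ therefore has $\tn{Wx}>(1-\epsilon)\nu$. A union bound over $\cN$, together with the cardinality estimate, yields the first inequality, bounding each summand by the maximum.

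\textbf{Symmetric case.} For symmetric $W$, $\Vert W\Vert_\op=\max_{x\in\cS^{d-1}}|x^\top Wx|$. Take $x^\star$ attaining this and $x'\in\cN$ within distance $\epsilon$. The key step is the identity
\[
x^{\star\top}Wx^\star - x'^\top Wx' = (x^\star-x')^\top W x^\star + x'^\top W(x^\star-x'),
\]
whose right-hand side has magnitude at most $2\epsilon\Vert W\Vert_\op$. This gives
\[
\max_{x\in\cN}|x^\top Wx| \ge (1-2\epsilon)\Vert W\Vert_\op ,
\]
which is meaningful because $\epsilon<1/2$. The same union bound as before then gives the second inequality.

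The only delicate points are the volumetric count and the identity above; there is no real obstacle.
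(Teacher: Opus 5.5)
Your proof is correct and follows the standard argument that the paper has in mind. The paper omits this particular proof and cites Vershynin, but its appendix proof of the companion Lemma~\ref{lem:one sided net} uses the same ingredients you use: a maximal $\epsilon$-separated set with the volumetric bound $(1+2/\epsilon)^d$, the identity $x^\top Wx - u^\top Wu = (x-u)^\top Wx + u^\top W(x-u)$ giving a $2\epsilon\Vert W\Vert_{\op}$ perturbation, and a union bound; your reading of ``maximal cardinality'' as a maximal $\epsilon$-separated set also matches the paper's usage there.
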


\begin{lemma}[$\epsilon$-net argument for the minimum eigenvalue] \label{lem:one sided net}Let $W$ be a $d \times d$ symmetric random matrix, and $\epsilon \in (0,1/2)$.
    Let $\cN$ be an $\epsilon$-net of $\cS^{d-1}$ with maximal cardinality.  Then, for all $\nu > 0$, we have
    \ifthenelse{\equal{\version}{arxiv}}{
    \begin{align*}
        &\PP\left( \lambda_{\min}(W) < \nu  \right)  \le  \left(1+\frac{2}{\epsilon}\right)^{d} \max_{x \in \cN} \PP(  x^\top W x < \nu + 2\epsilon \Vert W \Vert_{\op} ) 
    \end{align*}
    }{
    \begin{align*}
        &\PP\left( \lambda_{\min}(W) {<} \nu  \right)  {\le}  \left(1{+}\frac{2}{\epsilon}\right)^{\!d} \max_{x \in \cN} \PP(  x^\top W x {<} \nu {+} 2\epsilon \Vert W \Vert_{\op} ) 
    \end{align*}
    }
\end{lemma}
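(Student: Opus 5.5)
The plan is the standard net argument, run in its one-sided form. We reduce the event $\{\lambda_{\min}(W) < \nu\}$ to a deterministic implication on the net, and then apply a union bound.

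\textbf{Step 1 (deterministic implication).} Let $\cN$ be an $\epsilon$-net of $\cS^{d-1}$. By compactness there is a unit vector $v$ with $v^\top W v = \lambda_{\min}(W)$. Choose $x \in \cN$ with $\tn{x - v} \le \epsilon$ and write $x = v + h$. Then
\begin{align*}
x^\top W x - v^\top W v = 2 h^\top W v + h^\top W h,
\end{align*}
so that
\begin{align*}
|x^\top W x - v^\top W v| \le 2\epsilon\opn{W} + \epsilon^2 \opn{W}.
\end{align*}
This gives $(2\epsilon+\epsilon^2)\opn{W}$, which is slightly worse than the claimed $2\epsilon\opn{W}$, so it should be sharpened. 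Use instead the symmetric identity
\begin{align*}
x^\top W x - v^\top W v = (x-v)^\top W (x+v).
\end{align*}
Since $\tn{x+v} \le 2$, this yields the bound $2\epsilon\opn{W}$ directly. Hence on the event $\lambda_{\min}(W) < \nu$ we get $x^\top W x < \nu + 2\epsilon\opn{W}$ for this particular net point $x$. This gives the event inclusion
\begin{align*}
\{\lambda_{\min}(W)<\nu\} \subseteq \bigcup_{x\in\cN}\{x^\top W x < \nu + 2\epsilon\opn{W}\}.
\end{align*}

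\textbf{Step 2 (cardinality).} Bound $|\cN| \le (1+2/\epsilon)^d$ by the volumetric argument. Take $\cN$ to be $\epsilon$-separated; a maximal such set is automatically an $\epsilon$-net, which is how I read ``maximal cardinality.'' The balls of radius $\epsilon/2$ around the points of $\cN$ are disjoint and contained in the ball of radius $1+\epsilon/2$. Comparing volumes gives
\begin{align*}
|\cN|(\epsilon/2)^d \le (1+\epsilon/2)^d,
\end{align*}
which is the claimed bound.

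\textbf{Step 3 (union bound).} Take probabilities in the inclusion from Step 1 and apply the union bound. Bounding each term by the maximum over $x \in \cN$ and multiplying by the cardinality bound from Step 2 yields the statement.

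The only delicate point is obtaining the constant $2\epsilon$ rather than $2\epsilon+\epsilon^2$, which the factorization $(x-v)^\top W(x+v)$ handles. The restriction $\epsilon<1/2$ is not actually needed for this argument; it is only needed to make the bound meaningful. Note also that $\opn{W}$ is random, so it stays inside the probability, exactly as written in the statement.
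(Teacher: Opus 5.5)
Your proposal is correct and follows the paper's proof essentially step for step: a maximal $\epsilon$-separated subset serves as the net with the volumetric bound $(1+2/\epsilon)^d$, a unit minimizer of the quadratic form is approximated by a net point, this gives the event inclusion, and a union bound finishes. The only difference is cosmetic. You bound $x^\top W x - v^\top W v$ through the factorization $(x-v)^\top W (x+v)$, whereas the paper writes it as $(x-v)^\top W x + v^\top W (x-v)$; both give $2\epsilon\|W\|_{\op}$ at once, and although the paper's identity does not need symmetry while yours does, $W$ is symmetric by hypothesis.
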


The proof of Lemma~\ref{lem:two-sided-net-nonsym} is standard and is therefore omitted. It can be found in \cite{vershynin2018high,wainwright2019high,tao2023topics}. 
\ifthenelse{\equal{\version}{arxiv}}{
The proof of Lemma~\ref{lem:one sided net} is deferred to Appendix~\ref{app:covering}.
}{
The proof of Lemma~\ref{lem:one sided net} is deferred to Appendix E.3 in~\cite{sattar2026learning}
}

\subsection{Vector-valued Martingales \& Concentration}\label{subsec:martingale}

An important class of vector-valued random variables that we will be using in our exposition is that of subgaussian random variables.  
\begin{definition}[Subgaussian random vector~\citep{vershynin2010introduction}]\label{def:subGaussian RV} A random vector $x$ taking values in $\R^{d}$ is said to be $\sigma^2$-subgaussian with variance proxy parameter $\sigma>0$, if for every $\lambda \in \R^{d}$ we have 
    \begin{align}
        \E[\exp( \lambda^\top x  )] \leq \exp\left(\frac{\sigma^2 \tn{\lambda}^2}{2}\right). \label{eqn:subgaussian_RV_mgf}
    \end{align}
\end{definition}

\medskip 
Next, we present a version of Azuma-Hoeffding's inequality which is useful for analyzing sums of causally dependent random variables. In our case, the need for such a bound arises when upper bounding the truncation bias term. 

\begin{proposition}[Sub-Gaussian martingale]\label{lem:freedman}
    Let $\{\Fcal_t\}_{t\ge 1}$ be a filtration. Let $\{\eta_t\}_{t \ge 1}$ be a sequence of conditionally zero-mean, $\sigma^2$-subgaussian random vectors (i.e., $\E[\eta_t \bgl \Fcal_{t-1}] = 0$, and $\E[\exp(\lambda^\T \eta_t)\bgl \Fcal_{t-1}] \leq \exp(\sigma^2\tn{\lambda}^2/2)$) taking values in $\R^{d_{\eta}}$, such that $\eta_t$ is $\Fcal_t$-measurable for all $t\ge 1$. Let $\{x_t\}_{t\geq 1}$ be a sequence of random vectors taking values in $\R^{d_{\eta}}$ such that for all $t \ge 1$, $x_t$ is $\Fcal_{t-1}$-measurable and $\tn{x_t} \le K$ almost surely for some $K > 0$. Then for all $\delta \in (0,1)$, and $T \ge 1$, we have
    \begin{align*}
        \PP\left( \bigg| \sum_{t=1}^T x_t^\T \eta_t \bigg| \le  \sigma K \sqrt{2 T \log(2/\delta)}\right) \ge 1 - \delta.
    \end{align*}
\end{proposition}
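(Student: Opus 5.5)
The plan is the standard exponential supermartingale (Chernoff) argument. Define $S_t := \sum_{s=1}^t x_s^\T \eta_s$, with $S_0 = 0$. Fix $\lambda \in \R$ and set
\begin{align*}
M_t(\lambda) := \exp\left(\lambda S_t - \frac{\lambda^2 \sigma^2 K^2 t}{2}\right).
\end{align*}

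\textbf{Step 1: supermartingale property.} Since $x_t$ is $\Fcal_{t-1}$-measurable, we may take $\lambda x_t$ as the (conditionally fixed) direction in the conditional sub-Gaussian bound. This gives
\begin{align*}
\E[\exp(\lambda x_t^\T \eta_t) \bgl \Fcal_{t-1}] \le \exp\left(\sigma^2 \lambda^2 \tn{x_t}^2/2\right) \le \exp\left(\sigma^2\lambda^2 K^2/2\right),
\end{align*}
where the last step uses $\tn{x_t}\le K$ almost surely. Hence $\E[M_t(\lambda)\bgl\Fcal_{t-1}] \le M_{t-1}(\lambda)$. Iterating with the tower property yields $\E[M_T(\lambda)] \le M_0(\lambda) = 1$.

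\textbf{Step 2: Chernoff bound.} For $u>0$, Markov's inequality gives
\begin{align*}
\PP(S_T \ge u) \le \exp\left(-\lambda u + \lambda^2\sigma^2K^2T/2\right).
\end{align*}
Optimizing at $\lambda = u/(\sigma^2K^2T)$ gives the bound $\exp\left(-u^2/(2\sigma^2K^2T)\right)$.

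\textbf{Step 3: lower tail and union bound.} The same argument with $\lambda<0$, or equivalently applied to $-\eta_t$ (which satisfies the same assumptions), controls the lower tail. A union bound then gives
\begin{align*}
\PP(|S_T| \ge u) \le 2\exp\left(-u^2/(2\sigma^2K^2T)\right).
\end{align*}
Setting $u = \sigma K\sqrt{2T\log(2/\delta)}$ makes the right-hand side equal to $\delta$, which gives the claim.

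\textbf{Main obstacle.} The only delicate point is the measurability in Step 1. The conditional MGF bound is stated for deterministic $\lambda\in\R^{d_\eta}$, but we need it for the $\Fcal_{t-1}$-measurable vector $\lambda x_t$. One can justify this by a standard regular conditional argument: approximate $x_t$ by simple $\Fcal_{t-1}$-measurable functions, apply the bound on each atom, and pass to the limit via Fatou or monotone convergence. Alternatively, one can simply read the hypothesis as holding for all $\Fcal_{t-1}$-measurable $\lambda$, which is the usual convention. Conditional zero-mean is not needed beyond what the sub-Gaussian MGF bound already implies.
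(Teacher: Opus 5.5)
Your proposal is correct. The exponential supermartingale bound $\E[e^{\lambda S_T}]\le e^{\lambda^2\sigma^2K^2T/2}$, the Chernoff optimization, and the two-sided union bound with $u=\sigma K\sqrt{2T\log(2/\delta)}$ give exactly the stated inequality, and your simple-function plus conditional-Fatou argument soundly justifies using the $\Fcal_{t-1}$-measurable direction $\lambda x_t$. This is essentially the standard Azuma--Hoeffding argument the paper invokes; the paper itself omits the proof as standard.
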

\medskip 
Another key inequality is the self-normalized martingale concentration bound due to \cite{abbasi2011improved}. Below, we present a version of this bound that can be found in \cite{ziemann2023tutorial}.

\begin{proposition}[Self-normalized martingale]\label{prop:self-normalized vector martingale}
    Let $\lbrace \cF_t \rbrace_{t\ge 0}$ be a filtration. Let $\lbrace \eta_t \rbrace_{t\ge 1}$ be a stochastic process taking values in $\RR^{d_\eta}$, such that for all $t \ge 1$, $\eta_t$ is $\cF_t$-measurable, and conditionally on $\cF_{t-1}$, $\sigma^2$-subgaussian, i.e., for all $\lambda \in \RR^{d_\eta}$, $\EE[\exp(\lambda^\top \eta_t) \bgl \cF_{t-1}] \le \exp(\tn{\lambda}^2 \sigma^2/2)$.  
    Let $\lbrace x_t\rbrace_{t\ge 1 }$ be a sequence of random vectors taking values in $\RR^{d_x}$, such that for all $t \ge 1$, $x_t$ is $\cF_{t-1}$-measurable.  Let $\Lambda \in \RR^{d_x \times d_x}$ be symmetric and positive definite.
    Then, for $T \ge 1$ and $\delta \in (0,1)$, the event:
    \ifthenelse{\equal{\version}{arxiv}}{
    \begin{align*}
        & \opn{\left(\sum_{t=1}^T \eta_t  x_t ^\top \right)\left(\sum_{t=1}^{T} x_{t} x_{t}^\top + \Lambda \right)^{-{1}/{2}}}^2   \le  4\sigma^2 \log\left(\frac{\det(\sum_{t=1}^T x_t x_t^\top + \Lambda)}{\det (\Lambda)}\right) + 8 \sigma^2 \log\left(\frac{5^{d_\eta}}{\delta}\right)
    \end{align*}
    }{
    \begin{align*}
        & \opn{\left(\sum_{t=1}^T \eta_t  x_t ^\top \right)\left(\sum_{t=1}^{T} x_{t} x_{t}^\top + \Lambda \right)^{\!\!\!\!-\frac{1}{2}}}^2  \\
        & \quad \le  4\sigma^2 \log\left(\frac{\det(\sum_{t=1}^T x_t x_t^\top + \Lambda)}{\det (\Lambda)}\right) + 8 \sigma^2 \log\left(\frac{5^{d_\eta}}{\delta}\right)
    \end{align*}
    }
    holds with probability at least $1-\delta$.
\end{proposition}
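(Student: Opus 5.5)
The plan is to reduce the matrix statement to a scalar self-normalized bound along fixed directions, and then take a union bound over an $\epsilon$-net of $\cS^{d_\eta-1}$, using Lemma~\ref{lem:two-sided-net-nonsym}. Write $S_T := \sum_{t=1}^T \eta_t x_t^\top$ and $V_T := \sum_{t=1}^T x_t x_t^\top + \Lambda$. Note that
\begin{align*}
\opn{S_T V_T^{-1/2}} = \sup_{v \in \cS^{d_\eta-1}} \tn{V_T^{-1/2} S_T^\top v}.
\end{align*}
So it suffices to control, for each fixed unit vector $v$, the vector-valued quantity $\tn{V_T^{-1/2}\sum_t x_t (v^\top \eta_t)}$. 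Here $v^\top \eta_t$ is a scalar, conditionally zero-mean and conditionally $\sigma^2$-subgaussian given $\cF_{t-1}$, and $x_t$ is $\cF_{t-1}$-measurable.

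\textbf{Step 1 (scalar-noise self-normalized bound).} For fixed $v$, set $s_T := \sum_t x_t \epsilon_t$ with $\epsilon_t := v^\top\eta_t$. For each $\lambda \in \RR^{d_x}$, the process
\begin{align*}
M_T^\lambda := \exp\left(\lambda^\top s_T/\sigma - \tfrac12 \lambda^\top \big(\textstyle\sum_t x_t x_t^\top\big)\lambda\right)
\end{align*}
is a nonnegative supermartingale with $\EE M_T^\lambda \le 1$. This follows from the conditional subgaussian MGF bound together with the tower property.

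\textbf{Step 2 (method of mixtures).} Integrate $\lambda \sim \cN(0,\Lambda^{-1})$. The mixture $\bar M_T := \EE_\lambda M_T^\lambda$ has closed form
\begin{align*}
\bar M_T = \left(\frac{\det \Lambda}{\det V_T}\right)^{1/2} \exp\left(\frac{\tn{V_T^{-1/2}s_T}^2}{2\sigma^2}\right).
\end{align*}
It still satisfies $\EE \bar M_T \le 1$ by Fubini. Markov's inequality then gives, with probability at least $1-\delta'$,
\begin{align*}
\tn{V_T^{-1/2}s_T}^2 \le 2\sigma^2\log\frac{\det V_T}{\det\Lambda} + 2\sigma^2\log(1/\delta').
\end{align*}

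\textbf{Step 3 (net).} Take a $1/2$-net $\cN$ of $\cS^{d_\eta-1}$ with $|\cN| \le 5^{d_\eta}$, and put $\delta' = \delta/5^{d_\eta}$ in a union bound. A deterministic net argument gives
\begin{align*}
\sup_v \tn{V_T^{-1/2}S_T^\top v} \le 2\max_{v\in\cN}\tn{V_T^{-1/2}S_T^\top v}.
\end{align*}
The reason is that for any $v$ there is $v'\in\cN$ with $\tn{v-v'}\le 1/2$, so $\tn{V_T^{-1/2}S_T^\top v} \le \tn{V_T^{-1/2}S_T^\top v'} + \tfrac12\opn{S_TV_T^{-1/2}}$; taking the supremum over $v$ and rearranging gives the factor $2$. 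I do this directly instead of citing Lemma~\ref{lem:two-sided-net-nonsym} because the threshold is random (it involves $\det V_T$), so a pointwise probability bound cannot be plugged in as stated. Squaring multiplies everything by $4$, which yields $8\sigma^2\log\frac{\det V_T}{\det\Lambda} + 8\sigma^2\log(5^{d_\eta}/\delta)$. That is a valid bound, though with $8$ rather than $4$ on the log-determinant term.

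\textbf{Main obstacle: the constant $4$.} Recovering the stated constant $4$ is where the work lies. First I would check whether Step 2 already holds with a tighter split; a sharper mixture choice alone will not do it. Instead I would use the inequality
\begin{align*}
\tn{a}^2 \le \frac{1}{1-\epsilon}\,\tn{a'}^2 + \frac{1}{\epsilon}(\cdots),
\end{align*}
or more simply the bound $\sup \le (1-\epsilon)^{-1}\max_{\cN}$ with the net cardinality kept at $5^{d_\eta}$. The delicate part is allocating the factors so that the determinant term gets coefficient $4$ while the $\log(5^{d_\eta}/\delta)$ term absorbs the remaining slack as $8$. Measurability and validity for random $T$ are not an issue, since $T$ is fixed.
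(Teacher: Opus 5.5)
Your route is the standard argument behind this statement, which the paper imports from \cite{ziemann2023tutorial} without proof: reduce to the scalar noise $v^\top\eta_t$, apply the method-of-mixtures supermartingale, then use a $1/2$-net with a deterministic sup-versus-max comparison and a union bound. You are also right that Lemma~\ref{lem:two-sided-net-nonsym} cannot be used verbatim, because the threshold is random.

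The problem is an arithmetic slip in Step 2, and it is the only source of your ``main obstacle.'' In your own closed form the determinant ratio carries exponent $1/2$. So $\bar M_T<1/\delta'$ is equivalent to
\begin{align*}
\tn{V_T^{-1/2}s_T}^2 < \sigma^2\log\frac{\det V_T}{\det\Lambda} + 2\sigma^2\log(1/\delta').
\end{align*}
The coefficient of the log-determinant is $\sigma^2$, not $2\sigma^2$; your version holds only because it is weaker. Now feed the corrected bound into Step 3, with $\delta'=\delta/5^{d_\eta}$ and the factor $4$ from the $1/2$-net. This gives exactly $4\sigma^2\log(\det V_T/\det\Lambda)+8\sigma^2\log(5^{d_\eta}/\delta)$, so nothing further is needed. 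The constants $4$, $8$ and $5^{d_\eta}$ in the statement are precisely the $1/2$-net factor applied to the scalar mixture bound.

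The remedy you sketch would not work as stated. The comparison $\sup_v\le(1-\epsilon)^{-1}\max_{\cN}$ multiplies the whole per-direction bound, both terms at once, by $(1-\epsilon)^{-2}$. So you cannot ``allocate'' a smaller factor to the log-determinant term alone. Keeping $|\cN|\le 5^{d_\eta}$ through the volumetric estimate $(1+2/\epsilon)^{d_\eta}$ forces $\epsilon=1/2$. That gives factor $4$ and, starting from your loose Step 2, coefficient $8$.

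If you wanted to keep your loose Step 2, the only way out is to refine the net rather than hold its cardinality fixed. With $\epsilon=1-1/\sqrt2$ the factor is $2$ and the cardinality is $(5+2\sqrt2)^{d_\eta}\le 25^{d_\eta}$. This gives $4\sigma^2\log(\det V_T/\det\Lambda)+4\sigma^2\log\bigl((5+2\sqrt2)^{d_\eta}/\delta\bigr)$, which is indeed below the target. The intended fix, though, is simply the corrected constant in Step 2.
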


\subsection{Notes}

The topic of concentration of measure has been the subject of extensive research in the past century \citep{boucheron2003concentration, van2014probability}, and this tutorial cannot do it justice. 
The reader may also refer to the tutorials \citep{matni2019tutorial} and \citep{ziemann2023tutorial} that give a gentle exposition of such topics tailored for the control community. 
The concentration inequalities we present are somewhat standard, and there are many references where they can be found \citep{boucheron2003concentration, van2014probability, vershynin2018high, wainwright2019high, rigollet2023high}.
The idea of using covering or net arguments to analyze the operator norm of a random matrix or its minimum eigenvalue can be found in numerous handbooks \citep{van2014probability, vershynin2018high, tao2023topics}.
The exposition we follow is inspired by that in \cite{vershynin2018high}. 
It is worth mentioning that there exists other approaches to deal with the operator norm or minimum eigenvalue, notably generic chaining \citep{van2014probability, vershynin2018high} and the PAC-Bayes approach \citep{oliveira2016lower}. 
The self-normalized martingale bound we present is due to \cite{abbasi2011regret}, but the original ideas for deriving it are due to \cite{de2004self, de2009self}.

 \usetikzlibrary{positioning,calc,decorations.pathreplacing}

\begin{figure*}[!htbp]
\centering
\begin{tikzpicture}[
    every node/.style={font=\small},
    circ/.style={circle, draw, fill=gray!12, minimum size=8mm, inner sep=0pt},
    map/.style={rectangle, draw, rounded corners=2pt, fill=blue!10,
                minimum width=1.4cm, minimum height=8mm},
    arrow/.style={->, thick, dashed},
    brace/.style={decorate, decoration={brace, amplitude=8pt}}
]

\def\xstep{1.0}
\def\ytop{2.2}
\def\ymid{0}
\def\ybot{-2.2}

\foreach \i/\lab in {
0/{\zeta_0},
1/{\zeta_1},
2/{\zeta_2},
3/{\zeta_3},
4/{\cdots},
5/{\zeta_{\tau}},
6/{\zeta_{\tau+1}},
7/{\zeta_{\tau+2}},
8/{\zeta_{\tau+3}},
9/{\cdots},
10/{\zeta_{2\tau}},
11/{\cdots}
}{
    \node[circ, fill=orange!15] (z\i) at ({\i*\xstep},\ytop) {\(\lab\)};
}

\node[map] (p0) at ({3*\xstep},\ymid) {\(\psi_\tau~~\)};
\node[map] (p1) at ({4*\xstep},\ymid) {\(\psi_{\tau+1}~~\)};
\node[map] (p2) at ({5*\xstep},\ymid) {\(\psi_{\tau+2}~~\)};
\node[map] (p3) at ({6*\xstep},\ymid) {\(\psi_{\tau+3}~~\)};
\node[map] (p4) at ({7*\xstep},\ymid) {\(\cdots~~\)};
\node[map] (p5) at ({8*\xstep},\ymid) {\(\psi_{2\tau}~~\)};
\node[map] (p6) at ({9*\xstep},\ymid) {\(\cdots\)};

\foreach \i/\lab in {
0/{u_0},
1/{u_1},
2/{u_2},
3/{u_3},
4/{\cdots},
5/{u_{\tau}},
6/{u_{\tau+1}},
7/{u_{\tau+2}},
8/{u_{\tau+3}},
9/{\cdots},
10/{u_{2\tau}},
11/{\cdots}
}{
     \node[circ, fill=green!12] (u\i) at ({\i*\xstep},\ybot) {\(\lab\)};
}

\node[left=0.45cm of z0] {\(\{\zeta_t\}_{t = 0}^T \distas \Dcal_\zeta\)};
\node[left=0.45cm of p0] {\(\{\psi_t\}_{t=\tau}^T\)};
\node[left=0.45cm of u0] {\(\{u_t\}_{t=0}^T \distas \Dcal_u\)};

\draw[brace]
    ($(z5.south east)+(0,-0.25)$) --
    ($(z1.south west)+(0,-0.25)$);

\draw[brace]
    ($(z6.south east)+(0,-0.5)$) --
    ($(z2.south west)+(0,-0.5)$);

\draw[brace]
    ($(z7.south east)+(0,-0.75)$) --
    ($(z3.south west)+(0,-0.75)$);

\draw[brace]
    ($(u1.north west)+(0,0.25)$) --
    ($(u5.north east)+(0,0.25)$);

\draw[brace]
    ($(u2.north west)+(0,0.5)$) --
    ($(u6.north east)+(0,0.5)$);

\draw[brace]
    ($(u3.north west)+(0,0.75)$) --
    ($(u7.north east)+(0,0.75)$);

\draw[arrow] ($(z5.south)+(0,-0.85)$) -- (p2.north);
\draw[arrow] ($(u5.north)+(0,0.85)$) -- (p2.south);

\draw[arrow] ($(z4.south)+(0,-0.65)$) -- (p1.north);
\draw[arrow] ($(u4.north)+(0,0.65)$) -- (p1.south);

\draw[arrow] ($(z3.south)+(0,-0.45)$) -- (p0.north);
\draw[arrow] ($(u3.north)+(0,0.45)$) -- (p0.south);

\end{tikzpicture}
\caption{Block-dependent structure induced by sliding windows. The processes \(\{\zeta_t\}_{t=0}^T\) and \(\{u_t\}_{t=0}^T\) are i.i.d., while \(\{\psi_t\}_{t=\tau}^T\) is a nonlinear function $\psi$
of length-\(\tau\) windows of both processes, i.e., $\psi_t:= \psi\!\left(
\zeta_t,\ldots,\zeta_{t-\tau+1},u_t,\ldots,u_{t-\tau+1}\right)$. For bilinear systems, the covariate process $\{\phi_t\}_{t=\tau}^T$ as well as the noise process $\{\eta_t\}_{t=\tau}^T$ in \eqref{eqn:feature_map} can be (approximately) modeled as the block-dependent process $\{\psi_t\}_{t=\tau}^T$ defined above.} 
\label{fig:sliding-window-block-dependent}
\end{figure*} 
\section{Persistence of 
Excitation with Heavy-tailed Covariates}\label{sec:onesided}
In this section, we provide a self-contained analysis of the lower tail of the empirical covariance matrix $\widehat{\Sigma}_T := \sum_{t=1}^T \phi_t \phi_t^\top$ appearing in our error bound in \eqref{eqn:estimation_error}. 
One advantage of analyzing the lower spectrum of the empirical covariance is that it decouples stability and persistence of excitation.

\subsection{Distributional Assumptions on the Covariates}

Recall that $\phi_t$ can be dependent, correlated, and heavy-tailed. 
More specifically, we assume the following dependence structure (as illustrated by Figure~\ref{fig:sliding-window-block-dependent}).
\begin{definition}[Block-dependence]\label{def:blockdepend}\ifthenelse{\equal{\version}{arxiv}}{}{\!\!\!}
A sequence of random vectors $\{x_t\}_{t \geq 0}$, taking values in $\R^{d_x}$, is said to have a block-dependent structure if there exists an integer $\tau\geq 1$ such that $x_{t_1}$ and $x_{t_2}$ are independent if and only if $|t_1 - t_2| > \tau$.
\end{definition}

One of the major sources of difficulty in establishing the persistence of excitation result is the nonlinear dependence of $\phi_{t}$  on $u_{t-\tau+1}, \dots, u_{t}$ for all $t \ge \tau$. 
We need to understand how distributional properties of the input impact the lower spectrum of $\widehat{\Sigma}_T$. 
To that end, we start by introducing the property of hyper-contractivity, which guarantees a subgaussian-type lower tail of the empirical covariance matrix $\widehat{\Sigma}_T$, despite heavy-tailed and block-dependent covariates.

\begin{definition}[Hypercontractivity]\label{def:hypercont}
A $d_x$-dimensional random vector $x $ is  $(4,2, \gamma)
$-hypercontractive, if  $\EE[(u^\top x)^4] \le \gamma \EE[(u^\top x)^2]^2$, for all $u \in \R^{d_x}$. 
\end{definition}

The $(4,2,\gamma)$-hypercontractivity property is satisfied by many classical distributions. Notably, a $d_x$-dimensional standard Gaussian random vector satisfies it with $\gamma = 3$, while a $d_x$-dimensional random vector sampled uniformly from $\sqrt{d_x} \cdot \cS^{d_x-1}$ satisfies it with $\gamma = 3/(1+2/d_x)$. In \S\ref{sec:sysid}, we show that, in the case of bilinear systems, the nonlinear covariates $\{\phi_t\}_{t=1}^T$ are $(4,2,\gamma)$-hyper-contractive. Specifically, we will derive upper bounds on $\gamma$ for various bilinear systems. Interestingly, the $(4,2,\gamma)$-hypercontractivity also implies the small ball condition via Paley-Zygmund inequality.

\begin{assumption}[Distributional properties of $\phi_t$] \label{assump:phi_t distribution}
    $\lbrace \phi_t \rbrace_{t\ge 1}$ is a sequence of block-dependent, zero-mean, isotropic\footnote{A $d_{\phi}$-dimensional random vector $\phi$ is isotropic if for all $x \in \RR^{d_\phi}$, $\EE[(x^\top \phi)^2] = \Vert x \Vert_{\ell_2}^2$.}, and $(4, 2, \gamma)$-hypercontractive for some $\gamma > 1$, random vectors taking values in $\R^{d_{\phi}}$. 
\end{assumption}

Assumption \ref{assump:phi_t distribution} covers a wide range of input distributions that may even be heavy-tailed, as it only requires conditions on the first four moments of the distribution. This contrast with classical assumptions that require the input distribution to have subgaussian tails, and is also consistent with the intuition that bounding the smallest singular value of random matrix requires weaker moment conditions  \citep{koltchinskii2015bounding}.

\subsection{Lower Bound on the Spectrum of the Empirical Covariance}
We are now ready to present our main result on the  persistence of excitation.

\begin{theorem}[Persistence of Excitation]\label{thm:persistence}
Suppose the covariate process $\lbrace \phi_{t}\rbrace_{t \ge 1}$ satisfies Assumption~\ref{assump:phi_t distribution}.
Then, for all $\delta \in (0,1)$, the event:
\ifthenelse{\equal{\version}{cdc}}{\vspace{-6pt}}{}
\begin{equation}
    \lambda_{\min}\left( \sum_{t=1}^{T} \phi_t \phi_t^\T \right) \geq T/4,
\end{equation}
holds with probability at least $1- \delta$, provided that 
\begin{equation}
    T \gtrsim \tau\,\gamma\left (\log\left(\frac{2\tau}{\delta}\right) + d_{\phi} \log\left(1+\frac{16d_{\phi}}{\delta}\right) \right). \label{eqn:trajectory_size_main}
\end{equation}
\end{theorem}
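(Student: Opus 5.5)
We split the time indices into $\tau$ interleaved subsequences so that, within each subsequence, the covariates are independent. For $j \in \{1,\dots,\tau\}$ let $\mathcal{I}_j := \{t \in [T] : t \equiv j \bmod (\tau+1)\}$. By the block-dependence of Assumption~\ref{assump:phi_t distribution}, $\{\phi_t\}_{t\in\mathcal{I}_j}$ is a family of independent vectors; strictly, spacing $\tau+1$ gives $\tau+1$ subsequences, which only changes constants. Each subsequence has $n_j \ge \lfloor T/(\tau+1)\rfloor$ elements. Since
\[
\sum_{t=1}^T \phi_t\phi_t^\top = \sum_j \sum_{t\in\mathcal{I}_j}\phi_t\phi_t^\top
\quad\text{and}\quad
\lambda_{\min}\Big(\sum_j M_j\Big) \ge \sum_j \lambda_{\min}(M_j),
\]
it suffices to show that each block satisfies $\lambda_{\min}\big(\sum_{t\in\mathcal{I}_j}\phi_t\phi_t^\top\big) \ge n_j/4$ with probability at least $1-\delta/\tau$. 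A union bound over the blocks then gives probability $1-\delta$ and total eigenvalue at least roughly $T/4$. This union bound is the source of the $\log(2\tau/\delta)$ term and of the factor $\tau$ in the sample requirement.

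\textbf{Fixed direction.} For a unit vector $x$ and a fixed block, the scalars $X_t = x^\top\phi_t$ are independent. They satisfy $\EE[X_t^2]=1$ by isotropy and $\EE[X_t^4]\le\gamma$ by hypercontractivity. The variables need not be identically distributed, but the proof of Proposition~\ref{corr:onesided_bernstein} goes through unchanged when Proposition~\ref{thrm:onesided_bernstein} is applied to $-X_t^2 \le 0$ with $b=0$. This gives
\[
\PP\Big(\sum_{t\in\mathcal{I}_j}(x^\top\phi_t)^2 \le n_j/2\Big) \le \exp\!\big(-n_j/(8\gamma)\big).
\]
This is the only place the heavy-tailed structure enters, and only fourth moments are needed.

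\textbf{Uniformity over directions (main obstacle).} Lemma~\ref{lem:one sided net} brings in the term $2\epsilon\|W\|_{\op}$ with $W=\sum_{t\in\mathcal{I}_j}\phi_t\phi_t^\top$, and $\|W\|_{\op}$ has no good high-probability bound under heavy tails. We control it only crudely, using Markov's inequality:
\[
\EE\|W\|_{\op} \le \EE\,\tr W = n_j d_\phi,
\qquad\text{so}\qquad
\PP\big(\|W\|_{\op} > 2n_j d_\phi\tau/\delta\big) \le \delta/(2\tau).
\]
We then choose $\epsilon \asymp \delta/(16 d_\phi \tau)$ so that $2\epsilon\|W\|_{\op} \le n_j/4$ on this event. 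Lemma~\ref{lem:one sided net} with $\nu = n_j/4$ then bounds the failure probability by
\[
(1+2/\epsilon)^{d_\phi}\, e^{-n_j/(8\gamma)} + \frac{\delta}{2\tau}.
\]
The first term is at most $\delta/(2\tau)$ once
\[
n_j \gtrsim \gamma\Big(\log(2\tau/\delta) + d_\phi\log\big(1+16 d_\phi/\delta\big)\Big),
\]
up to absorbing $\log\tau$ into constants. This is exactly why the logarithm in the statement is $\log(1+16d_\phi/\delta)$ and not a dimension-free constant.

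\textbf{Conclusion.} Multiplying this requirement by $\tau$ to pass from $n_j$ to $T$ gives the condition on $T$ in the theorem. Summing the per-block bounds $n_j/4$ over all blocks yields $\lambda_{\min}\big(\sum_{t=1}^T \phi_t\phi_t^\top\big) \ge T/4$ up to the rounding of $n_j$, which is absorbed into constants.
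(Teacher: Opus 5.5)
Your proof is correct up to one quantitative slip, and it organizes the argument differently from the paper. The paper uses the $\tau$ interleaved subsequences only inside the fixed-direction tail bound. For each $v$ it applies one-sided Bernstein to every subsequence and union-bounds over them. It then runs a \emph{single} net argument on the full matrix $\sum_{t=1}^T\phi_t\phi_t^\top$, with one Markov event $\{\|\sum_t\phi_t\phi_t^\top\|_{\op}\le 2d_\phi T/\delta\}$ of failure probability $\delta/2$ and $\epsilon=\delta/(8d_\phi)$. You instead prove a minimum-eigenvalue bound for each subsequence's Gram matrix and glue the blocks with $\lambda_{\min}(\sum_j M_j)\ge\sum_j\lambda_{\min}(M_j)$. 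This is more modular and gives the stronger intermediate fact that every subsequence is well conditioned. Your spacing $\tau+1$ and your use of Bernstein for non-identically distributed variables are also more faithful to Definition~\ref{def:blockdepend} than the paper's spacing-$\tau$, ``i.i.d.'' blocking.

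The price is paid in the Markov step, and your claim that it costs only a constant is not right. Markov's inequality gives $\|W_j\|_{\op}$ only a polynomial tail. Giving each block a failure budget of $\delta/(2\tau)$ therefore forces $\epsilon\asymp\delta/(d_\phi\tau)$, so your net has $(1+32d_\phi\tau/\delta)^{d_\phi}$ points. Your condition becomes $T\gtrsim\tau\gamma\big(\log(2\tau/\delta)+d_\phi\log(1+32d_\phi\tau/\delta)\big)$. The excess over the stated condition is of order $\tau\gamma\, d_\phi\log(2\tau)$, and this is not a universal multiple of it. For large $d_\phi$ and $\log\tau\gg d_\phi\log(d_\phi/\delta)$, the ratio is of order $d_\phi$. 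The excess is harmless when $\tau$ is polynomial in $d_\phi/\delta$, as in the LDS-BO and BDS-PO applications.

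The fix is to share one Markov event across blocks. Write $W_j$ for the $j$-th block's Gram matrix and $W=\sum_j W_j$. Using a common net $\mathcal{N}$ and the deterministic inclusion from the proof of Lemma~\ref{lem:one sided net},
\[
\lambda_{\min}(W)\ \ge\ \sum_j\min_{x\in\mathcal{N}}x^\top W_j x-2\epsilon\sum_j\|W_j\|_{\op}\ \ge\ \sum_j\min_{x\in\mathcal{N}}x^\top W_j x-2\epsilon\,\tr W .
\]
Since $\PP(\tr W>2Td_\phi/\delta)\le\delta/2$, the choice $\epsilon=\delta/(16d_\phi)$ suffices. The remaining union over blocks and net points involves only the exponentially small fixed-direction tails. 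It recovers $\log(2\tau/\delta)+d_\phi\log(1+16d_\phi/\delta)$ up to constants, which is exactly what the paper's global covering achieves.

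Relatedly, to get your additive $\delta/(2\tau)$ term you must intersect with the Markov event \emph{before} union-bounding over the net, as the paper does. Applying Lemma~\ref{lem:one sided net} verbatim does not work, because its probability still contains the random quantity $\|W\|_{\op}$, and the Markov failure would then be multiplied by the net cardinality.
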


\ifthenelse{\equal{\version}{arxiv}}{
The proof of Theorem \ref{thm:persistence} is deferred to Appendix \ref{app:persistence excitation}. 
}{
The proof of Theorem \ref{thm:persistence} is deferred to Appendix A.1 in \cite{sattar2026learning}. 
}
Interestingly, despite the presence of nonlinearities and dependencies in the covariate process $\{\phi_t\}_{t \geq 1}$, persistence of excitation is still guaranteed.

\subsection{Notes}

In general, to derive high-probability lower bounds on the smallest singular value of a random matrix, there are two approaches, namely a two-sided approach and one-sided approach. The two-sided approach aims at controlling both sides of the spectrum of the random matrix simultaneously (see~\cite[Chapter~4]{vershynin2018high}).
For this purpose, one often relies on the net argument in Lemma~\ref{lem:two-sided-net-nonsym}.
\cite{jedra2020finite} leveraged this two-sided approach and used the Hanson inequality~\citep{rudelson2013hanson} to derive tight concentration bounds on the entire spectrum of the covariates matrix in the case of linear systems. The two-sided approach only provides meaningful bounds when dealing with stable systems. 
The one-sided approach aims at controlling directly the minimum eigenvalue and therefore typically requires weaker assumptions.
Standard implementations of this approach also rely on a net argument, but would require a finer lower tail control for each fixed direction. In particular, such control may follow from a small-ball condition, as first observed in~\cite{Mendel2}. This idea was subsequently used to establish persistence of excitation for marginally stable linear systems~\citep{simchowitz2018learning}. It is also worth mentioning that other techniques, such as PAC-Bayes methods~\citep{oliveira2016lower} and matrix Freedman inequalities~\citep{tropp2015introduction}, can yield lower-tail bounds for the minimum eigenvalue without relying on a net argument.

The approach we present in this section for establishing persistence of excitation follows that of \cite{sattar2025finite_arXiv}, which is a one-sided approach.
It relies on the net argument of Lemma~\ref{lem:one sided net}, then uses a one-sided Bernstein's inequality given in Proposition \ref{corr:onesided_bernstein} together with a blocking trick~\cite{yu1994rates}. This gives an additional scaling by $\tau$ in our sample complexity bound. In the case of LTI systems, this can be avoided by using the partial circulant or Toeplitz structure of the design matrix~\cite{oymak2021revisiting,lee2022improved}. In the case of nonlinear covariates, these structures vanish. Hence, it is not immediately clear how to avoid scaling by $\tau$ in the sample complexity bound.

 \section{Bilinear System Identification}\label{sec:sysid}
In this section, we provide end-to-end guarantees for learning bilinear dynamical systems in different settings (depending on whether the bilinearity appears in the state update equation or the observation model).
Specifically, we use the tools and results from Sections~\ref{sec:prels} and~\ref{sec:onesided} to analyze the least-squares algorithm for estimating the \emph{Markov-like} parameter matrix $G$ in each setting, using a single input-output trajectory $\{(u_t,y_t)\}_{t=0}^T$, and derive finite-sample learning guarantees.

\subsection{Linear Dynamical Systems with Bilinear Observations}\label{subsec:sysid_partial}
The first class of dynamical systems of the form \eqref{eqn:general_dynamics} which is of interest to us is the linear dynamical systems with bilinear observations (LDS-BO).
The state-space representation of LDS-BO is given by
\begin{equation} \label{eqn:LDS-BO}
    \begin{aligned}
         x_{t+1} &= A x_t + B u_t +  w_t, \\
         y_t &= C_0 x_t + \sum_{k=1}^{d_u} (u_t)_k C_k x_t + v_t,
    \end{aligned}
    \tag{LDS-BO}
\end{equation}
where $A \in \R^{d_x \times d_x}$ is the state matrix, $B \in \R^{d_x \times d_u}$ is the input matrix, and $C_0, C_1, \dots, C_{d_u}$ are the $d_u +1$ observation matrices taking values in $\R^{d_y \times d_x}$. 
Note that, unlike partially observed linear dynamical systems, the inputs $\{u_t\}_{t \geq 0}$ in \eqref{eqn:LDS-BO} directly interact with the states $\{x_t\}_{t \geq 0}$ to affect the observation $y_t$. 
Fixing a history length $\tau>0$, the system \eqref{eqn:LDS-BO} gives the non-linear input-output map $y_t = G \, \phi_t + z_t + \eta_t$ in \eqref{eqn:feature_map} with

\renewcommand{\Gbar}{\Gamma}
\ifthenelse{\equal{\version}{arxiv}}{
\begin{equation}
\begin{aligned} \label{eqn:G_phi_r_eta_LDS_BO}
    G &= \begin{bmatrix} C_0 \Gbar & C_1 \Gbar & \cdots & C_{d_u} \Gbar\end{bmatrix}, \quad\quad
    \phi_t  = \util_t \otimes \begin{bmatrix} u_{t-1} \\ u_{t-2}  \\ \vdots \\ u_{t-\tau+1} \end{bmatrix}, \\
  z_t &= (u_{t} \circ C)  A^{\tau-1} x_{t-\tau+1}, \quad\quad \quad \quad\quad \eta_t = H_t \, \omega_t, 
\end{aligned}
\end{equation}
}{
\begin{equation}
\begin{aligned} \label{eqn:G_phi_r_eta_LDS_BO}
    G &= \begin{bmatrix} C_0 \Gbar & C_1 \Gbar & \cdots & C_{d_u} \Gbar\end{bmatrix}, \quad
    \phi_t  = \util_t \otimes \begin{bmatrix} u_{t-1} \\ u_{t-2}  \\ \vdots \\ u_{t-\tau+1} \end{bmatrix}, \\
  z_t &= (u_{t} \circ C)  A^{\tau-1} x_{t-\tau+1}, \quad\quad \quad \quad \eta_t = H_t \, \omega_t, 
\end{aligned}
\end{equation}
}

where $\Gbar := \begin{bmatrix} B  & A B & \cdots & A^{\tau-2} B\end{bmatrix} \in \R^{d_x \times (\tau-1) d_u}$ and $G \in \R^{d_y \times d_u(d_u+1)(\tau-1)}$. 
The parameters $\lbrace \lbrace C_k B \rbrace_{k=0}^{d_u},$ $ \lbrace C_k A B \rbrace_{k=0}^{d_u}, \dots, \lbrace C_k A^{\tau-1} B \rbrace_{k=0}^{d_u} \rbrace$ are what we refer to as the \emph{Markov-like} parameters of the system\footnote{Note that though the form of the Markov parameters is the same as for LTI systems, the relationship between inputs and outputs differs due to the bilinear observation.}. $\util_t$ and $\omega_t$ are as defined in \eqref{eqn:util_omega_def}, and lastly, the matrix $H_t$ appearing in the noise process $\{\eta_t\}_{t \geq 0}$ is given by
\begin{equation}
\begin{aligned}\label{eqn:Ht_epsilont}
    H_t  := \begin{bmatrix}
     I_{d_y} &
    (u_{t} \circ C)   &
    (u_{t} \circ C) A  &
    \cdots &
    (u_{t} \circ C) A^{\tau-2} 
 \end{bmatrix}. 
\end{aligned}
\end{equation}
The goal here is to estimate the \emph{Markov-like} parameter matrix $G$ by regressing $\{y_t\}_{t=\tau}^T$ on $\{\phi_t\}_{t=\tau}^T$.
Before that, we state our assumptions on the system~\eqref{eqn:LDS-BO}.
\begin{assumption}[Stability]\label{assump:stability_linear}
    The dynamical system in \eqref{eqn:LDS-BO} is strictly stable, i.e., $\rho(A) < 1$. 
\end{assumption}
It is well-known that, when $\rho(A) < 1$, there exist $\rho \in (\rho(A), 1)$ and $\kappa \geq 1$ such that $\|A^k\| \leq \kappa \rho^k$ for all $k \in \mathbb{Z}_+$. The quantity $\kappa := \sup_{k \in \mathbb{Z}_+}(\norm{A^k}/ \rho^k)$ is finite by Gelfand's formula, and it measures the transient response of the system and can be upper bounded by its $\mathcal{H}_\infty$ norm~\citep{tu2017non}.
This decay condition is important for showing that the residual term $z_t$ is small when $\tau$ is large enough, and is a relatively common assumption~\citep{oymak2021revisiting, lee2022improved}. In the case of linear observations, one can relax Assumption~\ref{assump:stability_linear} to marginal stability ($\rho(A) \leq 1$), and use different approaches such as method-of-moments~\cite{bakshi2023new}, or Kalman filtering~\cite{tsiamis2019finite,ghai2020no}. Extending these approaches to bilinear observations is still an open question. 
Next, we state our assumptions on the excitation $\{u_t\}_{t\geq 0}$ and the noise processes $\{w_t\}_{t\geq 0}$ and $\{v_t\}_{t\geq 0}$.

\begin{assumption}[Excitation/noise processes]\label{assump:input+noise}
    \emph{(\textbf{a})} $\{u_t\}_{t=0}^T$ are sampled uniformly at random from a sphere of radius $\sqrt{d_u}$, i.e., $\{u_t\}_{t=0}^T \distas \mathrm{Unif}(\sqrt{d_u} \cdot\Scal^{d_u-1})$. \emph{(\textbf{b})} $\lbrace w_t \rbrace_{t=0}^T$ and $\lbrace v_t \rbrace_{t=0}^T$ are sequences of independent, zero-mean, $\sigma^2$-subgaussian random vectors taking values in $\R^{d_x}$ and $\R^{d_y}$, respectively. 
\end{assumption}

\subsubsection{Learning Markov-like Parameter Matrix}\label{subsec:est_error_decomp_LDS_BO}
In this subsection, we derive finite-sample guarantees on learning the \emph{Markov-like} parameter matrix $G$ given by \eqref{eqn:G_phi_r_eta_LDS_BO} via regressing $\{y_t\}_{t=\tau+1}^T$ on $\{\phi_t\}_{t=\tau+1}^T$. 
Now, recall from Section~\ref{subsec:LSE} that estimating $G$ via least-squares regression gives the estimation error decomposition \eqref{eqn:estimation_error} and \eqref{eqn:estimation_error_decomposition} which are upper bounded to get the following result.

\begin{theorem}[Learning Markov-like parameters]\label{thm:main_LDS_BO}
    \ifthenelse{\equal{\version}{arxiv}}{
    Fix $\delta \in (0,1)$ and a history length $\tau > 0$. 
    Suppose Assumptions~\ref{assump:stability_linear} and~\ref{assump:input+noise} hold, and we are given a single input-output trajectory $\{(u_t,y_t)\}_{t=0}^T$ of the system~\eqref{eqn:LDS-BO}. 
    Construct $\{\phi_t\}_{t=\tau+1}^T$ according to \eqref{eqn:G_phi_r_eta_LDS_BO}.
    Let $\Ghat = ( \sum_{t=\tau}^T   y_t \phi_{t}^\top) (\sum_{t=\tau}^{T} \phi_{t} \phi_{t}^\top )^{\dagger}$ be the least-squares estimator (LSE) of $G$ in \eqref{eqn:G_phi_r_eta_LDS_BO}. Setting $d_{\phi} := d_u(d_u+1)(\tau-1)$, the event:
    }{
    Fix $\delta {\in} (0,1)$ and a history length $\tau > 0$. 
    Suppose Assumptions~\ref{assump:stability_linear} and~\ref{assump:input+noise} hold, and we are given a single input-output trajectory $\{(u_t,y_t)\}_{t=0}^T$ of the system~\eqref{eqn:LDS-BO}. 
    Construct $\{\phi_t\}_{t=\tau+1}^T$ according to \eqref{eqn:G_phi_r_eta_LDS_BO}.
    Let $\Ghat = ( \sum_{t=\tau}^T   y_t \phi_{t}^\top) (\sum_{t=\tau}^{T} \phi_{t} \phi_{t}^\top )^{\dagger}$ be the least-squares estimator (LSE) of $G$ in \eqref{eqn:G_phi_r_eta_LDS_BO}. Setting $d_{\phi} := d_u(d_u+1)(\tau-1)$, the event:
    }
    \ifthenelse{\equal{\version}{arxiv}}{
    \begin{equation}
    \begin{aligned}\label{eq:error bound}
\norm {\Ghat  - G }_\op  &\leq \frac{c_1\sigma + c_2 \sqrt{d_\phi} \rho^{\tau-1}}{1-\rho}  \sqrt{\frac{\left(d_\phi  + d_y + \log\left(\frac{\tau}{\delta}\right)\right)}{T-\tau+1}}
    \end{aligned}
    \end{equation}
    }{
    \begin{equation}
    \begin{aligned}\label{eq:error bound}
         \norm {\Ghat  - G }_\op  &\leq \frac{c_1\sigma + c_2 \sqrt{d_\phi} \rho^{\tau-1}}{1-\rho}  \sqrt{\frac{\left(d_\phi  + d_y + \log\left(\frac{\tau}{\delta}\right)\right)}{T-\tau+1}} 
    \end{aligned}
    \end{equation}
    }
    holds with probability at least $1-\delta$, provided that 
\ifthenelse{\equal{\version}{arxiv}}{
    \begin{align}
    T - \tau+1 \gtrsim  \tau \left(d_\phi \log\left(d_\phi\right) + \log\left(\frac{\tau}{\delta}\right)  \right), \label{eqn:trajectory_size_main_thm}
    \end{align}
    }{
    \begin{align}
   T - \tau+1 \gtrsim  \tau \left(d_\phi \log\left(d_\phi\right) + \log\left(\frac{\tau}{\delta}\right)  \right),  \label{eqn:trajectory_size_main_thm}
    \end{align}
    }
where $c_1$, $c_2$ are system dependent constants, and there exists a universal constant $c_0 >0$, such that $c_1 = c_0\left(1 + {\kappa \beta_C}\right)$ and $c_2 = c_0 \kappa \beta_C \sqrt{\sigma^2 + \Vert B \Vert_\op^2}$. Here, we define $\beta_C := \sup_{t \leq T}\opn{u_t \circ C}$.
\end{theorem}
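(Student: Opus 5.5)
The argument follows the three-term template of \eqref{eqn:estimation_error}--\eqref{eqn:estimation_error_decomposition}, applied to the sample $t=\tau,\dots,T$ of size $N:=T-\tau+1$. The strategy is to lower bound the Gram matrix, bound the noise term with the self-normalized inequality, and bound the truncation term with the Azuma-type inequality. Each piece is run at confidence $\delta/3$, and the three events are combined with a union bound.

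\textbf{Step 1 (persistence of excitation).} First I check that $\{\phi_t\}$ satisfies Assumption~\ref{assump:phi_t distribution}.

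\emph{Block dependence.} The vector $\phi_t=\util_t\otimes[u_{t-1};\dots;u_{t-\tau+1}]$ depends only on $u_{t-\tau+1},\dots,u_t$, so $\phi_t$ and $\phi_s$ are independent once $|t-s|\ge\tau$.

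\emph{Zero mean and isotropy.} For $u\sim\mathrm{Unif}(\sqrt{d_u}\,\cS^{d_u-1})$ we have $\EE u=0$ and $\EE uu^\T=I$. Hence $\EE\util\util^\T=I_{d_u+1}$, and by independence $\EE\phi_t\phi_t^\T=I$. The mean vanishes because every entry of $\phi_t$ contains a factor $u_{t-j}$ with $j\ge1$.

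\emph{Hypercontractivity.} I would write $x^\T\phi_t=\sum_j \util_t^\T M_j u_{t-j}$, which is a multilinear polynomial in independent vectors, each $(4,2,3)$-hypercontractive. Conditioning successively on the independent blocks and using that a product of independent hypercontractive factors multiplies the constants, I expect $\gamma\le 9$ (or a similar absolute constant).

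Theorem~\ref{thm:persistence} then gives $\lambda_{\min}(\sum_t\phi_t\phi_t^\T)\ge N/4$ under \eqref{eqn:trajectory_size_main_thm}. The $\log(16d_\phi/\delta)$ factor there is absorbed into $d_\phi\log d_\phi+\log(\tau/\delta)$.

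\textbf{Step 2 (noise term $E_1$).} Here $\eta_t=H_t\omega_t=v_t+(u_t\circ C)\sum_{j=1}^{\tau-1}A^{j-1}w_{t-j}$. This is not a martingale difference with respect to the natural filtration, because $w_{t-j}$ also enters $\phi_{t'}$ for nearby $t'$ through the state only, not through the features. The features depend only on inputs, and the inputs are independent of all noise.

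I would therefore condition on the entire input sequence. Given the inputs, $\phi_t$ is deterministic, and $\sum_t\eta_t\phi_t^\T$ is a linear function of the independent Gaussian-like noise. I would split it as $\sum_{j=0}^{\tau-1}$ of terms each involving a single shifted noise sequence $\{w_{t-j}\}_t$ (or $v_t$). For each fixed $j$, Proposition~\ref{prop:self-normalized vector martingale} applies with a filtration indexed by $t$, since the covariate $(u_t\circ C A^{j-1})^\T$-weighted $\phi_t$ is measurable given the inputs. A cleaner route is to vectorize: $\vecop(\sum_t\eta_t\phi_t^\T)$ is a Gaussian-chaos-free linear form, subgaussian with proxy $\sigma^2\|\cdot\|$. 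Its covariance is bounded by $\sigma^2\big(1+\kappa\beta_C\sum_j\rho^{j-1}\big)^2\sum_t\phi_t\phi_t^\T$.

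Using a net argument over $\cS^{d_y-1}\times\cS^{d_\phi-1}$ (Lemma~\ref{lem:two-sided-net-nonsym}), I expect
\[
\Big\|\Big(\sum_t\eta_t\phi_t^\T\Big)\Big(\sum_t\phi_t\phi_t^\T\Big)^{-1/2}\Big\|_\op\lesssim \frac{\sigma(1+\kappa\beta_C)}{1-\rho}\sqrt{d_\phi+d_y+\log(1/\delta)}.
\]
Combining this with Step 1 gives the $c_1\sigma$ term.

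\textbf{Step 3 (truncation term $E_2$), the main obstacle.} Here $z_t=(u_t\circ C)A^{\tau-1}x_{t-\tau+1}$, where $x_{t-\tau+1}$ depends on $u_{s},w_s$ for $s\le t-\tau$, while $\phi_t$ depends on $u_{t-\tau+1},\dots,u_t$. So $x_{t-\tau+1}$ is independent of $\phi_t$, but the products are dependent across $t$.

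I would use the blocking trick: split $t$ into $\tau$ residue classes modulo $\tau$. Within one class, order the time indices and take $\Fcal$ generated by everything up to the block. Then $x_{t-\tau+1}$ is predictable, and $\phi_t$ is fresh, zero-mean, and bounded; since $\|u\|=\sqrt{d_u}$, we have $\|\phi_t\|\le\sqrt{d_\phi}$ up to constants.

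For fixed unit vectors $a,b$, I would apply Proposition~\ref{lem:freedman} with the roles swapped. The bounded centered variable is $b^\T\phi_t$ (bounded, hence subgaussian), and the predictable weight is $a^\T(u_t\circ C)A^{\tau-1}x_{t-\tau+1}$. This weight is not predictable because of $u_t$, so I instead absorb $u_t$ into the fresh part, pairing $\util_t\otimes\phi_t$. It is also not a.s. bounded, so I first bound $\max_t\|x_t\|\lesssim\kappa\sqrt{\sigma^2+\|B\|^2}\sqrt{d}\,/(1-\rho)$ up to log factors on a high-probability event, and truncate.

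A net over $a,b$ and a union over the $\tau$ classes then give
\[
\Big\|\sum_t z_t\phi_t^\T\Big\|_\op\lesssim\kappa\beta_C\rho^{\tau-1}\frac{\sqrt{\sigma^2+\|B\|^2}}{1-\rho}\sqrt{d_\phi}\sqrt{N(d_\phi+d_y+\log(\tau/\delta))}.
\]
Dividing by $N/4$ yields the $c_2\sqrt{d_\phi}\rho^{\tau-1}$ term.

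The delicate points are the unbounded state, where the extra $\log$ factors must be hidden in $c_0$ or $d_\phi$, and the correct measurability ordering. If the bookkeeping fails, I would fall back on a crude Cauchy–Schwarz bound, which still gives a vanishing contribution since $\rho^{\tau-1}$ is small.
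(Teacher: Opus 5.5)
Your Steps 1 and 2 essentially follow the paper: hypercontractivity with $\gamma=9$ is fed into Theorem~\ref{thm:persistence}, and you condition on the inputs and split $\eta_t$ into $v_t$ plus the shifted sequences $(u_t\circ C)A^{j-1}w_{t-j}$. Your vectorized variant of Step 2 is also sound. Given the inputs, $\sum_t\phi_t\phi_t^\top$ is deterministic, so each scalar $a^\top(\sum_t\eta_t\phi_t^\top)(\sum_t\phi_t\phi_t^\top)^{-1/2}b$ is a linear form in independent noise with proxy at most $\sigma^2(1+\kappa\beta_C/(1-\rho))^2$.

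\textbf{The gap is in Step 3.}

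Residue-class blocking gives $\tau$ martingales of length $N/\tau$. After the triangle inequality, Azuma yields $K\sqrt{\tau N\log(\cdot)}$, not $K\sqrt{N\log(\cdot)}$.

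Your almost-sure bound $K$ on each summand also has to use the truncation level for the state. That carries a factor $\sqrt{d_x+\log(T/\delta)}$, plus a second $\kappa$ if you bound $\|x_t\|_{\ell_2}$ before applying $A^{\tau-1}$.

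So the scheme actually proves your displayed bias bound multiplied by roughly $\kappa\sqrt{\tau(d_x+\log(T/\delta))}$. None of these factors can be hidden in the universal $c_0$ or in $d_\phi$, so $c_2=c_0\kappa\beta_C\sqrt{\sigma^2+\|B\|_{\op}^2}$ is not reached. Your fallback is worse: Cauchy--Schwarz only gives $E_2\lesssim\max_t\|z_t\|_{\ell_2}$, which does not decay with $N$ at all.

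\textbf{The missing idea} is to never use the state as a weight. Expand $x_{t-\tau+1}=\sum_{s=0}^{t-\tau}A^{t-\tau-s}(Bu_s+w_s)$ and interchange the sums:
\[
\sum_{t=\tau}^{T}(\theta^\top\phi_t)(\lambda^\top z_t)=\sum_{s=0}^{T-\tau}f_s^\top(Bu_s+w_s),\qquad f_s^\top:=\sum_{t=s+\tau}^{T}(\theta^\top\phi_t)\,\lambda^\top(u_t\circ C)A^{t-1-s}.
\]
Every $\phi_t$ and $u_t$ with $t\ge s+\tau$ involves only $u_{s+1},\dots,u_T$. Hence $f_s$ depends only on strictly later inputs. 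Meanwhile $Bu_s+w_s$ is zero-mean, subgaussian, and independent of $(u_{s'},w_{s'})_{s'>s}$. The sum is therefore a martingale when indices are revealed in reverse order.

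Moreover, almost surely
\[
\|f_s\|_{\ell_2}\le\sqrt{d_\phi}\,\beta_C\kappa\rho^{\tau-1}/(1-\rho),
\]
because $f_s$ involves only the bounded inputs and the geometric decay of $A^{t-1-s}$. Proposition~\ref{lem:freedman} then applies to all $N$ terms at once, with no blocking, no truncation and no $d_x$. A $1/4$-net over $(\theta,\lambda)$ gives
\[
\Big\|\sum_t\phi_tz_t^\top\Big\|_{\op}\lesssim\frac{\kappa\beta_C\rho^{\tau-1}}{1-\rho}\sqrt{\sigma^2+\|B\|_{\op}^2}\sqrt{d_\phi N(d_\phi+d_y+\log(1/\delta))},
\]
which after dividing by $N/4$ is exactly the $c_2$ term.

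\textbf{A smaller correction in Step 1.} The term $d_\phi\log(1+16d_\phi/\delta)$ in \eqref{eqn:trajectory_size_main} contains $d_\phi\log(1/\delta)$, which is not absorbed into $d_\phi\log d_\phi+\log(\tau/\delta)$. To obtain \eqref{eqn:trajectory_size_main_thm}, use that $\|\phi_t\|_{\ell_2}^2=d_\phi$ almost surely. Then $\|\sum_t\phi_t\phi_t^\top\|_{\op}\le Nd_\phi$ deterministically, and a $1/(4d_\phi)$-net replaces the Markov step, as in Remark~\ref{remark:covering_when_bounded_app}.
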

\ifthenelse{\equal{\version}{arxiv}}{
The proof of Theorem~\ref{thm:main_LDS_BO} is deferred to Appendix~\ref{app:LDS-BO}. 
}{
The proof of Theorem~\ref{thm:main_LDS_BO} is deferred to Appendix D.1 in \cite{sattar2026learning}. 
}
Theorem~\ref{thm:main_LDS_BO} extends the results in \cite{sattar2024learning}, \cite{choi2025explore} to a more general bilinear observation model with $y_t = (u_t \circ C) x_t + v_t$ (unlike $y_t = u^\T C x_t + v_t$ in  \cite{sattar2024learning}, \cite{choi2025explore}). The proof of Theorem~\ref{thm:main_LDS_BO} relies on showing that $\phi_t$ in \eqref{eqn:G_phi_r_eta_LDS_BO} satisfies the hypercontractivity condition (Def.~\ref{def:hypercont}) with $\gamma=9$. This is combined with Theorem~\ref{thm:persistence} to guarantee persistence of excitation.

\subsection{Fully Observed Bilinear Dynamical Systems}\label{subsec:sysid_full} 
In this subsection, we consider the class of state-observed bilinear dynamical systems (BDS), which is also a special case of the dynamical systems of the form~\eqref{eqn:general_dynamics}. The state-space representation of BDS is given by
\begin{equation} \label{eqn:BDS-FO}
     x_{t+1} = A_0 x_t + \sum_{k=1}^{d_u} (u_t)_k A_k x_t + B u_t +  w_t, \tag{BDS}
\end{equation}
where $A_0, A_1, \dots, A_{d_u}$ denote the $d_u+1$ state matrices taking values in $\R^{d_x \times d_x}$, and $B$ denote the input matrix taking values in $\R^{d_x \times d_u}$.
Note that \eqref{eqn:BDS-FO} can also be viewed as a dynamical system with input-dependent state matrix $u_t \circ A = A_0  + \sum_{k=1}^{d_u} (u_t)_k A_k $.
The learning problem in this setting can also be formulated as a least-squares regression.
Specifically, note that \eqref{eqn:BDS-FO} can alternately be written as
\begin{equation}
\begin{aligned} \label{eqn:G_phi_t_BDS_FO}
    x_{t+1} &=  G \,\phi_t + w_t,  \\  
   \text{where} \quad G &= \begin{bmatrix} B & A_0 & A_1 & \cdots & A_{d_u}\end{bmatrix}, \\
   \text{and} \quad \phi_t &=  \begin{bmatrix} u_{t} \\
    \util_{t} \otimes x_{t}\end{bmatrix}, 
\end{aligned}
\end{equation}
where $\util_t$ is as defined in \eqref{eqn:util_omega_def}.
Note that when the state $x_t$ is known, the approximation error satisfies $z_t = 0$ and, thus, and we estimate $G$ in \eqref{eqn:G_phi_t_BDS_FO} by regressing $\{x_{t+1}\}_{t=1}^{T}$ on $\{\phi_t\}_{t=1}^{T}$.

\subsubsection{Input Choice \& Stability of Bilinear Dynamical Systems}\label{sec:stability}

Stability of bilinear dynamical systems is typically input dependent. 
To see that, we can unroll the state dynamics in \eqref{eqn:BDS-FO} to write for all $t \ge 0$
\begin{equation}
\begin{aligned}\label{eqn:bilinear sys state}
        x_{t+1} & = \sum_{\ell = 0}^t \left(\prod_{k = 0}^{\ell-1} (u_{t-k} \circ A)\right) \left(B u_{t-\ell}  +  w_{t-\ell}\right). 
\end{aligned}
\end{equation} 
Observe that the products of matrices $\prod_{k = 0}^{\ell-1} (u_{t-k} \circ A)$ may grow exponentially in norm if we consistently choose large inputs.
This is precisely why stability in bilinear dynamical systems is more challenging than other classes of systems such as linear dynamical systems or switched systems.

Traditionally, notions like Mean Square Stability (MSS) have been considered to reason about the stability behavior of bilinear systems \citep{kubrusly1985mean, pardalos2010optimization, sattar2022finite}. Typically, these notions are asymptotic in nature, require distributional assumptions on the inputs,
permit diverging trajectories with nonzero probability, and may not allow us to obtain tight guarantees. We introduce an alternative notion of stability that naturally generalizes the classical notion of stability in standard LTI systems.

\paragraph{Uniform stability in bilinear dynamical systems} First, let us recall that the \emph{joint spectral radius} of a set of matrices $\Mcal \subseteq \RR^{d \times d}$ can be defined as  
\begin{equation}
    \rho(\Mcal)  := \lim_{k \to \infty}\sup_{M_{1}, \dots, M_k \in \Mcal}  \norm {M_{1} M_{2} \cdots M_{k}}_\op^{1/k}.
\end{equation}
For $\rho > 0$, we define
$
    \kappa(\Mcal, \rho) := \sup_{k \ge 1, M_1, \dots, M_k \in \Mcal} \frac{\norm{M_{1} M_{2} \cdots M_{k}}_\op}{ \rho^k}.
$
The quantity $\kappa(\Mcal, \rho)$ is defined in similar vein to that by \citet{mania2019certainty} for LDS, and it captures the transient behavior of a system with state transition matrices varying in $\Mcal$. Note that if $\rho(\Mcal) < 1$, then for any $\rho > \rho(\Mcal)$, the quantity $\kappa(\Mcal, \rho)$ is finite. Now, given a set $\Ucal \subseteq \RR^{d_u}$, we denote  $\Ucal \circ A := \lbrace  A_0 + \sum_{i=1}^{d_u} (u)_i A_i: u \in \Ucal \rbrace$ and introduce the following definition of stability.

\begin{definition}[$( \Ucal, \kappa, \rho)$-uniform-stability]\label{def:stability} 
    Let $\Ucal \subseteq \RR^{d_u}$, $\kappa \ge 1$, and $ 0< \rho < 1$. We say that a bilinear dynamical system (as defined in \eqref{eqn:BDS-FO}) with state-transition matrices $\Acal := \lbrace A_0, \dots, A_{d_u} \rbrace$ is $( \Ucal, \kappa, \rho)$-uniformly-stable, if the joint spectral radius of the set $\Ucal \circ \Acal$ satisfies: (i) $\rho(\Ucal \circ \Acal) < \rho< 1$; and (ii) $\kappa(\Ucal\circ \Acal, \rho)  \le \kappa $.
\end{definition}
Note that, if there exists a nonempty and bounded set $\Ucal \subseteq \RR^p$ such that $\rho(\Ucal \circ \Acal)< 1$, then for any $\rho(\Ucal \circ \Acal) < \rho < 1$, the system is $(\Ucal, \kappa, \rho)$-uniformly stable with $\kappa = \kappa(\Ucal \circ \Acal, \rho) \vee 1$.  
Furthermore, we note that Definition \ref{def:stability} naturally generalizes that introduced by \citet{monfared2023stabilization}.
Indeed, there the authors assume that there exists $u^\star$ such that $\rho(u^\star \circ A) < 1$. This is equivalent to assuming that their system is $(\lbrace u^\star\rbrace, \kappa, \rho)$-uniformly-stable for some $\kappa \ge 1$ and $\rho(u^\star \circ A) <\rho < 1$. 
We need stronger requirements on the stability of the system in comparison with \citet{monfared2023stabilization} because we are concerned with the task of identification. 
This requirement stems from the need to have persistence of excitation so that estimation is possible. 
\paragraph{Input choice} We consider that the inputs $\lbrace u_t \rbrace_{t\ge 0}$ are sampled in an i.i.d. manner from some distribution $\Dcal_{u}$ on $\RR^{d_u}$. For ease of exposition, we will focus on the case where inputs are sampled uniformly at random from a sphere of radius $\sqrt{d_u}$, i.e., $u_t \sim \mathrm{Unif}(\sqrt{d_u} \cdot\Scal^{d_u-1})$.
More generally, as long as the inputs are isotropic and are bounded with high probability, our results will still hold at the expense of longer proofs. 
Putting together this input choice with the stability definition, we are now ready to present the assumption we make on the stability of the bilinear system \eqref{eqn:BDS-FO}.
\begin{assumption}[Stability]\label{assump:stability}
    There exists $\kappa \ge 1$ and $\rho \in (0,1)$ such that the bilinear dynamical system \eqref{eqn:BDS-FO} is $(\sqrt{d_u}\cdot\Scal^{d_u-1}, \kappa, \rho)$-uniformly stable. 
\end{assumption} 
In view of Assumption \ref{assump:stability}, choosing inputs uniformly at random from $\sqrt{d_u}\cdot \cS^{d_u-1}$ guarantees stability almost surely. 
More generally, we can choose to sample inputs from any set $\Ucal$, as long as the system is stable under such a set in the sense of Definition \ref{def:stability}. However, the quality of estimation depends on whether inputs sampled from $\Ucal$ are persistently exciting or not (see \textsection\ref{sec:onesided}).

\subsubsection{Learning State-Space Parameter Matrix}\label{subsec:est_error_decomp_BDS_FO}
In this subsection, we derive finite-sample guarantees on learning the state-space parameter matrix $G {=} \begin{bmatrix} B & A_0 & A_1 & \cdots & A_{d_u}\end{bmatrix}$ via regressing $\{x_{t+1}\}_{t=1}^T$ on $\{\phi_t\}_{t=1}^T$. 
Before we present our main result on learning $G$, we introduce a few quantities which appears in our sample complexity and error bounds. 
Note that in the case of \eqref{eqn:BDS-FO}, $\phi_t$ also contains $x_t$.
Hence, the properties of $x_t$ directly affect the learning guarantees. Looking at \eqref{eqn:bilinear sys state}, under Assumption~\ref{assump:input+noise} and $\E[w_t w_t^\top] = \sigma^2 I_{d_x}$, the covariance of $x_t$ takes the form
\begin{equation}
    \Gamma_x^{(t)}:= \E[x_t x_t^\T]  = \sum_{\ell = 1}^{t} \bar{G}_\ell \bar{G}_\ell^\top + \sigma^2 \sum_{\ell = 1}^{t} \bar{F}_\ell \bar{F}_\ell^\top, \label{eqn:BDS_covariance}
\end{equation}
\ifthenelse{\equal{\version}{arxiv}}{
where $\bar{G}_1 := B $, $\bar{G}_{\ell} := \lbrace A_{i_1} A_{i_2} \cdots A_{i_{\ell-1}} B  \rbrace_{i_1,i_2 \dots, i_{\ell -1}  \in \lbrace 0, \dots, d_u\rbrace} $ for $\ell \in \lbrace 2, \dots, t \rbrace$. 
$\{\bar{G}_\ell\}_{\ell=1}^{t}$ contains $\lbrace B, \lbrace A_{i_1} B \rbrace_{i_1 \in \lbrace 0, \dots, d_u \rbrace},$ $ \dots, \lbrace A_{i_1} A_{i_2}  \cdots A_{i_{t-1}} B \rbrace_{i_1, i_2 \dots, i_{t - 1}  \in \lbrace 0, \dots, d_u\rbrace} \rbrace $. The matrices $\{\bar{F}_\ell\}_{\ell=1}^{t}$ are defined similarly, with $B$ replaced by $I_{d_x}$. Similarly, the covariance of $\phi_t$ can be expressed as,
}{
where $\bar{G}_1 {:=} B $, $\bar{G}_{\ell} {:=} \lbrace A_{i_1}\cdots A_{i_{\ell-1}} B  \rbrace_{i_1 \dots, i_{\ell -1}  \in \lbrace 0, \dots, d_u\rbrace} $ for $\ell \in \lbrace 2, \dots, t \rbrace$. 
$\{\bar{G}_\ell\}_{\ell=1}^{t}$ contains $\lbrace B, \lbrace A_{i_1} B \rbrace_{i_1 \in \lbrace 0, \dots, d_u \rbrace},$ $ \dots, \lbrace A_{i_1} A_{i_2}  \cdots A_{i_{t-1}} B \rbrace_{i_1, i_2 \dots, i_{t - 1}  \in \lbrace 0, \dots, d_u\rbrace} \rbrace $. The matrices $\{\bar{F}_\ell\}_{\ell=1}^{t}$ are defined similarly, with $B$ replaced by $I_{d_x}$. Similarly, the covariance of $\phi_t$ can be expressed as,
}
\begin{align}
    \Gamma_\phi^{(t)}:= \E[\phi_t\phi _t^\T]  = \diag\left(I_{d_u}, I_{d_u+1} \otimes \Gamma_x^{(t)}\right). \label{eqn:BDS_phi_covariance}
\end{align}
We are now ready to use the estimation error decomposition in \eqref{eqn:estimation_error_decomposition} to derive finite-sample learning guarantees for the bilinear system~\eqref{eqn:BDS-FO} as follows.

\begin{theorem}[Learning state-space matrices]\label{thm:main_BDS_FO}
    Fix $\delta \in (0,1)$. Suppose Assumptions~\ref{assump:input+noise} and \ref{assump:stability} hold, and $\E[w_t w_t^\top] = \sigma^2 I_{d_x}$ for all $t \in [0,T]$.
    Given a single input-state trajectory $\{(u_t,x_t)\}_{t=0}^T$ of the system~\eqref{eqn:BDS-FO}, construct $\{\phi_t\}_{t=1}^T$ according to \eqref{eqn:G_phi_t_BDS_FO}.
    Let $\Ghat = ( \sum_{t=1}^T   x_{t+1} \phi_{t}^\top) (\sum_{t=1}^{T} \phi_{t} \phi_{t}^\top )^{\dagger}$ be the least-squares estimator (LSE) of $G$ in \eqref{eqn:G_phi_t_BDS_FO}. Fix an integer $\tau>0$, and let $\Gamma_x^{(\tau)}$, $\Gamma_\phi^{(\tau)}$ be as in \eqref{eqn:BDS_covariance} and \eqref{eqn:BDS_phi_covariance}, respectively. 
    Setting $d_{\phi} := d_u + (d_u+1) d_x$, the event:
\ifthenelse{\equal{\version}{arxiv}}{
    \begin{equation}\label{eq:error bound}
         \norm {\Ghat  - G }_\op \leq c_0 \sigma \mysqrt[1pt]{\frac{d_\phi\log\left(\frac{d_\phi}{\delta}\right) + \log\left(\det\left({\Gamma_\phi^{(T)}}/{\lambda_{\min}\left(\Gamma_{\phi}^{(\tau)}\right)}\right)\right)}{T\left(1\land \lambda_{\min}\left(\Gamma_{x}^{(\tau)}\right)\right)}}
    \end{equation}
    }{
    \begin{equation}
          \begin{aligned}\label{eq:error bound}
          &\norm {\Ghat  - G }_\op \leq \\
          &c_0 \sigma \mysqrt[1pt]{\frac{d_\phi\log\left(\frac{d_\phi}{\delta}\right) + \log\left(\det\left({\Gamma_\phi^{(T)}}/{\lambda_{\min}\left(\Gamma_{\phi}^{(\tau)}\right)}\right)\right)}{T\left(1\land \lambda_{\min}\left(\Gamma_{x}^{(\tau)}\right)\right)}}
          \end{aligned}
    \end{equation}
    }
    holds with probability at least $1-\delta$, provided that
\ifthenelse{\equal{\version}{arxiv}}{
    \begin{align}
    T &\gtrsim  \tau \bar{\gamma} \bigg(\log\left(\frac{\tau}{\delta}\right) + d_\phi \log\bigg(1+\frac{d_u + 2d_u\tr(\Gamma_x^{(\tau)})}{\delta}\bigg)  \bigg),  \label{eqn:trajectory_size_main_thm II}
    \end{align}
    }{
    \begin{align}
    T &\gtrsim  \tau \bar{\gamma} \bigg(\log\left(\frac{\tau}{\delta}\right) + d_\phi \log\bigg(1+\frac{d_u + 2d_u\tr(\Gamma_x^{(\tau)})}{\delta}\bigg)  \bigg), \nn  \label{eqn:trajectory_size_main_thm II}
    \end{align}
    }
where $c_0>0$ is a universal constant, and $\tau$, $\bar \gamma$ are system dependent constants defined as
\begin{align*}
    \tau &:= \left\lceil 1 + \polylog\left(d_x,d_u,\kappa,\rho,\sigma,\opn{B},\lambda_{\min}\left(\Gamma_x^{(\tau)}\right)\right) \right\rceil, \\
    \bar \gamma &:= \bigg(1 +  \frac{\kappa^2(\sigma +\norm{B}_\op)^2}{(1-\rho)^2\lambda_{\min}\left(\Gamma_x^{(\tau)}\right)} \bigg)^2. 
\end{align*}  
\end{theorem}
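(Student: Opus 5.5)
Since $x_{t+1} = G\phi_t + w_t$ exactly, the truncation term vanishes ($z_t=0$, so $E_2=0$ in \eqref{eqn:estimation_error}). The error therefore reduces to the noise term $E_1$, which by \eqref{eqn:estimation_error_decomposition} is bounded by a persistence-of-excitation factor times a self-normalized martingale factor. The plan is to (i) lower bound $\lambda_{\min}(\sum_t \phi_t\phi_t^\top)\gtrsim T(1\wedge\lambda_{\min}(\Gamma_x^{(\tau)}))$ and (ii) bound the self-normalized term by Proposition~\ref{prop:self-normalized vector martingale}. For (ii), take the filtration $\cF_t=\sigma(u_{0:t},w_{0:t})$; then $\phi_t$ is $\cF_{t-1}$-measurable, or more precisely measurable with respect to $\sigma(u_{0:t}, w_{0:t-1})$, which is the filtration I actually use. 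Choose the regularizer $\Lambda = \frac{T}{4}\lambda_{\min}(\Gamma_\phi^{(\tau)}) I$. On the PE event, $\sum_t\phi_t\phi_t^\top + \Lambda \preceq 2\sum_t \phi_t\phi_t^\top$, so the regularized and unregularized norms agree up to a factor $\sqrt2$. The log-determinant $\log\det(\sum\phi_t\phi_t^\top+\Lambda)/\det\Lambda$ is then controlled via Markov's inequality on the trace: $\E\sum_t\phi_t\phi_t^\top \preceq T\,\Gamma_\phi^{(T)}$, since $\Gamma_x^{(t)}$ is monotone in $t$. Holding with probability $1-\delta$, this gives $\sum_t\phi_t\phi_t^\top \preceq (T d_\phi/\delta)\Gamma_\phi^{(T)}$, which produces the $d_\phi\log(d_\phi/\delta)+\log\det(\Gamma_\phi^{(T)}/\lambda_{\min}(\Gamma_\phi^{(\tau)}))$ numerator. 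Combining with $\opn{(\sum\phi_t\phi_t^\top)^{-1/2}}\le (cT(1\wedge\lambda_{\min}(\Gamma_x^{(\tau)})))^{-1/2}$ yields \eqref{eq:error bound}, using $\lambda_{\min}(\Gamma_\phi^{(\tau)})=1\wedge\lambda_{\min}(\Gamma_x^{(\tau)})$ from \eqref{eqn:BDS_phi_covariance}.

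The main work is (i), because $\phi_t$ is neither block-dependent nor identically distributed: $x_t$ depends on the entire past. I will use a truncation/coupling argument. Define $\tilde x_t := \sum_{\ell=0}^{\tau-1}\big(\prod_{k=0}^{\ell-1}(u_{t-1-k}\circ A)\big)(Bu_{t-1-\ell}+w_{t-1-\ell})$, the state obtained by unrolling \eqref{eqn:bilinear sys state} for only $\tau$ steps, and let $\tilde\phi_t$ be built from it. Then $\{\tilde\phi_t\}$ is block-dependent with window $\tau+1$, and for $t\ge\tau$ it is identically distributed with covariance $\Gamma_\phi^{(\tau)}$. By Assumption~\ref{assump:stability}, the products are bounded by $\kappa\rho^\ell$ almost surely. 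The residual $x_t-\tilde x_t = \big(\prod(u\circ A)\big)x_{t-\tau}$ therefore has norm at most $\kappa\rho^\tau\norm{x_{t-\tau}}$, and a high-probability bound $\max_t\norm{x_t}\lesssim \kappa(\sigma+\opn{B})\sqrt{d_x\log(T/\delta)}/(1-\rho)$ follows from subgaussian concentration of the $w$'s with bounded coefficients. Choosing $\tau$ polylogarithmic, as in the statement, makes $\sum_t\norm{\phi_t-\tilde\phi_t}^2$ negligible compared with $T\lambda_{\min}(\Gamma_\phi^{(\tau)})$. Weyl's inequality together with $\sum\phi\phi^\top \succeq \frac12\sum\tilde\phi\tilde\phi^\top - \sum(\phi-\tilde\phi)(\phi-\tilde\phi)^\top$ then transfers PE from $\tilde\phi$ to $\phi$.

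To apply Theorem~\ref{thm:persistence} to the whitened vectors $(\Gamma_\phi^{(\tau)})^{-1/2}\tilde\phi_t$ (zero mean holds because $u$ is symmetric and independent of $\tilde x_t$), I must verify $(4,2,\bar\gamma)$-hypercontractivity. This is the step I expect to be the main obstacle. For a direction $v=(a,b_0,\dots,b_{d_u})$, write $v^\top\tilde\phi_t = a^\top u_t + \tilde u_t^\top M\tilde x_t$. Since $u_t$ is independent of $\tilde x_t$, and uniform on the sphere is hypercontractive with constant at most $3$, I would bound the fourth moment conditionally on $\tilde x_t$ and reduce to controlling $\E\norm{\tilde x_t}^4$ against the second moments. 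The fourth moment is handled crudely by almost-sure boundedness of the products and subgaussianity of $w$, giving $\E\norm{\tilde x_t}^4 \lesssim \big(\kappa^2(\sigma+\opn{B})^2/(1-\rho)^2\big)^2$ up to dimension factors. The second moment is lower bounded by $\lambda_{\min}(\Gamma_x^{(\tau)})$ times the squared norm. Taking the ratio gives $\gamma\lesssim\bar\gamma$. Theorem~\ref{thm:persistence} with a union bound then delivers the sample-size condition \eqref{eqn:trajectory_size_main_thm II}, where the $\tr(\Gamma_x^{(\tau)})$ inside the logarithm arises from bounding $\opn{W}$ in the net argument of Lemma~\ref{lem:one sided net}. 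The difficulty is making the dimension factors in this fourth-moment bound match $\bar\gamma$ exactly; I would first try the Gaussian-like estimate $\E(v^\top\tilde x)^4\le C(\E(v^\top \tilde x)^2)^2$, and fall back to the crude norm bound if that fails.
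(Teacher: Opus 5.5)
Your architecture is the paper's. You set $E_2=0$, prove persistence of excitation for a $\tau$-truncated, block-dependent surrogate whitened by $\Gamma_\phi^{(\tau)}$ via Theorem~\ref{thm:persistence}, and transfer it back to $\phi_t$. Your inequality $\phi_t\phi_t^\top\succeq\frac12\tilde\phi_t\tilde\phi_t^\top-(\phi_t-\tilde\phi_t)(\phi_t-\tilde\phi_t)^\top$ is a slightly cleaner replacement for the paper's Weyl argument, which also needs $\|\phi_t\|+\|\tilde\phi_t\|$. You then apply Proposition~\ref{prop:self-normalized vector martingale} with a regularizer at the persistence level and a matrix Markov bound on the log-determinant. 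Take $\Lambda$ no larger than the constant you actually obtain, roughly $\frac{T}{8}\lambda_{\min}(\Gamma_\phi^{(\tau)})$ minus the truncation error, not $\frac{T}{4}$.

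The genuine gap is the step you flag yourself: $(4,2,\bar\gamma)$-hypercontractivity of the whitened $\tilde\phi_t$ with a dimension-free $\bar\gamma$. Neither of your routes delivers it.

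\emph{The crude route is intrinsically lossy.} Write $\tilde x_t'=(\Gamma_x^{(\tau)})^{-1/2}\tilde x_t$. The bound $\E(b^\top\tilde x_t')^4\le\|b\|^4\,\E\|\tilde x_t'\|^4$ can never do better than $\E\|\tilde x_t'\|^4\ge(\E\|\tilde x_t'\|^2)^2=d_x^2$. You therefore end with $\gamma\gtrsim d_x^2\bar\gamma$, plus possible $d_u$ factors from $\|u_t\|=\sqrt{d_u}$. That inflates the required $T$, so it does not prove the theorem as stated.

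\emph{The universal-constant route is unavailable.} A ``Gaussian-like'' bound $\E(v^\top\tilde x)^4\le C\,(\E(v^\top\tilde x)^2)^2$ with universal $C$ fails in general. $\tilde x_t$ contains products of independent inputs, whose kurtosis compounds; compare the $3^\ell$ growth the paper notes for such products. The constant must be system-dependent, and you give no mechanism for obtaining it.

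The missing idea is a directional $L^4$ bound from Minkowski's inequality applied term by term to the unrolled sum
$v^\top\tilde x_t'=\sum_{\ell=0}^{\tau-1}v^\top(\Gamma_x^{(\tau)})^{-1/2}\Pi_\ell(Bu_{t-1-\ell}+w_{t-1-\ell})$, where $\Pi_\ell=\prod_{k=0}^{\ell-1}(u_{t-1-k}\circ A)$.

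1. Condition each summand on $\Pi_\ell$. It is independent of the innermost $u_{t-1-\ell},w_{t-1-\ell}$, and $\|\Pi_\ell\|_{\op}\le\kappa\rho^\ell$ almost surely by Assumption~\ref{assump:stability}.
2. Apply the fixed-direction fourth-moment bounds for $u$ (constant $3d_u/(d_u+2)\le 3$) and for subgaussian $w$ (a universal constant). Each summand then has $L^4$ norm of order $\|v\|\,\lambda_{\min}(\Gamma_x^{(\tau)})^{-1/2}\kappa\rho^\ell(\sigma+\|B\|_{\op})$.
3. Summing the geometric series gives $\E[(v^\top\tilde x_t')^4]^{1/4}\lesssim\|v\|\,\kappa(\sigma+\|B\|_{\op})/((1-\rho)\lambda_{\min}(\Gamma_x^{(\tau)})^{1/2})$, with no dimension factor.
4. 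For the $u_t\otimes\tilde x_t'$ block, condition on $\tilde x_t'$ to get $\E(u_t^\top V\tilde x_t')^4\le 3\,\E\|V\tilde x_t'\|^4$. Expand $\|V\tilde x_t'\|^2=\sum_i(r_i^\top\tilde x_t')^2$ over the rows $r_i$ of $V$, and pair terms by Cauchy--Schwarz as in Lemma~\ref{lemma:hyper_to_moment_app}. Only the directional bound enters, giving $\|V\|_F^4$ times the same factor.
5. Combine the three blocks with Minkowski and compare with the isotropic second moment $\|v\|^2$ to get $\gamma\lesssim\bar\gamma$.

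This is exactly Lemma~\ref{lemma:hyper_phi_bds_app}.
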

\ifthenelse{\equal{\version}{arxiv}}{
The proof of Theorem~\ref{thm:main_BDS_FO} is deferred to Appendix~\ref{app:BDS-FO}. 
}{
The proof of Theorem~\ref{thm:main_BDS_FO} is deferred to Appendix D.2 in \cite{sattar2026learning}. 
}
The result in Theorem~\ref{thm:main_BDS_FO} is novel, and it is derived using the results and tools introduced in the previous sections.
As compared to~\cite{sattar2022finite} (which requires Gaussian inputs/noise, mean square stability, and $B=0$), Theorem~\ref{thm:main_BDS_FO} holds for sub-Gaussian noise, non-zero activation matrix $B$, and uniform stability.
Specifically, \cite{sattar2022finite} shows that for Gaussian inputs/noise, the nonlinear features $\xtil_t = \util_t \otimes x_t$ satisfy the block Martingale small ball condition~\citep{simchowitz2019learning} with a block of length $k=1$. 
On the other hand, Theorem~\ref{thm:main_BDS_FO} is derived by proving that the nonlinear features $\phi_t = [u_t^\T~~\util_t^\T \otimes x_t^\T]$ satisfy the hypercontractivity condition (Def.~\ref{def:hypercont}) with $\gamma \leq \bar \gamma$ under uniform stability.
Combining this with the results of \S\ref{sec:onesided} guarantees persistence of excitation, which is then combined with self-normalized martingale bounds for vector-valued noise processes to get the desired result.
The errors due to noise and truncation are bounded using martingale-based concentration arguments.
Combining these gives the desired learning guarantees for~\eqref{eqn:LDS-BO}.

\subsection{Partially Observed Bilinear Dynamical Systems}\label{subsec:sysid_partial_bds}
In this subsection, we consider the class of partially-observed bilinear dynamical systems, which admits the state-space representation~\eqref{eqn:BDS-PO}. Recall from Section~\ref{subsec:bilinear_feature_map} that, fixing a history length $\tau{>}0$, \eqref{eqn:BDS-PO} can be alternately represented as the nonlinear input-output map $y_t = G \, \phi_t + z_t + \eta_t$ with

\ifthenelse{\equal{\version}{arxiv}}{
\begin{equation}
    \begin{aligned}\label{eqn:G_phi_r_eta_BDS_PO}
        G &= \begin{bmatrix} D & G_1 & G_2 & \cdots & G_{\tau-1}\end{bmatrix}, \quad\quad
        \phi_t = \begin{bmatrix} u_{t} \\u_{t-1} \\  \util_{t-1} \otimes u_{t-2} \\ \util_{t-1} \otimes \util_{t-2} \otimes u_{t-3} \\ \vdots \\  \util_{t-1} \otimes \util_{t-2} \otimes \cdots \otimes u_{t-\tau+1}  \end{bmatrix},
        \\
         z_t &=  C  \left( \prod_{\ell = 1}^{\tau-1} (u_{t-\ell}\circ A )  \right) x_{t-\tau+1}, \quad \quad ~
         \eta_t = F_t \, \omega_t,
    \end{aligned}
\end{equation}
}{
\begin{equation}
    \begin{aligned}\label{eqn:G_phi_r_eta_BDS_PO}
        G &= \begin{bmatrix} D & G_1 & G_2 & \cdots & G_{\tau-1}\end{bmatrix}, \\
        \phi_t &= \begin{bmatrix} u_{t} \\u_{t-1} \\  \util_{t-1} \otimes u_{t-2} \\ \util_{t-1} \otimes \util_{t-2} \otimes u_{t-3} \\ \vdots \\  \util_{t-1} \otimes \util_{t-2} \otimes \cdots \otimes u_{t-\tau+1}  \end{bmatrix},
        \\
         z_t &=  C  \left( \prod_{\ell = 1}^{\tau-1} (u_{t-\ell}\circ A )  \right) x_{t-\tau+1},
         \\
         \eta_t &= F_t \, \omega_t,
    \end{aligned}
\end{equation}
}
\ifthenelse{\equal{\version}{arxiv}}{
where $G_1 := C B $, $G_{\ell} := \lbrace C A_{i_1} A_{i_2} \cdots A_{i_{\ell-1}} B  \rbrace_{i_1, i_2 \dots, i_{\ell -1}  \in \lbrace 0, \dots, d_u\rbrace}$ for $\ell \in \lbrace 2, \dots, \tau - 1 \rbrace$, and $\tau >0$. 
The parameters $\lbrace C B,$ $\lbrace C A_{i_1} B \rbrace_{i_1 \in \lbrace 0, \dots, d_u \rbrace},\dots,$ $\lbrace C A_{i_1} A_{i_2} \cdots A_{i_{\tau-2}} B \rbrace_{i_1, i_2, \dots, i_{\tau - 2}  \in \lbrace 0, \dots, d_u\rbrace} \rbrace $ are what we refer to as the \emph{Markov-like} parameters of the system. The matrix $G$ takes values in $\R^{d_y \times ((d_u+1)^{\tau-1} + d_u-1)}$. 
}{where $G_1 {:=} C B $, $G_{\ell} {:=} \lbrace C A_{i_1}\cdots A_{i_{\ell-1}} B  \rbrace_{i_1, \dots, i_{\ell -1}  \in \lbrace 0, \dots, d_u\rbrace}$ for $\ell \in \lbrace 2, \dots, \tau - 1 \rbrace$, and $\tau >0$. 
The parameters $\lbrace C B,$ $\lbrace C A_{i_1} B \rbrace_{i_1 \in \lbrace 0, \dots, d_u \rbrace},\dots,$ $\lbrace C A_{i_1} A_{i_2} \cdots A_{i_{\tau-2}} B \rbrace_{i_1, i_2, \dots, i_{\tau - 2}  \in \lbrace 0, \dots, d_u\rbrace} \rbrace $ are what we refer to as the \emph{Markov-like} parameters of the system. The matrix $G$ takes values in $\R^{d_y \times ((d_u+1)^{\tau-1} + d_u-1)}$. 

}
The vectors $\util_t$ and $\omega_t$ are as defined in \eqref{eqn:util_omega_def}.
Lastly, the matrix $F_t$ appearing in the noise process $\{\eta_t\}_{t \geq 0}$ is given by
\begin{equation}
\begin{aligned}\label{eqn:Ft_omegat}
    F_t  &:= \begin{bmatrix}
     I_{d_y} &
    C   &
    C \,  (u_{t-1}\circ A )  &
    \cdots &
    C \prod_{\ell = 1}^{\tau-2} (u_{t-\ell}\circ A ) 
 \end{bmatrix}.  
\end{aligned}
\end{equation}

\subsubsection{Learning Markov-like Parameter Matrix}\label{subsec:est_error_decomp_BDS_PO}
In this subsection, we derive finite-sample guarantees on learning \emph{Markov-like} parameter matrix $G$ (given by \eqref{eqn:G_phi_r_eta_BDS_PO}) via regressing $\{y_t\}_{t=\tau+1}^T$ on $\{\phi_t\}_{t=\tau+1}^T$.
Since the covariates and the residual/noise processes $\{\phi_t\}_{t=\tau+1}^T$, $\{z_t\}_{t=\tau+1}^T$, and $\{\eta_t\}_{t=\tau+1}^T$ are highly dependent, we use the estimation error decomposition in \eqref{eqn:estimation_error} and \eqref{eqn:estimation_error_decomposition} which are upper bounded to get the following result.

\begin{theorem}[Learning Markov-like parameters]\label{thm:main_BDS_PO}
    Fix $\delta \in (0,1)$ and a history length $\tau >0$. Suppose Assumptions~\ref{assump:input+noise} and \ref{assump:stability} hold, and we are given a single input-output trajectory $\{(u_t,y_t)\}_{t=0}^T$ of the system~\eqref{eqn:BDS-PO}. Let $\{\phi_t\}_{t=\tau+1}^T$ be the nonlinear covariates as in \eqref{eqn:G_phi_r_eta_BDS_PO}, and $\Ghat = ( \sum_{t=\tau}^T   y_t \phi_{t}^\top) (\sum_{t=\tau}^{T} \phi_{t} \phi_{t}^\top )^{\dagger}$ be the least-squares estimator (LSE) of $G$ in \eqref{eqn:G_phi_r_eta_BDS_PO}.  Setting $\bar d_{\phi} := 2(d_u+1)^{\tau-1}$, the event:
    \ifthenelse{\equal{\version}{arxiv}}{
    \begin{equation}
    \begin{aligned}\label{eq:error bound}
        \norm {\Ghat  - G }_\op  &\leq \frac{c_3\sigma + c_4 \sqrt{\bar d_\phi} \rho^{\tau-1}}{1-\rho}  \sqrt{\frac{\left(\bar d_\phi  + d_y + \log\left(\frac{\tau}{\delta}\right)\right)}{T-\tau+1}}
    \end{aligned}
    \end{equation}
    }{
    \begin{equation}
    \begin{aligned}\label{eq:error bound}
         \norm {\Ghat  - G }_\op  &\leq \frac{c_3\sigma + c_4 \sqrt{\bar d_\phi} \rho^{\tau-1}}{1-\rho}  \sqrt{\frac{\left(\bar d_\phi  + d_y + \log\left(\frac{\tau}{\delta}\right)\right)}{T-\tau+1}}
    \end{aligned}
    \end{equation}
    }
    holds with probability at least $1-\delta$, provided that
    \ifthenelse{\equal{\version}{arxiv}}{
    \begin{equation*}
         T - \tau+1 \gtrsim  \tau(\tau-1) 3^{\tau} \left(\bar d_{\phi} \log\left(\bar d_{\phi}\right) + \log\left(\frac{\tau}{\delta}\right) \right), 
\end{equation*}
    }{
    \begin{equation*}
        \begin{aligned}
              T - \tau+1 \gtrsim  \tau(\tau-1) 3^{\tau} \left(\bar d_{\phi} \log\left(\bar d_{\phi}\right) + \log\left(\frac{\tau}{\delta}\right) \right), 
        \end{aligned}
\end{equation*}
    }
where $c_3$, $c_4$ are system dependent constants, and there exists a universal constant $c_0 >0$, such that $c_1 = c_0\left(1 + {\kappa \opn{C}}\right)$ and $c_2 = c_0 \kappa \opn{C} \sqrt{\sigma^2 + \Vert B \Vert_\op^2}$.
\end{theorem}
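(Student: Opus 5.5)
The proof follows the least-squares template of \S\ref{subsec:LSE}: we bound the three terms in \eqref{eqn:estimation_error_decomposition} separately. Throughout, let $\bar T := T-\tau+1$ denote the number of regression samples.

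\textbf{Persistence of excitation.} Under Assumption~\ref{assump:input+noise}, $\phi_t$ in \eqref{eqn:G_phi_r_eta_BDS_PO} depends only on $u_t,\dots,u_{t-\tau+1}$. Hence $\{\phi_t\}$ is block-dependent with block length $\tau$ in the sense of Definition~\ref{def:blockdepend}, and it has zero mean.

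Isotropy comes from the tensor structure. Each block $\util_{t-1}\otimes\cdots\otimes\util_{t-k+1}\otimes u_{t-k}$ is a Kronecker product of independent vectors. Each factor has second moment $\E[\util\util^\top]=\mathrm{diag}(1,I_{d_u})=I_{d_u+1}$, and $\E[u u^\top]=I_{d_u}$. Distinct blocks are uncorrelated because of the zero-mean factor $u_{t-k}$ at different lags.

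For hypercontractivity, a vector uniform on $\sqrt{d_u}\,\Scal^{d_u-1}$ is $(4,2,3)$-hypercontractive. A Kronecker product of $k$ independent hypercontractive factors should then be $(4,2,3^k)$-hypercontractive, which can be shown by conditioning one factor at a time. The cross terms between blocks need extra care. I would bound $\E[(x^\top\phi_t)^4]$ by splitting $x$ into its block components and applying Cauchy--Schwarz across at most $\tau$ blocks, which costs a factor $\tau$. This should give $\gamma\lesssim \tau 3^{\tau}$, or something of that order.

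Theorem~\ref{thm:persistence} with $d_\phi\le\bar d_\phi$ then gives $\lambda_{\min}(\sum\phi_t\phi_t^\top)\ge \bar T/4$ under the stated sample size $\bar T\gtrsim \tau(\tau-1)3^\tau(\bar d_\phi\log\bar d_\phi+\log(\tau/\delta))$.

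\textbf{Noise term.} Write $\eta_t=F_t\omega_t$. The difficulty is that $F_t$ depends on $u_{t-1},\dots,u_{t-\tau+2}$, and $\omega_t$ contains $w_{t-1},\dots,w_{t-\tau+1}$, which are not independent of $\phi_s$ for nearby $s$. However, $\phi_t$ depends only on inputs, which are independent of all noise.

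So I would condition on the entire input sequence. Conditionally, $\eta_t$ is a linear function of the independent sub-Gaussian noise, with fixed features $\phi_t$. Uniform stability (Assumption~\ref{assump:stability}) gives $\opn{F_t}\lesssim 1+\kappa\opn{C}/(1-\rho)$.

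To recover martingale structure, split $\sum_t\eta_t\phi_t^\top=\sum_{j=0}^{\tau-1}\sum_t F_t^{(j)}w_{t-j}\phi_t^\top$ by noise lag $j$. For each fixed $j$, reindexing by the noise time makes $F_t^{(j)\top}$ and $\phi_t$ fixed given the inputs. Proposition~\ref{prop:self-normalized vector martingale} then applies, with $x_t$ taken as $\phi_t\otimes$ (rows of $F^{(j)}$), or via a direct Gaussian-chaos net bound. The resulting $\log\det$ is bounded because $\opn{\phi_t}^2\le \bar d_\phi$ times a constant. A union bound over the $\tau$ lags gives the $\log(\tau/\delta)$ term and the $c_3\sigma/(1-\rho)$ scaling; the $1/(1-\rho)$ arises from summing $\kappa\rho^j$ over $j$.

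\textbf{Truncation term.} We have $\opn{\sum_t z_t\phi_t^\top}\le \sum_t\tn{z_t}\tn{\phi_t}$. Uniform stability gives $\tn{z_t}\le\opn{C}\kappa\rho^{\tau-1}\tn{x_{t-\tau+1}}$. Moreover, $\tn{x_s}\lesssim \kappa(\opn{B}\sqrt{d_u}+\sigma\sqrt{d_x\log(T/\delta)})/(1-\rho)$ with high probability. A cruder bound would give $\bar T$ instead of $\sqrt{\bar T}$.

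To get the stated rate, I would instead use the fact that $x_{t-\tau+1}$ is independent of the window $u_{t-\tau+1},\dots,u_t$ except through $u_{t-\tau+1}$. I would then apply Proposition~\ref{lem:freedman} along the $\tau$ interleaved subsequences, with $\phi_t$ mean zero given the past, after a net over directions. This yields $\sqrt{\bar T}\sqrt{\bar d_\phi}\rho^{\tau-1}$ times the constants giving $c_4$.

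Finally, I combine the three bounds and divide by $\bar T/4$ to obtain the stated error bound.

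\textbf{Main obstacle.} I expect the hardest step to be the hypercontractivity constant for the tensor features, with exponential growth in $\tau$. A close second is the truncation martingale step, where $z_t$ and $\phi_t$ share inputs.
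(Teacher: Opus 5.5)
Your persistence-of-excitation and noise steps follow the paper's. For the noise term you do not need a Kronecker vectorization: conditionally on all inputs, $(F_t)_j w_{t-j}$ is itself conditionally sub-Gaussian with proxy $\sigma^2\opn{(F_t)_j}^2$, so Proposition~\ref{prop:self-normalized vector martingale} applies with covariate $\phi_t$ directly.

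The gap is in the truncation term. Proposition~\ref{lem:freedman} needs a single filtration in which the weight is predictable and the increment is conditionally mean zero. For the pair $(\lambda^\top z_t,\theta^\top\phi_t)$ no such filtration exists, because both depend on the shared inputs $u_{t-1},\dots,u_{t-\tau+1}$. Take your interleaved subsequences with the past generated by $\{u_s,w_s\}_{s\le t-\tau}$. Then the weight $\lambda^\top z_t=\lambda^\top C\prod_{\ell=1}^{\tau-1}(u_{t-\ell}\circ A)\,x_{t-\tau+1}$ is not measurable with respect to that past. If instead you add the shared inputs to the past, $\theta^\top\phi_t$ is no longer mean zero; only its $u_t$ block is.

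The product is not centered either. Writing $\theta_1$ for the coordinates of $\theta$ on the $u_{t-1}$ block, that block alone contributes
\[
\E\Big[(\theta_1^\top u_{t-1})\,\lambda^\top C\prod_{\ell=1}^{\tau-1}(u_{t-\ell}\circ A)\Big]=\lambda^\top C\Big(\sum_{k=1}^{d_u}(\theta_1)_k A_k\Big)A_0^{\tau-2},
\]
which is nonzero in general. So each subsequence sum carries a drift $\sum_m g^\top x_{t_m-\tau+1}$ with $\tn{g}\le\sqrt{\bar d_\phi}\,\opn{C}\kappa\rho^{\tau-1}$. Controlling this drift at rate $\sqrt{T-\tau+1}$ is a concentration problem for a linear functional of the dependent state process, and your plan leaves it open. (Also, $x_{t-\tau+1}$ depends only on $u_0,\dots,u_{t-\tau}$ and $w_0,\dots,w_{t-\tau}$, so it is independent of the entire window.)

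The missing idea, used in Proposition~\ref{prop:truncation_bias_bds_po_app}, is to unroll $x_{t-\tau+1}$ all the way to $x_0=0$ via \eqref{eqn:bilinear sys state} and exchange the sums over $t$ and $s$:
\[
\sum_{t=\tau}^T(\theta^\top\phi_t)(\lambda^\top z_t)=\sum_{s=0}^{T-\tau}f_s(u_{s+1},\dots,u_T)\,(Bu_s+w_s),
\]
where $f_s$ depends only on inputs strictly after time $s$. Revealing indices in reverse time makes $f_s$ predictable, and $Bu_s+w_s$ is a fresh zero-mean sub-Gaussian increment with proxy of order $\opn{B}^2+\sigma^2$. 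Proposition~\ref{lem:freedman} then applies with the deterministic bound $\tn{f_s}\le\sqrt{\bar d_\phi}\,\opn{C}\kappa\rho^{\tau-1}/(1-\rho)$. This bound uses $\tn{\phi_t}^2=d_\phi$ almost surely and uniform stability of the concatenated input product. A $1/4$-net over $(\theta,\lambda)$ finishes the argument. No high-probability bound on $\tn{x_s}$ is needed, which is why the stated rate involves neither $d_x$ nor $\log T$.

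A smaller point on your hypercontractivity step: Minkowski plus Cauchy--Schwarz across $\tau$ blocks costs $\tau^2$, not $\tau$, which would inflate the sample size to order $\tau^3 3^\tau$. You can avoid cross terms entirely. The lag blocks of $\phi_t$ are exactly the non-constant coordinates of $\util_{t-1}\otimes\cdots\otimes\util_{t-\tau+1}$. Tensorizing the $(4,2,3)$ bound for each isotropic $\util$ and then adding the independent $u_t$ gives $\gamma\le 3^{\tau-1}$ for $\tau\ge 2$. This is sharper than the $(\tau-1)3^\tau$ quoted in Lemma~\ref{lemma:hyper_phi_bds_po_app}.
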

\ifthenelse{\equal{\version}{arxiv}}{
The proof of Theorem~\ref{thm:main_BDS_PO} is deferred to the Appendix~\ref{app:BDS-PO}.
}{
The proof of Theorem~\ref{thm:main_BDS_PO} is deferred to the Appendix D.3 in \cite{sattar2026learning}.
}
From the bound in Theorem~\ref{thm:main_BDS_PO}, we see the recovery error $\Vert \Ghat - G \Vert_\op $ scales as, ignoring all other dependencies,  $\tilde{\cO} (\sqrt{(d_u+1)^{\tau}/(T-\tau)})$.
This contrasts with partially observed linear systems, where typically we only have a polynomial dependence in the history length $\tau$, and also reflects the difficulty in learning bilinear systems from partial observations.
To recover the unknown matrices $C, A_0, \dots, A_{d_u}, B$, we require $\tau$ large enough, typically larger than $2d_{x}$ (see Remark 3 in \cite{sattar2025finite}).

\subsection{Notes}

The early literature on bilinear system identification has for the most part focused on developing estimation procedures. These include least squares and variants thereof \citep{fnaiech1987recursive}, augmented by subspace methods for partially observed systems \citep{favoreel1999subspace,verdult2001identification,verdult2005kernel}, as well as expectation--maximization algorithms \citep{gibson2005maximum}. The asymptotic properties of these identification procedures have received comparatively less attention. A notable contribution in this direction is \citet{chen1996strong}, which established strong consistency and convergence-rate guarantees for extended least-squares estimation of discrete-time stochastic bilinear systems. Alongside the choice of estimator, the role of input design has been investigated in the continuous-time setting. Specifically, \citet{juang2005continuous} developed identification procedures using specially designed inputs, and \citet{SontagWangMegretski2009} characterized input classes sufficient for generic input--output identifiability. This overview is by no means exhaustive, and we refer the reader to the cited works and the references therein for further developments.

In contrast to this established identification literature, non-asymptotic learning guarantees for bilinear dynamical systems remain relatively sparse. \cite{sattar2022finite} provided the first finite-time guarantees for learning state-observed bilinear systems under mean-square stability~\citep{kubrusly1985mean, pardalos2010optimization} and Gaussian inputs/noise.
Specifically, \cite{sattar2022finite} used the tools for deriving learning guarantees for linear dynamical systems, such as block martingale small ball methods \cite{simchowitz2018learning}, and self-normalized martingales~\citep{sarkar2018fast}. More recently, \citet{chatzikiriakos2026endtoend} derived finite-sample identification bounds from i.i.d. data using constant-input experiments that reduce bilinear identification to linear and affine subproblems, and incorporated these bounds into robust controller design. These results also motivate further investigation of input design for learning partially observed bilinear systems. In the linear setting, carefully designed control inputs, rather than i.i.d. Gaussian inputs, can yield strong statistical estimation rates~\citep{sun2022finite}. We believe that for partially observed bilinear dynamical systems, the exponential growth of the number of \emph{Markov-like} parameters, leading to an error rate of $\tilde{\cO} (\sqrt{(d_u+1)^{\tau-1}/(T-\tau)})$, can be avoided by the input design. Lastly, the learning guarantees in \S\ref{sec:sysid} require uniformly stable bilinear systems. In the linear setting, non-asymptotic identification guarantees extend to marginally stable systems under both full and partial observation \citep{simchowitz2018learning, sarkar2018fast, simchowitz2019learning, bakshi2023new}, and to certain classes of unstable systems under full observation \citep{sarkar2018fast,faradonbeh2018finite}. Extending the presented guarantees beyond uniformly stable bilinear systems, even to the marginally stable case, remains an open problem.

\newpage
\part{Control}\label{part:control}

We now turn our discussion to the control of bilinear systems.
First, we discuss the design of optimal controllers under quadratic costs and partial observations.
Unlike for linear systems, 
bilinearity, even only in the observations, complicates the optimal control solution.
In particular, we show that the well-known \emph{separation principle} of linear quadratic control does not apply when the observations are bilinear.
Naively following this principle may even lead to failures of stabilization.
Instead, we propose and numerically investigate a method based on receding horizon control in belief space.

Second, we turn to stabilization of bilinear dynamics under perfect state observation.
Turning from challenges that arise due to noise and partial observation, with bilinearity only in the measurements, we instead focus on the challenges of controlling imperfectly modeled deterministic bilinear dynamics under full state observation. 
We review strategies based on semidefinite programming, using linear matrix inequality (LMI)-based techniques and sum-of-squares (SOS) optimization, for designing controllers and Lyapunov functions.
Finally, we discuss connections between bilinear control and general classes of nonlinear dynamics through Koopman theory, highlighting the broad applications.

\section{Control of Linear Dynamics from Partial Bilinear Observations}

\renewcommand{\vxhat}{\hat{x}}
\renewcommand{\vx}{x}
\renewcommand{\vw}{w}
\renewcommand{\vz}{z}
\renewcommand{\vu}{u}
\renewcommand{\vy}{y}
\renewcommand{\vSigma}{\Sigma}
\renewcommand{\vA}{A}
\renewcommand{\vC}{C}
\renewcommand{\vB}{B}
\renewcommand{\vL}{L}
\renewcommand{\vK}{K}
\renewcommand{\vQ}{Q}
\renewcommand{\vR}{R}
\renewcommand{\vP}{P}
\renewcommand{\vM}{M}
\renewcommand{\vF}{F}
\renewcommand{\Ical}{\mathcal{I}}
\renewcommand{\Bcal}{\mathcal{B}}
\renewcommand{\Gcal}{\mathcal{G}}
\renewcommand{\Ncal}{\mathcal{N}}
\renewcommand{\cN}{\mathcal{N}}

We consider quadratic control of \eqref{eqn:LDS-BO} with Gaussian noise (BO-LQG), defined in the finite-horizon optimal control problem
\ifthenelse{\equal{\version}{arxiv}}{
\begin{equation}
\begin{aligned}\label{eqn:bilinear LQG}
    \min_{\mu_{0:T-1}} &\E \left[ x_T^\T Q_T x_T + \sum_{t=0}^{T-1} \left( x_t^\top Q x_t {+} u_t^\top R u_t \right) \right]  \\
	\text{s.t.} \quad &\eqref{eqn:LDS-BO}, \quad \text{and} \quad u_t = \mu_t(\vu_0, \dots, \vu_{t-1} ,\vy_0, \dots, \vy_{t-1}),
\end{aligned} \tag{BO-LQG}
\end{equation}
}{
\begin{equation}
\begin{aligned}\label{eqn:bilinear LQG}
    \min_{\mu_{0:T-1}} &\E \left[ x_T^\T Q_T x_T + \sum_{t=0}^{T-1} \left( x_t^\top Q x_t {+} u_t^\top R u_t \right) \right]  \\
	\text{s.t.} \quad &\eqref{eqn:LDS-BO}, \\
    &u_t = \mu_t(\vu_0, \dots, \vu_{t-1} ,\vy_0, \dots, \vy_{t-1}),
\end{aligned} \tag{BO-LQG}
\end{equation}
}
where $Q_T,Q \in \R^{d_x \times d_x}$ are given state-cost matrices and $R \in \R^{d_u \times d_u}$ is a given input-cost matrix. 
We additionally define the shorthand
$\Ical_t := \{\vu_0, \dots, \vu_{t-1} ,\vy_0, \dots, \vy_{t-1} \}$
as the information available at time $t \geq 0$.
Moreover, we assume the following.
 \begin{assumption}\label{assump noise, initial, cost} We have: (i) Gaussian noise $\{\vw_t\}_{t=0}^{T-1} \distas \Ncal(0, \vSigma_w)$, $\{v_t\}_{t=0}^{T-1} \distas \Ncal(0, \vSigma_v)$, (ii) initial state $\vx_0 \distas \Ncal(\vxhat_0, \vSigma_0)$, and (iii) positive definite cost matrices, i.e., $\vQ_T, \vQ, \vR \succ 0$.    
 \end{assumption}

The problem in \eqref{eqn:bilinear LQG} is a slight variation on the classical linear-quadratic Gaussian (LQG) problem; if the matrices $C_1{=}\cdots{=}C_{d_u}{=}0$, it reduces to LQG.
In classical LQG, the optimal controller follows the separation principle, i.e., state estimation and control design can be solved independently.
In particular, the state is estimated via Kalman filtering \citep{kalman1960new}, and the control input is obtained by applying the optimal linear feedback gain to the state estimate.
In the following, we explore how the separation principle can fail in this simple departure from standard LQG.

\subsection{Kalman Filtering}

State estimation for bilinear systems follows the same logic as for linear time-varying systems.
We thus begin by defining the Kalman filter for~\eqref{eqn:LDS-BO}.
For notational convenience, we define the input-dependent observation matrix $C(u_t) := C_0 + \sum_{k=1}^{d_u}(u_t)_k C_k$. 
Let $\vxhat_{t|t- 1} := \E [\vx_t | \Ical_t]$ be the estimated state at time $t$, and $\vSigma_{t|t- 1} := \E \left[(\vx_t - \E [\vx_t | \Ical_t])(\vx_t - \E [\vx_t | \Ical_t])^\T\right]$ be the estimation error covariance.
Then, starting from $\vxhat_{0|- 1} = \vxhat_0$ and $\vSigma_{0|- 1} = \vSigma_0$, the estimated state and the covariance follow the recursion:
\ifthenelse{\equal{\version}{arxiv}}{
\begin{equation}
\begin{aligned}\label{eq:kf_bo}
    \vxhat_{t + 1|t} &= \vA \vxhat_{t|t-1} + \vB \vu_t - \vL(\vu_t)\left(\vy_t - \vC(\vu_t) \vxhat_{t|t-1} \right), \\
    \vSigma_{t+ 1|t} &= \vA \vSigma_{t|t - 1} \vA^\T  + \vL(\vu_t)\vC(\vu_t) \vSigma_{t|t - 1} \vA^\T + \vSigma_w,\\
    \text{where,} \quad \vL(\vu_t) &:= -  \vA \vSigma_{t|t - 1} \vC(\vu_t)^\T \big( \vC(\vu_t) \vSigma_{t|t - 1} \vC(\vu_t)^\T + \vSigma_v \big)^{-1}.
\end{aligned}\tag{KF-BO} 
\end{equation}
}{
\begin{equation}
\begin{aligned}\label{eq:kf_bo}
    \vxhat_{t\!+\!1|t} &= \vA \vxhat_{t|t\!-\!1} {+} \vB \vu_t {-} \vL(\vu_t)\left(\vy_t {-} \vC(\vu_t) \vxhat_{t|t\!-\!1} \right), \\
    \vSigma_{t\!+\! 1|t} &= \vA \vSigma_{t|t\!-\! 1} \vA^\T  {+} \vL(\vu_t)\vC(\vu_t) \vSigma_{t|t\!-\! 1} \vA^\T {+} \vSigma_w,\\
    \vL(\vu_t) &:= {-}  \vA \vSigma_{t|t\!-\! 1} \vC(\vu_t)^\T \big( \vC(\vu_t) \vSigma_{t|t\!-\! 1} \vC(\vu_t)^\T {+} \vSigma_v \big)^{\!-1}.
\end{aligned}\tag{KF-BO} 
\end{equation}
}
The matrix $\vL(\vu_t)$ is the input-dependent Kalman gain at time $t$.  
Unlike Kalman filtering from linear measurements, both the Kalman gain $\vL(\vu_t)$ and the error covariance $\vSigma_{t + 1|t}$ depend on the control inputs up to time $t$. 
Nonetheless, it is easy to show that the Kalman filter provides the full posterior state distribution. 
The following lemma readily follows from~\cite{kalman1960new} by replacing $\vC_t$ with $\vC(\vu_t)$, which is a deterministic quantity conditioned on $\vu_t$. 

\begin{lemma}[Optimality of KF] \label{lemma:optimality of KF} Under Assumption~\ref{assump noise, initial, cost}(i) \& (ii), the Kalman filtering algorithm gives
the full posterior distribution of the state $\vx_t$ given the information $\Ical_t$, $\vx_t | \Ical_t \sim \Ncal(\vxhat_{t|t- 1}, \vSigma_{t|t- 1})$.
\end{lemma}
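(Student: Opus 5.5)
We argue by induction on $t$, showing that conditional on $\Ical_t$ the state $x_t$ is Gaussian with mean $\hat{x}_{t|t-1}$ and covariance $\Sigma_{t|t-1}$. The base case $t=0$ is Assumption~\ref{assump noise, initial, cost}(ii), since $\Ical_0$ is empty and $\hat{x}_{0|-1}=\hat{x}_0$, $\Sigma_{0|-1}=\Sigma_0$.

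\textbf{Measurement update.} For the inductive step, suppose $x_t \mid \Ical_t \sim \Ncal(\hat{x}_{t|t-1},\Sigma_{t|t-1})$. The input $u_t=\mu_t(\Ical_t)$ is $\Ical_t$-measurable, so conditional on $\Ical_t$ it is a constant. Hence $C(u_t)$ is a deterministic matrix given $\Ical_t$, and $y_t = C(u_t)x_t + v_t$ is a linear-Gaussian observation. We use three independence facts:
\begin{itemize}
  \item $v_t$ is independent of $(x_t,\Ical_t)$, since $\Ical_t$ depends only on $x_0$, $w_{0:t-1}$, $v_{0:t-1}$ through the causal maps $\mu_s$;
  \item $w_t$ is independent of $(x_t,\Ical_t,v_t)$;
  \item both noises are Gaussian.
\end{itemize}
It follows that, conditionally on $\Ical_t$, the triple $(x_t,y_t,w_t)$ is jointly Gaussian. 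The Gaussian conditioning formula then gives $x_t \mid (\Ical_t,y_t)$ as Gaussian, with mean $\hat{x}_{t|t-1}+\Sigma_{t|t-1}C(u_t)^\T S_t^{-1}(y_t-C(u_t)\hat{x}_{t|t-1})$, where $S_t=C(u_t)\Sigma_{t|t-1}C(u_t)^\T+\Sigma_v$. Its covariance is $\Sigma_{t|t-1}-\Sigma_{t|t-1}C(u_t)^\T S_t^{-1}C(u_t)\Sigma_{t|t-1}$. The inverse exists because $\Sigma_v\succ0$, which we take as implicit in the assumption.

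\textbf{Time update.} Propagating through $x_{t+1}=Ax_t+Bu_t+w_t$, with $u_t$ fixed given $\Ical_{t+1}\supseteq\Ical_t$ and $w_t$ independent of $(x_t,y_t,\Ical_t)$, gives a Gaussian with the following mean and covariance. The mean is $A\hat{x}_{t|t-1}+Bu_t-L(u_t)(y_t-C(u_t)\hat{x}_{t|t-1})$. The covariance is $A\Sigma_{t|t-1}A^\T - A\Sigma_{t|t-1}C(u_t)^\T S_t^{-1}C(u_t)\Sigma_{t|t-1}A^\T+\Sigma_w$. Substituting the definition of $L(u_t)$ shows these match \eqref{eq:kf_bo}. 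Since $\Ical_{t+1}=\Ical_t\cup\{u_t,y_t\}$ and $u_t$ adds no information beyond $\Ical_t$, this is exactly the distribution of $x_{t+1}\mid\Ical_{t+1}$, which closes the induction.

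\textbf{Main subtlety.} The only nonstandard point is justifying that conditioning on an information set containing feedback-chosen inputs does not break Gaussianity, since $u_t$ is a possibly nonlinear function of past data. The resolution is to work conditionally on $\Ical_t$ throughout: every quantity is then affine in $(x_t,v_t,w_t)$ with $\Ical_t$-measurable coefficients. We need to check that the noises remain independent of $\Ical_t$, which holds by causality of the policy. We also need to check that the covariance recursion, although it depends on $u_t$ and is therefore random, is $\Ical_{t+1}$-measurable, which is all the statement requires. This is exactly the time-varying Kalman argument with $C_t$ replaced by $C(u_t)$.
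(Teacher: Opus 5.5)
Your proof is correct and is the same argument the paper uses: the paper cites the standard time-varying Kalman filter derivation with $C_t$ replaced by $C(u_t)$, which is deterministic given the information available when $u_t$ is chosen, and your induction (Gaussian conditioning for the measurement update, then the time update, using that $v_t,w_t$ are independent of $(x_t,\Ical_t)$ under a causal policy) spells that argument out. Your reading of $\Sigma_{t|t-1}$ as the $\Ical_t$-measurable conditional covariance produced by the recursion is also the right one under feedback; there it generally differs from the unconditional error covariance in the paper's displayed definition.
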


Lemma~\ref{lemma:optimality of KF} readily follows from~\cite{kalman1960new} by replacing $\vC_t$ with $\vC(\vu_t)$ which is a deterministic quantity conditioned on $\vu_t$. Moreover, since the posterior distribution of the state $\vx_t$ is Gaussian, using similar argument as~\cite[Section 4.3.1]{bertsekas2012dynamic}, we note that the values $\vxhat_{t|t- 1}, \vSigma_{t|t- 1}$ computed by the Kalman filter are sufficient statistics for any policy.

\begin{figure*}[t]
\setlength{\abovecaptionskip}{0pt}
\setlength{\belowcaptionskip}{0pt}
\begin{center}
\begin{tabular}{ c @{\hspace{0.25cm}} c @{\hspace{0.25cm}} c @{\hspace{0.25cm}} c}
\includegraphics[scale=0.20]{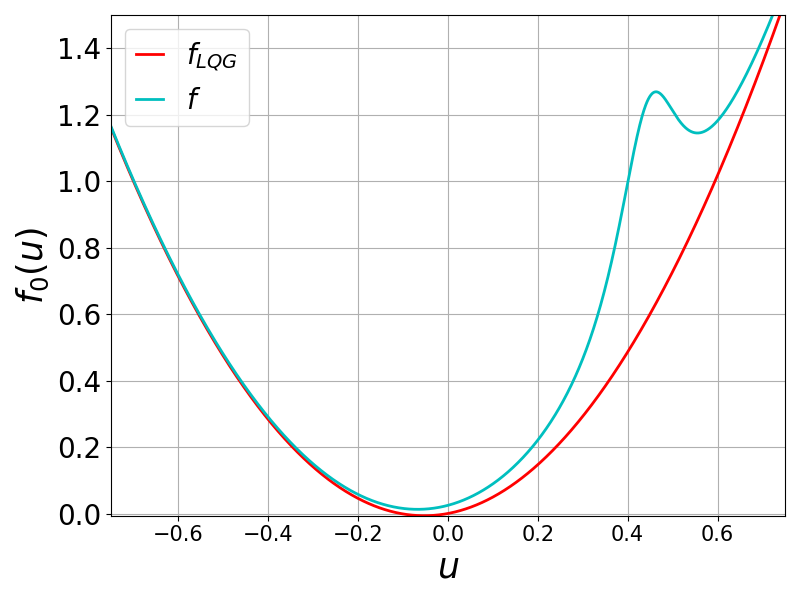} &
\includegraphics[scale=0.20]{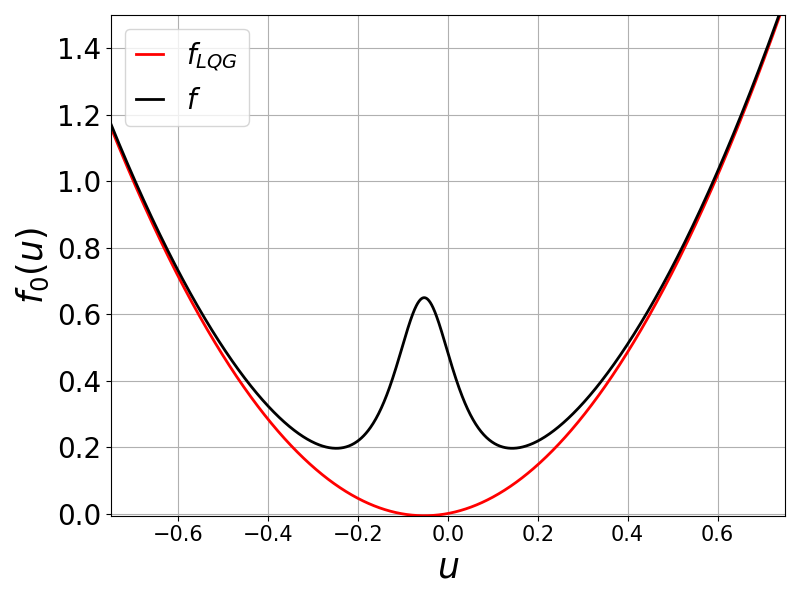} &
\includegraphics[scale=0.20]{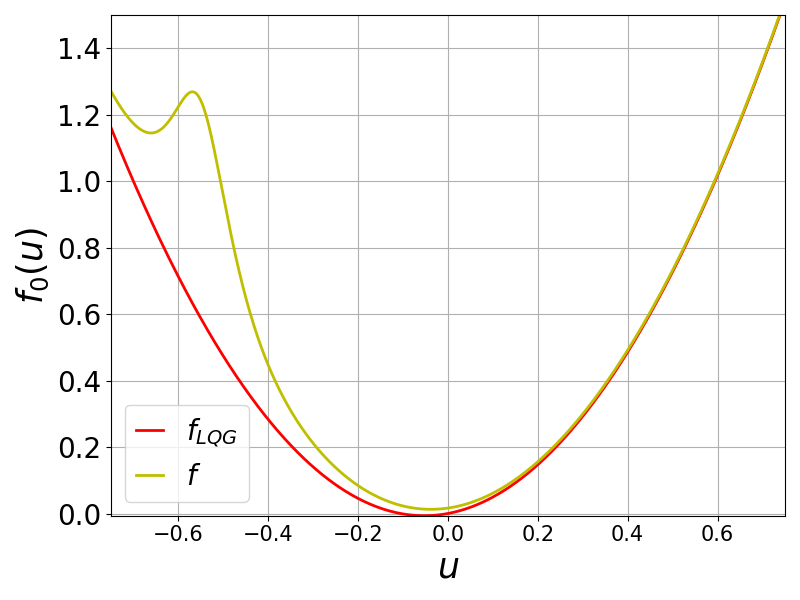} &
\includegraphics[scale=0.20]{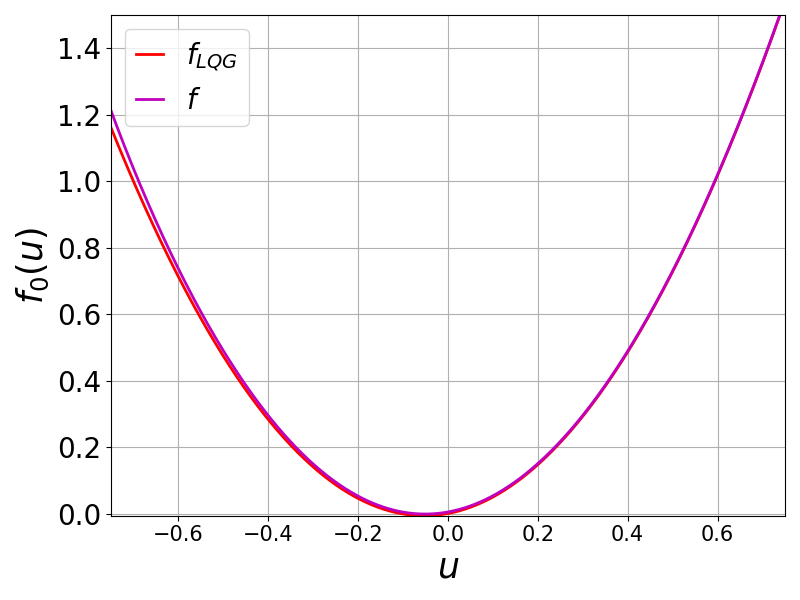} \\
\small \quad (a) $ u_\text{LQG}= -\frac{C_0}{C_1} - 0.5$ &      \small \quad (b) $ u_\text{LQG} = -\frac{C_0}{C_1} $ & \small \quad (c) $ u_\text{LQG}= -\frac{C_0}{C_1} + 0.5$ &      \small \quad(d) $u_\text{LQG} = -\frac{C_0}{C_1} + 1.0$ 
\end{tabular}
\vspace{5pt}
\caption{The landscape of $f(u)=f_\text{LQG}(u) + g(u)$ depends on the distance between the global minima of $f_\text{LQG}(u)$ and the global maxima of $g(u)$, which depends on how far the LQG input is from an input which leads to loss of observability, which in the scalar case is given by $-\frac{C_0}{C_1}$.
}
\label{figure1}
\end{center}
\vspace{-12pt}
\end{figure*} 

\subsection{Separation Principle} \label{subsec:sp}

The certainty equivalent separation principle treats the state estimate $\hat x_{t|t-1}$ as if it were true.
As a result, the policy follows the linear-quadratic regulation (LQR) state-feedback rule.
We refer to this as the LQG controller.
At every time $0 \leq t \leq T{-}1$, the LQG controller is given by
\begin{equation}
\begin{aligned}\label{eqn:LQG Policy}
    \vu_t^{LQG} &= \vL_t \vxhat_{t|t\!-\! 1}, \\
    \text{where}\quad \vL_t &= -(\vB^\T \vK_{t+1} \vB + \vR)^{-1}\vB^\T \vK_{t+1} \vA,
\end{aligned}
\end{equation}
and the matrices $\vK_t$, starting from $\vK_T = \vQ_T$, are given recursively by the Riccati equation
\begin{equation}
\begin{aligned}\label{eqn:LQG Riccati}
     \vP_t &= \vA^\T \vK_{t+1} \vB (\vB^\T \vK_{t+1} \vB + \vR)^{-1}\vB^\T \vK_{t+1} \vA, \\
     \vK_t &= \vA^\T \vK_{t+1} \vA - \vP_t + \vQ.
\end{aligned}
\end{equation}

This strategy computes actions according only to state estimates with no accounting for the effect of the control input on observation.
It is immediate to observe failure modes of this approach, especially when $C_0=0$.
First, consider the case that $\hat x_0$, the mean of the initial state's prior distribution, is equal to zero. 
Then, we have $u_0=0$, $C(0)=0$, and, hence, $L(u_t)=0$ and $\hat x_{1|0}=0$.
This pattern continues, and no control input is applied.
If the matrix $A$ has spectral radius larger than one, the state grows exponentially over the time horizon and so does the accumulated cost.
Even when $\hat x_0\neq 0$, similar issues can emerge, as illustrated in the following example.

\ifthenelse{\equal{\version}{arxiv}}{
\begin{example}
\label{ex:KC-zero}
Consider a system with $A=1$, $B=\begin{bmatrix}0 & 1\end{bmatrix}$, $C_0=C_2=0$ and $C_1=1$.
Suppose
$Q = I$ and $R=1$. 
The optimal state-feedback control gains satisfy $K_t e_1 = 0$ since the first control input incurs a cost but does not affect the state.
As a result, $(u_t)_1 = 0$ and thus $C(u_t) = 0$ for all $t \geq 0$.
\end{example}
}{
\begin{example}
\label{ex:KC-zero}
Consider a system with $A{=}1$, $B{=}\begin{bmatrix}0 & 1\end{bmatrix}$, $C_0{=}C_2{=}0$ and $C_1{=}1$.
Suppose
$Q {=} I$ and $R{=}1$. 
The optimal state-feedback control gains satisfy $K_t e_1 {=} 0$ since the first control input incurs a cost but does not affect the state.
As a result, $(u_t)_1 {=} 0$ and thus $C(u_t) {=} 0$ for all $t {\geq} 0$.
\end{example}
}

In this example, no information about the state is ever observed.
As a result, using the separation principle is basically equivalent to applying the optimal open-loop sequence of actions. 
It is not difficult to verify that alternative linear policies can achieve a lower quadratic cost by preventing this loss of observability.

We now show that the sub-optimality of the separation principle indeed holds more generally by analyzing the dynamic programming procedure.
Using the same proof technique as \citet[Section 4.2]{bertsekas2012dynamic}, the optimal policy for the last stage is given by $\vu_{T-1}^\star = \vu_{T-1}^{LQG}$, which is the LQG controller given by~\eqref{eqn:LQG Policy}. 
However, things change for $t \leq T-2$.
Following the dynamic programming algorithm and applying the Woodbury matrix identity, \ifthenelse{\equal{\version}{arxiv}}{
\begin{align*}
    &\vu_{T-2}^\star = \arg \min_{\vu_{T-2}} \left\{ f(\vu_{T-2}):= f_{LQG}(\vu_{T-2}) + g(\vu_{T-2}) \right\},\\
     & f_{\rm LQG}(\vu_{T-2}) := \vu_{T-2}^\T \Acal \vu_{T-2} + 2 \vxhat_{T-2|T-3}^\T\Bcal^\T \vu_{T-2}, \\
      &g(\vu_{T-2}):=  \tr \big(\Gcal \big(\vSigma_{T-2|T-3}^{-1} + \vC(\vu_{T-2})^\T \vSigma_v^{-1}\vC(\vu_{T-2})\big)^{-1} \big),
\end{align*}
}{
\begin{align*}
    &\vu_{T\!-\!2}^\star = \arg \min_{\vu_{T\!-\!2}} \left\{ f(\vu_{T\!-\!2}):= f_{LQG}(\vu_{T\!-\!2}) + g(\vu_{T\!-\!2}) \right\},\\
     & f_{\rm LQG}(\vu_{T\!-\!2}) := \vu_{T\!-\!2}^\T \Acal \vu_{T\!-\!2} {+} 2 \vxhat_{T\!-\!2|T\!-\!3}^\T\Bcal^\T \vu_{T\!-\!2}, \\
      &g(\vu_{T\!-\!2}):=  \tr \big(\Gcal \big(\vSigma_{T\!-\!2|T\!-\!3}^{\!-1} {+} \vC(\vu_{T\!-\!2})^\T \vSigma_v^{-1}\vC(\vu_{T\!-\!2})\big)^{\!-1} \big),
\end{align*}
}
where we define $\Acal := \vB^\T \vK_{T-1}\vB + \vR$, $\Bcal :=  \vB^\T \vK_{T-1} \vA$, and $\Gcal := \vA^\T\vP_{T-1}\vA$ for notational convenience. 
The second term $g(\vu_{T-2})$ corresponds to the estimation error covariance at time $T-1$, and it depends on control input $\vu_{T-2}$ via the input dependent observation matrix $\vC(\vu_{T-2})$.
This dependence is the precise violation of the separation principle.

\begin{theorem}\label{thm:summary_sep}
The LQG controller may locally maximize the cost, and
    the optimal control policy $\mu^\star_t(\mathcal I_t)$ is not affine in the estimated state $\hat x_{t|t-1}$ in general.
\end{theorem}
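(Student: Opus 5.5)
We will prove both claims by working in the scalar case at stage $T-2$, where the dynamic-programming reduction above gives $f = f_{\rm LQG} + g$. We take $d_x = d_y = d_u = 1$ and $\Sigma := \vSigma_{T-2|T-3} > 0$. We write $C(u) = C_0 + C_1 u$ with $C_1 \neq 0$ and $\Gcal = A^2 P_{T-1} > 0$, assuming $B\neq 0$. Then
\begin{equation*}
f(u) = \Acal u^2 + 2\Bcal \hat x\, u + \frac{\Gcal\,\Sigma\,\Sigma_v}{\Sigma_v + \Sigma\,(C_0 + C_1 u)^2}, \qquad \hat x := \vxhat_{T-2|T-3},
\end{equation*}
with $\Acal > 0$. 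The unconstrained minimizer of $f_{\rm LQG}$ is $u_{\rm LQG} = -\Acal^{-1}\Bcal \hat x$, which coincides with $\vu_{T-2}^{LQG}$ from \eqref{eqn:LQG Policy}. Its stationarity gives $f_{\rm LQG}'(u_{\rm LQG}) = 0$.

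For the first claim, we note that $g$ is a scaled Lorentzian-type bump in $u$. It is maximized exactly at the loss-of-observability input $u_0 := -C_0/C_1$, where $g(u_0) = \Gcal\Sigma$. It is symmetric about $u_0$, so $g'(u_0)=0$. Its second derivative there is $g''(u_0) = -2\Gcal\Sigma^2 C_1^2/\Sigma_v$, which is negative. Now choose $\hat x$ so that $u_{\rm LQG} = u_0$, which is possible when $\Bcal\neq 0$; this is case (b) of Figure~\ref{figure1}. Then $f'(u_{\rm LQG}) = 0$ and $f''(u_{\rm LQG}) = 2\Acal - 2\Gcal\Sigma^2 C_1^2/\Sigma_v$. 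Choosing $\Sigma_v$ small, or $\Sigma$ or $|C_1|$ large, makes this negative. Hence $u_{\rm LQG}$ is a strict local maximizer of the true stage cost $f$, and consequently $\vu^{LQG}_{T-2} \neq \vu^\star_{T-2}$. We should check that such $\hat x$ is reachable, i.e. it lies in the support of the posterior mean. This holds because $\hat x$ ranges over all of $\RR$ under Gaussian noise, and the stage-$(T-2)$ cost-to-go is exactly $f$ up to terms independent of $u$.

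For the second claim, we suppose for contradiction that $\mu^\star_{T-2}(\Ical) = k\,\hat x + c$ for constants $k,c$ that may depend on $\Sigma$ but not on $\hat x$. The optimal $u^\star(\hat x)$ minimizes $f(\cdot\,;\hat x)$. Its first-order condition is
\begin{equation*}
2\Acal u + 2\Bcal \hat x + g'(u) = 0, \quad\text{i.e.}\quad \hat x = -\frac{2\Acal u + g'(u)}{2\Bcal}.
\end{equation*}
If $u^\star$ were affine in $\hat x$ with $k\neq 0$, we could substitute $\hat x = (u-c)/k$. Because $\hat x$ sweeps an interval, this would force $g'(u)$ to be affine in $u$ on an open interval. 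But $g'$ is a nonpolynomial rational function, which is a contradiction. The case $k=0$ is excluded because as $|\hat x|\to\infty$ the quadratic term forces $|u^\star|\to\infty$, since $g$ is bounded and Lipschitz. The main obstacle is that the minimizer may be set-valued or jump: in the regime of the first claim, $f$ has two local minima flanking $u_0$, and $u^\star$ switches discontinuously as $\hat x$ crosses the symmetric point. In fact this jump itself already refutes affinity. So we will handle selection issues by restricting to a range of $\hat x$ where the global minimizer is unique and interior, which is guaranteed for large $|\hat x|$. There the first-order condition holds and the implicit function theorem makes $u^\star$ smooth, so the argument above applies.
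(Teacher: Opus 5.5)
Your proposal is correct and follows essentially the same route as the paper's proof sketch. You take the scalar stage-$(T-2)$ reduction and place $u_{\rm LQG}$ at the loss-of-observability input $-C_0/C_1$, which gives $f''(u_{\rm LQG}) = 2\Acal - 2\Gcal\Sigma^2 C_1^2/\Sigma_v < 0$. You then show that the first-order critical-point equation, which is the paper's degree-5 polynomial once denominators are cleared, admits no affine solution $u = k\hat x + c$.

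Two small remarks. Your selection and implicit-function worries are unnecessary: $f$ is differentiable and $u$ is unconstrained, so the first-order condition holds at every global minimizer. Also, $\Gcal > 0$ requires $A \neq 0$ in addition to $B \neq 0$.
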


\begin{proof}[Proof Sketch]
    Consider $T=2$ and the dynamic programming algorithm described above. 
    Let $u_\text{LQG}= \mathcal A^{-1}  \mathcal B \hat x$ and suppose that $C(u_\text{LQG})=0$.
    In this case, the minimum of $f_{LQG}$ coincides with the maximum of $g$. 
    Depending on their relative magnitudes, this can result in a local maximum for their sum $f=f_{LQG}+g$.
    Concretely, in the scalar case $n=m=p=1$, we may compute the second derivative of $f$ and evaluate it at $u_\text{LQG}$:
$
f''(u_\text{LQG}) = 2\Acal - (2\Gcal C_1^2 \Sigma_{0}^2)/\Sigma_v.
$ This is negative when $\mathcal G C_1^2 \Sigma_{0}^2 > \mathcal A \Sigma_v$, i.e.\ when the measurement noise is small relative to the state uncertainty.
In the scalar setting, one may further characterize all critical points of $f$ as the roots of a degree 5 polynomial. By verifying that there does not exist an affine control law which satisfies this critical point equation for all $\hat x$, we conclude that the optimal policy is not generally affine.
\end{proof}

Theorem~\ref{thm:summary_sep} summarizes Theorems 1 and 2 in~\cite{sattar2025sub}.
Finding an analytical expression for the optimal controller is challenging in general because of: (i) non-convexity, (ii) nonlinearity, and (iii) the existence of multiple critical points.
We investigate the trade-off numerically in Figure~\ref{figure1} by plotting $f_\text{LQG}$ and $f$ in a one dimensional setting.
We see cases where the LQG control is optimal, sub-optimal, or even a local maximum. 
These cases vary based on how close the LQG input is to an input which causes loss of observability.

Thus we conclude that control designed based on the separation principle is not generally reliable for BO-LQG.
It is natural to ask whether iteratively linearizing the dynamics around the current state estimate and following LQG control could recover a better controller.
We argue that it cannot: in the BO-LQG setting, this iterative linearization scheme converges after a single step to the LQG controller analyzed above.
The reason is that, for any \emph{fixed} input $u_t$, the observation model is already exactly linear in $x_t$; the state dynamics are also already linear.
Consequently, linearizing introduces no additional error, so the ``iterative'' procedure terminates immediately, and the resulting controller coincides with the suboptimal policy analyzed in this section.

\subsection{Model Predictive Control in Belief Space} \label{sec:Blief-space MPC}

An alternative notion of ``separation'' in optimal control is that of separating the computation of the posterior distribution of the state from the computation of the control input.
This idea motivates us to reformulate \eqref{eqn:bilinear LQG} in terms of the values $(\hat{\vx}_{t|t-1}, \vSigma_{t|t-1})=:b_t $ computed by the Kalman filter. 
By a similar argument as~\citet[Section 4.3.1]{bertsekas2012dynamic}, this \emph{belief state} is a sufficient statistic for any control policy.
Belief space planning is a common strategy in robotics to handle practical issues like path planning under perception uncertainties \citep{platt2010belief}.
We take inspiration from this perspective and reformulate the original optimal control problem into a belief space control problem.

First, the original finite-horizon quadratic cost \eqref{eqn:bilinear LQG} can be
rewritten in terms of the belief state
\begin{align*}
    \E \big[ x_t^\top Q x_t \mid \Ical_t \big] &= \hat x_{t|t-1}^\top Q \hat x_{t|t-1} + \tr(Q\Sigma_{t|t-1}) .
\end{align*}
This leads to the equivalent optimal control problem in belief space
\begin{equation}
\begin{aligned}\label{eqn:bilinear LQG II}
    \min_{\mu_{0:T-1}} &\E \Big[ \underbrace{\hat x_{T|T-1}^\top Q_T \hat x_{T|T-1} + \tr(Q\Sigma_{t|t-1}) }_{\phi(b_T)}\Big] \\
    &+ \E \Big[ \sum_{t=0}^{T-1}  \underbrace{\left(\hat x_{t|t-1}^\top Q \hat x_{t|t-1} + \tr(Q\Sigma_{t|t-1})  + u_t^\top R u_t\right) }_{\ell(b_t,u_t)} \Big]  \\
    \text{s.t.} \quad &\eqref{eqn:LDS-BO}, \\
     &\eqref{eq:kf_bo}, \\
    \text{and} \quad &u_t = \mu_t(\hat x_{t|t-1}, \Sigma_{t|t-1}).
\end{aligned} 
\end{equation}
Notice that this is a stochastic optimal control problem with a fully observed (belief) state. Thus, the challenge has transformed from reasoning about partial observation to synthesizing a controller for nonlinear dynamics, which arise from the nonlinear covariance evolution.

We now propose a belief-space model predictive control (MPC) method, \texttt{B-MPC}, that plans over a deterministic surrogate of the belief dynamics induced by the Kalman filter.
Notice that the belief state update is stochastic and depends on the bilinear observation $y_t = C(u_t) x_t + v_t$.
Consider its deterministic surrogate
\begin{equation} \label{eqn:det-belief-update}
    \begin{aligned}
        \bar x_{t+1} &= A \bar x_t + Bu_t, \\
        \bar \Sigma_{t+1} &= A \bar\Sigma_t A^\top +\bar L(u_t)C(u_t) \bar\Sigma_{t}A^\top + \Sigma_w,
    \end{aligned}
\end{equation}
where 
$\bar L(u_t)$ is defined analogously to $L(u_t)$ in \eqref{eq:kf_bo} with $\Sigma_{t|t{-}1}$ replaced by $\bar\Sigma_{t}$.
Let $\bar b_t := (\bar x_t, \bar \Sigma_t)$ and define the finite horizon MPC objective at time $t \geq 0$ as
\begin{equation}\label{eq:mpc_cost}
    J_t^{\text{MPC},H}(u_t,\dots,u_{t{+}H{-}1}; b_t) = \sum_{\tau = t}^{t+H-1} \ell(\bar b_\tau, u_\tau) + \phi(\bar b_{t+H}),
\end{equation} 
where the trajectory $(\bar b_\tau)_{\tau=t}^{t+H}$ is generated by \eqref{eqn:det-belief-update} 
as a function of inputs $u_t,\dots,u_{t+H-1}$, 
starting from $\bar b_t = (\hat x_{t|t-1},\Sigma_{t|t-1})$, the true current belief state.
Algorithm~\ref{alg:B-MPC} presents the MPC strategy.

\begin{algorithm}[ht]
\caption{\texttt{B-MPC}} \label{alg:B-MPC}
\begin{algorithmic}[1]
\Require Trajectory length $T$, planning horizon $H$, initial mean $\hat x_0$ and covariance $\Sigma_0$.
\State Initialize $b_0$ with $\hat x_{0|-1} = \hat x_0$ and $\Sigma_{0|-1} = \Sigma_0$.
\For{$t=0,1,2,\dots,T-1$}
    \State Minimize the cost $J_t^{\text{MPC},H}(~\cdot~;b_t)$  \eqref{eq:mpc_cost} over $(u_t,\dots,u_{t+H-1})$ to obtain  $(u_t^\star,\dots,u_{t+H-1}^\star)$.
    \State Apply $u_t^\star$ in the true system, and observe $y_t$.
    \State Update the belief state $b_{t+1} = (\hat x_{t+1|t}, \Sigma_{t+1|t})$ via \eqref{eq:kf_bo} using $(u_t^\star,y_t)$.
\EndFor
\end{algorithmic}
\end{algorithm}

\subsection{Numerical Experiments}
\ifthenelse{\equal{\version}{arxiv}}{
\begin{figure*}[ht]
    \centering
\begin{subfigure}[b]{0.6\linewidth}
            \centering
            \includegraphics[width=\linewidth]{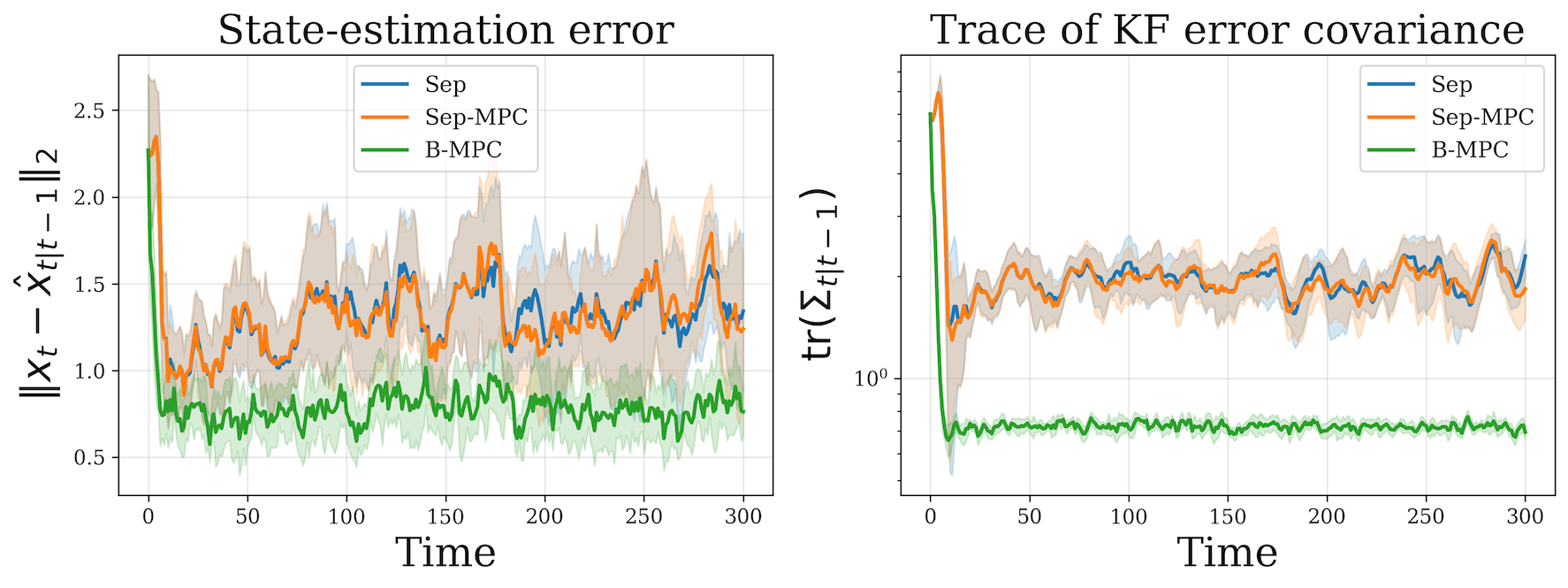}
            \caption{Kalman filter diagnostics.}
            \label{fig:kf-rollout}
        \end{subfigure}
        ~
~
\begin{subfigure}[b]{0.28\linewidth}
            \centering
            \adjincludegraphics[width=\linewidth, trim={0.5\width} 0 0 0, clip]{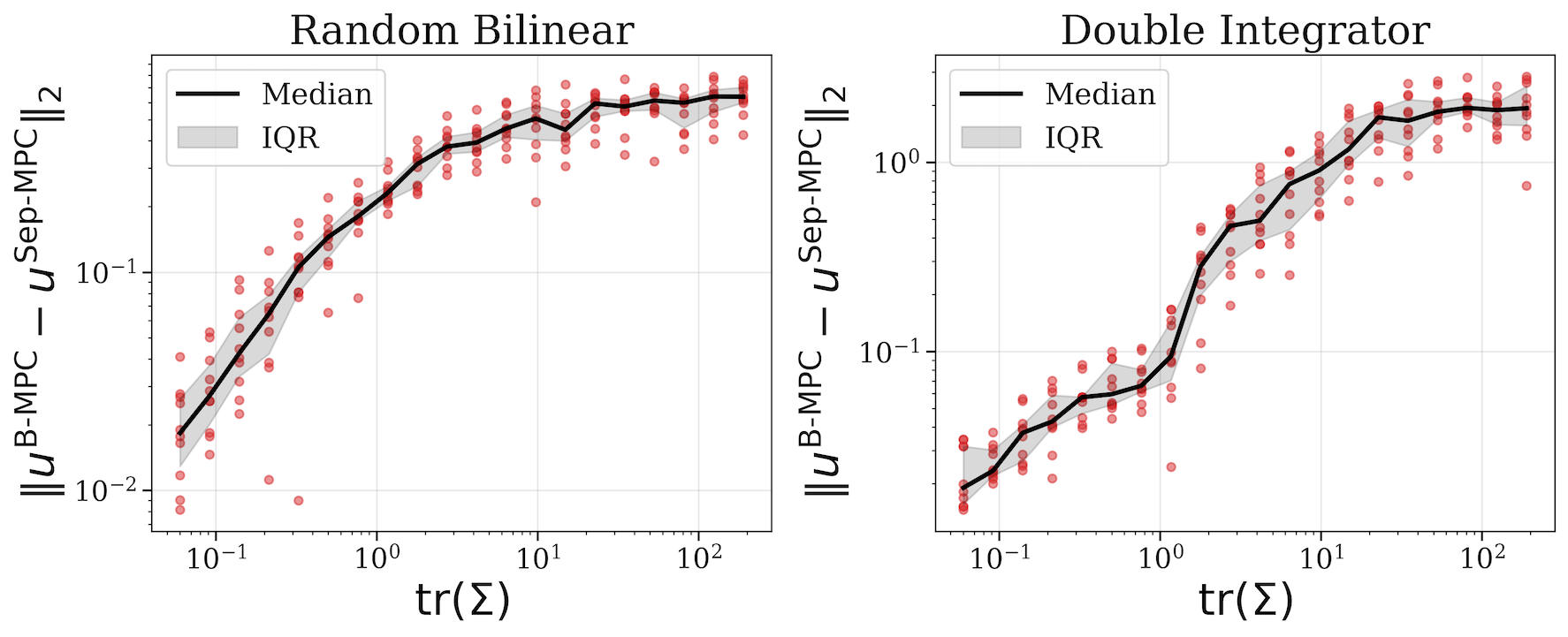}
            \caption{Action difference vs $\text{tr}(\Sigma)$.}
            \label{fig:action-gap-synthetic}
        \end{subfigure}
\begin{subfigure}[b]{0.9\linewidth}
            \centering
\adjincludegraphics[width=\linewidth, trim={0 0 0 0.5\height}, clip]{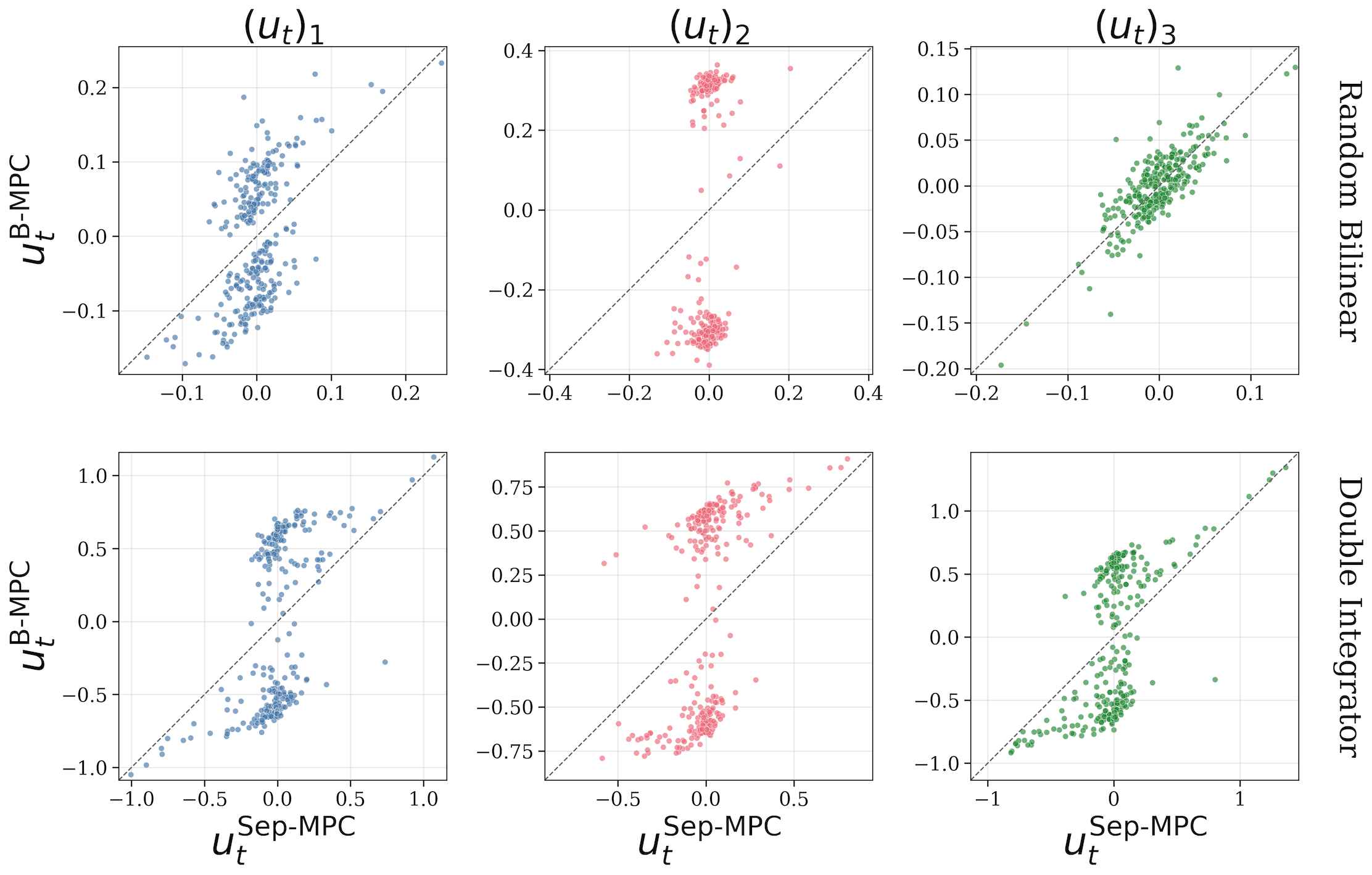}
            \caption{Counterfactual action comparison.}
            \label{fig:input-scatter}
        \end{subfigure}

    \caption{Comparisons between \texttt{Sep}, \texttt{Sep-MPC}, and \texttt{B-MPC}: (a) Kalman filter state estimation error and covariance trace over 10 trials (mean and $95\%$ confidence interval). (b) Action difference ($\ell_2$ norm) versus covariance trace for synthetic belief states. (c) Comparison of counterfactual actions taken by \texttt{Sep-MPC} vs. \texttt{B-MPC} along the same trajectory.}
    \label{fig:main-combined-figure}
\end{figure*}
}{
\begin{figure*}[ht]
    \centering
\begin{subfigure}[b]{0.5\linewidth}
            \centering
            \includegraphics[width=\linewidth]{Figures/kf-plots_large.png}
            \caption{Kalman filter diagnostics.}
            \label{fig:kf-rollout}
        \end{subfigure}
        ~
~
\begin{subfigure}[b]{0.23\linewidth}
            \centering
            \adjincludegraphics[width=\linewidth, trim={0.5\width} 0 0 0, clip]{Figures/action_gap_vs_tr_sigma_synthetic_large.png}
            \caption{Action difference vs $\text{tr}(\Sigma)$.}
            \label{fig:action-gap-synthetic}
        \end{subfigure}
\begin{subfigure}[b]{0.8\linewidth}
            \centering
\adjincludegraphics[width=\linewidth, trim={0 0 0 0.5\height}, clip]{Figures/counterfactual_input_scatter_on_sep_rollout_per_dim_largev2.png}
            \caption{Counterfactual action comparison.}
            \label{fig:input-scatter}
        \end{subfigure}

    \caption{Comparisons between \texttt{Sep}, \texttt{Sep-MPC}, and \texttt{B-MPC}: (a) Kalman filter state estimation error and covariance trace over 10 trials (mean and $95\%$ confidence interval). (b) Action difference ($\ell_2$ norm) versus covariance trace for synthetic belief states. (c) Comparison of counterfactual actions taken by \texttt{Sep-MPC} vs. \texttt{B-MPC} along the same trajectory.}
    \label{fig:main-combined-figure}
\end{figure*}
}

We conclude with numerical comparisons of the performance of controllers based on the separation principle with that of \texttt{B-MPC}.
In particular, we consider \texttt{Sep}, which applies the finite-horizon state-feedback policy $u_t=L_t \hat x_{t|t-1}$ (the separation principle policy in \eqref{eqn:LQG Policy}), and \texttt{Sep-MPC} which computes the state-feedback gain for the same horizon $H$ as \texttt{B-MPC}.
The controller performance is compared on a multi-block double integrator system with $d_x{=}6$, $d_u{=}3$, and $d_y{=}3$, where the scale of $C_0$ is $10^{-2}$.
We use $H{=}15$ and approximately solve the \texttt{B-MPC} minimization using L-BFGS.
Full experimental details and results are presented in~\cite{cao2026dual}. 

The first observation presented in~\cite{cao2026dual} is that \texttt{B-MPC} can achieve cost reductions of close to $40\%$. 
The improvement is due to a reduction in the state component of the cost,
and it comes at the expense of a slightly increased input cost.
Figure~\ref{fig:kf-rollout} explains the performance difference: \texttt{B-MPC} maintains lower state estimation errors by controlling the size of the posterior covariance $\tr(\Sigma_{t|t-1})$. To understand this difference,
we explore how the control inputs differ between the controllers.
First, we generate a single trajectory using \texttt{Sep-MPC} to produce a sequence of belief states $(b_t)_{t=0}^{T-1}$ and inputs $(u_t^\texttt{Sep-MPC})_{t=0}^{T-1}$. Then, for each $b_t$, we compute the counterfactual \texttt{B-MPC} control input  $u_t^\texttt{B-MPC}$.
Figure~\ref{fig:input-scatter} shows that \texttt{B-MPC} inputs are spread away from zero, which can be viewed as information-gathering actions. Recalling that $C_0$ is small, near-zero inputs cause the input-dependent measurement matrix $C(u_t)\approx 0$, leading to reduced state estimation accuracy. 
Finally, we explore reasons for disagreement by sampling synthetic belief states, solving both \texttt{B-MPC} and \texttt{Sep-MPC}, and recording the Euclidean distance between the inputs.
Figure~\ref{fig:action-gap-synthetic} shows the result.
For both systems, the gap between \texttt{B-MPC} and \texttt{Sep-MPC} increases monotonically as $\tr(\Sigma)$ grows.
The larger deviation when uncertainty is high aligns with the information-gathering nature of \texttt{B-MPC} actions.

\subsection{Notes}

Bilinear dynamical systems, including those with partial observations, are classical models.
However, to our knowledge, the particular BO-LDS model was only recently formulated by~\cite{sattar2024learning,liu2025probabilistic} and first studied in the context of system identification.
Control of BO-LDS was studied by~\cite{choi2025explore}, but only for open-loop policies.
In this section, we reviewed the results of \cite{sattar2025sub} which establish that the optimal closed-loop controller does not follow the separation principle.
It remains an open question how to design controllers which guarantee stability, let alone bounded sub-optimality, for BO-LDS. 
The preliminary numerical results from the belief space MPC method proposed by \cite{cao2026dual} provide one path forward.

Belief space planning was coined by \cite{kaelbling1998planning}
in the context of generic partially observed Markov decision processes.
In general, it can be intractable except in relatively small discrete settings.
Nonetheless, it became a common strategy in robotics to handle practical issues like path planning under perception uncertainties.
\cite{platt2010belief} and \cite{van2012motion} formulate belief space planning approximations based on iterative linearization in iLQG. 
More broadly, optimal control under decision-relevant and reducible uncertainty is studied as 
dual control, a term introduced by
\cite{feldbaum1960dual}.
When control inputs have such a dual effect (both regulating and probing),
\cite{bar1974dual} show that the separation principle will not hold. \cite{heirung2017dual} present an MPC strategy for dual adaptive control.

\ifthenelse{\equal{\version}{arxiv}}{
\begin{table*}[tb]
    \centering
    \caption{Summary of different control methods for bilinear systems.}
    \label{tab:various_control}
    \renewcommand{\arraystretch}{1.25}
    \begin{adjustbox}{max width=\columnwidth}
    \begin{tabular}{@{}llll@{}}
        \toprule
       \textbf{Control algorithm}  & \textbf{System class} & \textbf{Objective} & \textbf{Strengths; Limitations}\\
        \midrule
        Separation-based LQG        & \eqref{eqn:LDS-BO} &  minimize quadratic cost & closed-form; can lead to zero information\\ 
        Belief-space MPC  & \eqref{eqn:LDS-BO} &  minimize quadratic cost & maximize information; computationally expensive \\
        LMI-based control         & \eqref{eqn:BDS-FO}
                              & stabilizing while maximizing the ROA & scalable; conservatism from overapproximation \\
        SOS (Polynomial Lyanunov)           & \eqref{eqn:BDS-FO}
                              & stabilizing while maximizing the ROA & expressive; poor scalability in dimension/degree \\
        Koopman-lifted control        & \eqref{eqn:BDS-FO}
                              & stabilizing through a lifted bilinear surrogate & generic; depends on learned dictionary and residuals \\
        \bottomrule
    \end{tabular}
    \end{adjustbox}
\end{table*}
}{
\begin{table*}[tb]
    \centering
    \caption{Summary of different control methods for bilinear systems.}
    \label{tab:various_control}
    \renewcommand{\arraystretch}{1.25}
    \begin{tabular}{@{}llll@{}}
        \toprule
       \textbf{Control algorithm}  & \textbf{System class} & \textbf{Objective} & \textbf{Strengths; Limitations}\\
        \midrule
        Separation-based LQG        & \eqref{eqn:LDS-BO} &  minimize quadratic cost & closed-form; can lead to zero information\\ 
        Belief-space MPC  & \eqref{eqn:LDS-BO} &  minimize quadratic cost & maximize information; computationally expensive \\
        LMI-based control         & \eqref{eqn:BDS-FO}
                              & stabilizing while maximizing the ROA & scalable; conservatism from overapproximation \\
        SOS (Polynomial Lyanunov)           & \eqref{eqn:BDS-FO}
                              & stabilizing while maximizing the ROA & expressive; poor scalability in dimension/degree \\
        Koopman-lifted control        & \eqref{eqn:BDS-FO}
                              & stabilizing through a lifted bilinear surrogate & generic; depends on learned dictionary and residuals \\
        \bottomrule
    \end{tabular}
\end{table*}
}

\section{Control of Bilinear Dynamics}\label{sec:control-bilinear-systems}

Bilinear dynamical systems of the form~\eqref{eqn:BDS-FO} represent a central system class in nonlinear control theory. 
They are the simplest systems in which state and input interact multiplicatively, yet they are expressive enough to approximate broad classes of nonlinear dynamics via Carleman linearization or Koopman-operator lifting~\citep{bilinearbook,koopman1931hamiltonian,carleman1932application}.
Controlling~\eqref{eqn:BDS-FO} is fundamentally harder than controlling a linear time-invariant (LTI) system because the effective system matrix $u_t \circ \mathcal{A}$ depends on the applied input, so standard pole-placement and LQR design do not apply directly.

Every method presented in this section faces the same challenge: the bilinear term $\sum_{k=1}^{d_u} (u_t)_k A_k x_t$ couples state and input, making the closed-loop system nonlinear even under linear feedback.
Two broad families of tractable controller-design methods have emerged in the literature, differing in \emph{how} they handle this coupling.
The first (\S\ref{sec:control-bilinear-LMI-overapprox}) treats the bilinear term as a \emph{structured uncertainty} appended to a linear nominal model and exploits robust-control tools, most prominently Petersen's lemma and ellipsoidal state bounds, to reduce the design to a semidefinite program (SDP).
The second (\S\ref{sec:control-bilinear-SOS}) embraces the \emph{polynomial} (in fact bilinear) character of the dynamics and searches directly for polynomial Lyapunov functions and rational controllers via sum-of-squares (SOS) optimization, recovering less conservative guarantees at the cost of higher computational effort.
Table~\ref{tab:comp_control} at the end of this section summarizes the trade-offs at a glance, which may be helpful before diving into the details.

Although data collected during learning may be corrupted by process noise (cf.\ Part~\ref{part:learning}), the controller design focuses on the noise-free part of the dynamics, i.e., the control objective is nominal stabilization of~\eqref{eqn:BDS-FO} with $w_t = 0$.
In particular, we assume that the only uncertainty in the system dynamics arises from the identification error, which is common in the literature on (stochastic) data-driven control; see, e.g.,~\citet{vanWaarde2022noisy,martin2023guarantees,faulwasser2023behavioral} and the references therein.
More precisely, we consider the identified bilinear dynamics
\begin{subequations}\label{eq:residual_dynamics}
    \begin{align}
        x_{t+1} 
        &= \hat{A}_0 x_t + \sum_{k=1}^{d_u}(u_t)_k \hat{A}_k x_t + \hat{B} u_t + r(x_t, u_t)
        \\
        &= \hat{A}_0 x_t + \widehat{\mathcal{N}} (I_{d_u}\otimes x_t) u_t + \hat{B} u_t + r(x_t,u_t)
    \end{align}
\end{subequations}
with $\widehat{\mathcal{N}} := \begin{bmatrix}\hat{A}_1 & \cdots & \hat{A}_{d_u}\end{bmatrix} \in \mathbb{R}^{d_x \times d_u d_x}$, where $r(x_t, u_t) := (A_0 - \hat{A}_0)x_t + \sum_{k=1}^{d_u}(u_t)_k(A_k - \hat{A}_k)x_t + (B - \hat{B})u_t$ is the \emph{identification residual}.

Here, the central challenge is that the identification error of the identified bilinear dynamics must be \emph{structurally compatible} with the subsequent controller synthesis, i.e., the \emph{shape} of this error bound matters.
In particular, the robust controller designs in \S\ref{sec:control-bilinear-LMI-overapprox} and~\S\ref{sec:control-bilinear-SOS}, which establish end-to-end guarantees for unknown bilinear systems, require the residual $r$ to be bounded by a quadratic expression that vanishes at the origin $(x_t,u_t)=(0,0)$ and grows as the state and input move away from it.

Recall the finite time error bounds from Theorem~\ref{thm:main_BDS_FO}, which readily gives high probability point-wise error bounds,\begin{subequations}\label{eq:error-bound-individual}
    \begin{align}
        \|\hat{A}_k - A_k\| &\leq \varepsilon_{A_k}, \qquad k=0,...,d_u, \\
        \|\hat{B} - B\| &\leq \varepsilon_{B},
    \end{align}
\end{subequations}
or ellipsoidal error bounds
\begin{equation}\label{eq:error-bound-ellipsoidal}
  (\hat{A}_0 - A_0) \in \mathcal{E}_A,
  \qquad
  \begin{bmatrix}\widehat{A_0 + A_i} - (A_0 + A_i)\\ \hat{b}_i - b_i\end{bmatrix} \in \mathcal{E}_{B_i}
\end{equation}
for $i=1,\ldots,d_u$. In particular, these bounds ensure that after sufficiently many observations, we may write a single residual identification error bound 
\begin{equation}\label{eq:residual-quadratic-bound}
    r(x_t,u_t)^\top r(x_t,u_t) 
    \leq 
    \begin{bmatrix} x_t \\ u_t\end{bmatrix}^\top 
    Q_\Delta 
    \begin{bmatrix} x_t \\ u_t\end{bmatrix}
\end{equation}
that holds with probability at least $1-\delta$ for a failure probability $\delta > 0$.
Here, the positive semidefinite matrix $Q_\Delta$ in constructed as in~\citet[Propositions~11 and~12]{chatzikiriakos2026endtoend}.
Thus, the learning error bounds of Part~\ref{part:learning} on the identification of the true system~\eqref{eqn:BDS-FO} are structurally compatible with control, i.e., the identification residual $r$ in~\eqref{eq:residual_dynamics} satisfies the quadratic bound which will be exploited for the robust controller synthesis.

Having established the quadratic bound~\eqref{eq:residual-quadratic-bound}, we now turn to controller synthesis.
The next two subsections each propose a different way to handle the bilinear term $\widehat{\mathcal{N}}(I_{d_u}\otimes x_t)u_t$ in~\eqref{eq:residual_dynamics}.
\S\ref{sec:control-bilinear-LMI-overapprox} overapproximates it as a structured uncertainty and solves an LMI, while \S\ref{sec:control-bilinear-SOS} keeps the polynomial structure intact and uses SOS optimization.

\subsection{LMI-Based Design via Uncertainty Overapproximation}
\label{sec:control-bilinear-LMI-overapprox}

The guiding idea of this subsection is simple.
In particular, we \emph{pretend the bilinear coupling is a linear unknown}.
More precisely, the term $\widehat{\mathcal{N}}(I_{d_u}\otimes x_t)u_t$ is rewritten as $\widehat{\mathcal{N}}\Delta(x_t)u_t$, where $\Delta(x_t)=I_{d_u}\otimes x_t$ is treated as an unknown matrix confined to a bounded set.
This converts the nonlinear stabilization problem into a robust-control problem for an uncertain linear system, which can be solved with standard semidefinite programming.
The price is additional conservatism, i.e., the LMI is feasible for \emph{all} bounded $\Delta$, not just those arising from actual state trajectories.
Figure~\ref{fig:lmi_diagram} illustrates the overall pipeline.
\begin{figure*}[t]
    \centering
    \begin{tikzpicture}[
      node distance=0.55cm and 1.1cm,
      box/.style={draw, rounded corners=3pt, minimum width=2.6cm, minimum height=0.85cm, align=center, font=\small},
      bigbox/.style={draw, dashed, rounded corners=5pt, inner sep=6pt},
      arr/.style={-Stealth, thick},
      label/.style={font=\scriptsize\itshape}
    ]
\node[box, fill=blue!10]  (data)   {Data\\$\{x_t,u_t\}$};
        \node[box, fill=blue!10, right=of data]  (id)     {OLS identification\\$\hat A_0,\hat A_k,\hat B$};
        \node[box, fill=orange!15, right=of id]  (bound)  {Quadratic error\\bound $Q_\Delta$};
\node[box, fill=green!12, below=1.15cm of bound,xshift=-1.9cm]  (lmi)   {SDP / LMI\\feasibility};
        \node[box, fill=green!12, left=of lmi] (lfr) {LFR / uncertainty\\overapprox.\ $\mathbf\Delta$};
        \node[box, fill=red!10,   right=of lmi]        (ctrl)   {Rational\\controller $u(x_t)$};
        \node[box, fill=red!10,   left=of lfr]         (lyap)   {Lyapunov function\\$V(x_t)$, ROA};
\begin{scope}[on background layer]
          \node[bigbox, fit=(lfr)(lmi), fill=green!8, inner sep=5pt] (synth) {};
        \end{scope}
\draw[arr] (data) -- (id);
        \draw[arr] (id)   -- (bound);
\draw[arr] (id.south)    -- ($(synth.north west)!0.5!(synth.north east)$);
        \draw[arr] (bound.south) -- ($(synth.north west)!0.75!(synth.north east)$);
\draw[arr] (lfr) -- (lmi);
\draw[arr] (synth.west) -- (lyap.east);
        \draw[arr] (synth.east) -- (ctrl.west);
    \end{tikzpicture}
    \caption{Pipeline for the LMI-based controller design. The collected data are used to identify a nominal bilinear model and derive a quadratic error bound characterized by $Q_\Delta$.
    The bilinear coupling is then overapproximated as a structured uncertainty $\mathbf{\Delta}$, reducing controller and Lyapunov certificate synthesis to a single SDP.}
    \label{fig:lmi_diagram}
\end{figure*}

To make this precise, we define $\Delta(x_t) := I_{d_u} \otimes x_t \in \mathbb{R}^{d_u d_x \times d_u}$ and write~\eqref{eq:residual_dynamics} as
\begin{equation}\label{eq:bil_compact}
  x_{t+1} = \hat{A}_0 x_t + \widehat{\mathcal{N}} \Delta(x_t) u_t + \hat{B} u_t + r(x_t,u_t).
\end{equation}
Note that~\eqref{eq:bil_compact} represents the original bilinear interaction.
In this form, $\Delta(x_t)$ plays the role of an artificial uncertainty.
Here, the key observation is that $\Delta(x_t)=I_{d_u}\otimes x_t$ is not an arbitrary matrix, but it is entirely determined by the state $x_t$.
If we know that $x_t$ lies in a pre-specified ellipsoid $\mathcal{X}$, we can characterize $\Delta(x_t)$ \emph{exactly}, i.e., without additional conservatism, via a suitable multiplier class, recasting the bilinear dynamics as a \emph{linear fractional representation} (LFR) that is linear in both the state and the uncertainty channel.
The identification residual $r$ then enters as a second, separate uncertainty channel in the same LFR, and both are handled simultaneously in one LMI, yielding end-to-end guarantees for the true unknown system.
This robust-LFR viewpoint is developed in~\citet{strasser2023robust,strasser2023control} for discrete-time bilinear systems and extended to continuous-time systems in~\citet{strasser2025koopman}.
It builds on the same fundamental objective as earlier LMI-based ROA methods~\citep{tarbouriech2009lmi,amato2009stabilization}, namely, simultaneous controller synthesis and local ROA certification.
The key difference is that, in these earlier approaches, the analysis is performed directly on the nonlinear closed-loop system induced by a fixed linear state-feedback law, whereas the framework in~\citet{strasser2023robust,strasser2023control,strasser2025koopman} recasts the bilinear dynamics as an LFR with structured uncertainty.
We note that an LFR is a common representation of uncertain systems~\citep{zhou1996robust}, which enables the use of modern robust and gain-scheduling synthesis techniques.
A controller is then designed that renders the system stable for all uncertainty realizations consistent with $\mathcal{X}$, and a Lyapunov sublevel set $\mathcal{X}_{\mathrm{RoA}} \subseteq \mathcal{X}$ is certified as a positively invariant ROA.

Crucially, the LFR framework is particularly well-suited for incorporating the identification residual $r$ from~\eqref{eq:residual_dynamics}.
To this end, the nominal bilinear controller design of~\citet{strasser2023control} has been extended in~\citet{strasser2026safedmd,chatzikiriakos2026endtoend} to account for the learning error of bilinear systems, yielding end-to-end guarantees for the underlying unknown bilinear system.
In particular, the residual $r(x_t, u_t)$ enters the LFR~\eqref{eq:lfr_bilinear} as an additional input channel, and the quadratic bound~\eqref{eq:residual-quadratic-bound} is incorporated as a second structured uncertainty block.
The LMI synthesis then simultaneously certifies stability against both the state-dependent bilinear coupling and the identification error, leading to a controller that is guaranteed to exponentially stabilize the \emph{true} (unknown) bilinear system with high probability, not just the estimated one.

\paragraph{LFR of the bilinear system}
We define the pre-specified ellipsoidal state region $\mathcal{X}$ via the quadratic inequality
\begin{equation}\label{eq:Z_set}
  \mathcal{X} 
  := \left\{ 
    x \in \mathbb{R}^{d_x} 
    \;\middle|\;
    \begin{bmatrix} x \\ 1 \end{bmatrix}^\top
    \begin{bmatrix} Q_x & S_x \\ S_x^\top & R_x \end{bmatrix}
    \begin{bmatrix} x \\ 1 \end{bmatrix} \geq 0 \right\},
\end{equation}
where $Q_x \prec 0$ and $R_x \succ 0$. 
Further, we assume the existence of
\begin{equation}
    \begin{bmatrix}
        \tilde{Q}_x & \tilde{S}_x \\ \tilde{S}_x^\top & \tilde{R}_x
    \end{bmatrix} 
    = \begin{bmatrix}
        Q_x & S_x \\ S_x^\top & R_x
    \end{bmatrix}^{-1}.
\end{equation}
The simple choice $Q_x = -I$, $S_x = 0$, $R_x = c$ recovers the ball $\|x\|^2 \leq c$.
Rewriting~\eqref{eq:bil_compact} by collecting the auxiliary signal
$\zeta_t := \Delta(x_t) u_t$, one obtains the LFR
\begin{equation}\label{eq:lfr_bilinear}
  x_{t+1} 
  = \hat{A}_0 x_t 
  + \hat{B} u_t 
  + \widehat{\mathcal{N}} \zeta_t
  + r(x_t,u_t),
  \qquad 
  \zeta_t = \Delta(x_t) u_t,
\end{equation}
which is linear in $(x_t, u_t, \zeta_t, r(x_t,u_t))$.
Here, in the feedback path appears both the residual $r(x_t,u_t)$ satisfying~\eqref{eq:residual-quadratic-bound} as well as the artificially introduced state-dependent uncertainty $\Delta(x_t) \in \mathbf{\Delta}\subseteq\mathbb{R}^{d_u d_x \times d_u}$.
The set $\mathbf{\Delta}$ characterizes the uncertainty and, due to its link to the state $x_t$ via $\Delta(x_t) = I_{d_u}\otimes x_t$, should be chosen in line with the pre-specified ellipsoidal region $\mathcal{X}$.

\paragraph{Uncertainty characterization}
The key observation of~\citet{strasser2023control} is that the structure $\Delta(x_t) = I_{d_u} \otimes x_t$ with $x_t \in \mathcal{X}$ can be characterized \emph{exactly} and without conservatism via the multiplier class
\begin{equation}\label{eq:multiplier}
  \boldsymbol{\Pi}_\Delta 
  := \left\{
    \Pi_\Delta 
    = \begin{bmatrix} 
        \Lambda \otimes Q_x & \Lambda \otimes S_x \\
        \Lambda \otimes S_x^\top & \Lambda \otimes R_x
     \end{bmatrix}
    \;\middle|\; 
    0 \preceq \Lambda \in \mathbb{R}^{d_u \times d_u}
  \right\}.
\end{equation}
More precisely, $\Delta = I_{d_u} \otimes x_t$ with $x_t \in \mathcal{X}$ if and only if $\begin{bmatrix} \Delta^\top & I \end{bmatrix} \Pi_\Delta
\begin{bmatrix} \Delta^\top & I \end{bmatrix}^\top \succeq 0$ for all
$\Pi_\Delta \in \boldsymbol{\Pi}_\Delta$~\citep[Proposition~2]{strasser2023control}.
This is an \emph{exact} (non-conservative) characterization for multi-input
systems, reducing to the standard ellipsoidal description $\mathbf{\Delta}=\mathcal{X}$ for scalar inputs ($d_u = 1$).

\paragraph{Controller design and stability guarantee}
Substituting a full-information control law~(cf.~\citet{doyle:glover:khargonekar:francis:1989,packard:zhou:pandey:leonhardson:balas:1992,astolfi:1997}) parametrized as $u_t = K x_t + K_w w_t$ (which reduces to linear feedback $u_t = Kx_t$ when $K_w = 0$) into~\eqref{eq:lfr_bilinear} and applying the multiplier characterization~\eqref{eq:multiplier} together with a quadratic Lyapunov function $V(x) = x^\top P^{-1} x$, one obtains via the dualization lemma~(cf.~\citet{scherer:weiland:2000}) a convex LMI feasibility problem in $(P, L, L_w, \Lambda, \tau)$, where $L := KP$, $L_w$ encodes $K_w$, and $\tau$ is a multiplier for the S-procedure~\citep{boyd1994lmi}.
Concretely, the LMI feasibility problem reads: find $P\in\mathbb{R}^{d_x\times d_x}$, $L \in \mathbb{R}^{d_u \times d_x}$, $L_w \in \mathbb{R}^{d_u \times d_u d_x}$, $\Lambda \in \mathbb{R}^{d_u \times d_u}$, and $\tau > 0$ such that $P\succ 0$, $\Lambda \succeq 0$, and LMI~\eqref{eq:lfr_lmi}
\begin{figure*}[t]\footnotesize
    \begin{equation}\label{eq:lfr_lmi}
        \begin{bmatrix}
            -P - \tau I_{d_x} 
            & -\widehat{\mathcal{N}} (\Lambda \otimes \tilde{S}_x) - \hat{B}L_w(I_{d_u}\otimes \tilde{S}_x) 
            & 0 
            & \hat{A}_0 P + \hat{B} L
            & \widehat{\mathcal{N}} (\Lambda \otimes I_{d_x}) + \hat{B} L_w
            \\
            \star
            & \Lambda\otimes \tilde{R}_x - L_w(I_{d_u}\otimes \tilde{S}_x) - (I_{d_u} \otimes \tilde{S}_x^\top)L_w^\top
            & -(I_{d_u}\otimes \tilde{S}_x^\top) \begin{bmatrix}0\\L_w\end{bmatrix}
            & L
            & L_w
            \\
            \star 
            & \star
            & \tau Q_\Delta^{-1}
            & \begin{bmatrix} P \\ L \end{bmatrix}
            & - \begin{bmatrix} 0 \\ L_w \end{bmatrix}
            \\
            \star 
            & \star 
            & \star 
            & P 
            & 0
            \\
            \star
            & \star
            & \star 
            & \star 
            & -\Lambda \otimes \tilde{Q}_x^{-1}
        \end{bmatrix} 
        \succ 0
    \end{equation}
    \normalsize
    \medskip
    \vspace*{-0.35\baselineskip}
    \hrule
    \vspace*{-0.35\baselineskip}
\end{figure*}holds together with a second LMI enforcing $\mathcal{X}_{\mathrm{RoA}} \subseteq \mathcal{X}$ with
\begin{equation}\label{eq:ZRoA}
  \mathcal{X}_{\mathrm{RoA}} := \{x : x^\top P^{-1} x \leq 1\};
\end{equation}
see~\citet[Theorem~4]{strasser2023control} and~\citet[Theorem~13]{chatzikiriakos2026endtoend} for the precise statement.
The decision variables are $(P, L, L_w, \Lambda)$, with total matrix size $\mathcal{O}((d_x + d_u + d_u d_x)^2)$ variables, and the controller gains are recovered as $K = LP^{-1}$ and $K_w = L_w(\Lambda^{-1} \otimes I_{d_x})$.

If feasible, the recovered controller
\begin{align}\label{eq:rational_controller}
  u(x_t) 
  &= K x_t + K_w w_t 
  = K x_t + K_w (I_{d_u} \otimes x_t) u(x_t) 
  \nonumber\\
  &= \left(I - K_w(I_{d_u} \otimes x_t)\right)^{-1} K x_t
\end{align}
is a \emph{rational} function of $x_t$, and asymptotically stabilizes the bilinear system for all $x_0 \in \mathcal{X}_{\mathrm{RoA}}$.
Here, $\mathcal{X}_{\mathrm{RoA}}\subseteq \mathcal{X}$ ensures that the uncertainty characterization~\eqref{eq:multiplier} remains valid along all closed-loop trajectories.
Positive invariance of $\mathcal{X}_{\mathrm{RoA}}$ follows from the Lyapunov decrease $\Delta V(x_t) < 0$ for all $x_t \in \mathcal{X} \supseteq \mathcal{X}_{\mathrm{RoA}}$, and the volume of $\mathcal{X}_{\mathrm{RoA}}$ is maximized by solving $\max \log\det(P)$ or $\max \mathrm{tr}(P)$ subject to the LMI constraints.

\paragraph{Quadratic performance}
The framework extends naturally to quadratic performance (e.g., $\mathcal{L}_2$-gain bounds) by augmenting the LFR~\eqref{eq:lfr_bilinear} with a performance channel.
Concretely, an exogenous disturbance input $d_t$ and a regulated output $z_t = C_z x_t + D_z u_t + \mathcal{N}_z (I_{d_u}\otimes x_t) u_t$ are introduced as an additional input-output channel in the LFR, alongside the existing uncertainty channels for the bilinear coupling and the identification residual.
The quadratic performance condition is then imposed as a further linear constraint in the LMI feasibility problem, following standard robust-control techniques~\citep{boyd1994lmi}; see~\citet[Section~IV]{strasser2023control} for the precise statement and proof.
The performance channel does not introduce new decision variables, i.e., the same $(P, L, L_w, \Lambda)$ remain, but augments the LMI with additional rows and columns corresponding to the disturbance input and regulated output dimensions, leading to a moderately larger SDP.
The result is, for instance, a certified local $\mathcal{L}_2$-gain bound $\|z\|_{\ell_2} \leq \gamma \|d\|_{\ell_2}$ valid for all trajectories starting in $\mathcal{X}_{\mathrm{RoA}}$.
 
\paragraph{Summary and limitations.}
The discussed LFR approach yields a rational state-feedback controller with end-to-end stability guarantees and a certified ellipsoidal ROA, obtained from a single SDP of size $\mathcal{O}((d_x + d_u + d_u d_x)^2)$.
The conservatism relative to the true ROA stems from three sources: 
(i)~the bilinear term $\Delta(x_t)u_t$ is treated as an uncertainty ranging over all of $\mathbf{\Delta}$, even though actual trajectories are only a subset of those realizations;
(ii)~the Lyapunov function is restricted to the quadratic class $V(x)=x^\top P^{-1}x$;
and (iii)~the ROA must be contained in the pre-specified region $\mathcal{X}$ used to construct the multipliers.
All three limitations are addressed by the SOS approach of \S\ref{sec:control-bilinear-SOS}, at the cost of a significantly larger SDP.

\subsection{Sum-of-Squares Optimization Exploiting the Bilinear Structure}
\label{sec:control-bilinear-SOS}
The LMI approaches of \S\ref{sec:control-bilinear-LMI-overapprox} achieve tractability by replacing the state-dependent factor $I_{d_u}\otimes x_t$ in the bilinear term with a worst-case bounding uncertainty $\Delta(x_t)\in\mathbf{\Delta}$.
However, this over-approximation is conservative: the true state trajectory can only follow paths consistent with the bilinear dynamics, whereas the LMI is feasible for \emph{all} trajectories of all uncertainty realizations in $\mathbf{\Delta}$, most of which never occur.
Sum-of-squares (SOS) optimization offers a different path.
Instead of over-approximating the bilinear term, it searches within the class of polynomial Lyapunov functions for a certificate that holds \emph{exactly} along bilinear trajectories.
The bilinear dynamics~\eqref{eq:residual_dynamics} is a polynomial system, so a polynomial feedback law yields a polynomial closed-loop, and checking Lyapunov decrease reduces to checking nonnegativity of a polynomial, which is where SOS comes in.
Figure~\ref{fig:sos_diagram} illustrates this pipeline.

The following subsections introduce the SOS framework, explain why bilinear systems are a particularly natural fit for it, and describe how bilinear controller synthesis can be cast as a single convex program via a rational controller parameterization.
\begin{figure*}[t]
    \centering
    \begin{tikzpicture}[
      node distance=0.55cm and 1.0cm,
      box/.style={draw, rounded corners=3pt, minimum width=2.8cm, minimum height=0.85cm, align=center, font=\small},
      bigbox/.style={draw, dashed, rounded corners=5pt, inner sep=6pt},
      arr/.style={-Stealth, thick},
      label/.style={font=\scriptsize\itshape}
    ]        
\node[box, fill=blue!10]  (data)   {Data\\$\{x_t,u_t\}$};
        \node[box, fill=blue!10, right=of data]  (id)  {OLS identification\\$\hat A_0,\hat A_k,\hat B$};
        \node[box, fill=orange!15, right=of id]  (bound)  {Quadratic error\\bound $Q_\Delta$};
\node[box, fill=green!12, below=1.15cm of bound,xshift=-1.9cm] (sos) {SOS program\\(single SDP)};
        \node[box, fill=green!12, left=of sos]  (poly)   {Polynomial\\closed-loop $f(x,\pi)$};
\node[box, fill=red!10, left=1.05 of poly]   (lyap)   {Lyapunov function\\$V(x_t)$, ROA};
        \node[box, fill=red!10, right=1.05 of sos]   (ctrl)   {Rational\\controller $u(x_t)$};
\begin{scope}[on background layer]
          \node[bigbox, fit=(poly)(sos), fill=green!8, inner sep=5pt] (synth) {};
        \end{scope}
\draw[arr] (data) -- (id);
        \draw[arr] (id)   -- (bound);
\draw[arr] (id.south)    -- ($(synth.north west)!0.5!(synth.north east)$);
        \draw[arr] (bound.south) -- ($(synth.north west)!0.75!(synth.north east)$);
\draw[arr] (poly) -- (sos);
\draw[arr] (synth.west) -- (lyap.east);
        \draw[arr] (synth.east) -- (ctrl.west);
    \end{tikzpicture}
    \caption{Pipeline for the SOS-based controller design. The bilinear closed-loop is treated directly as a polynomial system. A single SOS program, formulated using the rational controller parameterization~\eqref{eq:rational_K} and the quadratic error bound characterized by $Q_\Delta$, simultaneously searches for a polynomial Lyapunov function and the controller, yielding a less conservative ROA than the LMI approach.}
    \label{fig:sos_diagram}
\end{figure*}

\subsubsection{Background: SOS Relaxation of Polynomial Nonnegativity}
A polynomial $p\in\mathbb{R}[x,2d]$ of degree $2d$ is \emph{sum-of-squares} (SOS) if there exist polynomials $q_1,\ldots,q_k$ of degree $d$ such that $p = \sum_i q_i^2$.
Every SOS polynomial is nonnegative, and the converse holds for univariate polynomials or polynomials of degree at most two, but fails in general.
SOS is therefore a sufficient, computationally tractable condition for nonnegativity.
Concretely, $p$ is SOS if and only if there exists a positive semidefinite Gram matrix $Q \succeq 0$ such that $p(x) = z(x)^\top Q\, z(x)$, where $z(x)$ is the vector of monomials up to degree $d$.
For the controller synthesis, we require a \emph{matrix} extension of the SOS notion.
A \emph{polynomial matrix} $T \in \mathbb{R}[x,d]^{m\times n}$ is a matrix whose every entry is a polynomial in $x$.
We say $T$ has degree $d$ if the maximum degree of its entries is $d$.
Further, a polynomial matrix $S\in\mathbb{R}[x,2d]^{n\times n}$ of degree $2d$ is \emph{SOS} if it can be decomposed as $S(x) = T(x)^\top T(x)$.
We write $S \in \mathrm{SOS}[x, 2d]^n$ and $S \in \mathrm{SOS}_+[x,2d]^n$ if additionally $S(x) \succ 0$ for all $x \neq 0$.
An SOS matrix satisfies $S(x) \succeq 0$ for all $x$.
Verifying $S \in \mathrm{SOS}[x,2d]^n$ reduces to a semidefinite feasibility problem~\citep{parrilo2000structured} in terms of an LMI and is solvable by standard SDP solvers (e.g., MOSEK~\citep{mosek:2026}) interfaced through toolboxes such as SOSTOOLS~\citep{prajna2002introducing,sostools2013} or YALMIP~\citep{lofberg:2004,lofberg:2009}.

\subsubsection{Bilinear Systems as Polynomial Systems}
The use of SOS for polynomial Lyapunov functions in nonlinear control was pioneered by Parrilo~\citep{parrilo2000structured} and developed into a systematic framework by Prajna, Papachristodoulou, and Parrilo~\citep{prajna2002introducing,papachristodoulou2005tutorial}.
This tool has found notable success in, e.g., systems biology, where nonlinear ODE models of gene regulatory networks are naturally polynomial or admit polynomial approximations.
For instance,~\citet{el2003model} apply SOS techniques to model validation and robust stability analysis of the bacterial heat shock response, demonstrating that parametric robustness questions in gene regulatory networks can be addressed algorithmically via semidefinite programming.
Moreover, the G-protein signaling cascade in yeast is analyzed in~\citet{yi2005application} using SOS methods. 
A further application, closer to the setting of the present paper, appear in~\citet{prajna2007framework, elsamad2006stochastic}, who construct barrier certificates via SOS methods to compute bounds on the probability that a stochastic biological process reaches an unsafe region of the state space in finite time, illustrated on the bacteriophage-$\lambda$ genetic switch.

Because the bilinear dynamics~\eqref{eq:residual_dynamics} is polynomial in $(x_t, u_t)$, any polynomial feedback $u_t = \pi(x_t)$ yields a \emph{polynomial} closed-loop vector field
\begin{equation}\label{eq:cl_vector_field}
  f(x, \pi(x)) := \hat{A}_0 x + \sum_{k=1}^{d_u}(\pi(x))_k \hat{A}_k x + \hat{B}\pi(x) + r(x,\pi(x))
\end{equation}
making the Lyapunov decrease condition $V(f(x,\pi(x))) - V(x) < 0$ a polynomial inequality in $x$ for any fixed polynomial $V$ and (polynomial) uncertainty bound on $r$.
With a linear feedback $\pi(x) = Kx$, the closed-loop map~\eqref{eq:cl_vector_field} is degree-2 in $x$, so a degree-4 Lyapunov function $V \in \mathbb{R}[x,4]$ already suffices to certify decrease~\citep{tan2006stability}.
More generally, for degree-$(2\alpha-1)$ feedback, the Lyapunov degree must be at least $4\alpha$, but the key observation is that the SDP size is determined solely by the degree of the bilinear term and the controller, not by any additional conservatism from an uncertainty overapproximation.
We emphasize that the decrease condition and the uncertainty characterization can be combined using the S-procedure~\citep{boyd1994lmi} with a polynomial multiplier~\citep[Lemma 2.1]{tan:2006}.

\paragraph{Rational Controller Parameterization and a single Synthesis SDP}
The rational controller parameterization of~\citet{strasser2025sos} is the key to turning the joint synthesis into a single convex SOS program.
The central idea is to write the control law as
\begin{equation}\label{eq:rational_K}
  u(x_t) = \frac{1}{\kappa(x_t)} L(x_t) x_t
\end{equation}
with a polynomial matrix $L \in \mathbb{R}[x,2\alpha-1]^{d_u \times d_x}$ and a strictly SOS scalar polynomial $\kappa \in \mathrm{SOS}_+[x, 2\alpha]$.
Substituting~\eqref{eq:rational_K} into the closed-loop map~\eqref{eq:cl_vector_field} and multiplying the Lyapunov decrease condition through by $\kappa(x) > 0$ yields a condition that is \emph{affine} in the decision variables $(P, L, \tau, \rho)$ for a fixed $\kappa$, where $P \succ 0$ encodes the Lyapunov function via $V(x) = x^\top P^{-1} x$, $\tau$ is a polynomial multiplier for the S-procedure, and $\rho$ is a slack variable.
This condition takes the form of a polynomial matrix in $x$ required to belong to $-\mathrm{SOS}_+[x, 2\alpha]^{3d_x+d_u}$, i.e., to be strictly negative semidefinite as a polynomial matrix~\citep{strasser2025sos}.
This leads to a single convex SOS program in $(P, L, \tau, \rho)$ solvable as one SDP.
In particular, we have an iterative solution strategy, where we fix $\kappa$ and solve for $(P, L, \tau, \rho)$, then iterate, decomposing the problem into a sequence of convex sub-problems~\citep{tan2006stability,henrion2005polynomial}.
More precisely, the SOS synthesis problem is as follows.
Find $P \succ 0$, $L \in \mathbb{R}[x,2\alpha-1]^{d_u \times d_x}$, $\tau\in\mathrm{SOS}_+[x,2\alpha]$, $\rho>0$, and a fixed $\kappa \in \mathrm{SOS}_+[x,2\alpha]$ such that
\begin{equation}
    \begin{bmatrix}
        \kappa P - \tau I_{d_x}
        & 0 
        & \kappa \hat{A}_0 P + \hat{B} L + \widehat{\mathcal{N}} (L \otimes x)
        \\
        \star 
        & \tau Q_\Delta^{-1}
        & \begin{bmatrix} \kappa P \\ L \end{bmatrix}
        \\
        \star 
        & \star
        & \kappa (P-\rho I_{d_x})
    \end{bmatrix}
\end{equation}
is in $\mathrm{SOS}[x,2\alpha]^{3d_x + d_u}$, where we refer to~\citet[Theorem~2]{strasser2025sos} and~\citet[Theorem~14]{chatzikiriakos2026endtoend} for the precise formulation and details.
The resulting controller $u(x_t)$ in~\eqref{eq:rational_K} is a rational function of the state and globally exponentially stabilizes~\eqref{eq:residual_dynamics} if the residual error bound holds globally as well.
As discussed in~\citet{chatzikiriakos2026endtoend}, this bound, however, only holds within a compact input set $\mathbb{U}$, i.e., the stability guarantees are valid for all initial conditions $x_0 \in \mathcal{X}_{\mathrm{RoA}}(c) := \{x : x^\top P^{-1} x \leq c\}$ with $c\geq 0$ chosen such that $u(x_t)\in\mathbb{U}$ for all $x\in\mathcal{X}_{\mathrm{RoA}}(c)$. 

\paragraph{Relation to the uncertainty-based LMI design}
The rational parameterization~\eqref{eq:rational_K} is structurally analogous to the gain-scheduling controller~\eqref{eq:rational_controller} of \S\ref{sec:control-bilinear-LMI-overapprox}.
More precisely, both exploit the bilinear structure to express the control law as a rational function of the state.
The key difference is that the uncertainty-based LMI design works with a rational control law with linear polynomials as numerator and denominator, whereas the numerator and denominator of the SOS-controller are polynomials of any chosen degree.
Here, the complexity of the LMI design scales better with state dimension, but is typically significantly more conservative due to the over-approximation of the bilinear term.
The SOS design, on the other hand, directly exploits the bilinearity to find larger ROA estimates at the cost of a more demanding SDP.
We emphasize that a larger polynomial degree enlarges the ROA estimate but increases the computational complexity.

\paragraph{SOS controller synthesis within model predictive control}
The approach is further extended to data-driven min-max MPC in~\citet{xie2025minmax}, where the SOS program minimizes the worst-case cost over all bilinear systems consistent with a set of noisy measurements.
In particular, an infinite-horizon min-max optimal control problem is formulated over the set of bilinear systems consistent with the collected data under bounded noise, and the worst-case cost is upper-bounded and minimized using an SOS relaxation of the resulting polynomial program following the same rational controller parameterization of~\eqref{eq:rational_K}.
The resulting Lyapunov function is obtained directly from the SOS solution, and the scheme is shown to certify robust closed-loop stability and constraint satisfaction for all noise realizations satisfying the assumed bound.

\paragraph{Summary and limitations}
The SOS approach produces less conservative stability certificates than the LMI methods because it works directly with the polynomial structure of the closed-loop rather than over-approximating the bilinear term.
The rational controller parameterization~\eqref{eq:rational_K} is the key that keeps the synthesis convex.
In particular, for a \emph{fixed} $\kappa$, all remaining decision variables enter the SOS condition affinely such that the design is solvable via an SDP.
The principal limitation is computational.
More precisely, the SDP size grows as $\mathcal{O}(d_x^{2\alpha})$ with state dimension $d_x$ and controller degree $\alpha$, making the approach practical mainly for $d_x \lesssim 10$--$15$ at $\alpha=1$.
Additionally, the stability guarantee is \emph{global} subject to the residual bound holding globally.
If the bound is only valid on a compact input set $\mathbb{U}$, the certified ROA $\mathcal{X}_{\mathrm{RoA}}(c)$ must be chosen such that the input constraint is respected for all states in the ROA.

\subsection{Numerical Experiments}
We conclude this section by comparing both design paradigms in a numerical experiment.
Here, we consider the two-dimensional bilinear dynamics investigated in~\citet{chatzikiriakos2026endtoend}.
Data are collected from two independent experiments for inputs $u_t\equiv 0$ and $u_t\equiv 1$, where identification error bounds together with the quadratic residual bound~\eqref{eq:residual-quadratic-bound} are derived following the procedure of Part~\ref{part:learning} with data length $T_0 = T_1 = T$.
A key observation from~\citet[Table~1]{chatzikiriakos2026endtoend} is the trade-off between data requirements, ROA size, and computational cost across the two design paradigms.
For the LMI-based design, feasibility depends on the size of the prescribed region $\mathcal{X}$.
In particular, a small ball $\|x\|_2^2 \leq 0.1$ requires as few as $T = 33$ samples, while certifying a larger region $\|x\|_2^2 \leq 0.9$ demands up to $T = 3999$ samples, with a computation time below $0.01\,\text{s}$ in all cases; see Fig.~\ref{fig:exmp-control-bilinear} for the corresponding certified ROAs.
\begin{figure}[tb]
    \centering
    \begin{tikzpicture}

\definecolor{iBlue}{HTML}{0072B2}
\definecolor{iOrange}{HTML}{E69F00}
\definecolor{iRed}{HTML}{D55E00}

\begin{axis}[width=0.5\linewidth,
height=0.5\linewidth,
scale only axis,
xmin=-1,
xmax=1,
xtick distance=1,
xlabel={$(x_t)_1$},
ymin=-1,
ymax=1,
ytick distance=1,
ylabel={$(x_t)_2$},
tick label style={font=\small},
xmajorgrids,
ymajorgrids,
legend style={at={(1,1)}, anchor=north east, align=left, draw=white!15!black, font=\scriptsize, inner sep=5pt}
]
\addplot [color=iRed, dashed, thick]
  table[row sep=crcr]{0.11238982	-0.29558165\\
  0.12372868	-0.29058911\\
  0.13456933	-0.28442646\\
  0.14486812	-0.27711853\\
  0.15458357	-0.26869474\\
  0.16367657	-0.25918901\\
  0.17211051	-0.24863962\\
  0.17985141	-0.23708905\\
  0.18686812	-0.2245838\\
  0.19313237	-0.21117423\\
  0.19861895	-0.19691434\\
  0.20330576	-0.18186154\\
  0.20717393	-0.16607645\\
  0.21020789	-0.14962263\\
  0.21239541	-0.13256633\\
  0.21372769	-0.11497624\\
  0.21419936	-0.096923173\\
  0.21380853	-0.078479833\\
  0.21255676	-0.059720482\\
  0.21044911	-0.040720658\\
  0.20749405	-0.021556867\\
  0.20370349	-0.00230627309999998\\
  0.19909268	0.016953607\\
  0.1936802	0.036145221\\
  0.18748783	0.055191291\\
  0.18054052	0.074015125\\
  0.17286624	0.092540926\\
  0.16449588	0.1106941\\
  0.15546316	0.12840154\\
  0.14580444	0.14559196\\
  0.13555862	0.16219613\\
  0.12476695	0.17814719\\
  0.1134729	0.19338092\\
  0.10172192	0.20783597\\
  0.089561346	0.22145414\\
  0.077040141	0.23418059\\
  0.064208722	0.24596408\\
  0.051118757	0.25675716\\
  0.037822955	0.26651637\\
  0.024374854	0.27520241\\
  0.010828603	0.28278031\\
  -0.00276125030000002	0.28921955\\
  -0.016339985	0.2944942\\
  -0.029852925	0.29858304\\
  -0.043245657	0.30146958\\
  -0.056464255	0.30314221\\
  -0.06945549	0.3035942\\
  -0.082167053	0.30282371\\
  -0.094547759	0.30083387\\
  -0.10654775	0.29763267\\
  -0.11811872	0.29323301\\
  -0.12921406	0.28765261\\
  -0.1397891	0.28091392\\
  -0.14980126	0.2730441\\
  -0.15921023	0.26407483\\
  -0.16797811	0.25404222\\
  -0.1760696	0.24298667\\
  -0.18345213	0.2309527\\
  -0.19009595	0.21798876\\
  -0.19597433	0.20414706\\
  -0.20106358	0.18948334\\
  -0.20534323	0.17405663\\
  -0.20879603	0.15792905\\
  -0.21140808	0.14116555\\
  -0.21316887	0.12383363\\
  -0.2140713	0.10600307\\
  -0.21411174	0.087745679\\
  -0.21329003	0.069134964\\
  -0.21160947	0.050245867\\
  -0.20907684	0.031154447\\
  -0.20570233	0.01193758\\
  -0.20149953	-0.0073273559\\
  -0.19648536	-0.026562787\\
  -0.19068002	-0.045691259\\
  -0.18410687	-0.064635749\\
  -0.17679239	-0.083319973\\
  -0.16876603	-0.1016687\\
  -0.16006011	-0.11960804\\
  -0.15070968	-0.13706576\\
  -0.1407524	-0.15397156\\
  -0.13022835	-0.17025738\\
  -0.11917993	-0.18585763\\
  -0.10765161	-0.2007095\\
  -0.095689809	-0.21475318\\
  -0.083342703	-0.22793212\\
  -0.070660006	-0.24019327\\
  -0.057692786	-0.25148724\\
  -0.044493257	-0.26176855\\
  -0.031114569	-0.27099582\\
  -0.017610595	-0.27913189\\
  -0.00403570819999999	-0.28614399\\
  0.00955542860000003	-0.29200389\\
  0.023108089	-0.29668799\\
  0.036567701	-0.30017743\\
  0.049880068	-0.30245817\\
  0.062991586	-0.30352101\\
  0.075849459	-0.30336168\\
  0.088401913	-0.30198083\\
  0.1005984	-0.299384\\
  0.11238982	-0.29558165\\
};
\addlegendentry{$c=0.1$}
\addplot [color=iRed, solid, thick, forget plot]
  table[row sep=crcr]{0.23525707	0.18342575\\
  0.22245122	0.19887341\\
  0.20874964	0.21352028\\
  0.19420749	0.22730738\\
  0.17888335	0.24017919\\
  0.1628389	0.25208389\\
  0.14613875	0.26297353\\
  0.12885016	0.27280427\\
  0.11104274	0.28153652\\
  0.09278818	0.28913513\\
  0.074159999	0.29556949\\
  0.055233201	0.3008137\\
  0.036083999	0.30484664\\
  0.0167895	0.30765206\\
  -0.00257260520000002	0.30921868\\
  -0.021924351	0.30954019\\
  -0.041187816	0.30861529\\
  -0.060285431	0.3064477\\
  -0.079140298	0.30304616\\
  -0.097676496	0.29842435\\
  -0.11581938	0.2926009\\
  -0.13349591	0.28559924\\
  -0.15063489	0.27744758\\
  -0.16716732	0.26817874\\
  -0.18302663	0.25783003\\
  -0.19814895	0.24644313\\
  -0.2124734	0.2340639\\
  -0.22594229	0.22074217\\
  -0.23850139	0.20653159\\
  -0.25010013	0.19148938\\
  -0.26069181	0.17567611\\
  -0.27023377	0.15915545\\
  -0.2786876	0.14199393\\
  -0.28601925	0.12426066\\
  -0.2921992	0.10602702\\
  -0.29720257	0.087366459\\
  -0.3010092	0.0683541\\
  -0.30360378	0.049066503\\
  -0.30497586	0.029581333\\
  -0.3051199	0.00997704910000002\\
  -0.30403534	-0.0096674088\\
  -0.30172653	-0.029272939\\
  -0.29820277	-0.048760598\\
  -0.29347826	-0.068051915\\
  -0.28757202	-0.087069212\\
  -0.28050782	-0.10573591\\
  -0.27231412	-0.12397685\\
  -0.2630239	-0.14171858\\
  -0.25267458	-0.15888965\\
  -0.24130783	-0.17542094\\
  -0.22896942	-0.19124586\\
  -0.21570903	-0.20630071\\
  -0.20158005	-0.22052485\\
  -0.18663938	-0.23386102\\
  -0.17094719	-0.24625552\\
  -0.15456664	-0.25765843\\
  -0.13756372	-0.26802384\\
  -0.12000687	-0.27731001\\
  -0.10196679	-0.28547955\\
  -0.083516136	-0.29249957\\
  -0.064729188	-0.2983418\\
  -0.045681599	-0.30298271\\
  -0.026450066	-0.30640361\\
  -0.00711202789999998	-0.30859074\\
  0.012254648	-0.30953528\\
  0.031571978	-0.30923342\\
  0.05076218	-0.3076864\\
  0.06974798	-0.30490043\\
  0.088452929	-0.30088674\\
  0.10680171	-0.29566148\\
  0.12472044	-0.28924569\\
  0.14213696	-0.28166522\\
  0.15898115	-0.27295058\\
  0.17518517	-0.26313686\\
  0.19068379	-0.25226358\\
  0.20541459	-0.24037453\\
  0.21931826	-0.22751758\\
  0.23233881	-0.21374449\\
  0.24442382	-0.19911073\\
  0.25552462	-0.18367521\\
  0.26559651	-0.16750011\\
  0.27459893	-0.15065054\\
  0.28249565	-0.13319435\\
  0.28925485	-0.11520184\\
  0.29484933	-0.096745449\\
  0.29925655	-0.077899499\\
  0.30245877	-0.058739875\\
  0.30444309	-0.039343726\\
  0.30520154	-0.019789154\\
  0.30473104	-0.000154898160000028\\
  0.3030335	0.019479981\\
  0.30011575	0.039036422\\
  0.29598954	0.058435677\\
  0.29067148	0.077599632\\
  0.28418299	0.09645112\\
  0.2765502	0.11491423\\
  0.26780384	0.13291463\\
  0.25797912	0.15037982\\
  0.24711562	0.16723949\\
  0.23525707	0.18342575\\
};

\addplot [color=iOrange, dashed, thick]
  table[row sep=crcr]{0.36628655	-0.68252044\\
  0.38412985	-0.67117462\\
  0.4004264	-0.65712622\\
  0.41511057	-0.64043179\\
  0.42812324	-0.62115858\\
  0.43941201	-0.59938418\\
  0.44893142	-0.57519627\\
  0.45664314	-0.54869225\\
  0.46251612	-0.51997884\\
  0.46652671	-0.48917165\\
  0.46865877	-0.45639475\\
  0.46890369	-0.4217801\\
  0.46726051	-0.3854671\\
  0.46373584	-0.34760196\\
  0.45834387	-0.30833714\\
  0.45110631	-0.26783076\\
  0.4420523	-0.22624592\\
  0.4312183	-0.18375007\\
  0.41864794	-0.14051432\\
  0.40439183	-0.096712769\\
  0.38850738	-0.052521791\\
  0.37105854	-0.00811932540000004\\
  0.35211559	0.036315834\\
  0.33175479	0.080604762\\
  0.31005813	0.12456912\\
  0.28711297	0.16803189\\
  0.26301172	0.21081805\\
  0.2378514	0.25275532\\
  0.21173335	0.29367483\\
  0.18476271	0.33341182\\
  0.15704811	0.37180628\\
  0.12870112	0.4087036\\
  0.099835907	0.44395522\\
  0.070568685	0.47741919\\
  0.041017309	0.50896076\\
  0.01130077	0.53845292\\
  -0.018461273	0.56577693\\
  -0.048148979	0.59082275\\
  -0.077642807	0.61348953\\
  -0.10682399	0.63368601\\
  -0.13557504	0.65133086\\
  -0.16378017	0.66635304\\
  -0.19132582	0.67869204\\
  -0.21810106	0.68829819\\
  -0.24399809	0.6951328\\
  -0.26891263	0.69916836\\
  -0.29274434	0.70038862\\
  -0.31539728	0.69878866\\
  -0.33678023	0.69437492\\
  -0.35680708	0.68716518\\
  -0.3753972	0.67718847\\
  -0.39247572	0.66448496\\
  -0.40797389	0.6491058\\
  -0.42182928	0.63111293\\
  -0.43398612	0.61057878\\
  -0.44439545	0.58758605\\
  -0.45301535	0.56222732\\
  -0.45981113	0.53460469\\
  -0.4647554	0.50482941\\
  -0.46782827	0.47302135\\
  -0.46901736	0.4393086\\
  -0.46831788	0.40382691\\
  -0.46573266	0.36671916\\
  -0.46127209	0.32813475\\
  -0.45495414	0.28822906\\
  -0.44680425	0.24716278\\
  -0.43685524	0.20510125\\
  -0.42514716	0.16221386\\
  -0.41172717	0.11867329\\
  -0.3966493	0.074654866\\
  -0.37997426	0.030335831\\
  -0.3617692	-0.014105356\\
  -0.34210742	-0.058489745\\
  -0.3210681	-0.10263862\\
  -0.29873595	-0.1463742\\
  -0.2752009	-0.18952038\\
  -0.2505577	-0.23190344\\
  -0.2249056	-0.2733527\\
  -0.19834789	-0.31370126\\
  -0.1709915	-0.35278666\\
  -0.14294658	-0.39045152\\
  -0.11432607	-0.42654416\\
  -0.085245213	-0.46091926\\
  -0.055821101	-0.4934384\\
  -0.026172216	-0.52397064\\
  0.00358205519999999	-0.55239303\\
  0.033321903	-0.57859113\\
  0.062927574	-0.60245945\\
  0.092279859	-0.62390188\\
  0.12126057	-0.64283208\\
  0.149753	-0.65917382\\
  0.17764243	-0.67286129\\
  0.20481656	-0.6838394\\
  0.23116596	-0.69206392\\
  0.25658454	-0.69750174\\
  0.28096994	-0.70013098\\
  0.30422398	-0.69994103\\
  0.32625301	-0.69693266\\
  0.34696834	-0.691118\\
  0.36628655	-0.68252044\\
};
\addlegendentry{$c=0.6$}
\addplot [color=iOrange, solid, thick, forget plot]
  table[row sep=crcr]{-0.60828743	0.47956897\\
  -0.62854295	0.45135786\\
  -0.64626755	0.42132929\\
  -0.66138987	0.38960417\\
  -0.673849	0.35631025\\
  -0.68359477	0.3215816\\
  -0.69058796	0.28555805\\
  -0.69480038	0.24838467\\
  -0.6962151	0.21021112\\
  -0.6948264	0.17119113\\
  -0.69063987	0.13148181\\
  -0.68367239	0.091243062\\
  -0.673952	0.05063691\\
  -0.66151784	0.00982685999999999\\
  -0.64641998	-0.031022759\\
  -0.62871922	-0.07174746\\
  -0.60848683	-0.11218326\\
  -0.58580427	-0.15216734\\
  -0.56076289	-0.19153869\\
  -0.53346352	-0.23013878\\
  -0.50401607	-0.26781219\\
  -0.47253913	-0.30440721\\
  -0.43915945	-0.33977649\\
  -0.40401142	-0.37377761\\
  -0.36723658	-0.40627366\\
  -0.32898301	-0.43713379\\
  -0.28940474	-0.46623374\\
  -0.24866113	-0.49345633\\
  -0.20691626	-0.51869194\\
  -0.16433821	-0.54183896\\
  -0.12109843	-0.56280419\\
  -0.077371028	-0.58150321\\
  -0.03333208	-0.59786072\\
  0.010841085	-0.61181085\\
  0.054970597	-0.62329744\\
  0.098878762	-0.63227422\\
  0.14238878	-0.63870507\\
  0.18532544	-0.64256407\\
  0.22751587	-0.64383569\\
  0.26879017	-0.64251482\\
  0.30898214	-0.63860676\\
  0.34792996	-0.63212727\\
  0.38547678	-0.62310241\\
  0.42147142	-0.61156855\\
  0.45576895	-0.59757211\\
  0.48823126	-0.58116946\\
  0.51872763	-0.56242665\\
  0.54713527	-0.54141914\\
  0.57333979	-0.51823153\\
  0.59723567	-0.49295719\\
  0.61872669	-0.46569788\\
  0.63772632	-0.43656336\\
  0.65415805	-0.40567097\\
  0.66795572	-0.37314507\\
  0.67906376	-0.33911666\\
  0.68743746	-0.30372274\\
  0.69304309	-0.26710584\\
  0.69585807	-0.22941339\\
  0.69587109	-0.19079718\\
  0.69308207	-0.1514127\\
  0.68750226	-0.11141853\\
  0.67915412	-0.070975719\\
  0.66807127	-0.030247113\\
  0.65429832	0.010603288\\
  0.63789075	0.051410993\\
  0.61891462	0.092011683\\
  0.59744634	0.13224188\\
  0.57357235	0.17193958\\
  0.54738879	0.21094494\\
  0.51900109	0.2491009\\
  0.48852355	0.28625382\\
  0.45607891	0.32225409\\
  0.42179779	0.35695676\\
  0.38581824	0.3902221\\
  0.34828514	0.42191614\\
  0.30934962	0.45191128\\
  0.26916845	0.48008674\\
  0.22790344	0.50632905\\
  0.18572074	0.53053255\\
  0.14279021	0.55259979\\
  0.099284715	0.5724419\\
  0.055379435	0.58997899\\
  0.011251162	0.60514044\\
  -0.032922416	0.61786521\\
  -0.076963427	0.62810205\\
  -0.12069453	0.63580975\\
  -0.16393965	0.64095726\\
  -0.20652463	0.64352387\\
  -0.24827801	0.64349923\\
  -0.28903167	0.64088345\\
  -0.3286215	0.63568706\\
  -0.36688808	0.62793098\\
  -0.40367734	0.61764644\\
  -0.43884113	0.60487486\\
  -0.47223786	0.58966766\\
  -0.50373305	0.57208608\\
  -0.53319989	0.5522009\\
  -0.56051973	0.53009222\\
  -0.58558254	0.50584903\\
  -0.60828743	0.47956897\\
};

\addplot [color=iBlue, dashed, thick]
  table[row sep=crcr]{0.23683842	0.12663874\\
  0.20799001	0.179444\\
  0.17830409	0.2315267\\
  0.14790021	0.28267713\\
  0.11690078	0.33268932\\
  0.0854306420000001	0.38136188\\
  0.0536165	0.42849884\\
  0.021586464	0.47391038\\
  -0.0105304939999999	0.51741365\\
  -0.042605048	0.55883348\\
  -0.074508048	0.59800308\\
  -0.10611103	0.63476474\\
  -0.13728674	0.66897042\\
  -0.16790964	0.7004824\\
  -0.19785644	0.72917377\\
  -0.22700653	0.75492903\\
  -0.25524255	0.77764445\\
  -0.2824508	0.79722857\\
  -0.30852171	0.81360254\\
  -0.33335032	0.82670041\\
  -0.35683665	0.83646945\\
  -0.37888612	0.84287033\\
  -0.39940995	0.84587726\\
  -0.41832549	0.84547815\\
  -0.43555659	0.84167459\\
  -0.45103386	0.8344819\\
  -0.46469497	0.82392906\\
  -0.47648491	0.81005854\\
  -0.48635623	0.7929262\\
  -0.49426915	0.77260103\\
  -0.50019183	0.74916486\\
  -0.50410041	0.72271207\\
  -0.50597916	0.69334918\\
  -0.50582051	0.66119441\\
  -0.5036251	0.62637725\\
  -0.49940177	0.58903789\\
  -0.49316752	0.54932668\\
  -0.48494746	0.50740352\\
  -0.47477469	0.46343723\\
  -0.46269017	0.41760484\\
  -0.44874256	0.37009091\\
  -0.43298802	0.32108675\\
  -0.41548999	0.27078968\\
  -0.39631893	0.21940224\\
  -0.37555204	0.16713135\\
  -0.35327293	0.11418748\\
  -0.32957131	0.060783813\\
  -0.30454262	0.00713539210000003\\
  -0.27828765	-0.04654176\\
  -0.25091212	-0.1000315\\
  -0.22252624	-0.15311846\\
  -0.19324434	-0.20558886\\
  -0.16318431	-0.25723142\\
  -0.13246719	-0.30783821\\
  -0.10121667	-0.35720544\\
  -0.0695585910000001	-0.40513433\\
  -0.037620422	-0.45143188\\
  -0.00553076959999999	-0.49591168\\
  0.026581153	-0.53839462\\
  0.058586044	-0.57870964\\
  0.0903550280000001	-0.61669439\\
  0.12176019	-0.65219593\\
  0.15267506	-0.68507131\\
  0.18297516	-0.71518815\\
  0.21253849	-0.74242518\\
  0.241246	-0.76667273\\
  0.26898209	-0.78783315\\
  0.29563509	-0.80582125\\
  0.32109768	-0.82056459\\
  0.34526731	-0.83200381\\
  0.36804668	-0.84009284\\
  0.38934405	-0.84479911\\
  0.40907367	-0.84610368\\
  0.4271561	-0.84400129\\
  0.44351852	-0.8385004\\
  0.45809504	-0.82962316\\
  0.47082698	-0.81740533\\
  0.48166307	-0.80189609\\
  0.49055966	-0.7831579\\
  0.49748095	-0.76126621\\
  0.50239906	-0.73630917\\
  0.50529418	-0.70838727\\
  0.50615466	-0.67761295\\
  0.50497704	-0.64411011\\
  0.50176605	-0.60801368\\
  0.49653463	-0.56946899\\
  0.48930384	-0.52863124\\
  0.48010279	-0.48566489\\
  0.46896854	-0.44074293\\
  0.45594592	-0.39404626\\
  0.44108736	-0.3457629\\
  0.4244527	-0.29608728\\
  0.40610892	-0.24521942\\
  0.38612989	-0.19336415\\
  0.36459604	-0.14073027\\
  0.34159409	-0.087529713\\
  0.31721667	-0.033976708\\
  0.29156193	0.019713109\\
  0.26473317	0.073323548\\
  0.23683842	0.12663874\\
};
\addlegendentry{$c=0.9$}
\addplot [color=iBlue, solid, thick, forget plot]
  table[row sep=crcr]{-0.79346626	0.52001086\\
  -0.81286315	0.4869293\\
  -0.82898693	0.45188704\\
  -0.84177267	0.4150252\\
  -0.85116889	0.37649219\\
  -0.85713776	0.33644319\\
  -0.85965523	0.29503944\\
  -0.85871118	0.25244768\\
  -0.8543094	0.2088394\\
  -0.84646761	0.1643902\\
  -0.8352174	0.11927905\\
  -0.82060407	0.073687611\\
  -0.80268645	0.027799457\\
  -0.78153669	-0.018200635\\
  -0.75723997	-0.06412744\\
  -0.7298941	-0.10979603\\
  -0.69960921	-0.1550225\\
  -0.66650724	-0.19962476\\
  -0.63072148	-0.2434232\\
  -0.59239603	-0.28624145\\
  -0.55168521	-0.32790712\\
  -0.50875295	-0.36825242\\
  -0.46377212	-0.40711489\\
  -0.41692384	-0.44433806\\
  -0.36839675	-0.47977203\\
  -0.31838627	-0.51327414\\
  -0.26709375	-0.54470947\\
  -0.21472574	-0.57395144\\
  -0.16149311	-0.60088232\\
  -0.1076102	-0.62539365\\
  -0.053293981	-0.64738675\\
  0.00123683299999999	-0.66677305\\
  0.055762666	-0.68347448\\
  0.11006396	-0.69742381\\
  0.16392207	-0.70856486\\
  0.21712012	-0.71685277\\
  0.26944391	-0.72225416\\
  0.32068274	-0.72474729\\
  0.37063029	-0.72432212\\
  0.41908545	-0.72098036\\
  0.4658531	-0.71473547\\
  0.51074492	-0.70561259\\
  0.55358015	-0.69364846\\
  0.59418631	-0.67889125\\
  0.63239989	-0.66140038\\
  0.66806702	-0.64124629\\
  0.70104408	-0.61851012\\
  0.73119828	-0.59328344\\
  0.75840821	-0.56566781\\
  0.78256429	-0.53577443\\
  0.80356927	-0.50372368\\
  0.82133855	-0.46964462\\
  0.83580059	-0.43367446\\
  0.84689716	-0.39595804\\
  0.85458357	-0.35664725\\
  0.85882888	-0.31590036\\
  0.85961599	-0.27388145\\
  0.85694172	-0.23075972\\
  0.85081686	-0.1867088\\
  0.84126605	-0.14190607\\
  0.82832776	-0.096531931\\
  0.81205409	-0.050769094\\
  0.79251057	-0.0048018286\\
  0.76977588	0.041184772\\
  0.74394158	0.087005537\\
  0.71511169	0.13247596\\
  0.68340229	0.17741295\\
  0.64894108	0.22163556\\
  0.61186681	0.26496572\\
  0.57232877	0.30722896\\
  0.53048616	0.3482551\\
  0.48650747	0.38787893\\
  0.44056978	0.42594092\\
  0.39285808	0.46228779\\
  0.34356448	0.49677319\\
  0.29288746	0.52925826\\
  0.24103109	0.55961219\\
  0.18820418	0.58771277\\
  0.13461943	0.61344683\\
  0.080492614	0.63671076\\
  0.026041685	0.65741088\\
  -0.028514105	0.67546383\\
  -0.082955079	0.69079693\\
  -0.13706202	0.70334844\\
  -0.19061706	0.71306781\\
  -0.24340456	0.71991591\\
  -0.29521195	0.72386516\\
  -0.34583063	0.72489966\\
  -0.39505676	0.72301525\\
  -0.44269215	0.71821951\\
  -0.48854497	0.71053175\\
  -0.53243059	0.69998293\\
  -0.5741723	0.68661553\\
  -0.61360202	0.67048337\\
  -0.65056099	0.65165141\\
  -0.68490037	0.63019548\\
  -0.7164819	0.60620198\\
  -0.74517841	0.57976751\\
  -0.77087435	0.55099852\\
  -0.79346626	0.52001086\\
};

\end{axis}
\end{tikzpicture}    
 \caption{Certified ROAs for the LMI-based design with $\mathcal{X} = \{x \mid \|x\|_2^2 \leq c\}$     for $c \in \{0.1, 0.6, 0.9\}$, using different types of error bounds. Here, we consider individual (\ref{eq:error-bound-individual}, dashed) and ellipsoidal (\ref{eq:error-bound-ellipsoidal}, solid) error bounds.}
    \label{fig:exmp-control-bilinear}
\end{figure}
The SOS-based design with $\alpha=1$, which produces a globally stabilizing controller on all of $\mathbb{R}^2$, requires $T = 1040$ samples, at a modest increase in computation time to $0.0135\,\text{s}$, reflecting the higher complexity of the SOS program relative to the LMI.
For further numerical studies on the discussed bilinear controller designs and their feasibility and performance we refer to~\citet[Section~V-A]{strasser2023control} and~\citet[Section~3]{strasser2025sos}.

\subsection{Notes}
Various control approaches for bilinear control have been proposed in the literature~\citep{pedrycz1980stabilization,longchamp2003stable,gutman2003stabilizing,derese1980design,benallou1988optimal,lin1994kyp,khlebnikov2018discrete}.
Most of them, however, rely on inherently non-convex methods or over-approximate the bilinear term as in the LFR approach of \S\ref{sec:control-bilinear-LMI-overapprox}.
An alternative, cheaper over-approximation uses a \emph{norm-ball} uncertainty rather than the exact ellipsoidal multiplier class~\eqref{eq:multiplier}.
If the state stays inside an ellipsoid $\mathcal{E}$, $\Delta(x_t)$ is norm-bounded by a known scalar, and Petersen's lemma~\citep{petersen1987stabilization,khlebnikov2008petersens,bisoffi2020bilinear} converts the Lyapunov decrease condition into a single LMI via an auxiliary multiplier $\lambda$, reducing the design of a linear gain $u_t=Kx_t$ to a line search over $\lambda$ with a small SDP at each step.
This underlies the continuous- and discrete-time designs of~\citet{khlebnikov2016quadratic,khlebnikov2018discrete} and its extension to quadratic-bilinear systems~\citep{kafshgarkolaei:hemati:2025}, and the data-driven matrix-ellipsoidal variant of~\citet{bisoffi2022petersen}, further extended to setpoint stabilization in~\citet{bisoffi2024setpoint}.
While computationally cheaper than the LFR-based design, this approach is more conservative, as the bilinear term is absorbed into a single scalar multiplier rather than the exact, state-dependent characterization~\eqref{eq:multiplier}, and it only applies to the \emph{nominal} system, i.e., without the identification residual $r$ in~\eqref{eq:residual_dynamics}.
Incorporating $r$ as a second uncertainty channel would require an additional scalar parameter, with no guarantee of joint feasibility without further rank constraints, where, to our knowledge, this extension remains open.
In contrast, the LFR-based approach introduces a dedicated uncertainty channel for $r$ from the outset, accommodating both the bilinear coupling and the identification error simultaneously with end-to-end guarantees~\citep{chatzikiriakos2026endtoend}.

Beyond this alternative over-approximation, the two paradigms of \S\ref{sec:control-bilinear-LMI-overapprox} and~\S\ref{sec:control-bilinear-SOS} are complementary, as summarized in Table~\ref{tab:comp_control}.
\ifthenelse{\equal{\version}{arxiv}}{
\begin{table*}[tb]
    \centering
    \caption{Comparison of the two controller-design paradigms for bilinear systems.}
    \label{tab:comp_control}
    \renewcommand{\arraystretch}{1.25}
    \begin{adjustbox}{max width=\columnwidth}
    \begin{tabular}{@{}lll@{}}
        \toprule
        & \textbf{LMI (uncertainty overapprox.)} & \textbf{SOS (polynomial Lyapunov)} \\
        \midrule
        Lyapunov class        & Quadratic only
                              & Polynomial of arbitrary degree \\
        Controller structure  & Linear or rational of degree 1
                              & Rational of arbitrary degree \\
        Conservatism          & Higher (overapproximating bilinearity)
                              & Lower (exploiting bilinearity) \\
        Scalability           & $\mathcal{O}(d_x^3)$ SDP
                              & $\mathcal{O}(d_x^{2\alpha})$ SDP \\
        Implementation        & Standard SDP solvers
                              & SOSTOOLS / YALMIP + standard SDP solvers \\
        Stability guarantee   & Local (ellipsoidal $\mathcal{X}_{\mathrm{RoA}}$)
                              & Global (restricted to validity of residual bound) \\
        \bottomrule
    \end{tabular}
    \end{adjustbox}
\end{table*}
}{
\begin{table*}[tb]
    \centering
    \caption{Comparison of the two controller-design paradigms for bilinear systems.}
    \label{tab:comp_control}
    \renewcommand{\arraystretch}{1.25}
    \begin{tabular}{@{}lll@{}}
        \toprule
        & \textbf{LMI (uncertainty overapprox.)} & \textbf{SOS (polynomial Lyapunov)} \\
        \midrule
        Lyapunov class        & Quadratic only
                              & Polynomial of arbitrary degree \\
        Controller structure  & Linear or rational of degree 1
                              & Rational of arbitrary degree \\
        Conservatism          & Higher (overapproximating bilinearity)
                              & Lower (exploiting bilinearity) \\
        Scalability           & $\mathcal{O}(d_x^3)$ SDP
                              & $\mathcal{O}(d_x^{2\alpha})$ SDP \\
        Implementation        & Standard SDP solvers
                              & SOSTOOLS / YALMIP + standard SDP solvers \\
        Stability guarantee   & Local (ellipsoidal $\mathcal{X}_{\mathrm{RoA}}$)
                              & Global (restricted to validity of residual bound) \\
        \bottomrule
    \end{tabular}
\end{table*}
}
The LFR-based design is preferred when $d_x$ is large or a certified ellipsoidal ROA suffices, while the SOS-based design is preferred when a tighter, less conservative certificate is needed and $d_x \lesssim 10$--$15$ at degree $\alpha=1$ keeps the SDP tractable.
The principal bottleneck of the SOS approach is the SDP size, which grows as $\mathcal{O}(d_x^{2\alpha})$.
This may be mitigated by exploiting chordal sparsity~\citep{zheng2019chordal}, low-rank Gram matrix solvers~\citep{majumdar2020recent}, or first-order SDP solvers~\citep{ahmadi2017improving}, and, if only a regional rather than global certificate is needed, by invoking Putinar's Positivstellensatz~\citep{putinar1993positive} to restrict the Lyapunov decrease condition to a prescribed sublevel set.
Together, these directions make SOS-based controller synthesis an active research area with growing practical reach beyond small-scale systems.

Both paradigms of this section account for the quadratic error bound on the identification residual in~\eqref{eq:residual-quadratic-bound} and thus support end-to-end data-driven controller design with stability guarantees for the underlying bilinear system.

\section{Nonlinear Control based on Koopman Operator Theory and Bilinear Models}\label{sec:koopman}
The previous part of this tutorial established a complete pipeline for learning and controlling \emph{bilinear} dynamical systems of the form~\eqref{eqn:BDS-FO}.
A natural question is whether the scope of this pipeline extends beyond bilinear systems to more general nonlinear systems, typically arising in practical applications.
The Koopman operator framework provides a compelling answer to this question.
It shows that a broad class of \emph{nonlinear} (control-affine) systems can be \emph{lifted} to an approximate bilinear representation in a higher-dimensional observable space, so that the entire control machinery developed in the preceding section is generalizable to nonlinear systems.
In particular, we consider discrete-time nonlinear systems
\begin{equation}\label{eq:control-affine}
    x_{t+1} = F(x_t) + G(x_t) u_t,
\end{equation}
where $u_t \in \mathbb{R}^{d_u}$ is the control input and $G:\mathbb{X}\to\mathbb{R}^{d_x\times d_u}$ is a state-dependent input matrix.
Then, Koopman operator theory allows for the learning and control of (higher-dimensional) bilinear dynamics, representing the nonlinear behavior.
To this end, we first introduce \emph{linear} Koopman theory for discrete-time \emph{autonomous} systems (\S\ref{sec:koopman-autonomous}).
Then, we motivate the transition from linear to bilinear lifted models of \emph{controlled} systems (\S\ref{sec:koopman-controlled}), and explain how the approximation error incurred by a finite dictionary and finite data fits precisely into the quadratic residual bound~\eqref{eq:residual-quadratic-bound} (\S\ref{sec:koopman-error}).
This enables closing the loop to the LMI and SOS controller designs of \S\ref{sec:control-bilinear-LMI-overapprox} and~\S\ref{sec:control-bilinear-SOS} (\S\ref{sec:koopman-control}).
 
\subsection{Koopman Operator for Autonomous Discrete-Time Systems} \label{sec:koopman-autonomous}
Consider an autonomous discrete-time nonlinear system
\begin{equation}\label{eq:nonlinear-autonomous}
    x_{t+1} = F(x_t), \qquad x_t \in \mathbb{X} \subseteq \mathbb{R}^{d_x},
\end{equation}
with a continuous map $F:\mathbb{X}\to\mathbb{X}$.
Rather than tracking the state $x_t$ directly, the \emph{Koopman operator} $\mathcal{K}_0$ acts on \emph{observable functions} $\psi:\mathbb{X}\to\mathbb{R}$ via the pull-back
\begin{equation}\label{eq:koopman-def}
    (\mathcal{K}_0\psi)(x) := \psi(F(x)).
\end{equation}
In other words, $\mathcal{K}_0$ propagates observables one time step forward along trajectories of~\eqref{eq:nonlinear-autonomous}.
While $F$ is nonlinear, $\mathcal{K}_0$ is a \emph{linear} operator. However, the price for this linearity is that $\mathcal{K}_0$ acts on an infinite-dimensional function space, i.e., the Koopman operator is linear but infinite dimensional.
 
\paragraph{Koopman-invariant subspaces and finite representations}
If there exists a finite set of observables $\psi_1,\ldots,\psi_N:\mathbb{X}\to\mathbb{R}$ whose span is \emph{Koopman-invariant}, i.e., $\mathcal{K}_0\psi_i \in \mathrm{span}\{\psi_1,\ldots,\psi_N\}$ for all $i$, then the dynamics on that subspace can be represented exactly by a finite-dimensional \emph{linear} system.
Defining the \emph{lifted state} $\Psi(x) := \begin{bmatrix}\psi_1(x) & \cdots & \psi_N(x)\end{bmatrix}^\top \in \mathbb{R}^N$, the lifted dynamics reads
\begin{equation}\label{eq:koopman-exact}
    \Psi(x_{t+1}) = \mathcal{K}_0\Psi(x_t) = K \Psi(x_t),
\end{equation}
where $K \in \mathbb{R}^{N\times N}$ is the finite matrix representing $\mathcal{K}_0$ on the chosen subspace.
This is the key insight of Koopman theory: nonlinear dynamics becomes \emph{linear} in the lifted coordinates, provided a Koopman-invariant subspace is found.
 
\paragraph{Finite-dimensional approximation via EDMD}
In practice, a Koopman-invariant subspace is rarely available in closed form.
Instead, one chooses a \emph{dictionary} $\mathbb{V} = \mathrm{span}\{\psi_1,\ldots,\psi_N\}$ of $N$ basis observables (e.g., polynomials, radial basis functions, or neural-network features), and computes the best linear approximation of $\mathcal{K}_0$ on $\mathbb{V}$ from data $\{(x_t, x_{t+1})\}_{t=1}^T$.
To this end, a common approximation technique is \emph{extended dynamic mode decomposition} (EDMD,~\citet{williams2015edmd}).
This boils down to the OLS problem
\begin{equation}\label{eq:edmd}
    \hat{A}_0 
    = \underset{K \in \mathbb{R}^{N\times N}}{\arg\min}
    \sum_{t=1}^{T}\bigl\|\Psi(x_{t+1}) - K\Psi(x_t)\bigr\|^2.
\end{equation}
The resulting estimator $\hat{A}_0$ yields a purely data-driven linear surrogate $\Psi_{t+1} \approx \hat{A}_0 \Psi_t$, which can be used for the prediction of the underlying nonlinear system.

Unless $\mathbb{V}$ happens to be Koopman-invariant, the projection step introduces a residual $r_\Psi$.
This \emph{residual error} quantifies how well $\mathbb{V}$ approximates an invariant subspace and consists of both \emph{learning} and \emph{projection} errors.
Crucially, the residual depends on the state $x_t$ through the nonlinear map $F$, and for a sufficiently rich dictionary one can show that $\|r_\Psi(x_t)\|^2 \leq c_x\|\Psi(x_t)\|^2$ for some constant $c_x \geq 0$ that decreases as the dictionary is enriched; see, e.g.,~\citet{nuske2023finite,philipp2024error,strasser2026overview}.
 
\subsection{Koopman for Controlled Systems: Why Bilinear?}\label{sec:koopman-controlled}
We now extend the Koopman framework to control-affine nonlinear systems of the form~\eqref{eq:control-affine}.
Lifting the nonlinear dynamics using the dictionary $\Psi$ yields
\ifthenelse{\equal{\version}{arxiv}}{
\begin{align}\label{eq:lifted-control}
    \Psi(x_{t+1})
    &= \Psi(F(x_t) + G(x_t)u_t), \nn
    \\
    &= \Psi(F(x_t))
    + \sum_{k=1}^{d_u}(u_t)_k \bigl[\Psi\bigl(F(x_t) + (G(x_t))_k\bigr) - \Psi(F(x_t))\bigr] + r(x_t,u_t),
\end{align}
}{
\begin{align}\label{eq:lifted-control}
    &\Psi(x_{t+1})
    = \Psi(F(x_t) + G(x_t)u_t)
    \nonumber\\
    &= \Psi(F(x_t))
    + \sum_{k=1}^{d_u}(u_t)_k \bigl[\Psi\bigl(F(x_t) + (G(x_t))_k\bigr) - \Psi(F(x_t))\bigr]
    \nonumber\\
    &\qquad+ r(x_t,u_t),
\end{align}
}
where $(G(x_t))_k$ denotes the $k$-th column of $G(x_t)$.
Two natural choices of surrogate model arise from how one approximates the input-dependent terms.

\paragraph{Linear EDMDc and its limitations}
The simplest approximation, proposed in~\citet{brunton2016koopman,korda2018mpc}, linearizes the input dependence and approximates
\begin{equation}\label{eq:edmdc}
    \Psi(x_{t+1}) \approx \hat{A}_0 \Psi(x_t) + \hat{B}_u u_t,
\end{equation}
for some constant matrix $\hat{B}_u \in \mathbb{R}^{N\times d_u}$.
This \emph{linear EDMDc} model is computationally attractive and can be identified via OLS.
However, for control-affine systems the input appears through $G(x_t)$, a state-dependent factor, so the effective lifted input map is \emph{not} constant.
In particular, the linear model~\eqref{eq:edmdc} absorbs the state dependence into a constant $\hat{B}_u$, introducing a systematic error that grows with the strength of the nonlinear input coupling.
More fundamentally, it can be shown that linear surrogate models do not, in general, admit finite-dimensional exact representations for control-affine nonlinear systems.
Hence, the associated identification error cannot be made small by increasing the dictionary alone~\citep{bruder2021advantages,brunton2022koopman,strasser2026overview}.
 
\paragraph{Bilinear Koopman models are necessary}
The bilinear structure of the lifted model is not an artifact of the chosen
approximation scheme, but is instead grounded in the mathematical structure of the
Koopman framework itself.
In particular, the Koopman \emph{generator} preserves control-affinity exactly~\citep{surana2016koopman}, i.e., it is affine in the control input $u$.
This Koopman generator acts on $\Psi$ via the Lie derivative corresponding to the continuous-time counterpart of~\eqref{eq:control-affine}.
In particular, the drift term $\mathcal{L}_F \psi$ is input-independent, and each control channel contributes an additive term $u_k \mathcal{L}_{G_k}\psi$ that is linear in $u_k$.
Defining $z_t=\Psi(x_t)$, the resulting lifted control-affine representation is thus a bilinear dynamics.
This observation motivates the bilinear lifted model as the natural finite-dimensional surrogate also in discrete time.

In discrete time, the Koopman operator of the controlled system,
\begin{equation}
    (\mathcal{K}_u \Psi)(x)
    := \Psi\bigl(F(x) + G(x)u\bigr),
\end{equation}
is generally \emph{nonlinear} in $u$ through the composition $\Psi \circ (F(\cdot) + G(\cdot)u)$.
The bilinear structure emerges only as an \emph{approximation}~\citep{peitz2020data,philipp2025error}.
Further,~\citet{goswami2017bilinear,surana2016koopman} show that this approximation becomes \emph{exact} when the dictionary spans a Koopman-invariant subspace, i.e., the bilinear model is the minimal exact finite-dimensional representation.

Approximating the Koopman operator via an EDMD-type regression gives the
\emph{bilinear Koopman surrogate}
\begin{equation}\label{eq:bilinear-koopman}
    \Psi(x_{t+1}) 
    \approx 
    \hat{A}_0 \Psi(x_t) 
    + \hat{B} u_t
    + \sum_{k=1}^{d_u}(u_t)_k \hat{A}_k \Psi(x_t),
\end{equation}
where $\hat{A}_0$, $\hat{A}_1$, \ldots, $\hat{A}_{d_u}$ and $\hat{B}$ approximate variants of the Koopman operator, resulting in a decoupled-experiment OLS procedure for different constant control inputs; compare~\citet{chatzikiriakos2026endtoend,
strasser2026overview}.
This is precisely a bilinear dynamical system of the form~\eqref{eqn:BDS-FO} in the lifted state $z_t := \Psi(x_t) \in \mathbb{R}^N$.
The matrices $\hat{A}_0$, $\hat{A}_1$, \ldots, $\hat{A}_{d_u}$ and $\hat{B}$
play the same role as in the bilinear identification problem and can be estimated from data.
 
The bilinear structure~\eqref{eq:bilinear-koopman} is thus not merely a modeling convenience, but it is the structure inherited from the control-affine form of the dynamics, exact at the level of the Koopman generator and recovered approximately at the level of the Koopman operator.
Thus, bilinear Koopman models are not only more flexible than linear ones but are the minimal structure that captures the control-affine nonlinearity without additional conservatism~\citep{iacob2024koopman}.
\begin{remark}[Linear systems and bilinear lifting]
    For the special case $F(x) = A_0 x$ and $G(x) = B$ (i.e., an LTI system), any dictionary $\Psi$ that includes the identity ($\psi_k(x) = x_k$) as observables recovers an exact linear representation with $\hat{A}_k = 0$ for $k \geq 1$ and $\hat{B} = B$.
    In particular, LTI systems are a degenerate special case of the bilinear Koopman surrogate~\eqref{eq:bilinear-koopman} without bilinear coupling terms.
    Any richer nonlinear control-affine system requires the bilinear terms $(u_t)_k\hat{A}_k z_t$ for a finite-dimensional representation, which underscores why the bilinear setting is the right level of generality for Koopman-based control.
    We emphasize that the state-independent control matrix $\hat{B}$ is crucial for the expressiveness of the approximation, i.e., a linear system could not be represented by~\eqref{eq:bilinear-koopman} without this term.
\end{remark}
 
\subsection{Approximation Error as a Quadratic Residual Bound}\label{sec:koopman-error}
The bilinear surrogate~\eqref{eq:bilinear-koopman} is generally an approximation, not an exact representation, due to the use of a finite dictionary and finite data for learning.
Writing $z_t = \Psi(x_t)$ and collecting the approximation error, the true lifted dynamics satisfy
\begin{equation}\label{eq:koopman-bilinear-residual}
    z_{t+1}
    = \hat{A}_0 z_t + \sum_{k=1}^{d_u}(u_t)_k \hat{A}_k z_t + \hat{B} u_t
    + r_\Psi(z_t, u_t),
\end{equation}
\ifthenelse{\equal{\version}{arxiv}}{
where the \emph{Koopman residual}
\begin{equation}
    r_\Psi(z_t, u_t) 
    := \Psi(F(x_t) + G(x_t)u_t) 
    - \hat{A}_0 z_t - \sum_{k}(u_t)_k\hat{A}_k z_t - \hat{B}u_t
\end{equation}
}{where the \emph{Koopman residual} $r_\Psi := r_\Psi(z_t, u_t)$ is defined as
\begin{equation*}
    r_\Psi 
    {:=} \Psi(F(x_t) + G(x_t)u_t) 
    - \hat{A}_0 z_t - \sum_{k}(u_t)_k\hat{A}_k z_t - \hat{B}u_t
\end{equation*}
}
captures the combined effect of finite-dictionary closure error and learning error.
Equation~\eqref{eq:koopman-bilinear-residual} is of the same form as~\eqref{eq:residual_dynamics}, so the bilinear control framework applies directly once a suitable bound on $r_\Psi$ is available.
 
\paragraph{Quadratic bound on the Koopman residual}
Under standard regularity assumptions on $F$, $G$, and the dictionary $\Psi$ as well as appropriate data collection during learning, the Koopman residual satisfies a \emph{proportional} (quadratic) bound of the form
\begin{equation}\label{eq:koopman-quadratic-bound}
    r_\Psi(z_t, u_t)^\top r_\Psi(z_t, u_t)
    \leq
    \begin{bmatrix}z_t \\ u_t\end{bmatrix}^\top
    Q_\Psi
    \begin{bmatrix}z_t \\ u_t\end{bmatrix},
\end{equation}
for a positive semidefinite matrix $Q_\Psi \succeq 0$; see~\citet{strasser2025kernel,strasser2025koopman,strasser2026safedmd,strasser2026overview} for precise
statements and conditions.
More precisely, $Q_\Psi$ can be bounded from data using results from linear and bilinear system identification.
Here, the identification residual from a specific EDMD variant, i.e., an OLS regression, for the bilinear lifted model satisfies the quadratic bound with $Q_\Psi$ taking the same role as $Q_\Delta$ in~\eqref{eq:residual-quadratic-bound}.
Further, preliminary results of~\citet[Theorem~10]{chatzikiriakos2026endtoend} indicate that finite-sample guarantees for bilinear systems carry over to the lifted system.
 
Three key features make the bound~\eqref{eq:koopman-quadratic-bound} compatible with the controller designs of \S\ref{sec:control-bilinear-LMI-overapprox} and \S\ref{sec:control-bilinear-SOS}.
First, $r_\Psi(0,0) = 0$, so the bound is tight at the origin.
Second, the bound grows quadratically away from the origin, which matches the structure required by both the LMI S-procedure and the SOS optimization using robust control techniques.
Third, $Q_\Psi$ can be evaluated, and the residual error decreases in the infinite-data limit and infinite dictionary $\Psi$ (or if the dictionary is Koopman-invariant).
In the limit of an exact Koopman-invariant dictionary and infinite data, $Q_\Psi \to 0$ and the bilinear surrogate becomes exact, so the controller designed for the lifted system recovers a true stabilizer for the nonlinear system without any residual conservatism. 
The quadratic bound therefore not only enables rigorous guarantees for finite dictionaries, but also makes the framework \emph{asymptotically exact}.
\begin{remark}[Dictionary size and conservatism]
    A richer dictionary (larger $N$) reduces the closure error and, hence, tightens $Q_\Psi$, yielding less conservative stability certificates.
    However, a larger dictionary increases the state dimension $N$ of the lifted bilinear model, which affects the complexity of the SDP associated with the controller design problem.
    The size of the LMI design scales as $\mathcal{O}(N^3)$ and the SOS design as $\mathcal{O}(N^{2\alpha})$.
    Practitioners therefore face a trade-off between model fidelity (larger $N$, smaller $Q_\Psi$) and computational tractability.
    For control applications, the LMI approach of \S\ref{sec:control-bilinear-LMI-overapprox} typically scales better to large dictionaries, while SOS methods are preferable when a tight ROA certificate is needed with a moderate dictionary.
\end{remark}
 
\subsection{Data-Driven Control of Nonlinear Systems via Koopman Bilinear Surrogates}
\label{sec:koopman-control}
Combining the elements above, Figure~\ref{fig:koopman_pipeline} summarizes the end-to-end pipeline for data-driven control of a nonlinear control-affine system~\eqref{eq:control-affine} using the bilinear Koopman surrogate and the controller designs of \S\ref{sec:control-bilinear-systems}.
\begin{figure*}[t]
    \centering
    \begin{tikzpicture}[
      node distance=0.55cm and 1.1cm,
      box/.style={draw, rounded corners=3pt, minimum width=2.6cm, minimum height=0.85cm,
                  align=center, font=\small},
      bigbox/.style={draw, dashed, rounded corners=5pt, inner sep=6pt},
      arr/.style={-Stealth, thick}
    ]
\node[box, fill=blue!10]  (data)   {Data\\$\{x_t, u_t\}$};
        \node[box, fill=blue!10,  right=of data]  (lift)   {Dictionary\\$z_t = \Psi(x_t)$};
        \node[box, fill=blue!10,  right=of lift]  (id)     {Bilinear EDMD\\$\hat{A}_0,\hat{A}_k,\hat{B}$};
        \node[box, fill=orange!15,right=of id]    (bound)  {Koopman error\\bound $Q_\Psi$};
\node[box, fill=green!12, below=1.15cm of id]   (design) {LMI or SOS\\synthesis};
        \node[box, fill=red!10,   right=of design]      (ctrl)   {Controller\\$u(x_t)$};
        \node[box, fill=red!10,   left=of design]       (cert)   {Lyapunov function\\$V(x_t)$, ROA};
\draw[arr] (data)  -- (lift);
        \draw[arr] (lift)  -- (id);
        \draw[arr] (id)    -- (bound);
        \draw[arr] (id)    -- (design);
        \draw[arr] (bound) -- (design);
        \draw[arr] (design)-- (ctrl);
        \draw[arr] (design)-- (cert);
    \end{tikzpicture}
    \caption{End-to-end pipeline for data-driven control of nonlinear systems via Koopman bilinear surrogates. Measured data are first lifted through a dictionary to form a bilinear EDMD model. The Koopman approximation error is quantified as a quadratic bound $Q_\Psi$, and the LMI or SOS controller-design methods of \S\ref{sec:control-bilinear-LMI-overapprox} and~\S\ref{sec:control-bilinear-SOS} are applied directly to the lifted bilinear system.} 
    \label{fig:koopman_pipeline}
\end{figure*}
 
\paragraph{Identification of the bilinear lifted model}
Given $T$ trajectory samples $\{(x_t, u_t, x_{t+1})\}_{t=1}^T$ from the nonlinear system~\eqref{eq:control-affine}, one first evaluates the dictionary to obtain the lifted data $\{(z_t, u_t, z_{t+1})\}$ with $z_t := \Psi(x_t)$.
The decoupled-experiment identification strategy of~\citet{chatzikiriakos2026endtoend} then applies directly to the lifted data, yielding OLS estimates $\hat{A}_0,\hat{A}_1,\ldots,\hat{A}_{d_u},\hat{B}$ together with a quadratic error characterization for the lifted bilinear system (cf.~\citet{strasser2026safedmd}).
This provides all the ingredients required by the bilinear controller designs in \S\ref{sec:control-bilinear-systems}.
 
\paragraph{LMI-based stabilization with Koopman surrogates}
The LFR-based controller design of \S\ref{sec:control-bilinear-LMI-overapprox} applies directly to the lifted bilinear system with lifted state dimension $N$ in place of $d_x$ and lifted bilinear matrices $\hat{A}_k$ acting on $z_t = \Psi(x_t)$ rather than $x_t$.
The result is a nonlinear controller with rational structure, i.e., $u_\Psi(x_t) = u(\Psi(x_t))$.
This controller asymptotically stabilizes the bilinear surrogate~\eqref{eq:koopman-bilinear-residual} and, under the quadratic bound $Q_\Psi$, provably stabilizes the true nonlinear system~\eqref{eq:control-affine} for all $x_0$ whose lifted image lies in the certified ROA $\mathcal{X}_{\mathrm{RoA}} \subseteq \mathbb{R}^N$.
This approach is developed in~\citet{strasser2023robust,strasser2025koopman,strasser2026safedmd}, where it is referred to as \emph{SafEDMD}, and provides the first data-driven controller with
rigorous stability guarantees for discrete-time nonlinear systems via Koopman surrogates.
 
\paragraph{SOS-based stabilization with Koopman surrogates}
The LMI approach relies on overapproximating the bilinearity, which can, especially in the context of Koopman lifted bilinear systems, introduce severe conservatism (compare~\citet{strasser2025koopman}).
Instead, the SOS-based design framework of \S\ref{sec:control-bilinear-SOS} applies directly in the lifted coordinates, and the resulting stability certificates can be pulled back to the original state space via a suitable characterization of the Lyapunov function; see~\citet{strasser2025sos,strasser2025performance} for details.
Compared with the LMI approach, the SOS method yields, as for the bilinear setting, larger ROA estimates in the lifted space at the cost of a higher-dimensional SDP (growing as $\mathcal{O}(N^{2\alpha})$ in the dictionary size $N$ and controller degree $\alpha$), so it is best suited to moderate-size dictionaries. 

\paragraph{Closed-loop guarantees for the nonlinear system}
A subtlety that arises in the Koopman setting is the relationship between closed-loop guarantees of the lifted bilinear model and closed-loop guarantees of the original nonlinear system.
The key quantity is the Lyapunov function used for proving the closed-loop guarantees.
In particular, it is crucial to directly construct the Lyapunov function and its decrease inequality in the original state $x_t$.
One possibility to get such a construction is by including the identity
($\psi_k(x) = x_k$ for $k = 1,\ldots,d_x$) as observables in $\Psi$. 
Then, the original state $x_t$ is a linear projection of the lifted state $z_t=\Psi(x_t)$, and closed-loop stability of the original system~\citep{strasser2026safedmd,strasser2025koopman,chatzikiriakos2026endtoend} can be deduced by \emph{robust} stability of the lifted perturbed bilinear system.
More precisely, the certified ROA $\mathcal{X}_{\mathrm{RoA}}$ is directly expressed in the original state space as a sublevel set of the (nonlinear) Lyapunov function $V(x)=\Psi(x)^\top P \Psi(x)$.

\subsection{Numerical Examples}
We conclude this section with a numerical illustration of the Koopman-based control pipeline on a nonlinear inverted pendulum lifted to a bilinear surrogate via the dictionary $\Psi(x) = \begin{bmatrix} x_1 & x_2 & \sin(x_1)\end{bmatrix}^\top$.
Data are collected for $T = 2000$ under sub-Gaussian noise, yielding a lifted bilinear model of dimension $N = 3$ together with the quadratic residual bound~\eqref{eq:koopman-quadratic-bound}.
Applying the SOS-based controller of \S\ref{sec:control-bilinear-SOS} with $\alpha=1$ yields a stabilizing rational controller (compare~\citet[Fig.~3]{chatzikiriakos2026endtoend}).
The advantage of exploiting the bilinear structure via SOS rather than overapproximating it via the LMI is particularly pronounced in the Koopman setting, where the nonlinear lifting causes the true trajectories to occupy only a small subset of the artificial uncertainty set $\mathbf{\Delta}$, so that the LMI overapproximation introduces severe conservatism.
As illustrated in Fig.~\ref{fig:exmp-control-nonlinear}~\citep[Fig.~2]{strasser2025sos}, the SOS-based design yields a ROA with a high coverage of the sampling region.
In contrast, the LMI-based approach requires significantly more data samples to achieve feasibility and still yields a substantially smaller ROA.
\begin{figure}[tb]
    \centering
     \caption{Sampling region (thick) and guaranteed ROAs of the SOS-based controller design (dashed).}
    \label{fig:exmp-control-nonlinear}
\end{figure}
For further numerical examples on Koopman-based control of nonlinear systems using bilinear surrogates, we refer to the survey article~\citet{strasser2026overview} and the references therein.

\subsection{Notes}
The Koopman framework makes precise the following chain of reasoning:
\begin{enumerate}
    \item 
        \textit{Any} control-affine nonlinear system~\eqref{eq:control-affine} can be lifted to a bilinear system~\eqref{eq:bilinear-koopman} in observable coordinates.
        Moreover, bilinear models are the minimal exact finite-dimensional Koopman representation, and, in particular, linear surrogates are insufficient for representing controlled systems.
    \item 
        The finite-dictionary and finite-data approximation error fits exactly the quadratic bound~\eqref{eq:koopman-quadratic-bound} required by the bilinear controller designs, with the matrix $Q_\Psi$ playing the same role as $Q_\Delta$ in the bilinear identification setting.
    \item 
        As a result, the complete pipeline of Part~\ref{part:learning} (OLS identification with finite-sample error bounds) and the bilinear controller designs of \S\ref{sec:control-bilinear-systems} apply to nonlinear systems, simply replacing the bilinear state $x_t$ by the lifted state $z_t = \Psi(x_t)$.
\end{enumerate}
This generalizes the bilinear theory discussed in this tutorial paper to a broad class of nonlinear systems, establishing bilinear system identification and control as core primitives in data-driven nonlinear control via the Koopman operator.

While Koopman-based control based on \emph{robust} bilinear controller designs are shown to be useful to derive closed-loop guarantees for data-driven control of nonlinear systems, several other Koopman-based controller design approaches exist.
The approach in~\citet{sinha2022data} uses Petersen's lemma for a \emph{nominal} bilinear controller design but, since it does not account for the residual, provides no closed-loop guarantees.
The work in~\citet{moyalan2023data,vaidya2025koopman} instead formulates data-driven optimal and stabilizing control via the \emph{Perron-Frobenius operator}, dual to the Koopman operator and acting on densities rather than observables, yielding a convex (occupation-measure, density-function, or Hamilton-Jacobi based) reformulation solved via linear or SOS programming.
However, it does not explicitly account for the surrogate's learning error and hence lacks the end-to-end guarantees of the approaches in \S\ref{sec:koopman-control}.
A complementary direction combines a Koopman-based LPV surrogate with a data-driven \emph{integral quadratic constraint} (IQC) characterization of the modeling error in the frequency domain, iteratively refining the IQC multiplier and the controller~\citep{eyuboglu2026koopman}.
This is reported to be less conservative on selected examples, but currently only yields asymptotic, rather than finite-sample, guarantees.

Open questions in the realm of Koopman-based control include (i) better tractability of the controller designs discussed in \S\ref{sec:koopman-control} as the lifted dimension $N$ grows; (ii) a systematic comparison with the set-membership approach of~\citet{xie2026koopman} and with control from partial observations~\citep{iacob2025learning,strasser2026inputoutput}; (iii) combining finite-sample residual bounds with \emph{multi-step} (rather than one-step) identification schemes~\citep{eyuboglu2025efficient}, which report improved long-horizon prediction and closed-loop performance over EDMD; and (iv) extending the IQC-based characterization of~\citet{eyuboglu2026koopman} to finite-sample guarantees.
 
\newpage
\part{Discussion \& Conclusion}\label{part:discussion}

The tools developed in this tutorial (non-asymptotic system identification, belief-space and
robust control, and Koopman-lifted bilinear surrogates) connect to a much larger landscape of modern data-driven and learning-based control and reinforcement learning. 
In Part III, we highlight several connections to these broader themes and discuss open directions. 
We emphasize that this is not an exhaustive account of the connections between bilinear control and modern machine learning, but we present it as one path forward among several. We expect the boundary between non-asymptotic control theory and ML-driven control to continue to blur.

\section{End-to-End Learning and Control and Reinforcement Learning}

In this section, we briefly explore approaches that address learning and control of a dynamical system simultaneously. 
One class of end-to-end approaches builds directly on the separate learning and control stages.
Referred to as ``model-based'' by the machine learning community or ``indirect'' by the adaptive control community, these approaches update model estimates online and then recompute control policies based on them.
By carefully designing inputs that both regulate and excite the system, such techniques can be shown to have decaying time-average sub-optimality, a metric referred to as ``regret''.
Regret guarantees have been established mainly in the setting of linear dynamics, including under state~\citep{dean2018regret} and partial observation~\citep{mania2019certainty,lale2020logarithmic}, but also more broadly for Markov jump linear systems~\citep{du2021certainty} and linear time-varying systems~\citep{minasyan2021online}.
Establishing such methods and guarantees remains an open question for bilinear dynamical systems.

Another approach whose analysis for continuous control problems (LQR, LQG, robust control, etc.) has gained attention in recent years, in part due to its empirical success in reinforcement learning (RL), is based on gradient-based techniques known as \emph{policy optimization} methods. This type of approach is distinct from putting together separate learning and control stages (i.e., first learn the dynamics and then design the control); instead, the policy (or the control law, mapping states to actions) is \emph{directly} updated using exact or approximate estimates of the cost gradient (see, e.g., \cite{aagarwal2021policygradient} and references therein). 
Here we focus on the statistical and computational analysis of these methods (which is the theme of this tutorial) applied to control problems with continuous states and actions~\citep{hu2023toward}.
There has been much recent progress on this front for a variety of control problems under linear dynamics. 
For the LQR problem, \citep{fazel2018global} showed that direct policy optimization converges to the optimal policy starting from any stabilizing policy---despite nonconvexity, the loss landscape satisfies the gradient dominance property (a special case of the Polyak-Lojasiewicz property). 
The article \citep{hu2023toward} surveys recently developed theoretical results on the optimization landscape, global convergence, and sample complexity of policy optimization for a variety of (linear) continuous control problems, such as the LQR,  Markov Jump Linear Quadratic control \citep{jansch2022PO}, risk-sensitive and robust control \citep{zhang2021risk}, and LQG \citep{zheng2022escaping};
more recent work includes policy optimization for the output estimation problem \citep{umenberger2022policysearch}.

However, 
the analysis tools for linear dynamics do not readily extend to the nonlinear case, including bilinear systems.
Yet, the practical deep RL algorithms that motivate this line of work (policy gradient, natural policy gradient, and actor-critic methods such as TRPO~\citep{schulman2015trust}, PPO~\citep{schulman2017proximal}, and SAC~\citep{haarnoja2018soft}) are routinely applied to genuinely nonlinear and often partially observed systems, typically without any comparable guarantee.
\citet{agarwal2022reinforcement} give a comprehensive statistical and computational treatment of policy optimization, exploration, and function approximation for general Markov decision processes (MDPs) underlying much of this practice, of which the LQR, LQG, robust and risk-sensitive control results mentioned above are particularly clean and analytically tractable cases studied in control.

Representation learning for \emph{low-rank} and \emph{linear} MDPs poses a similar question to the dictionary learning problem for the Koopman operator in \S\ref{sec:koopman}: when, and how efficiently, a feature map can be learned under which the relevant dynamics become linear.
\citet{agarwal2020flambe} give a statistically and computationally efficient algorithm (FLAMBE) for representation learning in low-rank MDPs with finite-sample guarantees, in a similar spirit to the finite-sample bounds for the bilinear dictionary of Part~\ref{part:learning}.
Further, \citet{uehara2022representation} extend this to oracle-efficient algorithms in both the online and offline settings, and \citet{zhang2022making} show how to make such representations practical at scale via contrastive objectives.
Finally, the Koopman operator has itself been imported directly into RL.
For instance, \citet{rozwood2024koopman} lift the Bellman equation into Koopman coordinates so that the value function evolves linearly in the lifted space, a direct structural echo of the belief-space reformulation of \S\ref{sec:koopman-control}, now applied to the value function rather than to the state.
We discuss further connections between Koopman theory and modern machine learning in \S\ref{sec:connections-modern-ml}.

\ifthenelse{\equal{\version}{arxiv}}{
\paragraph{Open Questions:}
}{
\emph{Open Questions:}}
Developing end-to-end learning and control and guarantees for bilinear dynamical systems requires addressing several open questions.
On the learning side, it is necessary to derive results for closed-loop inputs, rather than the i.i.d. noise injection considered in \S\ref{sec:sysid}.
Closed-loop inputs introduce additional dependence between covariates over time, and without sufficient noise injection, may even fail to guarantee persistence of excitation.
It may also be necessary to consider learning from unstable or marginally stable systems.
On the control side, it is necessary to develop methods that work in the presence of process noise, measurement noise, partial observation, and modeling errors.
Furthermore, it is necessary to derive guarantees on closed-loop behavior like stability and sub-optimality.
One possible starting point is to consider model-based learning of the separation principle controller for LQG-BO (\S\ref{subsec:sp}) and analyze the sub-optimality with respect to the oracle separation principle controller.
Another possible direction is to study online model-based stabilization of unknown bilinear dynamics using the tools from \S\ref{sec:control-bilinear-systems}.

The success of policy optimization methods in practice and their analytical guarantees for linear dynamical systems brings up the question of whether problems in bilinear systems can be addressed similarly.
Even with full observations, loss landscape analysis for the learning and control of a BDS is more complicated than the linear case, and strong properties such as gradient dominance or only-strict-saddles likely do not hold without additional assumptions. 
Perhaps a starting point is to consider learning a separation-principle controller for LQG-BO (\S\ref{subsec:sp}) via policy optimization, given a cost-gradient oracle.
More broadly, this topic is an interesting direction for future research.

\section{Further Connections to Machine Learning}
\label{sec:connections-modern-ml}

{In this section, we highlight three connections that are especially pertinent to the material of \S\ref{sec:sysid}--\S\ref{sec:koopman}: (1) nonparametric and kernel-based learning as an alternative to the parametric identification of Part~\ref{part:learning}; (2) representation learning in the age of foundation models, viewed through the Koopman lens of \S\ref{sec:koopman}; and (3) world models and predictive architectures, whose guiding philosophy, i.e., to predict in a learned representation rather than in the raw state, closely parallels the Koopman approach.

}
 
\subsection{Nonparametric and Kernel-Based Learning: RKHS and Gaussian Processes}
\label{sec:connections-rkhs-gp}

Throughout Part~\ref{part:learning}, we identified the bilinear and Koopman-lifted dynamics via \emph{parametric} least-squares regression: the dictionary $\Psi$ (or the bilinear factors $A_0,\ldots,A_{d_u}$) is fixed ahead of time, and only finitely many coefficients are estimated from data.
An alternative, nonparametric route is to place a prior directly on the unknown dynamics via a reproducing kernel Hilbert space (RKHS) or a Gaussian process (GP), letting the effective dictionary grow with the data rather than being fixed in advance.
This is attractive precisely where the finite-dictionary and finite-data approximation error of \S\ref{sec:koopman-error} is a concern.
More precisely, a well-specified kernel can, in principle, drive the Koopman residual bound $Q_\Psi$ in~\eqref{eq:koopman-quadratic-bound} to zero as data accumulate, without committing to a particular finite basis.
\citet{bonalli2026non} give a representative example outside the Koopman setting, establishing non-asymptotic rates for RKHS-based nonparametric identification of the drift and diffusion of a stochastic differential equation, with rates that tighten as the unknown coefficients become smoother.
Within the Koopman setting specifically, \citet{kostic2022learning} lay the statistical foundations, casting Koopman operator estimation as a regression problem restricted to an RKHS and introducing a notion of risk from which several estimators, together with guarantees on the estimated spectral decomposition, follow.
\citet{bevanda2026nonparametric} extend this approach to control-affine systems, developing an RKHS-based representation for control-Koopman operators that dispenses with an explicit finite dictionary or input parametrization altogether, casting operator estimation as an infinite-dimensional regression problem in a control-affine RKHS.
Their framework yields arbitrarily accurate finite-rank approximations without a priori restricting the hypothesis space to a fixed finite span (directly addressing the finite-dictionary approximation error that motivates the residual bound $Q_\Psi$ in \S\ref{sec:koopman-error}) and uses sketching to keep the resulting estimators scalable, extending the classical (linear) kernel-EDMD line of work~\citep{williams2015kernelEDMD} to the control setting with an explicit approximation-error analysis.
Two further papers connect this line of work directly back to the controller designs of \S\ref{sec:control-bilinear-systems}: \citet{strasser2025kernel} derive kernel-based error bounds for bilinear Koopman surrogate models of exactly the form used throughout \S\ref{sec:koopman}, giving a kernel-based route to the quadratic bound $Q_\Psi$ in~\eqref{eq:koopman-quadratic-bound}, and \citet{bold2025kernel} give a complementary error and stability analysis for kernel-based Koopman approximants for control under flexible, non-i.i.d.\ sampling schemes.

Gaussian processes offer a Bayesian counterpart to the same RKHS-based idea, replacing worst-case (frequentist) error bounds with a calibrated posterior over the unknown dynamics.
This is already a mature tool for nonlinear control.
Here, \citet{berkenkamp2017safe} combine GP regression of unmodeled dynamics with Lyapunov-based stability certificates in a model-based RL loop, and a substantial body of follow-up work builds GP-based model predictive control with probabilistic safety and stability guarantees~\citep{koller2018learning,maiworm2021online}.
Conceptually, a GP posterior variance plays a role analogous to the ellipsoidal or quadratic identification-error bounds of \S\ref{sec:control-bilinear-systems}, but is data-adaptive (shrinking where data is dense) rather than a fixed worst-case set.

\ifthenelse{\equal{\version}{arxiv}}{
\paragraph{Open Question:}
It remains open how to combine a GP or RKHS uncertainty quantification directly with the LMI or SOS controller synthesis of \S\ref{sec:control-bilinear-systems}, replacing the quadratic residual bound $Q_\Delta$ (or $Q_\Psi$) with a state-dependent, data-adaptive counterpart derived from the posterior covariance, while retaining a convex synthesis problem.
}{}

\subsection{Representation Learning, Foundation Models, and Koopman Embeddings}
\label{sec:connections-foundation-models}

Another connection concerns \emph{how} the dictionary or embedding itself is produced.
\S\ref{sec:koopman} treats the dictionary $\Psi$ as either hand-specified or learned via a fixed-basis regression (EDMD).
The broader ML community is increasingly learning such embeddings with large sequence-to-sequence architectures, e.g., in foundation models and vision-language-action (VLA) models for control, and it is natural to ask how these relate to the Koopman lifting used in this tutorial.
A recent example is \citet{li2025mamko}, who replace the constant, offline-fitted Koopman operator with one \emph{generated online} by a Mamba-style selective state-space model.
Here, the operator itself becomes a function of the recent input-output history, 
rather than a fixed matrix estimated once from batch data.
This can be read as a data-conditioned, time-varying relaxation of the bilinear surrogate~\eqref{eq:bilinear-koopman}, in which the Koopman operator's dependence on the input is no longer restricted to being bilinear but is instead parameterized by a recurrent/SSM-based network.
The trade-off is that the resulting model no longer comes with the finite-sample guarantees of \S\ref{sec:sysid} and~\S\ref{sec:koopman-error}.
Related deep-learning-based Koopman parameterizations, which similarly trade the guarantees of a fixed finite dictionary for greater expressiveness, include the deep autoencoder architecture of \citet{lusch2018deep}, which learns Koopman eigenfunctions directly from trajectory data, and its control-affine extensions~\citep{yeung2019learning,han2020deep}, which learn a bilinear Koopman representation end-to-end for use in prediction and control.

More broadly, the idea that a learned embedding linearizes (or bilinearizes) otherwise nonlinear control-affine dynamics recurs across the RL and robotics literature under different names.
For instance, \emph{embed-to-control} learns a locally linear latent dynamics model directly from raw observations~\citep{watter2015embed}, \emph{Koopman Q-learning} exploits a learned Koopman-linear latent space for offline RL~\citep{weissenbacher2022koopman}, and task-oriented Koopman encoders trained contrastively have been used directly for control~\citep{lyu2023taskoriented}.
Each of these can be viewed through the same lens as \S\ref{sec:koopman}, i.e., an embedding $\Psi$ (now learned end-to-end rather than fixed) is chosen so that the induced dynamics in the embedding space are (bi)linear, after which linear or bilinear control machinery applies. Note that this is one thread among many connecting embedding learning to Koopman theory; we have not attempted to survey the rapidly growing literature on foundation models and VLAs for control, and we expect this connection to further develop substantially.

\subsection{World Models and Predictive Architectures}
\label{sec:connections-world-models}

Closely related to the embeddings of \S\ref{sec:connections-foundation-models} and the planning in belief-space of  \S\ref{sec:Blief-space MPC} is the broader notion of a \emph{world model}.
There, a learned model predicts future (latent) states, typically used for planning or as a training signal for a policy.
The joint-embedding predictive architecture (JEPA) proposed by \citet{lecun2022path} formalizes one influential instance of this idea, training an encoder and predictor to forecast future \emph{representations} rather than future raw observations, with concrete instantiations for images and video~\citep{assran2023ijepa,bardes2025vjepa}. The belief-space MPC controller design problem we studied in \S\ref{sec:Blief-space MPC} for controlling linear dynamical systems from bilinear observations fits into the framework of the recent work on adaptive MPC control using world models~\cite{wang2026adajepa}. The appeal of predicting in representation space rather than observation space is structurally similar to the appeal of the Koopman lifting in \S\ref{sec:koopman}.
Both approaches posit that some transformation of the state admits simpler (ideally linear or low-order) predictive dynamics than the raw state itself, and both must contend with a representation-collapse or degeneracy failure mode that has no direct analogue in the finite-dimensional, spectrally well-posed Koopman setting we have considered.
Unlike the Koopman surrogates of \S\ref{sec:koopman}, JEPA-style world models do not currently come with the finite-sample error bounds that make the residual $r_\Psi$ compatible with the robust controller designs of \S\ref{sec:control-bilinear-systems}.
Closing this gap, by, e.g., deriving a Koopman-style residual bound for a learned, JEPA-style predictor, or conversely identifying conditions under which a JEPA-trained representation is provably (approximately) Koopman-invariant, is, to our knowledge, open.

\section{Conclusion}

In this paper, we set out to give a self-contained tutorial on non-asymptotic learning and control of bilinear dynamical systems. 
We reviewed identification of input-output behavior using least-squares estimation with nonlinear features, highlighting the statistical tools used for handling quantities that do not arise in the linear analysis.
On the control side, we discussed how the separation principle, a cornerstone of linear control, can fail to hold in the bilinear setting, motivating a belief-space perspective.
We also introduced Lyapunov-, semidefinite-, and SOS- approaches to state-feedback stabilization under modeling errors.
Finally, we reviewed the generality of these techniques for nonlinear control-affine systems through Koopman operator theory.

We hope to show that bilinear systems retain enough structure to support finite-sample guarantees in the spirit of classical linear theory, while already surfacing many of the difficulties: input-dependent stability, coupled state estimation and control, and nonconvex optimization landscapes that characterize nonlinear systems more broadly.
Open questions reflect the current boundary of what is understood for nonlinear systems, and we expect that closing them will require ideas drawn jointly from control theory, statistics, and machine learning. 
We hope this tutorial offers both the technical foundation and the shared vocabulary needed for researchers across these communities to take up that work, and that bilinear systems continue to serve as a productive testbed for extending non-asymptotic learning and control beyond the linear setting.
 
\section*{ACKNOWLEDGMENTS}
S.D. was partly supported by NSF CCF 2312774, NSF OAC-2311521, NSF IIS-2442137, a gift to the LinkedIn-Cornell Bowers CIS Strategic Partnership, and an AI2050 Early Career Fellowship program at Schmidt Sciences. 
M.F. was supported in part by awards NSF TRIPODS II
2023166, NSF CCF 2212261, NSF CCF 2312775, and by the Moorthy Family professorship at UW. F.A. was supported by the Deutsche Forschungsgemeinschaft (DFG, German Research Foundation) under grant AL 316/15-1 -- 468094890.

\newpage
\bibliographystyle{plainnat}
\bibliography{Bibfiles, refs}

\newpage
\appendix

\section{Proofs of Persistence of Excitation Results}\label{app:persistence excitation general}
In this Appendix, we present the proof of our main result on persistence of excitation as stated by Theorem~\ref{thm:persistence}, and then use Theorem~\ref{thm:persistence} to derive persistence of excitation guatantees for various bilinear systems.
\subsection{Proof of Theorem~\ref{thm:persistence}} \label{app:persistence excitation}

\begin{proof}
To begin, Let $X_{\phi}$ is obtained by stacking $\{\phi_t^\T\}_{t=1}^T$ row-wise. Then, from assumption~\ref{assump:phi_t distribution}, it is straightforward to see that the empirical covariance matrix $\widehat{\Sigma}_T := X_{\phi}^\T X_{\phi} = \sum_{t=1}^T \phi_t \phi_t^\top$ is isotropic in expectation as follows:

\begin{align}
	\E\left[ X_\phi^\top X_\phi\right] = \E \left[\sum_{t = 1}^{T} \phi_t \phi_t^{\top}\right] &= \sum_{t = 1}^{T} \E \left[\phi_t \phi_t^\top\right] = \sum_{t = 1}^{T} I_{d{\phi}} = T I_{d{\phi}}.
\end{align}
With a slight abuse of notation, let $v \in \Scal^{d_{\phi}-1}$, and consider the quantity $v^\top X_{\phi}^\top X_{\phi}v = \sum_{t = 1}^{T} (v^\top\phi_t)^2$, which can be viewed as a summation of the random process $\{(v^\top\phi_t)^2\}_{t = 1}^{T}$. In the following, we will derive a one-sided concentration bound for this random process.

\medskip
\noindent $\bullet$ {\bf Step 1) Blocking:} We begin by using blocking technique to get independent samples as follows,
\begin{align}
	\sum_{t = 1}^{T} (v^\top\phi_t)^2 = \sum_{k = 1}^{\tau} \sum_{\ell = 0}^{T/\tau-1} (v^\T\phi_{\ell \tau  + k})^2,
\end{align}
where we make the simplifying assumption that $T$ can be divided by $\tau$. \emph{This goes without loss of generality, and we assume it for the sake of clarity. It can be easily avoided by noting that $
\sum_{t=1}^{T} \phi_t \phi_t^\top \succeq \sum_{t=1}^{ \tau\lfloor {T}/{\tau} \rfloor } \phi_t \phi_t^\T, 
$ where $\lfloor\cdot \rfloor$ denotes the floor operator. We can then analyze everything with $T_0 =  \tau\lfloor {T}/{\tau} \rfloor$, and note that $ T-\tau <  T_0 \le T$.}

\medskip
\noindent $\bullet$ {\bf Step 2) Bernstein's inequality for non-negative random variables:} From Assumption~\ref{assump:phi_t distribution}, we have $\EE[(v^\top \phi_t)^4] \le \gamma\, \EE[(v^\top \phi_t)^2]^2$, for all $v \in \Scal^{d_\phi-1}$, and all $t \in [1,T]$. Hence, we can use one-sided Bernstein's inequality for non-negative random variables~\cite[Proposition 2.14]{wainwright2019high}, also stated in Proposition~\ref{corr:onesided_bernstein} in Section~\ref{sec:onesided} to obtain a lower bound on the smallest eigenvalue of $\sum_{t=1}^{T} \phi_t \phi_t^\T$. Specifically, we have,
\begin{align}
	&\P \left( \sum_{\ell = 0}^{T/\tau-1} \left((v^\T\phi_{ \ell \tau  + k})^2 - \E[(v^\T\phi_{\ell \tau  + k})^2]\right)  \leq - \frac{T}{\tau}\zeta \right) \leq \exp \left( \frac{-(T/\tau) \zeta^2}{\frac{2}{T/\tau}\sum_{\ell=0}^{T/\tau-1} \E[(v^\T\phi_{\ell \tau  + k})^4] }\right). 
\end{align}
From Assumption~\ref{assump:phi_t distribution}, we know that $\{\phi_{\ell \tau  + k}\}_{\ell=0}^{T/\tau-1}$ are i.i.d. for each $k \in [1,\tau]$. Therefore, we have
\begin{align}
	\P \left( \sum_{\ell = 0}^{T/\tau-1} \left((v^\T\phi_{\ell \tau + k})^2 - \E[(v^\T\phi_{\ell \tau + k})^2]\right)  \leq - \frac{T}{\tau}\zeta~ \E[(v^\T\phi_{\tau})^2]\right) &\leq \exp \left( \frac{-(T/\tau) \zeta^2~\E[(v^\T\phi_{\tau})^2]^2}{2~ \E[(v^\T\phi_{\tau })^4] }\right), \nn \\
    &\leq \exp \left( \frac{-T \zeta^2}{2 \tau \gamma }\right).
\end{align}
This further implies that
\begin{align}
	\P \left( \sum_{\ell = 0}^{T/\tau-1} (v^\T\phi_{\ell \tau + k})^2  \leq \frac{T}{\tau}(1-\zeta) \right) 
    &\leq \exp \left( \frac{-T \zeta^2}{2 \tau\gamma }\right), \label{eqn:onesided_bound_k}
\end{align}
where we use the fact that $\E[(v^\T\phi_{\tau})^2] = \E[v^\T\phi_{\tau} \phi_{\tau}^\T v] = v^\T I_{d_{\phi}} v = 1$. Note that, if $\E[\phi_{\tau} \phi_{\tau}^\T] = \Sigma_{\phi}$ instead of identity, we can do a change of variable by setting $\phi_t' := \Sigma_{\phi}^{-1/2} \phi_t$, and work with $\phi_t'$, and later convert our final results in terms of $\phi_t$. To proceed, union bounding over $\tau$ events of the form \eqref{eqn:onesided_bound_k}, we get the following,
\begin{equation}
\begin{aligned}
	\P \left( \sum_{k = 1}^{\tau}\sum_{\ell = 0}^{T/\tau-1} (v^\T\phi_{\ell \tau + k})^2  \leq T(1-\zeta) \right) 
    &\leq \tau \exp \left( \frac{-T \zeta^2}{2 \tau \gamma}\right),\\
    \implies \quad \P \left( \sum_{t = 1}^{T} (v^\top\phi_t)^2  \leq T - \sqrt{2 \tau T \gamma \log(\tau/\delta)}\right) 
    &\leq \delta \label{eqn:onesided_union_k}
\end{aligned}
\end{equation}
where we get the last inequality by setting,
\begin{align}
	\tau \exp\left(-\frac{T\zeta^2}{2\tau \gamma}\right) &= \delta, \nn \\
	\iff \frac{T\zeta^2}{2\tau \gamma} &= \log(\tau /\delta), \nn \\
	\impliedby \zeta &= \sqrt{\frac{\tau}{T}2 \gamma\log(\tau/\delta)}. 
\end{align}

\noindent $\bullet$ {\bf Step 3) Covering with ${\delta}/{(8d_{\phi})}$-net:} Next, we use a covering argument as follows: Let $\Ncal_\epsilon := \{v_1, v_2, \dots, v_{|\Ncal_\epsilon|} \} \subset \Scal^{d_{\phi}-1}$ be the $\epsilon$-net of $\Scal^{d_{\phi}-1}$ such that for any $v \in \Scal^{d_{\phi}-1}$, there exists $v_i \in \Ncal_\epsilon$ such that $\tn{v - v_i} \leq \epsilon$. From Lemma 5.2 of \cite{vershynin2010introduction}, we have $|\Ncal_\epsilon| \leq (1 + 2/ \epsilon)^{d_{\phi}}$. 

Let us choose $v \in \Scal^{d_{\phi}-1}$ for which $\lambda_{\min} (X_{\phi}^\T X_{\phi}) = v^\T X_{\phi}^\T X_{\phi} v$, and choose $v_i \in \Ncal_\epsilon$ which approximates $v$ as $\tn{v - v_i} \leq \epsilon$. By triangle inequality, we have
\begin{align}
	|v^\T X_{\phi}^\T X_{\phi} v  - v_i^\T X_{\phi}^\T X_{\phi} v_i| &= |v^\T X_{\phi}^\T X_{\phi} ( v - v_i) + (v - v_i)^\T X_{\phi}^\T X_{\phi} v_i|, \nn \\
	& \leq \norm{X_{\phi}^\T X_{\phi}}_\op \tn{v} \tn{v - v_i} + \norm{X_{\phi}^\T X_{\phi}}_\op \tn{v_i} \tn{v - v_i}, \nn \\
	& \leq 2 \epsilon \norm{X_{\phi}^\T X_{\phi}}_\op, \nn \\
    \implies  v^\T X_{\phi}^\T X_{\phi} v &\geq  v_i^\T X_{\phi}^\T X_{\phi} v_i - 2 \epsilon \norm{X_{\phi}^\T X_{\phi}}_\op, \nn \\
     \implies \lambda_{\min}(X_{\phi}^\T X_{\phi} ) &\geq \inf_{v_i \in \Ncal_\epsilon} v_i^\T X_{\phi}^\T X_{\phi} v_i - 2 \epsilon \norm{X_{\phi}^\T X_{\phi}}_\op.
\end{align}
Hence, in order to lower bound $\lambda_{\min}(X_{\phi}^\T X_{\phi} )$, we also need an upper bound on $\norm{X_{\phi}^\T X_{\phi}}_\op$. This can be done as follows: First, we have 
\begin{align}
	\E[\norm{X_{\phi}^\T X_{\phi}}_\op] = \E[\lambda_{\max}(X_{\phi}^\T X_{\phi})] &\leq \E[\tr(X_{\phi}^\T X_{\phi})]= \tr\big(\E[X_{\phi}^\T X_{\phi}]\big), \nn \\
    & = \tr\big(T I_{d_{\phi}}\big) = d_{\phi} T.
\end{align}
Hence, using Markov's inequality, we get
\begin{align}
	\P\left( \norm{X_{\phi}^\T X_{\phi}}_\op > \frac{d_{\phi}T}{\delta}\right) \leq  \frac{\E[ \norm{X_{\phi}^\T X_{\phi}}_\op ]}{d_{\phi}T} \delta \leq \delta.
\end{align}
Let  $\Ecal := \{\norm{X_{\phi}^\T X_{\phi}}_\op  \leq \frac{2d_{\phi}T}{\delta}\}$ denote the event that $\norm{X_{\phi}^\T X_{\phi}}_\op$ is upper bounded by a specified threshold. Then, it is straightforward to see that $\P(\Ecal) \geq 1 - \delta/2$. This further implies,
\begin{align}
	\P&\left(\lambda_{\min}(X_{\phi}^\T X_{\phi}) < T(1/2-\zeta)\right) \nn \\
    &\leq \P\left(\left\{\lambda_{\min}(X_{\phi}^\T X_{\phi}) < T(1/2-\zeta)\right\} \bigcap \Ecal \right) + \P (\Ecal^c), \nn \\
	&\leq \P\left(\left\{\inf_{v_i \in \Ncal_\epsilon} v_i^\T X_{\phi}^\T X_{\phi} v_i - 2 \epsilon \norm{X_{\phi}^\T X_{\phi}}_\op < T(1/2-\zeta)\right\} \bigcap \Ecal \right) + \delta/2, \nn \\
	&\leq \P\left(\inf_{v_i \in \Ncal_\epsilon} v_i^\T X_{\phi}^\T X_{\phi} v_i  < T(1/2-\zeta) +  4 \epsilon\frac{d_{\phi}T}{\delta}\right) + \delta/2, \nn \\
    &= \P\left(\inf_{v_i \in \Ncal_\epsilon} v_i^\T X_{\phi}^\T X_{\phi} v_i  < T(1/2- \zeta +  \frac{4 \epsilon d_{\phi}}{\delta})\right) + \delta/2, \nn \\
	&= \P\left(\inf_{v_i \in \Ncal_\epsilon} v_i^\T X_{\phi}^\T X_{\phi} v_i  < T(1-\zeta )\right) + \delta/2, \label{eqn:before_union_bound}
\end{align}
where we obtain the last inequality by choosing $\epsilon = \frac{\delta}{8d_{\phi}}$. Using \eqref{eqn:onesided_union_k}, with union bounding over all the elements in $\Ncal_\epsilon$, we obtain,
\begin{align}
	 \P\left(\inf_{v_i \in \Ncal_\epsilon} v_i^\T X_{\phi}^\T X_{\phi} v_i  < T(1-\zeta )\right) &\leq |\Ncal_\epsilon| \tau \exp\left(-\frac{T\zeta^2}{2\tau \gamma}\right).
\end{align}
 From Lemma 5.2 of \cite{vershynin2010introduction}, we have $|\Ncal_\epsilon| \leq (1 + 2/ \epsilon)^{d_{\phi}}$. Hence, we have
\begin{align}
    &\P\left(\inf_{v_i \in \Ncal_\epsilon} v_i^\T X_{\phi}^\T X_{\phi} v_i  < T(1-\zeta )\right) \leq \left(1 + \frac{16 d_{\phi}}{\delta}\right)^{d_{\phi}} \tau \exp\left(-\frac{T\zeta^2}{2\tau \gamma }\right), \nn \\
    \implies \quad &\P\left(\lambda_{\min}(X_{\phi}^\T X_{\phi}) < T/2- \sqrt{2\tau \gamma T\left(\log\left(\frac{2\tau }{\delta}\right) +d_{\phi} \log\left(1 + \frac{16 d_{\phi}}{\delta}\right)\right)}\right) \leq \delta,
\end{align}
where we get the last inequality by setting,
\begin{align}
	&\left(1 + \frac{16 d_{\phi}}{\delta}\right)^{d_{\phi}} \tau \exp\left(-\frac{T\zeta^2}{2\tau \gamma }\right) = \delta/2, \nn \\
	 \iff &\exp\left(-\frac{T\zeta^2}{2\tau \gamma }\right) = \delta/(2\tau) (1 + \frac{16 d_{\phi}}{\delta})^{-d_{\phi}}, \nn \\
	 \iff &\frac{T\zeta^2}{2\tau \gamma } = \log(2\tau /\delta) + d_{\phi} \log\left(1 + \frac{16 d_{\phi}}{\delta}\right), \nn \\
	\impliedby &\zeta = \sqrt{\frac{2\tau \gamma }{T}\left(\log\left(\frac{2\tau }{\delta}\right) + d_{\phi} \log\left(1 + \frac{16 d_{\phi}}{\delta}\right)\right)}.
\end{align}
Finally, choosing the trajectory length via
\begin{align}
	T/4 &\geq \sqrt{2\tau \gamma T\left(\log\left(\frac{2\tau }{\delta}\right) +d_{\phi} \log\left(1 + \frac{16 d_{\phi}}{\delta}\right)\right)},\nn \\
	 \iff T/16 &\geq 2\tau \gamma \left(\log\left(\frac{2\tau }{\delta}\right) +d_{\phi} \log\left(1 + \frac{16 d_{\phi}}{\delta}\right)\right), \nn \\
	 \impliedby T &\geq 32\tau \gamma \left(\log\left(\frac{2\tau }{\delta}\right) +d_{\phi} \log\left(1 + \frac{16 d_{\phi}}{\delta}\right)\right),
\end{align}
we obtain the following persistence of excitation result, 
\begin{align}
	\P\left(\lambda_{\min}(X_{\phi}^\T X_{\phi}) \geq T/4 \right) \geq 1 - \delta.  
\end{align}
This completes the proof of Theorem~\ref{thm:persistence}.
\end{proof}

\begin{remark}[Covering with $1/(4 \beta_{\phi,\delta})$-net:]\label{remark:covering_when_bounded_app}
    Note that Theorem~\ref{thm:persistence} holds for a broader class of $\{\phi_t\}_{t=1}^T$ satisfying $(4,2,\gamma )$-hypercontractivity condition. For specific choice of $\{\phi_t\}_{t=1}^T$, we can leverage the additional information to get a tighter sample complexity lower bound. For example, if $\{\phi_t\}_{t=1}^T$ have bounded Euclidean norm, that is, $\tn{\phi_t} \leq \beta_{\phi,\delta}$ with probability at least $1-\delta/2$, for all $t \in [1,T]$. Then, we can use a different covering argument in the proof of Theorem~\ref{thm:persistence}. Specifically, observing that, with probability at least $1-\delta/2$, we have
    \begin{align}
        \norm{X_{\phi}^\T X_{\phi}}_\op  \leq  \sum_{t = 1}^{T} \norm{ \phi_t \phi_t^\T]}_\op &\leq T \max_{t \in [1,T]} \tn{\phi_t}^2 = T \beta_{\phi,\delta}^2, 
    \end{align}
    we repeat the Step 3) in the proof of Theorem~\ref{thm:persistence} with $\Ecal := \{\norm{X_{\phi}^\T X_{\phi}}_\op  \leq T \beta_{\phi,\delta}^2\}$ and using $\epsilon = \frac{1}{4 \beta_{\phi,\delta}^2}$-net covering argument to get the following persistence of excitation result: For all $\delta \in (0,1)$, the event, 
\begin{align}
    \lambda_{\min}\left(X_{\phi}^\T X_{\phi}\right)  \geq T/4
\end{align}
holds with probability at least $1- \delta$, provided that 
\begin{align}
    T \geq 32\tau \gamma \left(\log\left(\frac{2\tau }{\delta}\right) +d_{\phi} \log\left(1 + 8 \beta_{\phi,\delta}^2\right)\right).
\end{align}
\end{remark}

\begin{remark}[Non-Isotropic Covariance:]\label{remark:persistence}
    Note that Theorem~\ref{thm:persistence} requires $\{\phi_t\}_{t \geq 0}$ to be isotropic. 
    If $\phi_t$ is non-isotropic, we still get a similar result. Suppose $\E[\phi_t \phi_t^\T] = \Sigma_{\phi}$ for some $\Sigma_{\phi} \succ 0$. 
    Setting $\phi_t' = \Sigma_{\phi}^{-1/2} \phi_t$, we have $\E[\phi_t' \phi_t'^\T] = I_{d_{\phi}}$, and from Assumption~\ref{assump:phi_t distribution}, we have
 \ifthenelse{\equal{\version}{arxiv}}{
\begin{align*}
    \EE\left[\left(v^\top \Sigma_{\phi}^{1/2}\Sigma_{\phi}^{-1/2}\phi_t\right)^4\right] &\le \gamma \EE\left[\left(v^\top\Sigma_{\phi}^{1/2} \Sigma_{\phi}^{-1/2}\phi_t\right)^2\right]^2, \implies \quad \EE\left[\left(v'^\top \phi_t'\right)^4\right] &\le \gamma \EE\left[\left(v'^\top\phi_t'\right)^2\right]^2,
\end{align*}
}{
\begin{align*}
    \EE\left[\left(v^\top \Sigma_{\phi}^{1/2}\Sigma_{\phi}^{-1/2}\phi_t\right)^4\right] &\le \gamma \EE\left[\left(v^\top\Sigma_{\phi}^{1/2} \Sigma_{\phi}^{-1/2}\phi_t\right)^2\right]^2, \\
    \implies \quad \EE\left[\left(v'^\top \phi_t'\right)^4\right] &\le \gamma \EE\left[\left(v'^\top\phi_t'\right)^2\right]^2,
\end{align*}
}
where we set $v' = \Sigma_{\phi}^{1/2} v$. 
Hence, $\{\phi_t'\}_{t=1}^T$ satisfies $(4,2,\gamma)$-hypercontractivity condition, where the hypercontractivity constant is the same for both $\phi_t$ and $\phi_t'$. 
Therefore, using Theorem \ref{thm:persistence}, the event $\lambda_{\min}\left(\sum_{t=1}^T \phi_t' \phi_t'^\T\right)  \geq T/4$ holds with probability at least $1- \delta$, provided that $T$ satisfies the trajectory condition in \eqref{eqn:trajectory_size_main}. Combining this with $\lambda_{\min}\left(\sum_{t=1}^T \phi_t \phi_t^\T\right) \geq \lambda_{\min}\left(\sum_{t=1}^T \phi_t' \phi_t'^\T\right) \lambda_{\min}\left(\Sigma_{\phi}\right)$, the event, $\lambda_{\min}\left(\sum_{t=1}^T \phi_t \phi_t^\T\right)  \geq \frac{T\lambda_{\min}\left(\Sigma_{\phi}\right)}{4}$ holds with probability at least $1- \delta$, provided that
\begin{align}
    T \gtrsim \tau\,\gamma\left(\log\left(\frac{2\tau}{\delta}\right) {+} d_{\phi} \log\left(1 {+} \frac{16 \tr\big(\Sigma_{\phi}\big) }{\delta}\right)\right).
\end{align}
\end{remark}

In the remaining of the Appendix~\ref{app:persistence excitation general}, we will derive persistence of excitation results for the bilinear systems~\eqref{eqn:LDS-BO}, \eqref{eqn:BDS-FO}, and \eqref{eqn:BDS-PO} by applying Theorem~\ref{thm:persistence} to the covariates $\phi_t$ associated with each one of them.

\newcommand{\lds}{{}}
\newcommand{\bds}{{}}
\newcommand{\pobds}{{}}
\newcommand{\bdstr}{{}}
\subsection{Persistence of Excitation for Learning~\eqref{eqn:LDS-BO}}
In this subsection, we will derive the persistence of excitation guarantees for learning linear dynamical systems with bilinear observations, governed by state-space equation \eqref{eqn:LDS-BO}. In this case, the covariates are constructed as follows:
\begin{equation}
\begin{aligned} \label{eqn:phi_t_LDS_BO_app}
   \phi_t^\lds  = \util_t \otimes \ubar_{t-1}, \quad \text{where,} \quad
  \util_t := \begin{bmatrix} 1 \\ u_t \end{bmatrix}, \quad \text{and} \quad \ubar_{t-1} := \begin{bmatrix} u_{t-1} \\ u_{t-2}  \\ \vdots \\ u_{t-\tau+1} \end{bmatrix}. 
\end{aligned}
\end{equation}

\begin{proposition}[Persistence of Excitation for Learning~\eqref{eqn:LDS-BO}]\label{prop:persistence_LDS_BO_app}
Consider the covariates $\lbrace \phi_t^\lds \rbrace_{t =\tau+1}^T$ defined in \eqref{eqn:phi_t_LDS_BO_app}. Suppose Assumption~\ref{assump:input+noise}\,\emph{(\textbf{a})} holds. Then, for all $\delta \in (0,1)$, the event:
\begin{align}
    \lambda_{\min}\left( \sum_{t=\tau}^{T} \phi_t^\lds \phi_t^{\lds\T} \right) \geq (T-\tau+1)/4,
\end{align}
holds with probability at least $1- \delta$, provided that 
\begin{align}
   T - \tau +1 \gtrsim  \tau \bigg(\log\left(\frac{2\tau}{\delta}\right) &+ d_u(d_u+1)(\tau-1) \log\big({d_u(d_u+1)(\tau-1)}\big) \bigg). \label{eqn:trajectory_size_LDS_BO_app}
\end{align}
\end{proposition}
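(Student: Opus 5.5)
The plan is to verify that the covariates $\phi_t = \util_t \otimes \ubar_{t-1}$ satisfy Assumption~\ref{assump:phi_t distribution}, and then invoke Theorem~\ref{thm:persistence}. For the sample size we will use the bounded-norm covering of Remark~\ref{remark:covering_when_bounded_app}, so that the logarithmic factor becomes $\log(d_\phi)$ rather than $\log(d_\phi/\delta)$. The re-indexing $t=\tau,\dots,T$ only replaces $T$ by $T-\tau+1$.

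\textbf{Step 1: structure and moments.} Each $\phi_t$ is a deterministic function of $(u_t,\dots,u_{t-\tau+1})$. Hence $\phi_{t_1}$ and $\phi_{t_2}$ are independent whenever $|t_1-t_2|\ge\tau$, which gives block-dependence with parameter $\tau$, and the blocked subsequences used in the proof of Theorem~\ref{thm:persistence} are i.i.d. Under Assumption~\ref{assump:input+noise}(a), $u\sim\mathrm{Unif}(\sqrt{d_u}\,\Scal^{d_u-1})$ is zero-mean and isotropic.

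The first factor $\util_t=[1;u_t]$ is independent of $\ubar_{t-1}$, and $\E[\util_t\util_t^\top]=I_{d_u+1}$. The second factor $\ubar_{t-1}$ stacks independent isotropic blocks, so $\E[\ubar\ubar^\top]=I$. Therefore
\begin{equation*}
\E[\phi_t\phi_t^\top]=\E[\util\util^\top]\otimes\E[\ubar\ubar^\top]=I_{d_\phi}.
\end{equation*}
The zero-mean property follows from $\E[\ubar]=0$.

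\textbf{Step 2: hypercontractivity (the main work).} Write $v\in\R^{d_\phi}$ as a matrix $V\in\R^{(d_u+1)\times(\tau-1)d_u}$, so that $v^\top\phi_t=\util_t^\top V\ubar_{t-1}$. Conditioning on $\util_t$ and using independence, we reduce to two facts:
\begin{itemize}
\item a single linear form in $\ubar$ has $\E[(a^\top\ubar)^4]\le 3\,(\E[(a^\top\ubar)^2])^2$;
\item a linear form in $\util$ has $\E[(b^\top\util)^4]\le 3\,(\E[(b^\top\util)^2])^2$.
\end{itemize}

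For the first fact, $a^\top\ubar=\sum_j a_j^\top u_{t-j}$ is a sum of independent, zero-mean, spherically distributed terms. The spherical fourth moment gives $\E[(a_j^\top u)^4]=\frac{3d_u}{d_u+2}\|a_j\|^4\le 3\|a_j\|^4$. Expanding the fourth power of the sum then gives the factor $3$ overall, since the cross terms contribute exactly $3\sum_{i\ne j}\|a_i\|^2\|a_j\|^2$.

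For the second fact, $b^\top\util=b_0+\bar b^\top u$. Expanding with odd moments vanishing gives
\begin{equation*}
\E[(b_0+\bar b^\top u)^4]=b_0^4+6b_0^2\|\bar b\|^2+\E[(\bar b^\top u)^4]\le 3\,(b_0^2+\|\bar b\|^2)^2.
\end{equation*}

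To combine them, condition on $\util$ and set $a=V^\top\util$, which gives
\begin{equation*}
\E[(v^\top\phi)^4]\le 3\,\E\big[\|V^\top\util\|^4\big].
\end{equation*}
Now $\|V^\top\util\|^2=\sum_i (c_i^\top\util)^2$ over the columns $c_i$ of $V$. Minkowski's inequality in $L^2$ together with the second fact gives
\begin{equation*}
\E\big[\|V^\top\util\|^4\big]\le 3\,\big(\E\|V^\top\util\|^2\big)^2=3\,\|v\|^4.
\end{equation*}
Hence $\gamma=9$, consistent with the remark after Theorem~\ref{thm:main_LDS_BO}. Getting the Kronecker product to factor cleanly through this conditioning and Minkowski step is the step I expect to need the most care.

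\textbf{Step 3: apply the theorem with a bounded norm.} Almost surely,
\begin{equation*}
\|\phi_t\|^2=\|\util_t\|^2\,\|\ubar_{t-1}\|^2=(1+d_u)(\tau-1)d_u=d_\phi,
\end{equation*}
so $\beta_{\phi,\delta}^2=d_\phi$ deterministically. Remark~\ref{remark:covering_when_bounded_app}, with $\gamma=9$ and $T$ replaced by $T-\tau+1$, then yields $\lambda_{\min}\ge(T-\tau+1)/4$ with probability at least $1-\delta$ whenever
\begin{equation*}
T-\tau+1\ge 288\,\tau\big(\log(2\tau/\delta)+d_\phi\log(1+8d_\phi)\big).
\end{equation*}
This is the stated condition \eqref{eqn:trajectory_size_LDS_BO_app}.
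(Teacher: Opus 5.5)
Your proposal is correct and follows essentially the same route as the paper. Both verify block-dependence, zero mean and isotropy, then prove $(4,2,9)$-hypercontractivity by conditioning on $\util_t$ and using that linear forms in $\ubar_{t-1}$ and in $\util_t$ are each $(4,2,3)$-hypercontractive (your $L^2$-Minkowski step plays the role of the paper's Cauchy--Schwarz weighted-moment lemma), and finally apply Theorem~\ref{thm:persistence} with the bounded-norm covering of Remark~\ref{remark:covering_when_bounded_app}. Your observation that $\|\phi_t\|^2 = d_\phi$ holds deterministically makes explicit why the $1/(4d_\phi)$-net the paper invokes is legitimate.
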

Proposition~\ref{prop:persistence_LDS_BO_app} follows from the application of Theorem~\ref{thm:persistence} to the covariates $\{\phi_t^\lds\}_{t \geq \tau}$. Therefore, before we begin the proof of Proposition~\ref{prop:persistence_LDS_BO_app}, we show that the covariates $\{\phi_t^\lds\}_{t \geq \tau}$, given by \eqref{eqn:phi_t_LDS_BO_app}, satisfy the conditions required by Theorem~\ref{thm:persistence}. 

\subsubsection{Hypercontractivity of the Covariates in \eqref{eqn:phi_t_LDS_BO_app}} 

From \eqref{eqn:phi_t_LDS_BO_app} and Assumption~\ref{assump:input+noise}(\textbf{a}), it is clear that $\{\phi_t^\lds\}_{t \geq \tau}$ has a block-dependent structure with block-length $\tau$. Moreover, Assumption~\ref{assump:input+noise}(\textbf{a}) also implies that $ \E \left[ \phi_t^\lds\right] = 0$, and
\begin{align}
    \E \left[ \phi_t^\lds \phi_t^{\lds\T}\right] &= \E \left[ \left(\util_t \util_t^T \right) \otimes \left(\ubar_{t-1} \ubar_{t-1}^\T\right)\right] = I_{d_u+1} \otimes I_{d_u(\tau-1)} = I_{d_u(d_u+1)(\tau-1)},
\end{align}
Next, we show that $\{\phi_t^\lds\}_{t \geq \tau}$ is $(4,2,9)$-hypercontractive.
\begin{lemma}[Hypercontractivity of $\phi_t^\lds$ in \eqref{eqn:phi_t_LDS_BO_app}]\label{lemma:hyper_phi_lds_app}
    Let $\phi_t^\lds$ be as defined in \eqref{eqn:phi_t_LDS_BO_app}. Under Assumption~\ref{assump:input+noise}\,\emph{(\textbf{a})}, we have $\EE[(v^\top \phi_t^\lds)^4] \le 9\, \EE[(v^\top \phi_t^\lds)^2]^2$, for all $v \in \R^{d_u(d_u+1)(\tau-1)}$, and all $t \in [\tau+1,T]$. 
\end{lemma}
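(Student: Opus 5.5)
The plan is to exploit the tensor structure $\phi_t^\lds = \util_t \otimes \ubar_{t-1}$ together with the independence of $u_t$ and $\ubar_{t-1}=(u_{t-1},\dots,u_{t-\tau+1})$ under Assumption~\ref{assump:input+noise}\,(\textbf{a}). The target is to show that each factor is $(4,2,3)$-hypercontractive and that conditioning multiplies the constants, giving $3\cdot 3 = 9$.

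\emph{Step 1: rewrite the linear form.} Reshape $v \in \R^{d_u(d_u+1)(\tau-1)}$ into a matrix $V \in \R^{(d_u+1)\times d_u(\tau-1)}$ with $v^\top \phi_t^\lds = \util_t^\top V \ubar_{t-1}$.

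\emph{Step 2: hypercontractivity of the factors.} For $u\sim \mathrm{Unif}(\sqrt{d_u}\,\Scal^{d_u-1})$ the paper already notes $\EE[(b^\top u)^4]\le 3\,\EE[(b^\top u)^2]^2$, with $\EE[(b^\top u)^2]=\tn{b}^2$.
\begin{itemize}
\item For $\util=(1,u)$ and $a=(a_0,a_{1:})$, write $X=a_{1:}^\top u$. Since $u$ is symmetric, $X$ has vanishing odd moments. Hence
\begin{equation*}
\EE[(a_0+X)^4]=a_0^4+6a_0^2\EE X^2+\EE X^4\le a_0^4+6a_0^2 s+3s^2\le 3(a_0^2+s)^2, \qquad s=\EE X^2 .
\end{equation*}
The right-hand side equals $3\,\EE[(a^\top\util)^2]^2$, so $\util$ is $(4,2,3)$-hypercontractive.
\item For $\ubar$, a concatenation of independent zero-mean blocks, write $b^\top\ubar=\sum_i X_i$ with $X_i$ independent and mean zero. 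Then
\begin{equation*}
\EE(\textstyle\sum_i X_i)^4=\sum_i \EE X_i^4+6\sum_{i<j}\EE X_i^2\,\EE X_j^2\le 3\bigl(\sum_i \EE X_i^2\bigr)^2 .
\end{equation*}
So $\ubar$ is also $(4,2,3)$-hypercontractive.
\end{itemize}

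\emph{Step 3: combine by conditioning.} This is the main obstacle, since hypercontractivity of a tensor product is not automatic.
\begin{itemize}
\item Condition on $\ubar_{t-1}$ and set $a = V\ubar_{t-1}$. Applying Step 2 for $\util_t$, and using $\EE[\util\util^\top]=I$, gives
\begin{equation*}
\EE[(v^\top\phi_t^\lds)^4]\le 3\,\EE\bigl[\tn{V\ubar_{t-1}}^4\bigr].
\end{equation*}
The same conditioning gives $\EE[(v^\top\phi_t^\lds)^2]=\EE\tn{V\ubar_{t-1}}^2$.
\item It remains to show $\EE\tn{V\ubar}^4\le 3(\EE\tn{V\ubar}^2)^2$. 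Write $\tn{V\ubar}^2=\sum_i Y_i^2$ with $Y_i = V_{i,:}\ubar$. By Minkowski's inequality in $L^2$,
\begin{equation*}
\bigl\|\textstyle\sum_i Y_i^2\bigr\|_{L^2}\le \sum_i \sqrt{\EE Y_i^4}\le \sqrt3\sum_i \EE Y_i^2 ,
\end{equation*}
where the last step uses Step 2 for $\ubar$. Squaring gives the claim.
\end{itemize}

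Chaining the two displays in Step 3 yields $\EE[(v^\top\phi_t^\lds)^4]\le 9\,\EE[(v^\top\phi_t^\lds)^2]^2$ for all $v$ and all $t\in[\tau+1,T]$.
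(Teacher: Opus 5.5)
Your proof is correct and takes essentially the same route as the paper: reshape $v$ into a matrix, prove that both $\util_t$ and $\ubar_{t-1}$ are $(4,2,3)$-hypercontractive by the same moment expansions, and multiply the two constants by conditioning on one factor. The only differences are the order of conditioning and the final inequality (the paper conditions on $\util_t$ first and then applies Lemma~\ref{lemma:hyper_to_moment_app} to $\util_t$, whereas you condition on $\ubar_{t-1}$ first and use the equivalent $L^2$-Minkowski bound for $\ubar_{t-1}$), which are cosmetic.
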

\begin{proof}
    Let $v \in \R^{d_u(d_u+1)(\tau - 1)}$ be an arbitrary vector, and $V := \matop(v) \in \R^{ d_u(\tau-1) \times (d_u+1)}$ is the corresponding matrix. Then, the second moment of the random variable $v^\top \phi_t^\lds$ is given by,
    \begin{align}
        \E \left[(v^\top \phi_t^\lds)^2\right] = \E \left[\left((\util_t^\T \otimes \ubar_{t-1}^\T) v\right)^2\right]  &= \E \left[\left(\ubar_{t-1}^\T V \util_t\right)^2\right], \nn \\ 
        &= \E \left[\E \left[\left(\ubar_{t-1}^\T V \util_t\right)^2 \bgl \ubar_{t-1}\right] \right], \nn \\
        &= \E \left[ \ubar_{t-1}^\T V V^\T \ubar_{t-1}\right], \nn \\
        &= \tr \left( V V^\T \E[\ubar_{t-1}\ubar_{t-1}^\T]\right), \nn \\
        &= \tr \left( V V^\T\right), \nn \\
        &= \tf{V}^2. 
        \label{eqn:lds_hyper_second_mom_app}
    \end{align}
    Next, we will upper bound the fourth moment of the random variable $v^\top \phi_t^\lds$ as follows:
    \begin{align}
         \E \left[(v^\top \phi_t^\lds)^4\right] = \E \left[\left((\util_t^\T \otimes \ubar_{t-1}^\T) v\right)^4\right]  &= \E \left[\left(\ubar_{t-1}^\T V \util_t\right)^4\right], \nn \\ 
         &= \E \left[\left(\util_t^\T V^\T \ubar_{t-1}\right)^4\right], \nn \\
         &= \E \left[\E \left[\left(\util_t^\T V^\T \ubar_{t-1}\right)^4 \bgl \util_t\right] \right], \nn \\
         &\leqsym{i} 3 \E \left[ \tn{V \util_t}^4\right], \nn \\
         &= 3 \E \left[ (\util_t^\T V^\T V \util_t)^2\right], \nn \\
         & \leqsym{ii} 9 \tf{V}^4, 
        \label{eqn:lds_hyper_fourth_mom_app}
    \end{align}
where we get (i) from the observation that, for any fixed $q = [q_1^\T~~q_2^\T~~\cdots~~q_{\tau-1}^\T]^\T \in \R^{d_u(\tau - 1)}$, using Assumption~\ref{assump:input+noise}(\textbf{a}), we have
\begin{align}
    \E \left[ (q^\T \ubar_{t-1})^4\right] &= \E \left[ \left(\sum_{k=1}^{\tau-1}q_k^\T u_{t-k}\right)^4\right], \nn \\
    &= \sum_{k=1}^{\tau-1} \E \left[ (q_k^\T u_{t-k})^4\right] + 3\sum_{k=1}^{\tau-1}\sum_{\substack{\ell = 1 \\ \ell \neq k}}^{\tau-1}  \E \left[ (q_k^\T u_{t-k})^2\right]  \E \left[ (q_\ell^\T u_{t-\ell})^2\right], \nn \\
    &\leq \sum_{k=1}^{\tau-1} 3\tn{q_k}^4 + 3\sum_{k=1}^{\tau-1}\sum_{\substack{\ell = 1 \\ \ell \neq k}}^{\tau-1} \tn{q_k}^2 \tn{q_\ell}^2, \nn \\
    &= 3 \left(\sum_{k=1}^{\tau-1} \tn{q_k}^2\right)^2, \nn \\
    &= 3 \tn{q}^4.
\end{align}
Moreover, we get (ii) in \eqref{eqn:lds_hyper_fourth_mom_app} from the application of Lemma~\ref{lemma:hyper_to_moment_app} along-with the observation that $\util_t$ is $(4,2,3)$-hypercontractive as follows: for any fixed $b = [b_0~~b_1] \in \R^{d_u+1}$, using Assumption~\ref{assump:input+noise}(\textbf{a}), we have
\begin{align}
    \E \left[ (b^\T \util_t)^4\right] &= \E \left[ \left(b_0 + b_1^\T u_t\right)^4\right], \nn \\
    &= b_0 ^4 + 6 b_0^2 \E \left[(b_1^\T u_t)^2\right] + \E \left[(b_1^\T u
    _t)^4\right], \nn \\
    &\leq b_0^4 + 6 b_0^2\tn{b_1}^2 + 3 \tn{b_1}^4, \nn \\
    &\leq 3 \left(b_0^2 + \tn{b_1}^2\right)^2, \nn \\
    &= 3 \tn{b}^4, \nn \\
    &= 3 \E \left[ (b^\T \util_t)^2\right]^2.
\end{align}
Finally, combining \eqref{eqn:lds_hyper_second_mom_app} and \eqref{eqn:lds_hyper_fourth_mom_app}, we get the statement of Lemma~\ref{lemma:hyper_phi_lds_app} as follows:
\begin{align}
    \E \left[(v^\top \phi_t^\lds)^4\right] \leq 9 \tf{V}^4 = 9\, \E \left[(v^\top \phi_t^\lds)^2\right]^2.
\end{align}
This completes the proof.
\end{proof}

\subsubsection{Finalizing the Proof of Proposition~\ref{prop:persistence_LDS_BO_app}}
We are now ready to apply Theorem~\ref{thm:persistence} to the covariates $\lbrace \phi_t^\lds \rbrace_{t =\tau}^T$ in \eqref{eqn:phi_t_LDS_BO_app}. Specifically, applying Theorem~\ref{thm:persistence} with $\gamma = 9$, $d_\phi = d_u(d_u+1)(\tau-1)$, and using a $1/(4 d_{\phi})$-net covering argument (see Remark~\ref{remark:covering_when_bounded_app}), we get the statement of Proposition~\ref{prop:persistence_LDS_BO_app}.

\subsection{Persistence of Excitation for Learning~\eqref{eqn:BDS-FO}}
In this subsection, we will derive the persistence of excitation guarantees for learning fully observed bilinear dynamical systems, governed by state-space equation \eqref{eqn:BDS-FO}. The covariates for learning \eqref{eqn:BDS-FO} are constructed as follows:
\begin{equation}
\begin{aligned} \label{eqn:phi_t_BDS_app}
   \phi_t^\bds =  \begin{bmatrix} u_{t} \\
    \util_{t} \otimes x_{t}\end{bmatrix} = \begin{bmatrix} u_{t} \\ x_t \\
    u_t \otimes x_{t}\end{bmatrix}. 
\end{aligned}
\end{equation}
Before we state our main result on lower bounding the minimum eigenvalue of $\sum_{t=1}^T \phi_t^\bds \phi_t^{\bds\T}$, we need to define a few quantities that are used in the proofs of our main results related to the dynamical system~\eqref{eqn:BDS-FO}. Recall that, we can unroll the state dynamics in \eqref{eqn:BDS-FO} to write for all $t \ge 0$
\begin{equation}
\begin{aligned}\label{eqn:bilinear sys state app}
        x_{t} & = \sum_{\ell = 0}^{t-1} \left(\prod_{k = 0}^{\ell-1} (u_{t-k-1} \circ A)\right) \left(B u_{t-\ell-1}  +  w_{t-\ell-1}\right), \\ 
        & = \bar{G} \,\begin{bmatrix}
    u_{t-1} \\ 
    \util_{t-1} \otimes u_{t-2} \\ 
    \util_{t-1} \otimes \util_{t-2} \otimes u_{t-3} \\ \vdots \\ 
    \util_{t-1} \otimes \util_{t-2} \otimes \cdots \otimes u_{t-\tau+1} 
    \end{bmatrix} + \bar{F} \,\begin{bmatrix}
    w_{t-1} \\ 
    \util_{t-1} \otimes w_{t-2} \\ 
    \util_{t-1} \otimes \util_{t-2} \otimes w_{t-3} \\ \vdots \\ 
    \util_{t-1} \otimes \util_{t-2} \otimes \cdots \otimes w_{t-\tau+1} 
    \end{bmatrix} \\
    & \quad + \left( \prod_{\ell = 1}^{\tau-1} (u_{t-\ell}\circ A )  \right) x_{t-\tau+1}, 
\end{aligned}
\end{equation}
where the matrices $\bar{G}$ and $\bar{F}$ are given by
\begin{equation}
\begin{aligned}
\bar{G} &:= \begin{bmatrix} \bar{G}_1 & \bar{G}_2 & \cdots & \bar{G}_{\tau-1} \end{bmatrix}, \quad \quad 
\bar{F} := \begin{bmatrix} \bar{F}_1 & \bar{F}_2 & \cdots & \bar{F}_{\tau-1} \end{bmatrix},
\end{aligned}
\end{equation}
where we define $\bar{G}_1 := B $, $\bar{G}_{\ell} := \lbrace A_{i_1} A_{i_2} \cdots A_{i_{\ell-1}} B  \rbrace_{i_1,i_2 \dots, i_{\ell -1}  \in \lbrace 0, \dots, d_u\rbrace} $ for $\ell \in \lbrace 2, \dots, t \rbrace$. 
$\{\bar{G}_\ell\}_{\ell=1}^{t}$ contains $\lbrace B, \lbrace A_{i_1} B \rbrace_{i_1 \in \lbrace 0, \dots, d_u \rbrace},$ $ \dots, \lbrace A_{i_1} A_{i_2}  \cdots A_{i_{t-1}} B \rbrace_{i_1, i_2 \dots, i_{t - 1}  \in \lbrace 0, \dots, d_u\rbrace} \rbrace $. The matrices $\{\bar{F}_\ell\}_{\ell=1}^{t}$ are defined similarly, with $B$ replaced by $I_{d_x}$. Hence, instead of unrolling $x_t$ till $x_{t-\tau+1}$, if we unroll it till $x_0 = 0$, then we can compute the covariance of $x_t$ as follows:
\begin{align}
    \Gamma_x^{(t)} := \E[x_t x_t^\T]  = \sum_{\ell = 1}^{t} \bar{G}_\ell \bar{G}_\ell^\top + \sigma^2 \sum_{\ell = 1}^{t} \bar{F}_\ell \bar{F}_\ell^\top. \label{eqn:BDS_covariance_app}
\end{align}
where we get \eqref{eqn:BDS_covariance_app} from the fact that the nonlinear vectors getting multiplied by $\bar{G}$ and $\bar{F}$ in the expansion of $x_t$ are independent, zero-mean, and isotropic (as shown in the Proof of Lemma 12 in \cite{sattar2025finite_arXiv}). Moreover, under Assumptions~\ref{assump:input+noise} and \ref{assump:stability}, for all $t \in [1,T]$, we have
\begin{align}
    \tn{x_t} \leq \frac{\kappa\sqrt{\sigma^2 + \opn{B}^2}}{1-\rho} \sqrt{d_x + \log(T/\delta)} =: \beta_{x,\delta} \label{eqn:BDS_state_norm_app}
\end{align}
with probability at least $1-\delta$ (see Lemma~\ref{lemma:truncation_bias_phi_bds_app}). With these definitions, we are now ready to state out main result on the persistence of excitations for learning \eqref{eqn:BDS-FO}.

\begin{proposition}[Persistence of Excitation for Learning~\eqref{eqn:BDS-FO}]\label{prop:persistence_BDS_FO_app}
Given a single input-state trajectory $\{(u_t,x_t)\}_{t=0}^T$ of the system~\eqref{eqn:BDS-FO}, construct $\{\phi_t^\bds\}_{t=1}^T$ according to~\eqref{eqn:phi_t_BDS_app}. Suppose Assumptions~\ref{assump:input+noise} and \ref{assump:stability} hold. Suppose $\E[w_t w_t^\T] = I_{d_x}$. Let $\Gamma_x^{(t)}$ be as defined in \eqref{eqn:BDS_covariance_app}, and $\beta_{x,\delta}$ be as defined in \eqref{eqn:BDS_state_norm_app}. Then, for all $\delta \in (0,1)$, the event:
\begin{align}
    \lambda_{\min}\left( \sum_{t=1}^{T} \phi_t^\bds \phi_t^{\bds\T} \right) \geq \frac{T(1 \land \lambda_{\min}(\Gamma_x^{(\tau-1)}))}{8},
\end{align}
holds with probability at least $1- \delta$, provided that 
\begin{align}
   T &\gtrsim \tau\gamma\left (\log\left(\frac{2\tau}{\delta}\right) + (d_u + (d_u+1)d_x) \log\left(d_u + (d_u+1)\beta_{x,\delta}^2\right)\right),  \\
   \tau &\gtrsim 1+ \log \left( \frac{\kappa \left( \sqrt{d_u(d_u+1)}\beta_{x,\delta} + (d_u+1)\beta_{x,\delta}^2\right)}{(1 \land \lambda_{\min}(\Gamma_x^{(\tau-1)}))}\right) \bigg/ (1-\rho).
\end{align}
\end{proposition}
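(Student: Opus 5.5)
The plan is to reduce the problem to Theorem~\ref{thm:persistence}, applied to a truncated surrogate covariate that is genuinely block-dependent. We then control the truncation error deterministically on a high-probability event.

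Using the unrolled expansion \eqref{eqn:bilinear sys state app}, split the state as $x_t = \bar x_t + e_t$. Here $\bar x_t$ collects the $\bar G$ and $\bar F$ terms, which depend only on $(u_{t-1},\dots,u_{t-\tau+1})$ and $(w_{t-1},\dots,w_{t-\tau+1})$. The remainder is $e_t := \big(\prod_{\ell=1}^{\tau-1}(u_{t-\ell}\circ A)\big)x_{t-\tau+1}$. Correspondingly, write $\phi_t = \bar\phi_t + \Delta_t$, where
\[
\bar\phi_t := \begin{bmatrix} u_t \\ \util_t\otimes \bar x_t\end{bmatrix}
\quad\text{and}\quad
\Delta_t := \begin{bmatrix} 0 \\ \util_t\otimes e_t\end{bmatrix}.
\]

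\textbf{Properties of the surrogate.} The surrogate $\bar\phi_t$ is a fixed function of a length-$\tau$ window of the i.i.d.\ sequences $\{u_t\}$ and $\{w_t\}$, so it is block-dependent in the sense of Definition~\ref{def:blockdepend}. It is zero mean, since $u_t$ is independent of $\bar x_t$ and odd terms vanish. Its covariance is $\Sigma_{\bar\phi} = \diag(I_{d_u}, I_{d_u+1}\otimes \Gamma_x^{(\tau-1)})$; this follows from the isotropy and independence of the nonlinear feature vectors (as in \cite{sattar2025finite_arXiv}) together with $\E[\util_t\util_t^\T]=I$. Hence $\lambda_{\min}(\Sigma_{\bar\phi}) = 1\land\lambda_{\min}(\Gamma_x^{(\tau-1)})$.

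\textbf{Hypercontractivity.} I would show $\bar\phi_t$ is $(4,2,\gamma)$-hypercontractive, with $\gamma$ of the order of $\bar\gamma$, mimicking Lemma~\ref{lemma:hyper_phi_lds_app}. For $v=(v_1,V)$, write $v^\T\bar\phi_t = v_1^\T u_t + \util_t^\T V^\T \bar x_t$. Condition on $\bar x_t$ and use the $(4,2,3)$-hypercontractivity of $\util_t$ to reduce the fourth moment to $\E\|\cdot\|^4$ of a linear function of $\bar x_t$. Then bound the fourth moments of $\bar x_t$ using uniform stability. Because the inputs lie on the sphere, every product of the $u\circ A$ is bounded by $\kappa\rho^k$, and the noise is sub-Gaussian; together these give $\E(q^\T \bar x_t)^4 \lesssim \big(\kappa^2(\sigma+\|B\|)^2/(1-\rho)^2\big)^2\|q\|^4$. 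Dividing by the second moment, which is at least $\lambda_{\min}(\Gamma_x^{(\tau-1)})\|q\|^2$, produces the stated $\bar\gamma$.

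\textbf{Applying Theorem~\ref{thm:persistence}.} Apply Theorem~\ref{thm:persistence} in its non-isotropic form (Remark~\ref{remark:persistence}), together with the bounded-norm covering of Remark~\ref{remark:covering_when_bounded_app}. On the event of \eqref{eqn:BDS_state_norm_app}, $\|\bar\phi_t\|^2\lesssim d_u+(d_u+1)\beta_{x,\delta}^2$. This gives $\lambda_{\min}(\sum_t\bar\phi_t\bar\phi_t^\T)\ge T\lambda_{\min}(\Sigma_{\bar\phi})/4$ under the stated condition on $T$.

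\textbf{Truncation perturbation (main obstacle).} It remains to show $\big\|\sum_t(\phi_t\phi_t^\T - \bar\phi_t\bar\phi_t^\T)\big\|_\op \le T\lambda_{\min}(\Sigma_{\bar\phi})/8$, which yields the factor $1/8$ in the statement. Since $\phi\phi^\T-\bar\phi\bar\phi^\T = \bar\phi\Delta^\T+\Delta\bar\phi^\T+\Delta\Delta^\T$, it suffices to bound each $t$ deterministically on the high-probability event. Uniform stability gives $\|e_t\|\le\kappa\rho^{\tau-1}\beta_{x,\delta}$, hence $\|\Delta_t\|\le\sqrt{d_u+1}\,\kappa\rho^{\tau-1}\beta_{x,\delta}$. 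Then
\[
\|\phi_t\phi_t^\T-\bar\phi_t\bar\phi_t^\T\|_\op \lesssim \kappa\rho^{\tau-1}\big(\sqrt{d_u(d_u+1)}\beta_{x,\delta}+(d_u+1)\beta_{x,\delta}^2\big).
\]
Requiring this to be at most $(1\land\lambda_{\min}(\Gamma_x^{(\tau-1)}))/8$ and using $\log(1/\rho)\ge 1-\rho$ gives exactly the stated condition on $\tau$.

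The delicate points are twofold. First, all events must be intersected, with $\delta$ split between them. Second, the hypercontractivity must be verified for products of the $u\circ A$ factors, where the conditioning argument has to be iterated across the nested Kronecker structure. I expect this iterated conditioning to be the hardest part of the proof.
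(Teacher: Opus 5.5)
Your proposal is correct and follows the paper's proof essentially step for step. The paper also uses truncated surrogate covariates $\phi_{t,\tau}$ that are block-dependent with covariance $\diag(I_{d_u}, I_{d_u+1}\otimes\Gamma_x^{(\tau-1)})$, gets hypercontractivity with $\gamma\approx\bar\gamma$ from stability-based fourth-moment bounds, and applies Theorem~\ref{thm:persistence} through the non-isotropic and bounded-norm-covering remarks to reach $T(1\wedge\lambda_{\min}(\Gamma_x^{(\tau-1)}))/4$. It then uses a deterministic truncation bound on the event $\|x_t\|\le\beta_{x,\delta}$ together with Weyl's inequality to reach $/8$. The step you flag as hardest (iterated conditioning across the nested Kronecker structure) is not needed here, since, as in Lemma~\ref{lemma:hyper_phi_bds_app}, Minkowski's inequality over the unrolled sum for $x_{t,\tau}$ already gives the fourth-moment bound: condition each term on its product $\prod(u\circ A)$, which is independent of $Bu_{t-\ell}+w_{t-\ell}$, and use the deterministic bound $\kappa\rho^{\ell-1}$.
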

Note that $\{\phi_t^\bds\}_{t=1}^T$ does not have a block-dependence structure because of $x_t$ which gives rise to temporal dependence in $\{\phi_t^\bds\}_{t=1}^T$. Hence, we cannot directly apply Theorem~\ref{thm:persistence} to derive the persistence of excitation guarantees for the covariates $\{\phi_t^\bds\}_{t=1}^T$. Hence, we first analyze a surrogate covariate process $\{\phi_{t,\tau}^\bds\}_{t=1}^T$ (defined below), and use it to derive the persistence of excitation for the original covariate process $\{\phi_t^\bds\}_{t=1}^T$.
\begin{definition}[Truncated state vector $x_{t,\tau}$ and covariate $\phi_{t,\tau}^\bds$~\citep{sattar2022non}]\label{def:truncated vector app}
	Consider the state equation \eqref{eqn:BDS-FO}. Suppose $x_0 = 0$. Given, $t \geq \tau >0$, for each state $x_t$, we define its fictional proxy $x_{t,\tau}$ by resetting $x_{t-\tau+1} = 0$ but preserving the inputs $u_\ell$ and noise $w_\ell$ from $t-\tau+1$ to $t-1$. Alternately, $x_{t,\tau}$ is obtained by driving the system~\eqref{eqn:BDS-FO} with inputs $u_{\ell}'$ and additive noise $w_{\ell}'$ until time $t-1$, where
	\begin{align}
		u_{\ell}' = \begin{cases} {0} \quad \text{if} \quad \ell < t-\tau+1 \\
			u_{\ell} \;\; \text{else}
		\end{cases}, \quad\quad \text{and} \quad \quad \quad
		w_{\ell}' = \begin{cases} {0} \quad \text{if} \quad \ell < t-\tau+1 \\
			w_{\ell} \;\; \text{else}
		\end{cases}.
	\end{align}
 	We call the obtained state $x_{t,\tau}$ as the $\tau$-truncated (or simply truncated) state at time $t$. Similarly, we define the $\tau$-truncated covariance $\phi_{t,\tau}^\bds$ as follows:
    \begin{align}
    \phi_{t,\tau}^\bds =  \begin{bmatrix} u_{t} \\
    \util_{t} \otimes x_{t,\tau}\end{bmatrix} = \begin{bmatrix} u_{t} \\ x_{t,\tau} \\
    u_{t} \otimes x_{t,\tau}\end{bmatrix}. \label{eqn:truncated_covariates_def}
    \end{align}  
\end{definition}
To proceed, we first guarantee persistence of excitation with the truncated covariates  $\{\phi_{t,\tau}^\bds\}_{t=1}^T$, and later combine it with an upper bound on the truncation error to obtain a lower bound on $\lambda_{\min}\big( \sum_{t=1}^{T} \phi_t^{\bds} (\phi_t^{\bds})^\T \big)$.

\subsubsection{Stability Implies Hypercontractivity}
From \eqref{eqn:bilinear sys state app}, note that the truncated states $x_{t,\tau}$ is given by,
\begin{align}
    x_{t,\tau} = \bar{G} \,\begin{bmatrix}
    u_{t-1} \\ 
    \util_{t-1} \otimes u_{t-2} \\ 
    \util_{t-1} \otimes \util_{t-2} \otimes u_{t-3} \\ \vdots \\ 
    \util_{t-1} \otimes \util_{t-2} \otimes \cdots \otimes u_{t-\tau+1} 
    \end{bmatrix} + \bar{F} \,\begin{bmatrix}
    w_{t-1} \\ 
    \util_{t-1} \otimes w_{t-2} \\ 
    \util_{t-1} \otimes \util_{t-2} \otimes w_{t-3} \\ \vdots \\ 
    \util_{t-1} \otimes \util_{t-2} \otimes \cdots \otimes w_{t-\tau+1} 
    \end{bmatrix}. \label{eqn:truncated_states_BDS_app}
\end{align}
Combining this with Assumption~\ref{assump:input+noise} and $\E[w_t w_t^\T] = \sigma^2 I_{d_x}$, we have $\E[x_{t,\tau}] = 0$, and
\begin{align}
    \E[x_{t,\tau} x_{t,\tau}^\T]  = \sum_{\ell = 1}^{\tau - 1} \bar{G}_\ell \bar{G}_\ell^\top + \sigma^2 \sum_{\ell = 1}^{\tau-1} \bar{F}_\ell \bar{F}_\ell^\top = \Gamma_x^{(\tau - 1)}. \label{eqn:cov_truncated_states_BDS_app}
\end{align}
From \eqref{eqn:truncated_states_BDS_app} and \eqref{eqn:cov_truncated_states_BDS_app}, it is clear that the truncated covariates $\{\phi_{t,\tau}^{\bds}\}_{t \geq 1}$ has a block-dependent structure with block-length $\tau$. Moreover, $ \E \left[ \phi_{t,\tau}^{\bds}\right] = 0$, and
\begin{align}
    \E \left[ \phi_{t,\tau}^{\bds}\phi_{t,\tau}^{\bds\T}\right] &= \E \left[ \begin{bmatrix} u_{t} \\ x_{t,\tau} \\
    u_{t} \otimes x_{t,\tau}\end{bmatrix} \begin{bmatrix} u_{t}^\T & x_{t,\tau}^\T &
    u_{t}^\T \otimes x_{t,\tau}^\T \end{bmatrix}\right], \nn \\
    &=  \begin{bmatrix}
        I_{d_u} & 0 & 0 \\
        0 & \Gamma_x^{(\tau-1)} & 0 \\
        0 & 0 & I_{d_u} \otimes \Gamma_x^{(\tau-1)}
    \end{bmatrix} =: \Gamma_{\phi}^{(\tau-1)}, \label{eqn:Gamma_phi_def_app}
\end{align}
which is non-isotropic. However, as noted in Remark \ref{remark:persistence}, we can scale $x_{t,\tau}$ by $(\Gamma_x^{(\tau-1)})^{-1/2}$ to make it isotropic. We are now ready to show that the scaled truncated covariates $\{\phi_{t,\tau}^{'\bds} := (\Gamma_{\phi}^{(\tau-1)})^{-1/2}\phi_{t,\tau}^{\bds}\}_{t \geq 1}$ is $(4,2,\bar\gamma)$-hypercontractive for some $\bar\gamma >0$.

\begin{lemma}[Hypercontractivity of $\phi_{t,\tau}^{'\bds}$]\label{lemma:hyper_phi_bds_app}
    Let $\phi_{t,\tau}^{\bds}$ be as in \eqref{eqn:truncated_covariates_def} and $\phi_{t,\tau}^{'\bds} := (\Gamma_{\phi}^{(\tau-1)})^{-1/2}\phi_{t,\tau}^{\bds}$, where $\Gamma_{\phi}^{(\tau-1)}$ is as defined in \eqref{eqn:Gamma_phi_def_app}. Under Assumptions~\ref{assump:input+noise}, \ref{assump:stability}, and $\E[w_t w_t^\T] = \sigma^2 I_{d_x}$, we have $\EE[(v^\top\phi_{t,\tau}^{'\bds})^4] \le \bar\gamma\, \EE[(v^\top\phi_{t,\tau}^{'\bds})^2]^2$, for all $v \in \R^{d_u+(d_u+1)d_x}$, $t \in [1,T]$, and $\tau > 0$, where $\bar \gamma := \bigg(\sqrt{3} + (3+ \sqrt{3}) \frac{\kappa^2(\sigma+\norm{B}_\op)^2}{(1-\rho)^2\lambda_{\min}\left(\Gamma_x^{(\tau-1)}\right)} \bigg)^2$. 
\end{lemma}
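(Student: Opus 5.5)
We will prove the bound in $L^4$-norm form. Write $\|X\|_{L^p}:=\E[|X|^p]^{1/p}$. It suffices to show, for every fixed $v$, that
\[
\|v^\T\phi_{t,\tau}^{'}\|_{L^4}\le \sqrt{\bar\gamma}^{1/2}\,\|v^\T\phi_{t,\tau}^{'}\|_{L^2},
\qquad\text{equivalently}\qquad
\|v^\T\phi_{t,\tau}^{'}\|_{L^4}^2\le \Big(\sqrt3+(3+\sqrt3)\tfrac{\kappa^2(\sigma+\|B\|_\op)^2}{(1-\rho)^2\lambda_{\min}(\Gamma_x^{(\tau-1)})}\Big)\|v^\T\phi_{t,\tau}^{'}\|_{L^2}^2 .
\]
Since $\phi'$ is isotropic, $\|v^\T\phi'\|_{L^2}^2=\|v\|^2$. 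Write $\bar v=(\Gamma_\phi^{(\tau-1)})^{-1/2}v$ and split it according to the block structure of \eqref{eqn:Gamma_phi_def_app} as $\bar v=(a,b,\mathrm{vec}(M))$. Then
\[
v^\T\phi' \;=\; a^\T u_t+\tilde u_t^\T W x_{t,\tau},
\]
where $W$ is the matrix collecting $b$ and $M$. Only the $x_{t,\tau}$-dependent part needs care. Because $u_t$ is independent of $x_{t,\tau}$ (the latter depends only on $u_{t-\tau+1:t-1}$ and $w_{t-\tau+1:t-1}$), we can condition on $x_{t,\tau}$ and use the $(4,2,3)$-hypercontractivity of $\tilde u_t$ and of $u_t$, established in the proof of Lemma~\ref{lemma:hyper_phi_lds_app}. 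This gives
\[
\E\big[(v^\T\phi')^4\mid x_{t,\tau}\big]\le 3\,\E\big[(v^\T\phi')^2\mid x_{t,\tau}\big]^2,
\]
with the conditional second moment equal to $\|a\|^2+\|W x_{t,\tau}\|^2$. Taking expectations and applying Minkowski in $L^2$ yields
\[
\|v^\T\phi'\|_{L^4}^2\le \sqrt3\big(\|a\|^2+\|\,\|Wx_{t,\tau}\|^2\|_{L^2}\big).
\]
This reduces the problem to bounding $\E\|Wx_{t,\tau}\|^4$ against $(\E\|Wx_{t,\tau}\|^2)^2$, where $\E\|Wx_{t,\tau}\|^2=\tr(W\Gamma_x^{(\tau-1)}W^\T)$, i.e.\ to a hypercontractivity statement for the whitened truncated state.

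For that step we use the expansion \eqref{eqn:truncated_states_BDS_app}. Write
\[
x_{t,\tau}=\sum_{\ell=1}^{\tau-1}\Pi_\ell\,(Bu_{t-\ell}+w_{t-\ell}),
\]
where $\Pi_\ell$ is the input-dependent product $\prod_{k=1}^{\ell-1}(u_{t-k}\circ A)$, so that $\|\Pi_\ell\|_\op\le\kappa\rho^{\ell-1}$ almost surely by Assumption~\ref{assump:stability}. Minkowski in $L^4$ gives
\[
\|\,\|Wx_{t,\tau}\|\,\|_{L^4}\le \|W\|_\op\sum_\ell \kappa\rho^{\ell-1}\,\big\|\,\|Bu_{t-\ell}+w_{t-\ell}\|\,\big\|_{L^4}.
\]
Each term is of order $(\sigma+\|B\|_\op)$ times a dimension-free constant, using $\|u\|=\sqrt{d_u}$ and the sub-Gaussian fourth moment of $w$ (here one should work in the direction-wise form $\E[(q^\T\cdot)^4]$ rather than with the Euclidean norm, to avoid dimension factors). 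Summing the geometric series gives
\[
\|Wx_{t,\tau}\|_{L^4}^2\lesssim \frac{\kappa^2(\sigma+\|B\|_\op)^2}{(1-\rho)^2}\,\|W\|_F^2 .
\]
Finally, $\|W\|_F^2\le \tr(W\Gamma_x W^\T)/\lambda_{\min}(\Gamma_x^{(\tau-1)})$, so the $x_{t,\tau}$-part has $L^4$-norm squared at most $C\frac{\kappa^2(\sigma+\|B\|)^2}{(1-\rho)^2\lambda_{\min}}$ times its $L^2$-norm squared. Combining this with the $\sqrt3$ factor from the conditioning step, and using $\|a\|^2+\tr(W\Gamma_xW^\T)=\|v\|^2$, gives the stated constant $\sqrt3+(3+\sqrt3)\frac{\kappa^2(\sigma+\|B\|)^2}{(1-\rho)^2\lambda_{\min}}$; squaring then produces $\bar\gamma$.

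\textbf{Main obstacle.} The hard step is obtaining a dimension-free fourth-moment bound for $Wx_{t,\tau}$. The products $\Pi_\ell$ are random and correlated with the later inputs, so the summands are not independent. The remedy is Minkowski's inequality with the almost-sure operator-norm bound on $\Pi_\ell$, conditioning on the $u$'s so that each $w_{t-\ell}$ is independent of $\Pi_\ell$. Controlling $\|Bu+w\|$ directionally through $W$ (via Frobenius norms and conditional sub-Gaussianity) is what keeps the constant free of $d_x$.
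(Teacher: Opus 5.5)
Your strategy (condition on $x_{t,\tau}$, apply the $(4,2,3)$-hypercontractivity of $\tilde u_t$ to the whole linear form, then bound the fourth moment of $Wx_{t,\tau}$) works and differs from the paper's route. However, two steps fail as written.

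\textbf{The cross term.} Since $v^\top\phi'_{t,\tau}=\tilde u_t^\top(a'+Wx_{t,\tau})$ with $a'=(0,a)\in\mathbb{R}^{d_u+1}$, the conditional second moment is $\|a'+Wx_{t,\tau}\|^2=\|a\|^2+\|Wx_{t,\tau}\|^2+2a^\top Mx_{t,\tau}$, not $\|a\|^2+\|Wx_{t,\tau}\|^2$. The cross term has mean zero. But you take the $L^2$-norm of the conditional second moment, so Minkowski produces the extra term $2\|a^\top Mx_{t,\tau}\|_{L^2}$. Hence your bound $\sqrt3(\|a\|^2+\|\,\|Wx_{t,\tau}\|^2\|_{L^2})$ does not follow. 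The fix is not to expand the square: $\|v^\top\phi'_{t,\tau}\|_{L^4}\le 3^{1/4}\|\,\|a'+Wx_{t,\tau}\|\,\|_{L^4}\le 3^{1/4}(\|a\|+\|\,\|Wx_{t,\tau}\|\,\|_{L^4})$. If $\|\,\|Wx_{t,\tau}\|\,\|_{L^4}^2\le D s^2$ with $s^2:=\tr(W\Gamma_x^{(\tau-1)}W^\top)$, Cauchy--Schwarz gives $\|v^\top\phi'_{t,\tau}\|_{L^4}^2\le\sqrt3(1+D)(\|a\|^2+s^2)=\sqrt3(1+D)\|v\|^2$.

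\textbf{Dimension dependence.} Your displayed Minkowski step with $\|W\|_{\op}$ and $\|\,\|Bu_{t-\ell}+w_{t-\ell}\|\,\|_{L^4}$ is not dimension-free, contrary to the sentence after it. We have $\|\,\|Bu+w\|\,\|_{L^4}\ge(\|B\|_F^2+\sigma^2 d_x)^{1/2}$, and $\|W\|_{\op}\sqrt{d_x}$ is not $O(\|W\|_F)$: take $a=0$, $M=0$, so $W$ has the single row $b^\top$. Followed literally, this step puts a factor $d_x$ into $D$.

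Your direction-wise remark has to replace that step, not supplement it. Write $\|\,\|Wx_{t,\tau}\|^2\|_{L^2}\le\sum_i\|W_i^\top x_{t,\tau}\|_{L^4}^2$ over the rows $W_i$. For each fixed $q$, use Minkowski over $\ell$ and condition on $\Pi_\ell$, which is a function of $u_{t-1},\dots,u_{t-\ell+1}$ and hence independent of $(u_{t-\ell},w_{t-\ell})$. This gives $\|q^\top x_{t,\tau}\|_{L^4}\le 3^{1/4}\kappa(\sigma+\|B\|_{\op})\|q\|/(1-\rho)$.

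That bound needs $\E[(c^\top u)^4]\le 3\|c\|^4$ and $\E[(c^\top w)^4]\le 3\sigma^4\|c\|^4$. The latter is not implied by sub-Gaussianity alone. It holds here because $\E[ww^\top]=\sigma^2 I_{d_x}$ matches the variance proxy: comparing the fourth-order Taylor coefficients of $\E e^{s c^\top w}$ and $e^{\sigma^2 s^2\|c\|^2/2}$ at $s=0$ forces it. With $\|W\|_F^2\le s^2/\lambda_{\min}(\Gamma_x^{(\tau-1)})$ you get $D=\sqrt3K$, where $K:=\kappa^2(\sigma+\|B\|_{\op})^2/((1-\rho)^2\lambda_{\min}(\Gamma_x^{(\tau-1)}))$. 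The final constant is then $\sqrt3+3K\le\sqrt3+(3+\sqrt3)K$. Your ``$\lesssim$'' alone does not justify landing on $\bar\gamma$.

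\textbf{Comparison.} The paper never conditions the full form on $x_{t,\tau}$. It splits $v=(v_1,v_2,v_3)$, applies Minkowski to $v_1^\top u_t$, $v_2^\top x'_{t,\tau}$ and $u_t^\top V_3x'_{t,\tau}$ separately, and recombines with Cauchy--Schwarz. This sidesteps the cross term but pays for the $v_2$ and $v_3$ blocks separately ($\sqrt3K$ and $3K$), giving $(3+\sqrt3)K$. Your conditioning merges those two blocks through $\tilde u_t$ and, once repaired, gives the slightly sharper constant $\sqrt3+3K$.
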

\begin{proof}
Throughout this proof, for notational convenience, we use $\Gamma_x$ instead of $\Gamma_x^{(\tau-1)}$ to denote the covariance of $x_{t,\tau}$. To begin, consider the isotropic covariates as follows: 
\begin{align}
    \phi_{t,\tau}^{'\bds}  = \begin{bmatrix} u_{t} \\ \Gamma_x^{-1/2} x_{t,\tau} \\
    u_{t} \otimes \Gamma_x^{-1/2} x_{t,\tau} \end{bmatrix} =: 
    \begin{bmatrix} u_{t} \\ x_{t,\tau}' \\
    u_{t} \otimes  x_{t,\tau}' \end{bmatrix}. \label{eqn:isotropic_truncated_covariates_def}
\end{align}
 Let $v = [v_1^\T~~v_2^\T~~v_3^\T]^\T \in \R^{d_u+ (d_u+1)d_x}$ be an arbitrary vector. Then, the second moment of the random variable $v^\top \phi_{t,\tau}^{'\bds}$ is given by,
\begin{align}
   \E [(v^\T \phi_{t,\tau}^{'\bds})^2] = \E \left[v^\T \phi_{t,\tau}^{'\bds} (\phi_{t,\tau}^{'\bds})^\T v\right] = \tn{v}^2. \label{eqn:second_moment_bds_app}
\end{align}
 Next, we will upper bound the fourth moment of the random variable $v^\top \phi_{t,\tau}^{'\bds}$ as follows:
\begin{align}
    \E [( v^\top \phi_{t,\tau}^{'\bds})^4]^{1/4}  &= \E\left[ \left(v_1^\T u_t + v_2^\T x_{t,\tau}' + v_3^\T (u_{t} \otimes x_{t,\tau}') \right)^4 \right]^{1/4}, \nn \\
    &= \E\left[ \left(v_1^\T u_t + v_2^\T x_{t,\tau}' + u_{t}^\T V_3 x_{t,\tau}' \right)^4 \right]^{1/4}, \nn \\
    & \leqsym{a}  \E\left[ \left(v_1^\T u_t\right)^4 \right]^{1/4} +  \E\left[ \left(v_2^\T x_{t,\tau}'\right)^4 \right]^{1/4} +  \E\left[ \left(u_{t}^\T V_3 x_{t,\tau}' \right)^4 \right]^{1/4}, \label{eqn:fourth_moment_bds_app_v1}
\end{align}
where we get (a) from using Minkowski's inequality. In the following, we will upper bound each term in \eqref{eqn:fourth_moment_bds_app_v1} separately. The first term can be easily upper bounded as follows:
\begin{align}
    \E\left[ \left(v_1^\T u_t\right)^4 \right]^{1/4} \leq \left(\frac{3d_u}{d_u+2}\right)^{1/4} \tn{v_1} \leq (3)^{1/4} \tn{v_1}. \label{eqn:firt_term_bds_app}
\end{align}
Next, using Minkowski's inequality along-with Assumptions~\ref{assump:input+noise} and \ref{assump:stability}, the second term in \eqref{eqn:fourth_moment_bds_app_v1} can be upper bounded as follows:
\begin{align}
    &\E\left[ \left(v_2^\T x_{t,\tau}'\right)^4 \right]^{1/4} \nn \\
    &=  \E\left[ \left(v_2^\T \Gamma_{x}^{-1/2}\sum_{\ell = 1}^{\tau-1} \left(\prod_{k = 0}^{\ell-1} (u_{t-k} \circ A)\right) B u_{t-\ell} + v_2^\T\Gamma_{x}^{-1/2} \sum_{\ell = 1}^{\tau-1} \left(\prod_{k = 0}^{\ell-1} (u_{t-k} \circ A)\right)  w_{t-\ell}\right)^4 \right]^{1/4}, \nn \\
    &\leqsym{b} \sum_{\ell= 1}^{\tau-1} \E\left[ \left(v_2^\T \Gamma_{x}^{-1/2} \left(\prod_{k = 0}^{\ell-1} (u_{t-k} \circ A)\right) B u_{t-\ell}\right)^4 \right]^{1/4}, \nn \\
    &+ \sum_{\ell= 1}^{\tau-1} \E\left[ \left( v_2^\T\Gamma_{x}^{-1/2} \left(\prod_{k = 0}^{\ell-1} (u_{t-k} \circ A)\right)  w_{t-\ell}\right)^4 \right]^{1/4}, \nn \\
    &\leq  \sum_{\ell= 1}^{\tau - 1} (3)^{1/4} \bignorm{v_2^\T \Gamma_{x}^{-1/2} \left(\prod_{k = 0}^{\ell-1} (u_{t-k} \circ A)\right) B}_{\ell_2} + \sum_{\ell= 1}^{\tau-1} (3)^{1/4} \sigma \bignorm{v_2^\T \Gamma_{\xtil}^{-1/2} \left(\prod_{k = 0}^{\ell-1} (u_{t-k} \circ A)\right)}_{\ell_2}, \nn \\
    & \leq (3)^{1/4}\tn{v_2} \bignorm{\Gamma_{x}^{-1/2}}_{\op}(\sigma+\norm{B}_\op)\sum_{\ell = 1}^{\tau-1}\bignorm{\prod_{k = 0}^{\ell-1} (u_{t-k} \circ A)}_\op, \nn \\
    & \leqsym{c} (3)^{1/4}\tn{v_2} \bignorm{\Gamma_{x}^{-1/2}}_{\op}(\sigma+\norm{B}_\op)\sum_{\ell = 1}^{L-1} \kappa \rho^{\ell-1}, \nn \\
    & \leq (3)^{1/4}\tn{v_2} \bignorm{\Gamma_{x}^{-1/2}}_{\op}\frac{\kappa(\sigma+\norm{B}_\op)}{1-\rho}, \label{eqn:second_term_bds_app}
\end{align}
where we get (b) from using Minkowski's inequality and (c) from Assumption~\ref{assump:stability}. Finally, using the law of total expectation along-with \eqref{eqn:firt_term_bds_app} and \eqref{eqn:second_moment_bds_app}, the third term in \eqref{eqn:fourth_moment_bds_app_v1} can be upper bounded as follows:
\begin{align}
    \E\left[ \left(u_{t}^\T V_3 x_{t,\tau}' \right)^4 \right]^{1/4} \leq (9)^{1/4}\tn{v_3} \bignorm{\Gamma_{x}^{-1/2}}_{\op}\frac{\kappa(\sigma+\norm{B}_\op)}{1-\rho}. \label{eqn:third_term_bds_app}
\end{align}
Hence, plugging in \eqref{eqn:firt_term_bds_app}, \eqref{eqn:second_term_bds_app}, and \eqref{eqn:third_term_bds_app} into \eqref{eqn:fourth_moment_bds_app_v1}, we get the following:
\begin{align}
      \E [( v^\top \phi_{t,\tau}^{'\bds})^4]^{1/4} &\leq (3)^{1/4} \tn{v_1} + (3)^{1/4}\tn{v_2} \bignorm{\Gamma_{x}^{-1/2}}_{\op}\frac{\kappa(\sigma+\norm{B}_\op)}{1-\rho} \nn \\
     &+ (9)^{1/4}\tn{v_3} \bignorm{\Gamma_{x}^{-1/2}}_{\op}\frac{\kappa(\sigma+\norm{B}_\op)}{1-\rho} \nn \\
     &\leqsym{d}  \left(3^{1/2} + (3^{1/2}+ 3)\bignorm{\Gamma_{x}^{-1}}_{\op} \frac{\kappa^2(\sigma+\norm{B}_\op)^2}{(1-\rho)^2} \right)^{1/2} \left(\tn{v_1}^2 + \tn{v_2}^2 + \tn{v_3}^2 \right)^{1/2}, \nn \\
     & = \left(3^{1/2} + (3^{1/2}+ 3)\bignorm{\Gamma_{x}^{-1}}_{\op} \frac{\kappa^2(\sigma+\norm{B}_\op)^2}{(1-\rho)^2} \right)^{1/2} \tn{v}, \label{eqn:fourth_moment_bds_app_v2}
\end{align}
where we get (d) by using Cauchy–Schwarz inequality. Combining this with \eqref{eqn:second_moment_bds_app}, we have
\begin{align}
    \E [( v^\top \phi_{t,\tau}^{'\bds})^4] \leq \left(3^{1/2} + (3^{1/2}+ 3)\bignorm{\Gamma_{x}^{-1}}_{\op} \frac{\kappa^2(\sigma+\norm{B}_\op)^2}{(1-\rho)^2} \right)^{2} \E [( v^\top \phi_{t,\tau}^{'\bds})^2]^2.
\end{align}
This completes the proof.

\end{proof}

\subsubsection{Stability Implies Smaller Truncation Bias}
Next, we show that under Assumptions~\ref{assump:input+noise} and \ref{assump:stability}, if we choose $\tau$ to be large enough, then the distance between the true empirical covariance $\sum_{t=1}^T \phi_t^{\bds} \phi_t^{\bds\T}$ and the surrogate empirical covariance $\sum_{t=1}^T \phi_{t,\tau}^{\bds} \phi_{t,\tau}^{\bds\T}$.

\begin{lemma}[Truncation bias]\label{lemma:truncation_bias_phi_bds_app}
    Consider the same setting of Lemma~\ref{lemma:hyper_phi_bds_app}. Furthermore, suppose $\tau$ satisfies,
    \begin{align}
        \tau \geq 1+ \log \left( \frac{32 \kappa \left( \sqrt{d_u(d_u+1)}\beta_{x,\delta} + (d_u+1)\beta_{x,\delta}^2\right)}{(1 \land \lambda_{\min}(\Gamma_x^{(\tau-1)}))}\right) \bigg/ (1-\rho),
        \label{eqn:tau_size_bds_app}
    \end{align}
     where $\Gamma_{\phi}^{(\tau-1)}$ is as defined in \eqref{eqn:Gamma_phi_def_app}, and we set $\beta_{x,\delta} := \kappa \sqrt{(\sigma^2 + \opn{B}^2)(d_x + \log(T/\delta))}/(1-\rho)$. Then, with probability at least $1-\delta$, we have $\tn{x_t} \leq \beta_{x,\delta}$ for all $t \in [1,T]$, and
     \begin{align}
         \bignorm{\sum_{t=1}^T \phi_t^\bds \phi_t^{\bds\T} - \sum_{t=1}^T \phi_{t,\tau}^\bds \phi_{t,\tau}^{\bds\T}}_\op \leq  \frac{T(1 \land \lambda_{\min}(\Gamma_x^{(\tau-1)}))}{8}.
     \end{align}
\end{lemma}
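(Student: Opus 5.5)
Two things need proving: the high-probability state bound $\tn{x_t}\le\beta_{x,\delta}$ uniformly in $t$, and the operator-norm bound on the difference of the two empirical covariances. Both rest on the unrolled form \eqref{eqn:bilinear sys state app} together with uniform stability (Assumption~\ref{assump:stability}).

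\emph{State bound.} Since $x_0=0$, we can write $x_t=\sum_{\ell\ge 0}\Phi_{t,\ell}(Bu_{t-\ell-1}+w_{t-\ell-1})$, where $\Phi_{t,\ell}:=\prod_{k=0}^{\ell-1}(u_{t-k-1}\circ A)$. Uniform stability gives $\opn{\Phi_{t,\ell}}\le\kappa\rho^{\ell}$ almost surely. The input part is then deterministic-bounded: since $\tn{u}=\sqrt{d_u}$, it contributes at most $\kappa\opn{B}\sqrt{d_u}/(1-\rho)$. Here I would absorb $\sqrt{d_u}$ into the $d_x+\log$ term, or else treat $Bu$ as isotropic subgaussian using the scaling conventions.

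For the noise part, I would condition on the whole input sequence, which makes the $\Phi_{t,\ell}$ fixed. The sum $\sum_\ell\Phi_{t,\ell}w_{t-\ell-1}$ is then a linear image of independent subgaussian vectors with total operator weight $\sum_\ell\kappa^2\rho^{2\ell}\le\kappa^2/(1-\rho)^2$. A standard norm bound for subgaussian vectors (via a $1/2$-net of $\cS^{d_x-1}$ and Proposition~\ref{lem:freedman}-type scalar tails) gives $\lesssim\frac{\kappa\sigma}{1-\rho}\sqrt{d_x+\log(1/\delta')}$. A union bound over $t\in[T]$ with $\delta'=\delta/T$ then yields $\beta_{x,\delta}$.

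\emph{Truncation difference.} By Definition~\ref{def:truncated vector app}, $x_t-x_{t,\tau}=\big(\prod_{\ell=1}^{\tau-1}(u_{t-\ell}\circ A)\big)x_{t-\tau+1}$. Hence, on the event above, $\tn{x_t-x_{t,\tau}}\le\kappa\rho^{\tau-1}\beta_{x,\delta}$, and also $\tn{x_{t,\tau}}\lesssim\beta_{x,\delta}$ by the same argument. Next, $\phi_t-\phi_{t,\tau}=[0;\ \util_t\otimes(x_t-x_{t,\tau})]$, so $\tn{\phi_t-\phi_{t,\tau}}\le\sqrt{d_u+1}\,\kappa\rho^{\tau-1}\beta_{x,\delta}$. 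Also $\tn{\phi_t},\tn{\phi_{t,\tau}}\le\sqrt{d_u}+\sqrt{d_u+1}\,\beta_{x,\delta}$.

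I would write $\phi\phi^\T-\phi'\phi'^\T=(\phi-\phi')\phi^\T+\phi'(\phi-\phi')^\T$ and sum over $t$, which gives
\[
\Big\|\sum_t(\phi_t\phi_t^\T-\phi_{t,\tau}\phi_{t,\tau}^\T)\Big\|_\op\le 2T\kappa\rho^{\tau-1}\big(\sqrt{d_u(d_u+1)}\beta_{x,\delta}+(d_u+1)\beta_{x,\delta}^2\big),
\]
up to constants. Using $\rho^{\tau-1}\le e^{-(1-\rho)(\tau-1)}$, the condition \eqref{eqn:tau_size_bds_app} makes the right side at most $T(1\land\lambda_{\min}(\Gamma_x^{(\tau-1)}))/8$. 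The factor $32$ leaves room for the constant $2$ and the bound on $\tn{x_{t,\tau}}$.

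\emph{Main obstacle.} The hardest step is the uniform state bound. The noise directions $\Phi_{t,\ell}$ are random and input-dependent, and conditioning on the inputs is legitimate only because inputs and noise are independent under Assumption~\ref{assump:input+noise}. Matching the stated form of $\beta_{x,\delta}$ exactly, specifically the joint $\sqrt{\sigma^2+\opn{B}^2}$ factor, may also require treating $Bu_t$ as a bounded isotropic (hence subgaussian) vector alongside $w_t$ rather than bounding it deterministically.
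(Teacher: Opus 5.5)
Your proof is correct, and its second half is the paper's argument. Both use the same identity for $x_t-x_{t,\tau}$, the factor $\tn{\util_t}=\sqrt{d_u+1}$, the rank-one bound $(\tn{\phi_t}+\tn{\phi_{t,\tau}})\tn{\phi_t-\phi_{t,\tau}}$ summed over $t$, and the same choice of $\tau$. Your explicit step $\rho^{\tau-1}\le e^{-(1-\rho)(\tau-1)}$ supplies something the paper leaves implicit.

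The difference is the uniform state bound. The paper never conditions on the inputs. It notes that each increment $Bu_s+w_s$ is $(\sigma^2+\opn{B}^2)$-subgaussian: the uniform law on $\sqrt{d_u}\cdot\Scal^{d_u-1}$ is $1$-subgaussian and independent of $w_s$. One union bound then gives $\tn{Bu_s+w_s}\lesssim\sqrt{\sigma^2+\opn{B}^2}\,\sqrt{d_x+\log(T/\delta)}$ for all $s\le T$. After that it uses only the triangle inequality with $\opn{\Phi_{t,\ell}}\le\kappa\rho^{\ell}$ and a geometric sum. This handles inputs and noise together and produces the joint $\sqrt{\sigma^2+\opn{B}^2}$ factor you were worried about. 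On the same event it also bounds $x_t$, $x_{t,\tau}$ and $x_{t-\tau+1}$ simultaneously, whereas your route needs an extra union bound for $x_{t,\tau}$.

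Your conditioning argument is valid for the noise part, and even sharper: $\kappa\sigma/\sqrt{1-\rho^2}$ instead of $\kappa\sigma/(1-\rho)$. It cannot absorb the $Bu$ terms, though. After conditioning on the inputs those terms carry no randomness. You also cannot leave $u_{t-\ell-1}$ free, because it enters the coefficients $\Phi_{t,\ell'}$ for $\ell'>\ell$. So for the input part you need the paper's per-increment bound, or a reverse-time martingale argument, in any case.

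Your other option, absorbing $\sqrt{d_u}$ into $\sqrt{d_x+\log(T/\delta)}$, does not give the stated $\beta_{x,\delta}$ when $d_u\gg d_x+\log(T/\delta)$. Once the per-increment bound is in place, it is simplest to use it for the whole state, as the paper does. The cost is the $1/(1-\rho)$ factor, which is already built into $\beta_{x,\delta}$.
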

\begin{proof}
We begin the proof by upper bounding the Euclidean distance between $\phi_t^\bds$ and $\phi_{t,\tau}^\bds$ as follows:     

\begin{align}
    \tn{\phi_t^\bds - \phi_{t,\tau}^\bds} &= \bignorm{\begin{bmatrix} u_{t} \\ x_t \\
    u_{t} \otimes x_{t}\end{bmatrix} -\begin{bmatrix} u_{t} \\ x_{t,\tau} \\
    u_{t} \otimes x_{t,\tau}\end{bmatrix} }_{\ell_2} ,\nn \\
    & = \sqrt{1 + d_u} \tn{x_t - x_{t,\tau}}, \nn \\
    &= \sqrt{1 + d_u} \bignorm{\left( \prod_{\ell = 1}^{\tau-1} (u_{t-\ell}\circ A )  \right) x_{t-\tau+1}}_{\ell_2}, \nn \\
    & \leq \sqrt{1 + d_u} \bignorm{\left( \prod_{\ell = 1}^{\tau-1} (u_{t-\ell}\circ A )  \right)}_\op \tn{x_{t-\tau+1}}, \nn \\
    & \leq \sqrt{1 + d_u}\, \kappa \rho^{\tau-1} \tn{x_{t-\tau+1}}. \label{eqn:phi_t_minus_truncated_bds_app_v1}
\end{align}
Hence, we need to upper bound the Euclidean norm of the states in \eqref{eqn:BDS-FO}. Under Assumptions~\ref{assump:input+noise} and \ref{assump:stability}, for all $t \in [1,T]$, we have
\begin{align}
    \tn{x_t} &= \tn{\sum_{\ell = 1}^{t} \left(\prod_{k = 1}^{\ell-1} (u_{t-k} \circ A)\right) B u_{t-\ell} + \sum_{\ell = 1}^t \left(\prod_{k = 1}^{\ell-1} (u_{t-k} \circ A)\right)  w_{t-\ell}}, \nn \\
    & \leq \sum_{\ell = 1}^{t} \opn{\prod_{k = 1}^{\ell-1} (u_{t-k} \circ A)}  \tn{B u_{t-\ell} + w_{t-\ell}}, \nn \\
    & \leqsym{a} \sum_{\ell = 1}^{t} \kappa\rho^{\ell-1} \sqrt{\sigma^2 + \opn{B}^2}\sqrt{d_x + \log(T/\delta)}, \nn \\
&\leq \frac{\kappa(1-\rho^t)}{1-\rho} \sqrt{\sigma^2 + \opn{B}^2}\sqrt{d_x + \log(T/\delta)}, \nn \\
    &\leq \frac{\kappa\sqrt{\sigma^2 + \opn{B}^2}}{1-\rho} \sqrt{d_x + \log(T/\delta)} =: \beta_{x,\delta} \label{eqn:x_t_bound_bds_app}
\end{align}
with probability at least $1-\delta$, where we obtain (a) from the observation that $B u_{t-\ell} + w_{t-\ell}$ is a $(\sigma^2 + \opn{B}^2)$-subgaussian random vector in $\R^{d_x}$. Hence, plugging \eqref{eqn:x_t_bound_bds_app} into \eqref{eqn:phi_t_minus_truncated_bds_app_v1}, with probability at least $1-\delta$ we have

\begin{align}
    \tn{\phi_t^\bds - \phi_{t,\tau}^\bds} & \leq \sqrt{1 + d_u}\, \kappa \rho^{\tau-1} \frac{\kappa\sqrt{\sigma^2 + \opn{B}^2}}{1-\rho} \sqrt{d_x + \log(T/\delta)}, \nn \\
    &\leq \frac{\kappa^2\rho^{\tau-1} \sqrt{\sigma^2 + \opn{B}^2}}{1-\rho} \sqrt{(d_u+1)(d_x + \log(T/\delta))} \label{eqn:phi_t_minus_truncated_bds_app_v2}
\end{align}
Moreover, using \eqref{eqn:x_t_bound_bds_app}, we can also upper bound the Euclidean norm of the covariates $\phi_t^\bds$ as follows:
\begin{align}
\tn{\phi_t^\bds}^2 &= \tn{u_t}^2 + \tn{x_t}^2 + \tn{u_t \otimes x_t}^2, \nn \\
&= \tn{u_t}^2 + \left(\tn{u_t}^2+1\right)\tn{x_t}^2, \nn \\
& \leq d_u + \frac{\kappa^2(\sigma^2 + \opn{B}^2)}{(1-\rho)^2} (d_u+1)(d_x + \log(T/\delta)). \label{eqn:phi_t_bound_bds_app}
\end{align}
Finally, using \eqref{eqn:phi_t_minus_truncated_bds_app_v2} and \eqref{eqn:phi_t_bound_bds_app}, with probability at least $1-\delta$, we have

\begin{align}
  \bignorm{\sum_{t=1}^T \phi_t^\bds (\phi_t^\bds)^\T - \sum_{t=1}^T \phi_{t,\tau}^\bds (\phi_{t,\tau}^\bds)^\T}_\op   &\leq  \sum_{t=1}^T \bignorm{\phi_t^\bds (\phi_t^\bds)^\T - \phi_{t,\tau}^\bds (\phi_{t,\tau}^\bds)^\T}_\op, \nn \\
  &\leq \sum_{t=1}^T \left(\tn{\phi_{t}^\bds} + \tn{\phi_{t,\tau}^\bds} \right) \tn{\phi_{t}^\bds - \phi_{t,\tau}^\bds}, \nn \\
  &\leq \sum_{t=1}^T 2\left(\sqrt{d_u} + \sqrt{d_u +1}\tn{x_t} \right)\sqrt{d_u+1}\, \kappa \rho^{\tau-1} \tn{x_{t-\tau+1}} \nn \\
&\leq 2 T \kappa\rho^{\tau-1} \bigg( \frac{\kappa\sqrt{\sigma^2 + \opn{B}^2}}{1-\rho} \sqrt{d_u(d_u+1)(d_x + \log(T/\delta))}\nn \\
  & + \frac{\kappa^2(\sigma^2 + \opn{B}^2)}{(1-\rho)^2} (d_u+1)(d_x + \log(T/\delta))\bigg), \nn \\
  &\leq 2 T \kappa^2\rho^{\tau-1} \frac{\sqrt{\sigma^2 + \opn{B}^2}}{1-\rho} \bigg(  \sqrt{d_u(d_u+1)(d_x + \log(T/\delta))}\nn \\
  & + \frac{\kappa\sqrt{\sigma^2 + \opn{B}^2}}{(1-\rho)} (d_u+1)(d_x + \log(T/\delta))\bigg), \nn \\
& \leq \frac{T(1 \land \lambda_{\min}(\Gamma_x^{(\tau-1)}))}{8},
\end{align}
where we get the last inequality by choosing $\tau$ via
\begin{align}
    &2 T \kappa\rho^{\tau-1} \left( \sqrt{d_u(d_u+1)}\beta_{x,\delta} + (d_u+1)\beta_{x,\delta}^2\right) \leq \frac{T(1 \land \lambda_{\min}(\Gamma_x^{(\tau-1)}))}{16}, \nn \\
    \iff & \rho^{\tau-1} \leq \frac{(1 \land \lambda_{\min}(\Gamma_x^{(\tau-1)}))}{32 \kappa \left( \sqrt{d_u(d_u+1)}\beta_{x,\delta} + (d_u+1)\beta_{x,\delta}^2\right)}, \nn \\
    \impliedby & \tau \geq 1+ \log \left( \frac{32 \kappa \left( \sqrt{d_u(d_u+1)}\beta_{x,\delta} + (d_u+1)\beta_{x,\delta}^2\right)}{(1 \land \lambda_{\min}(\Gamma_x^{(\tau-1)}))}\right) \bigg/ (1-\rho),
\end{align}
where  we use $\beta_{x,\delta}$ to denote the high probability Euclidean norm bound on the states $x_t$ in \eqref{eqn:BDS-FO},
\begin{align}
    \beta_{x,\delta} := \frac{\kappa\sqrt{\sigma^2 + \opn{B}^2}}{1-\rho} \sqrt{d_x + \log(T/\delta)}.
\end{align}
This completes the proof.
\end{proof}

\subsubsection{Finalizing the Proof of Proposition~\ref{prop:persistence_BDS_FO_app}}
To prove Proposition~\ref{prop:persistence_BDS_FO_app}, we first apply Theorem~\ref{thm:persistence} to the truncated covariates $\lbrace \phi_{t,\tau}^\bds \rbrace_{t =1}^T$ in \eqref{eqn:truncated_covariates_def}. Specifically, applying Theorem~\ref{thm:persistence} (see Remark~\ref{remark:persistence}) with $\gamma = \bigg(\sqrt{3} + (3+ \sqrt{3}) \frac{\kappa^2(\sigma+\norm{B}_\op)^2}{(1-\rho)^2\lambda_{\min}\left(\Gamma_x^{(\tau-1)}\right)} \bigg)^2$, $d_\phi = d_u+(d_u+1)d_x$, and using a $1/(4 \beta_{\phi,\delta})$-net covering argument, with probability at least $1-\delta$, we have,
\begin{align}
    \lambda_{\min}\left( \sum_{t=1}^{T} \phi_{t,\tau}^\bds \phi_{t,\tau}^{\bds\T} \right) \geq \frac{T(1 \land \lambda_{\min}(\Gamma_x^{(\tau-1)}))}{4}, \label{eqn:persistence_of_truncated_app}
\end{align}
provided that 
\begin{align}
    T &\geq  32\tau\gamma\left (\log\left(\frac{2\tau}{\delta}\right) + (d_u + (d_u+1)d_x) \log\left(1+8\left(d_u + (d_u+1)\beta_{x,\delta}^2\right)\right) \right). \label{eqn:trajectory_size_bds_app}
\end{align}
Combining this with Lemma~\ref{lemma:truncation_bias_phi_bds_app} and Weyl's inequality, with probability at least $1-\delta$, we have
\begin{align}
    \lambda_{\min}\left( \sum_{t=1}^{T} \phi_{t}^\bds \phi_{t}^{\bds\T} \right) &\geq \lambda_{\min}\left( \sum_{t=1}^{T} \phi_{t,\tau}^\bds \phi_{t,\tau}^{\bds\T} \right)  - \bignorm{\sum_{t=1}^T \phi_t^\bds \phi_t^{\bds\T} - \sum_{t=1}^T \phi_{t,\tau}^\bds \phi_{t,\tau}^{\bds\T}}_\op, \nn \\
    & \geq \frac{T(1 \land \lambda_{\min}(\Gamma_x^{(\tau-1)}))}{8}
\end{align}
provided that $T$ satisfies \eqref{eqn:trajectory_size_bds_app}, and $\tau$ satisfies \eqref{eqn:tau_size_bds_app}. This completes the proof of Proposition~\ref{prop:persistence_BDS_FO_app}.

\subsection{Persistence of Excitation for Learning~\eqref{eqn:BDS-PO}}
In this subsection, we will derive the persistence of excitation guarantees for learning partially observed bilinear dynamical systems, governed by state-space equation \eqref{eqn:BDS-PO}. The proofs of the results in this subsection can be found in \cite{sattar2025finite_arXiv}. Note that, in the case of \eqref{eqn:BDS-PO}, the covariates take the following form:
\begin{equation}
\begin{aligned} \label{eqn:phi_t_BDS_PO_app}
   \phi_t^\pobds  = \begin{bmatrix} u_{t} \\u_{t-1} \\  \util_{t-1} \otimes u_{t-2} \\ \util_{t-1} \otimes \util_{t-2} \otimes u_{t-3} \\ \vdots \\  \util_{t-1} \otimes \util_{t-2} \otimes \cdots \otimes u_{t-\tau+1}  \end{bmatrix}, \quad \text{where,} \quad
  \util_t := \begin{bmatrix} 1 \\ u_t \end{bmatrix}. 
\end{aligned}
\end{equation}

\begin{proposition}[Persistence of Excitation for Learning~\eqref{eqn:BDS-PO}]\label{prop:persistence_BDS_PO_app}
Consider the covariates $\lbrace \phi_t^{\pobds}\rbrace_{t =\tau+1}^T$ defined in \eqref{eqn:phi_t_BDS_PO_app}. Suppose Assumption~\ref{assump:input+noise}\,\emph{(\textbf{a})} holds. Then, for all $\delta \in (0,1)$, the event:
\begin{align}
    \lambda_{\min}\left( \sum_{t=\tau}^{T} \phi_t^{\pobds} (\phi_t^{\pobds})^\T \right) \geq (T-\tau+1)/4,
\end{align}
holds with probability at least $1- \delta$, provided that 
\begin{align}
   T - \tau+1 \gtrsim  \tau(\tau-1) 3^{\tau} \bigg(\log\left(\frac{2\tau}{\delta}\right) + (d_u+1)^{\tau-1} \log\big({(d_u+1)^{\tau-1}}\big) \bigg). \label{eqn:trajectory_size_BDS_PO_app}
\end{align}
\end{proposition}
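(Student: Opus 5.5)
The plan is to apply Theorem~\ref{thm:persistence} directly to $\{\phi_t^{\pobds}\}_{t\ge\tau}$, following the same template as Proposition~\ref{prop:persistence_LDS_BO_app}. First I will verify the structural hypotheses of Assumption~\ref{assump:phi_t distribution}, then bound the hypercontractivity constant, and finally plug everything into the theorem using the bounded-covariate covering of Remark~\ref{remark:covering_when_bounded_app}.

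\textbf{Structural hypotheses.} Each $\phi_t^{\pobds}$ is a deterministic function of $(u_t,\dots,u_{t-\tau+1})$ only. By Assumption~\ref{assump:input+noise}(\textbf{a}) the inputs are i.i.d., so the sequence is block-dependent with block length $\tau$.

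For the moments, I will use that $\EE[u_t]=0$, $\EE[u_tu_t^\T]=I_{d_u}$ and $\EE[\util_t\util_t^\T]=I_{d_u+1}$. The mixed-product property of Kronecker products then gives, for each block,
\[
\EE\big[(\util_{t-1}\otimes\cdots\otimes u_{t-k})(\cdots)^\T\big]=I\otimes\cdots\otimes I .
\]
Cross-covariances between distinct blocks vanish: two different blocks end in $u_{t-k}$ versus $u_{t-k'}$ with $k\neq k'$. Take $k<k'$. The longer block contains $\util_{t-k}$ in the slot where the shorter one contains $u_{t-k}$. Since $\EE[u\,\util^\T]=[0\;\; I]$ and the last factor of the longer block is centred and independent of everything else, the cross term is zero. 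The same argument handles the first block $u_t$. Hence $\phi_t^{\pobds}$ is zero-mean and isotropic of dimension $d_u+\sum_{k=1}^{\tau-1}d_u(d_u+1)^{k-1}\le 2(d_u+1)^{\tau-1}=\bar d_\phi$.

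\textbf{Hypercontractivity (main obstacle).} Writing $v^\T\phi_t^{\pobds}$ as a polynomial in the independent vectors $u_t,\dots,u_{t-\tau+1}$, it is a sum of multilinear terms of degree at most $\tau-1$ in the independent $(4,2,3)$-hypercontractive vectors $\util_{t-j}$. The one-step case is already in Lemma~\ref{lemma:hyper_phi_lds_app}: tensoring a $3$-hypercontractive vector conditionally with an independent one multiplies constants, via conditioning and Lemma~\ref{lemma:hyper_to_moment_app}.

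I will proceed by induction on the number of factors. The goal is to show that each block, viewed as $\mathrm{vec}$-linear forms, is $(4,2,3^{k})$-hypercontractive. Combining the $\tau$ blocks is the delicate part, since they share variables and are not independent. Minkowski's inequality in $L^4$ plus Cauchy--Schwarz over the $\tau$ blocks (as in Lemma~\ref{lemma:hyper_phi_bds_app}) costs a factor $\tau$ in $\EE[(\cdot)^4]^{1/2}$. This should yield $\gamma\lesssim \tau(\tau-1)3^{\tau}$ or similar, which matches the $\tau(\tau-1)3^\tau$ in \eqref{eqn:trajectory_size_BDS_PO_app}.

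I expect the bookkeeping here, namely getting the polynomial prefactor rather than something worse, to be the hardest step. If direct Minkowski is too lossy, I would fall back on orthogonality of the blocks to reorganise the argument by conditioning on $u_{t-1},\dots$ sequentially.

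\textbf{Conclusion.} Almost surely $\tn{u_t}^2=d_u$, so $\tn{\phi_t^{\pobds}}^2=O((d_u+1)^{\tau-1})$ deterministically. Remark~\ref{remark:covering_when_bounded_app} therefore applies with $\beta_{\phi}^2\lesssim \bar d_\phi$, giving a $\log \bar d_\phi$ factor. Applying Theorem~\ref{thm:persistence} on the $T-\tau+1$ samples with this $\gamma$ and $d_\phi\le\bar d_\phi$ gives $\lambda_{\min}\ge (T-\tau+1)/4$ under \eqref{eqn:trajectory_size_BDS_PO_app}.
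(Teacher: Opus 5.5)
Your overall route is the paper's: structural checks, then a hypercontractivity bound, then Theorem~\ref{thm:persistence} with the bounded-norm covering of Remark~\ref{remark:covering_when_bounded_app}. Your isotropy argument and the block-wise $3^k$ constants are fine. The gap is in how you account for $\gamma$.

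Theorem~\ref{thm:persistence} already pays a factor $\tau$ for blocking: it requires $T-\tau+1\gtrsim\tau\gamma(\cdots)$. So to reach the stated $\tau(\tau-1)3^{\tau}$ you need $\gamma\lesssim(\tau-1)3^{\tau}$, which is what the paper's Lemma~\ref{lemma:hyper_phi_bds_po_app} provides. You instead target $\gamma\lesssim\tau(\tau-1)3^{\tau}$.

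Moreover, the uniform Minkowski-plus-Cauchy--Schwarz step you describe loses a factor $\tau$ in $\EE[(\cdot)^4]^{1/2}$. That gives $\gamma\le\tau^2\max_k\gamma_k=\tau^2 3^{\tau-1}$, and hence the requirement $T-\tau+1\gtrsim\tau^3 3^{\tau-1}(\cdots)$. This exceeds \eqref{eqn:trajectory_size_BDS_PO_app} by a factor of order $\tau$, which no universal constant absorbs. So, as written, your conclusion does not follow.

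The repair stays inside your own argument: weight Cauchy--Schwarz by the block constants instead of bounding them uniformly. Write $v^\T\phi_t=\sum_{k=0}^{\tau-1}X_k$ by blocks, with $\gamma_0\le 3$ and $\gamma_k\le 3^k$. The orthogonality you established gives $\sum_k\EE[X_k^2]=\EE[(v^\T\phi_t)^2]$, and hence
\[
\EE[(v^\T\phi_t)^4]^{1/4}\le\sum_{k}\gamma_k^{1/4}\,\EE[X_k^2]^{1/2}\le\Big(\sum_k\gamma_k^{1/2}\Big)^{1/2}\,\EE[(v^\T\phi_t)^2]^{1/2}.
\]
The sum $\sum_k\gamma_k^{1/2}\le\sqrt3+\sum_{k=1}^{\tau-1}3^{k/2}\le 3^{\tau/2}/(\sqrt3-1)$ is geometric, so $\gamma\le 3^{\tau}/(\sqrt3-1)^2<2\cdot 3^{\tau}$. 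This is $\lesssim(\tau-1)3^{\tau}$ for $\tau\ge2$, and it improves on the paper's constant for $\tau\ge 3$. Theorem~\ref{thm:persistence} then delivers the claim under \eqref{eqn:trajectory_size_BDS_PO_app}, and in fact under the weaker condition $T-\tau+1\gtrsim\tau 3^{\tau}(\cdots)$.
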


Before we begin the proof of Proposition~\ref{prop:persistence_BDS_PO_app}, we show that the covariates $\{\phi_t^{\pobds}\}_{t \geq \tau}$, given by \eqref{eqn:phi_t_BDS_PO_app}, satisfy the conditions required by Theorem~\ref{thm:persistence}. 

\subsubsection{Hypercontractivity of the Covariates in \eqref{eqn:phi_t_BDS_PO_app}}
 
From \eqref{eqn:phi_t_BDS_PO_app} and Assumption~\ref{assump:input+noise}(\textbf{a}), it is clear that $\{\phi_t^{\pobds}\}_{t \geq \tau}$ has a block-dependent structure with block-length $\tau$. Moreover, Assumption~\ref{assump:input+noise}(\textbf{a}) also implies that $ \E \left[ \phi_t^{\pobds}\right] = 0$, and
\begin{align}
    \E \left[ \phi_t^{\pobds}(\phi_t^{\pobds})^\T\right]  \eqsym{a} I_{(d_u+1)^{\tau-1} + d_u-1},
\end{align}
where we get (a) from the proof of Lemma 12 in \cite{sattar2025finite_arXiv}. Next, we show that the covariates $\{\phi_t^{\pobds}\}_{t \geq \tau}$ is $(4,2,(\tau-1) 3^\tau)$-hypercontractive.
\begin{lemma}[Hypercontractivity of $\phi_t^{\pobds}$]\label{lemma:hyper_phi_bds_po_app}
    Let $\phi_t^{\pobds}$ be as defined in \eqref{eqn:phi_t_BDS_PO_app}. Under Assumption~\ref{assump:input+noise}\,\emph{(\textbf{a})}, we have $\EE[(v^\top \phi_t^{\pobds})^4] \le (\tau-1) 3^\tau\, \EE[(v^\top \phi_t^{\pobds})^2]^2$, for all $v \in \R^{(d_u+1)^{\tau-1} + d_u -1}$, and all $t \in [\tau+1,T]$. 
\end{lemma}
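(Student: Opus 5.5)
The plan is to bound the fourth moment of $v^\top \phi_t$ block by block, where the blocks are the Kronecker-product components of $\phi_t$, and then combine them. Write $v=[v_0^\T~v_1^\T~v_2^\T~\cdots~v_{\tau-1}^\T]^\T$ conformally with the blocks of $\phi_t$: $u_t$, $u_{t-1}$, $\util_{t-1}\otimes u_{t-2}$, $\dots$, $\util_{t-1}\otimes\cdots\otimes\util_{t-\tau+2}\otimes u_{t-\tau+1}$. There are $\tau$ blocks in total. We rely on isotropy (the proof of Lemma 12 in \cite{sattar2025finite_arXiv}) for the identity $\E[(v^\top\phi_t)^2]=\tn{v}^2=\sum_j \tn{v_j}^2$. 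It then suffices to show that each block satisfies $\E[(v_j^\T\psi_j)^4]\le 3^{j+1}\tn{v_j}^4$, where $\psi_j$ is the $j$-th block.

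\textbf{Block-level hypercontractivity.} This step works by induction on the number of Kronecker factors. We already have two base facts from the proof of Lemma~\ref{lemma:hyper_phi_lds_app}. First, $u_s$ is $(4,2,3)$-hypercontractive, since $\E[(q^\T u_s)^4]\le 3\tn{q}^4$. Second, $\util_s$ is $(4,2,3)$-hypercontractive and isotropic. For the inductive step, consider a product $a\otimes b$ with $a$ and $b$ independent and isotropic, where $a$ is $(4,2,\gamma_a)$-hypercontractive and $b$ is $(4,2,\gamma_b)$-hypercontractive. Write $w^\T(a\otimes b)=a^\T W b$ and condition on $b$. This gives $\E[(a^\T Wb)^4]\le \gamma_a\E[\tn{Wb}^4]=\gamma_a\E[(b^\T W^\T W b)^2]$. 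Lemma~\ref{lemma:hyper_to_moment_app}, used exactly as in step (ii) of Lemma~\ref{lemma:hyper_phi_lds_app}, bounds this by $\gamma_a\gamma_b\tf{W}^4$. Because the factors $\util_{t-1},\dots,\util_{t-j+1},u_{t-j}$ involve distinct time indices and are therefore independent, the $j$-th block has constant $3^{j}$. The constant is thus at most $3^{\tau-1}$ for every block. For the purposes of the statement, a uniform bound of $3^\tau$ per block is enough.

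\textbf{Combining blocks.} The blocks are not independent, since they share $\util_{t-1},\util_{t-2},\dots$. So rather than expand the cross terms, we use Minkowski's inequality:
$\E[(v^\top\phi_t)^4]^{1/4}\le\sum_{j}\E[(v_j^\T\psi_j)^4]^{1/4}\le 3^{\tau/4}\sum_j\tn{v_j}$.
Next apply Cauchy--Schwarz over the blocks, $\sum_j\tn{v_j}\le\sqrt{m}\,\tn{v}$, where $m$ is the number of blocks. Raising to the fourth power gives $\E[(v^\top\phi_t)^4]\le 3^{\tau} m^2\tn{v}^4$. This is worse than the claimed $(\tau-1)3^\tau$ by a factor $m$.

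\textbf{Main obstacle.} The delicate step is recovering the linear-in-$\tau$ constant. Here we would use the fact that $u_t$ and the block $u_{t-1}$ are independent of each other and of the deepest factors, and exploit partial orthogonality. Take $v_0^\T u_t$ separately: it is independent of the rest and zero mean, so its fourth moment combines with the remainder $R$ through $\E[(X+R)^4]\le \E X^4+6\E X^2\E R^2+\E R^4$. Among the remaining $\tau-1$ blocks, use the sharper inequality $(\sum_j a_j)^2\le (\tau-1)\sum_j a_j^2$ applied to $a_j=\E[(v_j^\T\psi_j)^4]^{1/2}$. This works directly with second moments of squares rather than fourth roots, via $\E[(\sum_j X_j)^4]\le(\tau-1)\,\E[(\sum_j X_j^2)^2]$-type estimates. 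Combined with Minkowski in $L^2$ applied to $\sum_j X_j^2$, this gives $\E[(\sum_j X_j)^4]\le(\tau-1)\big(\sum_j\E[X_j^4]^{1/2}\big)^2\le(\tau-1)3^{\tau}(\sum_j\tn{v_j}^2)^2$. That yields the constant $(\tau-1)3^\tau$ and absorbs the $u_t$ term in the extra factor of $3$.
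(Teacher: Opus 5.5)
Your block-level step is correct. By your Kronecker induction, block $j\ge 1$ has $j$ factors and is isotropic and $(4,2,3^{j})$-hypercontractive. Block $0$ ($u_t$) has constant $3$. The cross-block covariances vanish, so $\E[(v^\top\phi_t)^2]=\tn{v}^2$.

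The gap is in your last paragraph. The estimate $\E[(\sum_{j=1}^{n}X_j)^4]\le n\,\E[(\sum_j X_j^2)^2]$ does not follow from Cauchy--Schwarz: squaring the pointwise bound $(\sum_j X_j)^2\le n\sum_j X_j^2$ gives the factor $n^2$, not $n$. For general dependent summands the version with $n$ is false: take $X_1=\cdots=X_n=X$, so the left side is $n^4\E X^4$ and the right side is $n^3\E X^4$. You note yourself that the blocks are dependent, and you give no argument that uses their particular structure. So this step only reproduces the $(\tau-1)^2 3^\tau$ bound you already had. The factor of $\tau$ was actually lost earlier, when you replaced the block constants $3^j$ by a uniform $3^\tau$.

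Either of the following repairs closes the gap using only your own ingredients.

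\emph{First repair.} Keep the block constants inside Minkowski and apply a weighted Cauchy--Schwarz:
\[
\E[(v^\top\phi_t)^4]^{1/4}\le 3^{1/4}\tn{v_0}+\sum_{j=1}^{\tau-1}3^{j/4}\tn{v_j}\le\Big(\sqrt3+\sum_{j=1}^{\tau-1}3^{j/2}\Big)^{1/2}\tn{v}.
\]
The geometric sum gives $\big(\sqrt3+\sum_{j=1}^{\tau-1}3^{j/2}\big)^2<3^\tau/(\sqrt3-1)^2=(1+\tfrac{\sqrt3}{2})\,3^\tau$. This is at most $(\tau-1)3^\tau$ once $\tau\ge3$. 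For $\tau=2$ the two blocks $u_t,u_{t-1}$ are independent, and expanding directly gives constant $3$.

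\emph{Second, cleaner repair.} Blocks $1,\dots,\tau-1$ are, up to a permutation of coordinates, exactly the non-constant entries of $\util_{t-1}\otimes\cdots\otimes\util_{t-\tau+1}$. An entry whose last non-constant factor comes from $\util_{t-j}$ is an entry of $\util_{t-1}\otimes\cdots\otimes\util_{t-j+1}\otimes u_{t-j}$. That full tensor is isotropic and $(4,2,3^{\tau-1})$-hypercontractive by your induction. Padding $v$ with a zero on the constant coordinate shows that $R:=\sum_{j\ge1}v_j^\top\psi_j$ satisfies $\E R^4\le 3^{\tau-1}\big(\sum_{j\ge1}\tn{v_j}^2\big)^2$. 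Since $X_0=v_0^\top u_t$ is independent of $R$ and both are zero-mean,
\[
\E[(X_0+R)^4]=\E X_0^4+6\,\E X_0^2\,\E R^2+\E R^4\le 3^{\tau-1}\tn{v}^4
\]
for $\tau\ge2$. This is stronger than the stated $(\tau-1)3^\tau$.
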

The proof of Lemma~\ref{lemma:hyper_phi_bds_po_app} is given in the Appendix B.2 in \cite{sattar2025finite_arXiv}. Note that, the hypercontractivity parameter $\gamma = (\tau-1) 3^\tau $ does not depend on the dimension $d_u$ of the input $u_t$. The dependence on $\tau$ seems to be unavoidable. For example, suppose $\{u_i\}_{i=1}^\tau \distas \Ncal(0, \sigma^2)$. Then we have, $\E\left[ \left(\prod_{i=1}^\tau u_i\right)^4 \right] = 3^\tau \sigma^{4\tau} = 3^\tau \,\E\left[ \left(\prod_{i=1}^\tau u_i\right)^2 \right]^2$. Hence, the hypercontractivity parameter of $\phi_t^\pobds$ scales with $3^\tau$ even in the case of standard normal distribution. 

\subsubsection{Finalizing the Proof of Proposition~\ref{prop:persistence_BDS_PO_app}}
We are now ready to apply Theorem~\ref{thm:persistence} to the covariates $\lbrace \phi_t^{\pobds}\rbrace_{t =\tau}^T$. Specifically, applying Theorem~\ref{thm:persistence} with $\gamma = (\tau - 1) 3^\tau$, $d_\phi = (d_u+1)^{\tau-1} + d_u -1 \leq 2(d_u+1)^{\tau-1}$, and using a $1/(4 d_{\phi})$-net covering argument (see Remark~\ref{remark:covering_when_bounded_app}), we get the statement of Proposition~\ref{prop:persistence_BDS_PO_app}. 
\renewcommand{\Ab}{A}
\renewcommand{\vN}{N}
\section{Proofs of Approximation/Truncation Bias Results}\label{app:error bias}
In this Appendix, we present the proof of our main result on upper bounding the estimation error term due to approximation/truncation bias, which is denoted by $E_2$ in \eqref{eqn:estimation_error}. Specifically, we will upper bound $E_2$ by controlling the term $\opn{\sum_{t=1}^T\phi_t z_t^\T}$ for the dynamical systems we studied in Section~\ref{sec:sysid}. Recall that $z_t = 0$ in the case of \eqref{eqn:BDS-FO}.

\subsection{Upper Bounding Approximation Error for \eqref{eqn:BDS-PO}}
In this subsection, we will upper bound the term $\opn{\sum_{t=\tau}^T\phi_t z_t^\T}$ for learning partially observed bilinear dynamical systems, governed by state-space equation \eqref{eqn:BDS-PO}. The proofs of the results in this subsection are adapted from \cite{sattar2025finite_arXiv}.  
Recall that, in the case of \eqref{eqn:BDS-PO}, the truncation bias $z_t$ takes the following form:
\begin{equation}
\begin{aligned} \label{eqn:z_t_BDS_PO_app}
   z_t  = C  \left( \prod_{\ell = 1}^{\tau-1} (u_{t-\ell}\circ A )  \right) x_{t-\tau+1}
\end{aligned}
\end{equation}

\begin{proposition}[Truncation bias for \eqref{eqn:BDS-PO}]\label{prop:truncation_bias_bds_po_app}
    Consider the system~\eqref{eqn:BDS-PO}. Let $\phi_t$ and $z_t$ be as in \eqref{eqn:phi_t_BDS_PO_app} and \eqref{eqn:z_t_BDS_PO_app}, respectively. Suppose Assumptions~\ref{assump:input+noise} and \ref{assump:stability} hold. Let $\delta \in (0,1)$ and $T \ge \tau-1$. Then, the event:
        \begin{align*}
        \left\Vert \sum_{t=\tau}^T \phi_t z_t^\top \right\Vert_\op \le  \frac{c_1 \kappa \rho^{\tau-1} }{1-\rho}\sqrt{(T-\tau+1)(d_u+1)^{\tau-1}\left( \log\left(\frac{e}{\delta}\right) + d_y + (d_u+1)^{\tau-1} \right)}      
    \end{align*}
    holds with probability at least $1-\delta$, with $c_1 = \Vert \vC  \Vert_\op \sqrt{2\log(9)(\Vert \vB \Vert_\op^2 + \sigma^2)}$.
\end{proposition}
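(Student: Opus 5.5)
We bound the operator norm by a covering argument over both sides. Write $M := \sum_{t=\tau}^T \phi_t z_t^\top \in \R^{\bar d \times d_y}$ with $\bar d := (d_u+1)^{\tau-1}+d_u-1$. Take $1/4$-nets $\cN_1$ of $\cS^{\bar d-1}$ and $\cN_2$ of $\cS^{d_y-1}$, of cardinalities at most $9^{\bar d}$ and $9^{d_y}$. Then
\[
\opn{M} \le 2 \max_{a\in\cN_1,\, b\in\cN_2} a^\top M b ,
\]
so it suffices to control, for fixed unit $a,b$, the scalar sum
\[
S_{a,b} := \sum_{t=\tau}^T (a^\top\phi_t)\,(b^\top z_t),
\]
and then union bound over the $9^{\bar d + d_y}$ pairs. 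This union bound produces the $\log(9)(d_y + (d_u+1)^{\tau-1})$ term inside the square root.

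To bound $S_{a,b}$ we expose the martingale structure of $z_t$. Unroll $x_{t-\tau+1}$ back to $x_0=0$:
\[
z_t = C\sum_{j\ge \tau}\Big(\prod_{k=1}^{j-1}(u_{t-k}\circ A)\Big)(B u_{t-j} + w_{t-j}).
\]
Every term involves the ``old'' excitations $\xi_s := Bu_s + w_s$ with $s \le t-\tau$. These are independent of $\phi_t$, which depends only on $u_{t-\tau+1},\dots,u_t$. Each $\xi_s$ is zero-mean and $(\opn{B}^2+\sigma^2)$-subgaussian. Exchange the order of summation to write
\[
S_{a,b} = \sum_{s} \langle g_s, \xi_s\rangle,
\]
where $g_s$ collects all terms $(a^\top\phi_t)\,\prod(\cdot)^\top C^\top b$ with $t \ge s+\tau$. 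Uniform stability gives
\[
\tn{g_s} \le \opn{C} \sum_{t} |a^\top\phi_t|\,\kappa\rho^{t-s-1}.
\]
The coefficient $g_s$ depends on future inputs, so we use a time-reversed filtration. Equivalently, condition on all inputs $\{u_t\}$: conditionally on the inputs, the $w_s$ are independent subgaussian, and the $Bu_s$ part is handled the same way by treating $u_s$ as independent of $g_s$ once the factors involving $u_s$ are conditioned on. The cleanest route is to condition on the entire input sequence except the $u_s$ that appear linearly as excitations. Proposition~\ref{lem:freedman} (applied conditionally, or its subgaussian Hoeffding form) then gives
\[
|S_{a,b}| \lesssim \sqrt{(\opn{B}^2+\sigma^2)\log(1/\delta')}\,\Big(\sum_s \tn{g_s}^2\Big)^{1/2}.
\]
By Cauchy--Schwarz over the geometric weights,
\[
\sum_s\tn{g_s}^2 \le \frac{\opn{C}^2\kappa^2\rho^{2(\tau-1)}}{(1-\rho)^2}\sum_t (a^\top\phi_t)^2 .
\]

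It remains to bound $\sum_t (a^\top\phi_t)^2$. Deterministically, $\tn{\phi_t}^2 \le \sum_{j} d_u (d_u+1)^{j-1} \lesssim (d_u+1)^{\tau-1}$, since $\tn{\util_s}^2 = 1+d_u$ on the sphere. Hence $\sum_t (a^\top\phi_t)^2 \le (T-\tau+1)(d_u+1)^{\tau-1}$, which gives exactly the $\sqrt{(T-\tau+1)(d_u+1)^{\tau-1}}$ factor in the statement. Setting $\delta' = \delta/9^{\bar d+d_y}$ and absorbing constants (and $\bar d\le 2(d_u+1)^{\tau-1}$) yields the claimed bound with $c_1=\opn{C}\sqrt{2\log 9(\opn{B}^2+\sigma^2)}$.

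\textbf{Main obstacle.} The delicate step is the martingale step: the inputs $u_{t-k}$ in the product are shared with later $\phi$'s and may coincide with the excitation $u_s$ of an earlier-indexed term. One must check that $g_s$ is measurable with respect to a $\sigma$-field independent of $\xi_s$, namely generated by $\{u_r,w_r : r>s\}$, and that the reversed-time ordering makes Proposition~\ref{lem:freedman} applicable, with $\tn{g_s}\le K$ replaced by the a.s.\ bound derived above. We would first try the reversed filtration $\cF_s=\sigma(u_r,w_r: r>s)$, under which $g_s$ is predictable and $\xi_s$ is conditionally subgaussian.
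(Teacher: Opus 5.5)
Your proposal is correct and follows the paper's proof. Both use the same $1/4$-nets on the two spheres, giving the $9^{d_\phi+d_y}$ union bound. Both unroll $x_{t-\tau+1}$ back to $x_0=0$ and exchange the sums to write the bilinear form as $\sum_s f_s(u_{s+1},\dots,u_T)^\top(Bu_s+w_s)$. Both then apply the sub-Gaussian martingale bound (Proposition~\ref{lem:freedman}) in reverse time.

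The filtration $\sigma(u_r,w_r : r>s)$ you settle on is the right one. Your other suggestion, conditioning on all inputs, would fail for the $Bu_s$ part, because $Bu_s$ would no longer be zero-mean.

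The only difference is cosmetic. You bound $\sum_s\|g_s\|_{\ell_2}^2$ by Cauchy--Schwarz over the geometric weights, whereas the paper uses the uniform bound $\|f_s\|_{\ell_2}\le \sqrt{2(d_u+1)^{\tau-1}}\,\|C\|_{\op}\kappa\rho^{\tau-1}/(1-\rho)$. Both give the same rate, and neither argument recovers the exact absolute constant in $c_1$.
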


\begin{proof}
First, using the variational form of the operator norm, we see that:
\begin{align*}
   \left \Vert \sum_{t=\tau}^T \phi_t  z_t^\top   \right\Vert_\op  = \sup_{\theta \in \cS^{d_{\phi} - 1}, \lambda \in \cS^{d_y - 1}}   \sum_{t=\tau}^T (\theta^\top\phi_t) (\lambda^\top z_t). 
\end{align*}
We use a $1/4$-net argument to bound the supremum. Let $\Mcal$ (resp. $\Ncal$) be $1/4$-nets with minimal cardinality of the sphere $\cS^{d_{\phi} - 1}$ (resp. $\cS^{d_y - 1}$). Thus, using Lemma \ref{lem:net argument}   we obtain: for all $r > 0$: 
\begin{align}\label{eqn:eq net argument}
    \PP\left(    \left \Vert \sum_{t=\tau}^T \phi_t  z_t^\top   \right\Vert_\op  > 2 r \right) \le 9^{d_{\phi}+ d_y}\max_{\theta \in \Mcal, \lambda \in \Ncal} \PP\left(  \sum_{t=\tau}^T (\theta^\top\phi_t) (\lambda^\top z_t) > r \right).
\end{align}
It remains to bound $\sum_{t=\tau}^T (\theta^\top\phi_t) (\lambda^\top z_t)$ with high probability uniformly over the unit spheres. Let $\theta \in \cS^{d_{\phi} - 1}$, and $\lambda \in \cS^{d_y - 1}$. We have: 
    \begin{align*}
        & \sum_{t=\tau}^T (\theta^\top\phi_t) (\lambda^\top z_t) \\
        & \qquad =  \sum_{t=\tau}^T (\theta^\top\phi_t) (\lambda^\top  C) \left( \prod_{\ell = 1}^{\tau-1} (u_{t-\ell}\circ \Ab )  \right)  \vx_{t-\tau+1}, \\
        & \qquad = \sum_{t=1}^{T-\tau+1} (\theta^\top\phi_{t+\tau-1}) (\lambda^\top  C) \left( \prod_{\ell = 1}^{\tau-1} (u_{t +\tau -1 - \ell}\circ \Ab )  \right)  \vx_{t}, \\
         & \qquad \overset{(a)}{=} \sum_{t=1}^{T-\tau+1} (\theta^\top\phi_{t+\tau-1}) (\lambda^\top  C) \left( \prod_{\ell = 1}^{\tau-1} (u_{t +\tau -1 - \ell}\circ \Ab )  \right)  \sum_{s = 0}^{t-1} \left(\prod_{k = 1}^{t-s-1} (\vu_{t-k} \circ \vA)\right) (\vB \vu_{s} + \vw_{s}), \\
         & \qquad = \sum_{t=1}^{T-\tau+1} \sum_{s = 0}^{t-1}  (\theta^\top\phi_{t+\tau-1}) (\lambda^\top  C) \left( \prod_{\ell = 1}^{\tau-1} (u_{t +\tau-1- \ell}\circ \Ab )  \right)  \left(\prod_{k = 1}^{t-s-1} (\vu_{t-k} \circ \vA)\right) (\vB \vu_{s} + \vw_{s}), 
         \\ & \qquad \overset{(b)}{=} \sum_{t=0}^{T-\tau} \sum_{s = 0}^{t}  \underbrace{(\theta^\top\phi_{t+\tau}) (\lambda^\top  C) \left( \prod_{\ell = 1}^{\tau-1} (u_{t +\tau- \ell}\circ \Ab )  \right)}_{:= M(\vu_{t+1}, \dots, \vu_{t+\tau})}  \underbrace{\left(\prod_{k = 1}^{t-s} (\vu_{t+1-k} \circ \vA)\right) }_{:=N(\vu_{s+1}, \dots, \vu_{t})} (\vB \vu_{s} + \vw_{s}), 
    \end{align*}
    where we used in $(a)$, the dynamics \eqref{eqn:BDS-FO} to express $\vx_t$ in terms of $(\vu_{s}, \vw_s)_{s < t}$ (e.g., see \eqref{eqn:bilinear sys state}). We also introduce in $(b)$ the quantities $M$ and $N$ which depends on inputs. Next, we perform next a change of indices to obtain: 
   \begin{align}\label{eq:martingale bias}
        \sum_{t=\tau}^T (\theta^\top\phi_t) (\lambda^\top z_t)  
        & = \sum_{s=0}^{T-\tau} \underbrace{\left( \sum_{t = s}^{T-\tau} \vM(\vu_{t+1}, \dots, \vu_{t+\tau}) \vN (\vu_{s+1}, \dots, \vu_t) \right)}_{:= f_s(\vu_{s+1}, \dots, \vu_{T})} (\vB \vu_{s} + \vw_{s}) \nonumber \\
        & \overset{(c)}{=} \sum_{s=0}^{T-\tau} f_s(\vu_{s+1}, \dots, \vu_{T}) (\vB \vu_{s} + \vw_{s}),
    \end{align}
    where we introduce in $(c)$ the functions $f_s(\cdot)$. Now we clearly see that $\sum_{t=\tau}^T (\theta^\top\phi_t) (\lambda^\top z_t)$, written in the form \eqref{eq:martingale bias} is a martingale difference. Specifically, the right side of \eqref{eq:martingale bias} is a sum of martingale differences when the indices are revealed in reverse order. Before we use this fact, let us note that using Lemma \ref{lem:upper bound stability} we can show that the terms involving the functions $f_s(\cdot)$ are well bounded. More specifically, we have 
    
    \begin{align*}
        \Vert f(\vu_{s+1}, \dots, \vu_{T}) \Vert_{\ell_2} & =  \left\Vert \sum_{t = s}^{T-\tau} \vM(\vu_{t+1}, \dots, \vu_{t+\tau}) \vN (\vu_{s+1}, \dots, \vu_t)  \right\Vert_{\ell_2} \\
        & \le  \sum_{t = s}^{T-\tau} \left\Vert \phi_{t+\tau} \right\Vert_{\ell_2} \left \Vert \vC\right\Vert_\op \left\Vert \prod_{\ell = 1}^{\tau-1} (u_{t + \tau - \ell}\circ \Ab )  \prod_{k = 1}^{t-s} (u_{ t+1-k}\circ \Ab ) \right\Vert_{\op}  \\
        & \le \sup_{t \le T}\Vert \phi_t\Vert_{\ell_2}\Vert \vC \Vert_\op \kappa  \rho^{\tau-1} \sum_{i = 0}^{T-\tau-s}  \rho^{i} \\
        &  \le  \frac{\sqrt{2(d_u+1)^{\tau-1}}\Vert \vC \Vert_\op \kappa\rho^{\tau-1}}{1-\rho}
    \end{align*}
    where we bound $\Vert \phi_t \Vert_{\ell_2} \le \sqrt{d_{\phi}} \leq \sqrt{2(d_u+1)^{\tau-1}}$. Now, we also remark that $\vB \vu_s + \vw_s$ is zero-mean and $(\Vert \vB\Vert_\op^2 + \sigma^2)$-sub-Gaussian. Hence, using Freedman's inequality (see Lemma \ref{lem:freedman}), we obtain: for all $\delta \in (0,1)$, the event:
    
    \begin{align*}
        & \left\vert \sum_{s=0}^{T-\tau}f_s(\vu_{s+1}, \dots, \vu_{T}) (\vB \vu_{s} + \vw_{s}) \right\vert \\
        & \qquad \qquad \qquad \qquad >  \frac{\Vert \vC  \Vert_\op \kappa  \rho^{\tau-1} \sqrt{\Vert \vB \Vert_\op^2 + \sigma^2}  \sqrt{2 (T-\tau+1) (d_u+1)^{\tau-1}\log(2 \cdot9^{d_{\phi} + d_y }/\delta)} }{1-\rho} 
    \end{align*}
    with probability at most $\delta/9^{d_{\phi} + d_y }$. The conclusion follows immediately by recalling the inequality \eqref{eqn:eq net argument}. This completes the proof.
\end{proof}

\subsection{Upper Bounding Approximation Error for \eqref{eqn:LDS-BO}}
In this subsection, we will upper bound the term $\opn{\sum_{t=\tau}^T\phi_t z_t^\T}$ for learning bilinearly observed linear dynamical systems, governed by state-space equation \eqref{eqn:LDS-BO}. Recall that, in the case of \eqref{eqn:LDS-BO}, the truncation bias $z_t$ takes the following form:
\begin{equation}
\begin{aligned} \label{eqn:z_t_LDS_BO_app}
   z_t  = (u_{t} \circ C)  A^{\tau-1} x_{t-\tau+1}
\end{aligned}
\end{equation}

\begin{proposition}[Truncation bias for \eqref{eqn:LDS-BO}]\label{prop:truncation_bias_lds_bo_app}
    Consider the system~\eqref{eqn:LDS-BO}. Let $\phi_t$ and $z_t$ be as in \eqref{eqn:phi_t_LDS_BO_app} and \eqref{eqn:z_t_LDS_BO_app}, respectively. Suppose Assumptions \ref{assump:stability_linear} and \ref{assump:input+noise} hold. Let $\delta \in (0,1)$ and $T \ge \tau-1$. Then, the event:
        \begin{align*}
        \left\Vert \sum_{t=\tau}^T \phi_t z_t^\top \right\Vert_\op \le  \frac{c_2 \kappa \rho^{\tau-1} }{1-\rho}\sqrt{(T-\tau+1)d_u(d_u+1)(\tau-1)\left( \log\left(\frac{e}{\delta}\right) + d_y + d_u(d_u+1)(\tau-1) \right)}      
    \end{align*}
    holds with probability at least $1-\delta$, with $c_2 = \beta_C \sqrt{2\log(9)(\Vert \vB \Vert_\op^2 + \sigma^2)}$, where we define $\beta_C := \sup_{t \leq T}\opn{u_t \circ C}$. 
\end{proposition}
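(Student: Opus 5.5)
I will follow the proof of Proposition~\ref{prop:truncation_bias_bds_po_app} almost verbatim, with the product of input-dependent matrices replaced by the simpler factor $(u_t\circ C)A^{\tau-1}$.

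\textbf{Step 1: reduce to fixed directions.} Write the operator norm variationally as
\[
\sup_{\theta\in\cS^{d_\phi-1},\,\lambda\in\cS^{d_y-1}}\sum_{t=\tau}^T(\theta^\top\phi_t)(\lambda^\top z_t),
\qquad d_\phi=d_u(d_u+1)(\tau-1).
\]
I take $1/4$-nets of both spheres. The same net inequality as \eqref{eqn:eq net argument} then reduces the claim to a tail bound for each fixed pair $(\theta,\lambda)$, at the cost of a factor $9^{d_\phi+d_y}$.

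\textbf{Step 2: expand into a reverse-time martingale sum.} Fix $(\theta,\lambda)$. I unroll the linear state as
\[
x_{t-\tau+1}=\sum_{s<t-\tau+1}A^{t-\tau-s}(Bu_s+w_s).
\]
Shifting indices gives
\[
\sum_{t=\tau}^T(\theta^\top\phi_t)(\lambda^\top z_t)=\sum_{s=0}^{T-\tau}f_s\,(Bu_s+w_s),
\qquad
f_s:=\sum_{t=s}^{T-\tau}(\theta^\top\phi_{t+\tau})\,\lambda^\top(u_{t+\tau}\circ C)\,A^{\tau-1}A^{t-s}.
\]
The key measurability point is that $\phi_{t+\tau}$ involves only $u_{t+1},\dots,u_{t+\tau}$, and $u_{t+\tau}\circ C$ involves only $u_{t+\tau}$. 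Since $t\ge s$, every input entering $f_s$ has index at least $s+1$. So $f_s$ depends only on $u_{s+1},\dots,u_T$. Revealing indices in reverse order, $f_s$ is predictable. Moreover $Bu_s+w_s$ is independent of $(u_{s+1},\dots,u_T)$, zero-mean, and $(\opn{B}^2+\sigma^2)$-sub-Gaussian. This is the same reverse-filtration structure as before. The main thing to check carefully is that no $u_s$ or $w_s$ leaks into $f_s$, which holds because of the index shift.

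\textbf{Step 3: bound the predictable coefficient.} I use $\tn{\phi_t}=\tn{\util_t}\,\tn{\ubar_{t-1}}$, together with $\tn{u_t}=\sqrt{d_u}$. This gives $\tn{\phi_t}\le\sqrt{(1+d_u)d_u(\tau-1)}=\sqrt{d_\phi}$ deterministically. Combined with $\opn{u_t\circ C}\le\beta_C$ and $\opn{A^k}\le\kappa\rho^k$, I get
\[
\tn{f_s}\le \sqrt{d_\phi}\,\beta_C\,\kappa\rho^{\tau-1}\sum_{i\ge0}\rho^i
=\frac{\sqrt{d_\phi}\,\beta_C\,\kappa\rho^{\tau-1}}{1-\rho}.
\]
Since $\kappa$ bounds $\opn{A^k}/\rho^k$ uniformly, I use $\kappa$ rather than $\kappa^2$ by bounding $\opn{A^{\tau-1+t-s}}\le\kappa\rho^{\tau-1+t-s}$ directly.

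\textbf{Step 4: concentrate and union bound.} I apply Proposition~\ref{lem:freedman} with $T-\tau+1$ terms and confidence $\delta/9^{d_\phi+d_y}$. This gives a deviation of
\[
\frac{\beta_C\kappa\rho^{\tau-1}\sqrt{d_\phi}}{1-\rho}\,\sqrt{\opn{B}^2+\sigma^2}\,\sqrt{2(T-\tau+1)\log\!\bigl(2\cdot 9^{d_\phi+d_y}/\delta\bigr)}.
\]
A union bound over the net, with the factor $2$ absorbed, finishes the argument. Finally, $\log(2\cdot 9^{d_\phi+d_y}/\delta)\le\log 9\,\bigl(\log(e/\delta)+d_y+d_\phi\bigr)$, which yields the stated constant $c_2=\beta_C\sqrt{2\log 9\,(\opn{B}^2+\sigma^2)}$.
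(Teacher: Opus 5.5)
Your proposal matches the paper's proof step for step: the $1/4$-net reduction with the factor $9^{d_\phi+d_y}$, the reverse-time rewriting $\sum_{s=0}^{T-\tau} f_s(u_{s+1},\dots,u_T)\,(Bu_s+w_s)$ with $f_s$ built from $A^{t+\tau-s-1}$, the bound $\|f_s\|_{\ell_2}\le \sqrt{d_\phi}\,\beta_C\kappa\rho^{\tau-1}/(1-\rho)$, and Azuma--Hoeffding (Proposition~\ref{lem:freedman}) followed by a union bound over the net. One small caveat: the factor $2$ from the $1/4$-net cannot literally be absorbed into the explicit constant $c_2$, because your inequality $\log(2\cdot 9^{d_\phi+d_y}/\delta)\le \log 9\,(\log(e/\delta)+d_y+d_\phi)$ only absorbs the additive terms and not a multiplicative $4$ under the square root; so the argument proves the stated bound up to a factor $2$, a factor the paper's proof also drops without comment.
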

\begin{proof}
Using a similar argument as the proof of Proposition~\ref{prop:truncation_bias_bds_po_app}, for all $r > 0$: 
\begin{align}\label{eqn:eq net argument}
    \PP\left(    \left \Vert \sum_{t=\tau}^T \phi_t  z_t^\top   \right\Vert_\op  > 2 r \right) \le 9^{d_{\phi}+ d_y}\max_{\theta \in \Mcal, \lambda \in \Ncal} \PP\left(  \sum_{t=\tau}^T (\theta^\top\phi_t) (\lambda^\top z_t) > r \right).
\end{align}
It remains to bound $\sum_{t=\tau}^T (\theta^\top\phi_t) (\lambda^\top z_t)$ with high probability uniformly over the unit spheres. Let $\theta \in \cS^{d_{\phi} - 1}$, and $\lambda \in \cS^{d_y - 1}$. We have: 
    \begin{align*}
        & \sum_{t=\tau}^T (\theta^\top\phi_t) (\lambda^\top z_t) \\
        & \qquad =  \sum_{t=\tau}^T (\theta^\top\phi_t) (\lambda^\top  (u_{t} \circ C))  A^{\tau-1}  \vx_{t-\tau+1}, \\
        & \qquad = \sum_{t=1}^{T-\tau+1} (\theta^\top\phi_{t+\tau-1}) (\lambda^\top  (u_{t+\tau-1} \circ C)) A^{\tau-1}  \vx_{t}, \\
         & \qquad = \sum_{t=1}^{T-\tau+1} \sum_{s = 0}^{t-1}  (\theta^\top\phi_{t+\tau-1}) (\lambda^\top  (u_{t+\tau-1} \circ C)) A^{\tau-1} A^{t-s-1} (\vB \vu_{s} + \vw_{s}), 
         \\ & \qquad = \sum_{t=0}^{T-\tau} \sum_{s = 0}^{t}  \underbrace{(\theta^\top\phi_{t+\tau}) (\lambda^\top  (u_{t+\tau} \circ C)) A^{t+\tau-s-1}}_{:= M(\vu_{t+1}, \dots, \vu_{t+\tau})}   (\vB \vu_{s} + \vw_{s}), \nn \\
         & \qquad = \sum_{s=0}^{T-\tau} \underbrace{\left( \sum_{t = s}^{T-\tau} \vM(\vu_{t+1}, \dots, \vu_{t+\tau}) \right)}_{:= f_s(\vu_{s+1}, \dots, \vu_{T})} (\vB \vu_{s} + \vw_{s}) \\
        & \qquad = \sum_{s=0}^{T-\tau} f_s(\vu_{s+1}, \dots, \vu_{T}) (\vB \vu_{s} + \vw_{s}),
    \end{align*}
   The rest of the proof follows similar to the proof of Proposition~\ref{prop:truncation_bias_bds_po_app}, with the upper bound on $\Vert f(\vu_{s+1}, \dots, \vu_{T}) \Vert_{\ell_2}$ derived as follows:
    \begin{align*}
        \Vert f(\vu_{s+1}, \dots, \vu_{T}) \Vert_{\ell_2} 
        & \le  \sum_{t = s}^{T-\tau} \left\Vert \phi_{t+\tau} \right\Vert_{\ell_2} \left \Vert u_{t+\tau} \circ C \right\Vert_\op \opn{A^{t+\tau-s-1}}  \\
        & \le \sup_{t \le T}\Vert \phi_t\Vert_{\ell_2} \sup_{t \le T}\Vert u_t \circ \vC \Vert_\op \kappa  \rho^{\tau-1} \sum_{i = 0}^{T-\tau-s}  \rho^{i} \\
        &  \le  \frac{\sqrt{d_u(d_u+1)(\tau-1)}\beta_C \kappa\rho^{\tau-1}}{1-\rho},
    \end{align*}
    where $\beta_C := \sup_{t \le T}\Vert u_t \circ \vC \Vert_\op$.
\end{proof} \section{Proofs of Self-Normalized Martingale Bounds}\label{app:error noise}
In this Appendix, we present the proof of our main results on upper bounding the estimation error term due to noise processes, which is denoted by $E_1$ in \eqref{eqn:estimation_error}. Specifically, we will upper bound $E_1$ by controlling the self-normalized martingale term $\opn{ \left( \sum_{t=1}^T  \eta_t \phi_t ^\top \right) \left(\sum_{t=1}^{T} \phi_{t} \phi_{t}^\top \right)^{-1/2}}$ for the dynamical systems we studied in Section~\ref{sec:sysid}.

\subsection{Upper Bounding Self-Normalized Noise Term for \eqref{eqn:BDS-PO}}
In this subsection, we will upper bound the term $\opn{ \left( \sum_{t=\tau}^T  \eta_t \phi_t ^\top \right) \left(\sum_{t=\tau}^{T} \phi_{t} \phi_{t}^\top \right)^{-1/2}}$ for learning partially observed bilinear dynamical systems, governed by state-space equation \eqref{eqn:BDS-PO}. Recall that, in the case of \eqref{eqn:BDS-PO}, the noise process $\eta_t$ takes the following form:
\begin{equation}
\begin{aligned} \label{eqn:eta_t_BDS_PO_app}
   \eta_t  &= F_t \omega_t  = v_t + \sum_{j = 1}^{\tau-1} (F_{t})_j w_{t-j}, \quad \quad \quad  (F_{t})_j := C \prod_{i=1}^{j-1}(u_{t-i}\circ A), 
\end{aligned}
\end{equation}
where $F_t  = \begin{bmatrix}
     I_{d_y} &
    C   &
    C \,  (u_{t-1}\circ A )  &
    \cdots &
    C \prod_{\ell = 1}^{\tau-2} (u_{t-\ell}\circ A ) 
\end{bmatrix}$ and $\omega_t = \begin{bmatrix} v_t^\T & w_{t-1}^\T & \cdots & 
w_{t-\tau+1}^\T \end{bmatrix}^\T$.

\begin{proposition}[Self-Normalized Term for \eqref{eqn:BDS-PO}]\label{prop:self_normalized_bds_po_app}
    Consider the system~\eqref{eqn:BDS-PO}. Let $\phi_t$ and $\eta_t$ be as in \eqref{eqn:phi_t_BDS_PO_app} and \eqref{eqn:eta_t_BDS_PO_app}, respectively. Suppose Assumptions~\ref{assump:input+noise} and \ref{assump:stability} hold. Let $\delta \in (0,1)$ and $T \ge \tau-1$. Then, the event:
\begin{align}
    \opn{ \left( \sum_{t=\tau}^T  \eta_t \phi_t ^\top \right) \left(\sum_{t=\tau}^{T} \phi_{t} \phi_{t}^\top\right)^{-1/2}} 
    &\leq  c_3 \sigma  \sqrt{(d_u+1)^{\tau-1} \log(2) + 2d_y\log(5) + 2\log\left(\frac{\tau}{\delta}\right)}, 
\end{align}
    holds with probability at least $1-\delta$, with $c_3 = 2\sqrt{2}\left(1 + {\kappa \opn{C}}/{(1-\rho)}\right)$.
\end{proposition}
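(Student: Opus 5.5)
The noise term $\eta_t = v_t + \sum_{j=1}^{\tau-1}(F_t)_j w_{t-j}$ is not a martingale difference with respect to a filtration under which $\phi_t$ is predictable. The reason is that $w_{t-j}$ for $j\ge 1$ is correlated with later covariates $\phi_{t'}$, $t'>t$, through $x$. Only the inputs enter $\phi_t$, however, and the noise enters only $\eta_t$. The plan is therefore to split $\eta_t$ into its $\tau$ components and treat each separately, using the self-normalized bound of Proposition~\ref{prop:self-normalized vector martingale} for each and a union bound over the $\tau$ pieces. This is where the $\log(\tau/\delta)$ term comes from.

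\textbf{Step 1 (decomposition).} Write
\[
\Big(\sum_t \eta_t\phi_t^\top\Big)\Big(\sum_t\phi_t\phi_t^\top\Big)^{-1/2}=\sum_{j=0}^{\tau-1}\Big(\sum_t (F_t)_j w_{t-j}\phi_t^\top\Big)\Big(\sum_t\phi_t\phi_t^\top\Big)^{-1/2},
\]
with $(F_t)_0 w_{t}:=v_t$. Bound the operator norm by the sum over $j$.

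\textbf{Step 2 ($j=0$).} Take the filtration generated by all inputs $u_{0:T}$ together with $\{v_s\}_{s<t}$. Since the inputs are independent of the noise, conditioning on all inputs is legitimate, and $v_t$ remains conditionally $\sigma^2$-subgaussian while $\phi_t$ is measurable. Proposition~\ref{prop:self-normalized vector martingale} with $\Lambda=\lambda I$ gives the bound. I would then pass from $(\sum\phi\phi^\top+\Lambda)^{-1/2}$ to $(\sum\phi\phi^\top)^{-1/2}$ using the fact that $\sum\phi\phi^\top\succeq\Lambda$ on the persistence-of-excitation event. The log-det ratio should then be controlled by at most a constant times $d_\phi$; the stated $(d_u+1)^{\tau-1}\log 2$ suggests choosing $\Lambda=\sum\phi\phi^\top$ on a deterministic sandwich, or more precisely $\Lambda = (T-\tau+1)I/4$. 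This requires an upper bound $\sum\phi\phi^\top\preceq (T-\tau+1) d_\phi I$, obtained from $\|\phi_t\|^2\le d_\phi$, which holds deterministically here because the inputs lie on the sphere. Then $\log\det(2\cdot)$ gives the $d_\phi\log 2$ form, and the variance factor is $\sigma$.

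\textbf{Step 3 ($j\ge 1$).} Condition again on all inputs $u_{0:T}$. The matrix $(F_t)_j=C\prod_{i<j}(u_{t-i}\circ A)$ is then deterministic and satisfies $\|(F_t)_j\|\le\|C\|\kappa\rho^{j-1}$ by Assumption~\ref{assump:stability}. For fixed $j$, the noise sequence $\{w_{t-j}\}_t$ is independent across $t$ and independent of $\phi$, so $\tilde\eta_t:=(F_t)_jw_{t-j}$ is conditionally $(\|C\|\kappa\rho^{j-1}\sigma)^2$-subgaussian given the inputs and $\{w_{s}\}_{s<t-j}$. Applying Proposition~\ref{prop:self-normalized vector martingale} with $\delta/\tau$ yields the same bound as in Step 2 scaled by $\|C\|\kappa\rho^{j-1}$. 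Summing over $j$ gives the factor $\sum_j\rho^{j-1}\le 1/(1-\rho)$, which produces $c_3\propto 1+\kappa\|C\|/(1-\rho)$. A union bound over the $\tau$ events, intersected with the persistence-of-excitation event of Proposition~\ref{prop:persistence_BDS_PO_app}, gives the conclusion after adjusting constants.

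The main obstacle I expect is the regularization step, that is, converting Proposition~\ref{prop:self-normalized vector martingale}'s $(\sum\phi\phi^\top+\Lambda)^{-1/2}$ normalization into the unregularized one while keeping the dimension term at $(d_u+1)^{\tau-1}\log 2$ rather than $d_\phi\log T$. This needs $\Lambda$ comparable to the lower eigenvalue bound, so that $\sum\phi\phi^\top+\Lambda\preceq 2\sum\phi\phi^\top$. It also needs the log-det ratio controlled by $\log\det(I+\Lambda^{-1}\sum\phi\phi^\top)$. If that only yields $d_\phi\log(1+4d_\phi)$, I would accept a logarithmic loss or tighten it with the trace bound $\operatorname{tr}\sum\phi\phi^\top\le (T-\tau+1)d_\phi$ and AM–GM on the eigenvalues.
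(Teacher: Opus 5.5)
Your overall structure matches the paper's proof. You split $\eta_t$ into $v_t$ and the $\tau-1$ pieces $(F_t)_j w_{t-j}$, and condition on $\sigma(u_{0:T})$ so that $\phi_t$ and $(F_t)_j$ are fixed with $\opn{(F_t)_j}\le\opn{C}\kappa\rho^{j-1}$. You then apply Proposition~\ref{prop:self-normalized vector martingale} to each piece at level $\delta/\tau$ and sum the geometric series to get $c_3$. One side remark: the reason $\eta_t$ is not a martingale difference is that different $\eta_t$ share the same $w_s$. It is not correlation with $\phi$ ``through $x$''; $\phi_t$ depends on the inputs only.

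The gap is in the regularization step. Write $V_T:=\sum_{t=\tau}^T\phi_t\phi_t^\top$.
\begin{itemize}
\item With $\Lambda=\frac{T-\tau+1}{4}I$, getting $V_T+\Lambda\preceq 2V_T$ forces you to intersect with the persistence-of-excitation event of Proposition~\ref{prop:persistence_BDS_PO_app}. That event requires $T-\tau+1\gtrsim\tau(\tau-1)3^{\tau}(\cdots)$ and costs extra failure probability. The proposition assumes only $T\ge\tau-1$, so you prove a weaker, conditional statement. That version would suffice inside Theorem~\ref{thm:main_BDS_PO}, but it is not the proposition as stated.
\item Your claim that this yields $d_\phi\log 2$ is false for this $\Lambda$. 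The ratio is $\det(I+4V_T/(T-\tau+1))$. Using $\tr V_T=(T-\tau+1)d_\phi$ (indeed $\tn{\phi_t}^2=d_\phi$ exactly) and AM--GM, the best you get is $d_\phi\log 5$.
\end{itemize}

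The missing observation is that $V_T$ is $\sigma(u_{0:T})$-measurable, because $\phi_t$ depends only on the inputs. It is therefore a fixed matrix once you condition on the inputs, and so an admissible choice of $\Lambda$ in Proposition~\ref{prop:self-normalized vector martingale}. The paper takes $\Lambda=\bar V_T=V_T+\lambda I$ and then sets $\lambda=0$, i.e.\ $\Lambda=V_T$. Then $\det(V_T+\Lambda)/\det\Lambda=2^{d_\phi}$ exactly. With $S_j:=\sum_t(F_t)_jw_{t-j}\phi_t^\top$ you get $\opn{S_jV_T^{-1/2}}^2=2\opn{S_j(2V_T)^{-1/2}}^2\le 8\sigma^2c_j^2(d_\phi\log 2+2\log(\tau 5^{d_y}/\delta))$, where $c_0=1$ and $c_j=\opn{C}\kappa\rho^{j-1}$. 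This uses only $V_T\succ 0$, which the statement needs anyway for $V_T^{-1/2}$ to exist. No quantitative persistence-of-excitation bound, sample-size condition, or upper bound on $V_T$ enters. You were one step away when you noted that the $\log 2$ ``suggests choosing $\Lambda=\sum\phi\phi^\top$''. No deterministic sandwich is needed, because conditioning on the inputs already makes that choice legitimate.
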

\begin{proof}
To begin, we define,
\begin{align}
V_T := \sum_{t=\tau}^{T} \phi_{t} \phi_{t}^\top , \quad \text{and} \quad \bar V_T := \sum_{t=\tau}^{T} \phi_{t} \phi_{t}^\top + \lambda I_{d_\phi}. \label{eqn:Vbar_V_def_BDS_PO_app}
\end{align}
Combining this with \eqref{eqn:eta_t_BDS_PO_app}, we can write the self-normalized noise term in the regularized case as follows:
\begin{align}
    \opn{ \left( \sum_{t=\tau}^T  \eta_t \phi_t ^\top \right) \bar V_T^{-1/2}} &= \opn{ \left( \sum_{t=\tau}^T  \left( v_t + \sum_{j = 1}^{\tau-1} (F_{t})_j w_{t-j}\right) \phi_t ^\top \right) \bar V_T^{-1/2}}, \nn \\
    & \leq \opn{ \left( \sum_{t=\tau}^T  v_t \phi_t ^\top \right) \bar V_T^{-1/2}} + \sum_{j=1}^{\tau-1}  \opn{ \left( \sum_{t=\tau}^T  (F_{t})_j w_{t-j} \phi_t ^\top \right) \bar V_T^{-1/2}}. \label{eqn:split_noise_BDS_PO_app}
\end{align}
In the following, we will upper bound each term in \eqref{eqn:split_noise_BDS_PO_app} separately. To proceed, note that conditioned on the input sequence $\Ucal_T:=\sigma(u_0,\ldots,u_T)$, the matrices $V_T, \bar V_T, \{(F_{t})_j\}$, and the covariates $\phi_t$ are deterministic for all $t \in [\tau,T]$ and $j \in[1,\tau-1]$. Moreover, from Assumption~\ref{assump:stability} we have
\begin{align}
    \opn{(F_{t})_j}  = \opn{C \prod_{i=1}^{j-1}(u_{t-i}\circ A)} \leq \opn{C} \kappa \rho^{j-1}.
\end{align}
Hence conditioned on the input sequence $\Ucal_T:=\sigma(u_0,\ldots,u_T)$, the noise process $(F_{t})_j w_{t-j}$ is $\sigma^2 \opn{C}^2 \kappa^2 \rho^{2(j-1)}$-sub-Gaussian random vector, whereas, $v_t$ is $\sigma^2$-sub-Gaussian random vector. For any $j \in [0, \tau-1]$, let $\Fcal_t^{(j)}:=\sigma(u_0,\ldots,u_T, v_0,\ldots,v_t, w_0, \ldots w_{t-j})$ denote the filtration generated by the entire input sequence, and noise processes when $t \geq \tau$. Furthermore, let $\Fcal_{\tau-1}^{(j)}:=\sigma(u_0,\ldots,u_T)$. Note that, $v_t$ is $\Fcal_t^{(0)}$-measurable, and conditioned on $\cF_{t-1}^{(0)}$, it is $\sigma^2$-sub-Gaussian. Similarly, $(F_{t})_j w_{t-j}$ is $\Fcal_t^{(j)}$-measurable, and conditioned on $\cF_{t-1}^{(j)}$, it is $\sigma^2 \opn{C}^2 \kappa^2 \rho^{2(j-1)}$-sub-Gaussian. Hence applying Proposition~\ref{prop:self-normalized vector martingale} to each term in \eqref{eqn:split_noise_BDS_PO_app}, with probability at least $1-\delta/\tau$, we have
\begin{equation}
\begin{aligned}\label{eqn:apply_self_mart_prop_BDS_PO_v1}
    \opn{ \left( \sum_{t=\tau}^T  v_t \phi_t ^\top \right) \left(V_T + \bar V_T\right)^{-1/2}}^2 &\leq 4\sigma^2\log \left(\frac{\det(V_T+\bar V_T)}{\det (\bar V_T)}\right) + 8\sigma^2 \log\left(\frac{\tau 5^{d_y}}{\delta}\right), \\
    \opn{ \left( \sum_{t=\tau}^T  (F_{t})_j w_{t-j} \phi_t ^\top \right) \left(V_T + \bar V_T\right)^{-1/2}}^2 &\leq 4\sigma^2 c_j^2\log\left(\frac{\det(V_T+\bar V_T)}{\det (\bar V_T)}\right) + 8\sigma^2 c_j^2\log\left(\frac{\tau 5^{d_y}}{\delta}\right),
\end{aligned}
\end{equation}
for all $j \in [1,\tau-1]$, where we set $c_j := \opn{C} \kappa \rho^{j-1}$. To proceed, since $V_T \preceq \bar V_T$, we have
\begin{equation}
\begin{aligned}
&V_T+\bar V_T\preceq2\bar V_T, \qquad\qquad\qquad \bar V_T^{-1} \preceq 2(V_T+\bar V_T)^{-1}, \\
\text{and} \quad &\log \left(\frac{\det(V_T+\bar V_T)}{\det \bar V_T}\right) = 
\log \left(\det\left(I_{d_\phi}+\bar V_T^{-1/2}V_T\bar V_T^{-1/2}\right) \right)
\le d_{\phi}\log2.
\end{aligned}\label{eqn:comparison_BDS_PO}
\end{equation}
Hence, on the event in \eqref{eqn:apply_self_mart_prop_BDS_PO_v1}, with probability at least $1-\delta/\tau$, we have
\begin{equation}
\begin{aligned}\label{eqn:apply_self_mart_prop_BDS_PO_v1}
    \opn{ \left( \sum_{t=\tau}^T  (F_{t})_j w_{t-j} \phi_t ^\top \right) \bar V_T^{-1/2}}^2 & \leq 2\opn{ \left( \sum_{t=\tau}^T  (F_{t})_j w_{t-j} \phi_t ^\top \right) \left(V_T + \bar V_T\right)^{-1/2}}^2, \\
    &\leq 8\sigma^2 c_j^2 \left( d_{\phi} \log(2) + 2\log\left(\frac{\tau 5^{d_y}}{\delta}\right) \right),
\end{aligned}
\end{equation}
for all $j \in [1,\tau-1]$. Similar bound holds for $ \opn{ \left( \sum_{t=\tau}^T  v_t \phi_t ^\top \right) \bar V_T^{-1/2}}^2$ with $c_j$ replaced by $1$. Setting $c_0 = 1$, and $c_j = \opn{C} \kappa \rho^{j-1}$, we union bound over $\tau$ events~\eqref{eqn:apply_self_mart_prop_BDS_PO_v1} to get the following:
\begin{align}
    \opn{ \left( \sum_{t=\tau}^T  \eta_t \phi_t ^\top \right) \bar V_T^{-1/2}} 
    & \leq \opn{ \left( \sum_{t=\tau}^T  v_t \phi_t ^\top \right) \bar V_T^{-1/2}} + \sum_{j=1}^{\tau-1}  \opn{ \left( \sum_{t=\tau}^T  (F_{t})_j w_{t-j} \phi_t ^\top \right) \bar V_T^{-1/2}}, \nn \\
    &\leq \sum_{j=0}^{\tau-1} 2\sqrt{2}\sigma c_j \sqrt{d_{\phi} \log(2) + 2\log\left(\frac{\tau 5^{d_y}}{\delta}\right)}, \nn \\
    &\leq 2\sqrt{2} \sigma \left(1 + \opn{C} \kappa\sum_{j=1}^{\tau-1} \rho^{j-1} \right) \sqrt{d_{\phi} \log(2) + 2\log\left(\frac{\tau 5^{d_y}}{\delta}\right) }, \nn \\
    &\leq 2\sqrt{2} \sigma \left(1 + \frac{\opn{C} \kappa}{1-\rho} \right) \sqrt{d_{\phi} \log(2) + 2d_y\log(5) + 2\log\left(\frac{\tau}{\delta}\right) },
    \label{eqn:after_union_bound_BDS_Po}
\end{align}
with probability at least $1-\delta$. In the un-regularized case (i.e., $\lambda = 0$ in \eqref{eqn:Vbar_V_def_BDS_PO_app}), given the persistence of excitation holds, i.e., $V_T\succ0$, we have $\bar V_T=V_T$ and $V_T+V_T=2V_T$.
The two comparisons in \eqref{eqn:comparison_BDS_PO} are then equalities. Thus the same proof directly gives
\begin{align}
    \opn{ \left( \sum_{t=\tau}^T  \eta_t \phi_t ^\top \right) V_T^{-1/2}} 
    &\leq 2\sqrt{2} \sigma \left(1 + \frac{\opn{C} \kappa}{1-\rho} \right) \sqrt{d_{\phi} \log(2) + 2d_y\log(5) + 2\log\left(\frac{\tau}{\delta}\right) },
    \label{eqn:final_unregularized_bound_BDS_Po}
\end{align}
with probability at least $1-\delta$. This completes the proof.
\end{proof}

\subsection{Upper Bounding Self-Normalized Noise Term for \eqref{eqn:LDS-BO}}
In this subsection, we will upper bound the term $\opn{ \left( \sum_{t=\tau}^T  \eta_t \phi_t ^\top \right) \left(\sum_{t=\tau}^{T} \phi_{t} \phi_{t}^\top \right)^{-1/2}}$ for learning linear dynamical systems with bilinear observations, governed by state-space equation \eqref{eqn:LDS-BO}. Recall that, in the case of \eqref{eqn:LDS-BO}, the noise process $\eta_t$ takes the following form:
\begin{equation}
\begin{aligned} \label{eqn:eta_t_LDS_BO_app}
   \eta_t  &= H_t \omega_t  = v_t + \sum_{j = 1}^{\tau-1} (H_{t})_j w_{t-j}, \quad \quad \quad  (H_{t})_j := (u_t \circ C) A^{j-1}, 
\end{aligned}
\end{equation}
where $H_t  = \begin{bmatrix}
     I_{d_y} &
    (u_t \circ C)   &
    (u_t \circ C) A  &
    \cdots &
    (u_t \circ C) A^{\tau-2} 
\end{bmatrix}$ and $\omega_t = \begin{bmatrix} v_t^\T & w_{t-1}^\T & \cdots & 
w_{t-\tau+1}^\T \end{bmatrix}^\T$.

\begin{proposition}[Self-Normalized Term for \eqref{eqn:LDS-BO}]\label{prop:self_normalized_lds_bo_app}
    Consider the system~\eqref{eqn:LDS-BO}. Let $\phi_t$ and $\eta_t$ be as in \eqref{eqn:phi_t_LDS_BO_app} and \eqref{eqn:eta_t_LDS_BO_app}, respectively. Suppose Assumptions~\ref{assump:input+noise} and \ref{assump:stability_linear} hold. Let $\delta \in (0,1)$ and $T \ge \tau-1$. Then, the event:
\begin{align}
    \opn{ \left( \sum_{t=\tau}^T  \eta_t \phi_t ^\top \right) \left(\sum_{t=\tau}^{T} \phi_{t} \phi_{t}^\top\right)^{-1/2}} 
    &\leq c_4\sigma  \sqrt{d_u(d_u+1)(\tau-1) \log(2) + 2d_y\log(5) + 2\log\left(\frac{\tau}{\delta}\right)}, 
\end{align}
    holds with probability at least $1-\delta$, with $c_4 = 2\sqrt{2}\left(1 + {\kappa \beta_C}/{(1-\rho)}\right)$, where $\beta_C := \sup_{t \le T}\Vert u_t \circ \vC \Vert_\op$.
\end{proposition}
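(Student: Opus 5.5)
I would mirror the proof of Proposition~\ref{prop:self_normalized_bds_po_app} almost verbatim, since the structure of $\eta_t$ in \eqref{eqn:eta_t_LDS_BO_app} is the same: a measurement-noise term $v_t$ plus a sum of $\tau-1$ input-dependent linear transformations of past process noises, $(H_t)_j w_{t-j}$ with $(H_t)_j = (u_t\circ C)A^{j-1}$. I would define $V_T := \sum_{t=\tau}^T \phi_t\phi_t^\top$ and a regularized $\bar V_T := V_T + \lambda I_{d_\phi}$ with $d_\phi = d_u(d_u+1)(\tau-1)$. The first step is to apply the triangle inequality to split $\opn{(\sum_t \eta_t\phi_t^\top)\bar V_T^{-1/2}}$ into one term driven by $v_t$ and $\tau-1$ terms driven by $(H_t)_j w_{t-j}$, exactly as in \eqref{eqn:split_noise_BDS_PO_app}.

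Next I would condition on the full input sequence $\Ucal_T = \sigma(u_0,\dots,u_T)$. This is legitimate because the inputs are independent of the noise under Assumption~\ref{assump:input+noise}. Once the inputs are fixed, $\phi_t$, $V_T$, $\bar V_T$ and $(H_t)_j$ are all deterministic. Assumption~\ref{assump:stability_linear} then gives
\[
\opn{(H_t)_j} \le \opn{u_t\circ C}\,\opn{A^{j-1}} \le \beta_C\kappa\rho^{j-1},
\]
so $(H_t)_j w_{t-j}$ is conditionally $\sigma^2\beta_C^2\kappa^2\rho^{2(j-1)}$-subgaussian, while $v_t$ is $\sigma^2$-subgaussian. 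For each $j$ I would use the filtration $\Fcal_t^{(j)} = \sigma(u_{0:T}, v_{0:t}, w_{0:t-j})$. Under it, $\phi_t$ is predictable (it is $\Fcal_{\tau-1}^{(j)}$-measurable) and the $j$-th noise term is a martingale difference. Proposition~\ref{prop:self-normalized vector martingale} then applies with regularizer $\Lambda = \bar V_T$ and confidence $\delta/\tau$.

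The key step, and the one I expect to be the main obstacle, is removing the log-determinant factor without paying for $\opn{V_T}$. For this I would use the comparisons \eqref{eqn:comparison_BDS_PO}: $V_T + \bar V_T \preceq 2\bar V_T$ and $\log\det(I + \bar V_T^{-1/2}V_T\bar V_T^{-1/2}) \le d_\phi\log 2$. These give each squared term a bound of the form
\[
8\sigma^2 c_j^2\bigl(d_\phi\log 2 + 2\log(\tau 5^{d_y}/\delta)\bigr),
\]
with $c_0 = 1$ and $c_j = \beta_C\kappa\rho^{j-1}$. A subtlety is that $\bar V_T$ is itself data-dependent, so it cannot naively serve as a fixed regularizer. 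Conditioning on $\Ucal_T$ resolves this, because $V_T$ depends only on inputs and is therefore fixed given $\Ucal_T$. I would flag this explicitly, and also note that $\beta_C$ is a supremum over $u_t$ that is a.s.\ finite under the spherical input distribution.

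Finally, I would union bound over the $\tau$ events, take square roots, and sum using $\sum_{j\ge1}\rho^{j-1} \le 1/(1-\rho)$. This yields the constant $c_4 = 2\sqrt2\,(1+\kappa\beta_C/(1-\rho))$. Passing to the unregularized case $\lambda = 0$ is justified on the persistence-of-excitation event $V_T \succ 0$ from Proposition~\ref{prop:persistence_LDS_BO_app}; there $\bar V_T = V_T$ and the comparisons become equalities, giving the stated bound with probability at least $1-\delta$.
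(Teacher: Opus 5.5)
Your proposal is correct and is essentially the paper's own argument. The paper proves this proposition by rerunning the proof of Proposition~\ref{prop:self_normalized_bds_po_app} with $(F_t)_j$ replaced by $(H_t)_j=(u_t\circ C)A^{j-1}$, whose conditional variance proxy is $\sigma^2\beta_C^2\kappa^2\rho^{2(j-1)}$; that proof uses the same ingredients you list: the triangle-inequality split into $\tau$ martingale terms, conditioning on the inputs so that $\bar V_T$ is a fixed regularizer, the comparisons \eqref{eqn:comparison_BDS_PO}, and a union bound with the geometric sum.
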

Proposition \ref{prop:self_normalized_lds_bo_app} follows from the same proof techniques as we used to prove Proposition~\ref{prop:self_normalized_bds_po_app}, with $(H_{t})_j w_{t-j} = (u_t \circ C) A^{j-1} w_{t-j}$ being conditionally sub-Gaussian with variance proxy $\sigma^2\beta_C^2 \kappa^2 \rho^{2(j-1)}$. We therefore, omit the proof of Proposition \ref{prop:self_normalized_lds_bo_app} here. Once can directly get the statement of Proposition~\ref{prop:self_normalized_lds_bo_app} by following the proof of Proposition~\ref{prop:persistence_BDS_PO_app} with $(F_{t})_j$ replaced by $(H_t)_j$.

\subsection{Upper Bounding Self-Normalized Noise Term for \eqref{eqn:BDS-FO}}
In this subsection, we will upper bound the term $\opn{ \left( \sum_{t=1}^T  \eta_t \phi_t ^\top \right) \left(\sum_{t=1}^{T} \phi_{t} \phi_{t}^\top \right)^{-1/2}}$ for learning fully observed bilinear dynamical systems, governed by state-space equation \eqref{eqn:BDS-FO}. Recall that, in the case of \eqref{eqn:BDS-FO}, the noise process $\eta_t$  and the covariate $\phi_t$ has the following form:
\begin{equation}
\begin{aligned} \label{eqn:eta_t_BDS_FO_app}
   \eta_t  & = w_t , \quad \quad \quad  \phi_t =  \begin{bmatrix} u_{t} \\
    \util_{t} \otimes x_{t}\end{bmatrix} = \begin{bmatrix} u_{t} \\ x_t \\
    u_t \otimes x_{t}\end{bmatrix}.
\end{aligned}
\end{equation}
Before we state our main result, recall that the covariance of $\phi_t$ in the case of \eqref{eqn:BDS-FO} can be written as:
\begin{align}
    \Gamma_{\phi}^{(t)} = \E \left[ \phi_{t}^{\bds}\phi_{t}^{\bds\T}\right] &= \E \left[ \begin{bmatrix} u_{t} \\ x_{t} \\
    u_{t} \otimes x_{t}\end{bmatrix} \begin{bmatrix} u_{t}^\T & x_{t}^\T &
    u_{t}^\T \otimes x_{t}^\T \end{bmatrix}\right] =  \begin{bmatrix}
        I_{d_u} & 0 & 0 \\
        0 & \Gamma_x^{(t)} & 0 \\
        0 & 0 & I_{d_u} \otimes \Gamma_x^{(t)}
    \end{bmatrix}, \label{eqn:Gamma_phi_def_last}
\end{align}
where $\Gamma_x^{(t)} = \E[x_t x_t^\T]  = \sum_{\ell = 1}^{t} \breve{G}_\ell \breve{G}_\ell^\top + \sigma^2 \sum_{\ell = 1}^{t} \breve{F}_\ell \breve{F}_\ell^\top$. We are now ready to state our main result on upper bounding the self-normalized error for estimating \eqref{eqn:BDS-FO}.
\begin{proposition}[Self-Normalized Term for \eqref{eqn:BDS-PO}]\label{prop:self_normalized_bds_fo_app}
    Consider the system~\eqref{eqn:BDS-FO}. Let $\phi_t$ and $\eta_t$ be as in \eqref{eqn:eta_t_BDS_FO_app}. Suppose Assumptions~\ref{assump:input+noise}, \ref{assump:stability} hold and $\E[w_t w_t^\T]  =\sigma^2 I_{d_x}$. Let $\delta \in (0,1)$ and $T \ge 1$. Then, the event:
\begin{align}
    \opn{ \left( \sum_{t=\tau}^T  w_t \phi_t ^\top \right) \left(\sum_{t=\tau}^{T} \phi_{t} \phi_{t}^\top\right)^{-1/2}} 
    &\leq   2\sqrt{2} \sigma \bigg( d_u(d_u+1) d_x \log \left(25 \frac{d_u(d_u+1) d_x}{\delta}\right) \nn \\
    & \quad+ \log \left( \det\left({\Gamma_\phi^{(T)}}/{\lambda_{\min}\left(\Gamma_{\phi}^{(\tau-1)}\right)}\right)\right) \nn\\
    &\quad + 2d_x \log(5) + 2\log(3/\delta) \bigg)^{1/2}, 
\end{align}
    holds with probability at least $1-\delta$. 
\end{proposition}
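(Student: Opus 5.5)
The plan is to apply the self-normalized martingale bound of Proposition~\ref{prop:self-normalized vector martingale} with a carefully chosen regularizer $\Lambda$. Then we remove the regularization using persistence of excitation, as in the proof of Proposition~\ref{prop:self_normalized_bds_po_app}.

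\textbf{Filtration and the basic bound.} Take $\Fcal_t := \sigma(u_0,\dots,u_{t+1}, w_0,\dots,w_t)$. With this choice, $\phi_t$, which depends only on $u_t$ and on $x_t$ (a function of $u_{0:t-1}$ and $w_{0:t-1}$), is $\Fcal_{t-1}$-measurable. Also $w_t$ is $\Fcal_t$-measurable, and since it is independent of $\Fcal_{t-1}$ it is conditionally $\sigma^2$-subgaussian. Proposition~\ref{prop:self-normalized vector martingale} then gives, for any fixed $\Lambda \succ 0$ and with probability at least $1-\delta/3$,
\[
\opn{\Big(\sum_{t=\tau}^T w_t\phi_t^\T\Big)(V_T+\Lambda)^{-1/2}}^2 \le 4\sigma^2 \log\frac{\det(V_T+\Lambda)}{\det\Lambda} + 8\sigma^2\log\frac{3\cdot 5^{d_x}}{\delta},
\]
where $V_T := \sum_{t=\tau}^T \phi_t\phi_t^\T$.

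\textbf{Choice of $\Lambda$ and removing the regularizer.} Choose $\Lambda := c\,T\,\lambda_{\min}(\Gamma_\phi^{(\tau-1)}) I$, a scalar multiple of the identity. Proposition~\ref{prop:persistence_BDS_FO_app} gives, with probability at least $1-\delta/3$,
\[
V_T \succeq \tfrac{T}{8}\big(1\land\lambda_{\min}(\Gamma_x^{(\tau-1)})\big) I \succeq \Lambda
\]
for a suitable small constant $c$; here we use that $\lambda_{\min}(\Gamma_\phi^{(\tau-1)}) = 1\land\lambda_{\min}(\Gamma_x^{(\tau-1)})$ by \eqref{eqn:Gamma_phi_def_app}. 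On this event $V_T+\Lambda \preceq 2V_T$, so $V_T^{-1}\preceq 2(V_T+\Lambda)^{-1}$. The unregularized self-normalized norm is therefore at most $\sqrt2$ times the regularized one, which is the same comparison as \eqref{eqn:comparison_BDS_PO}.

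\textbf{Log-determinant term.} This is the main obstacle, because $V_T$ is not bounded almost surely. We bound $\det(V_T+\Lambda)\le \det(2V_T)$ and control $V_T$ from above in the Loewner sense. Markov's inequality applied to the trace, as in Step~3 of the proof of Theorem~\ref{thm:persistence}, handles the normalized matrix $(\Gamma_\phi^{(T)})^{-1/2}V_T(\Gamma_\phi^{(T)})^{-1/2}$, whose expectation is $\preceq T I$ because $\Gamma_x^{(t)}$ is nondecreasing in $t$. This yields, with probability at least $1-\delta/3$,
\[
V_T \preceq \frac{3 d_\phi T}{\delta}\,\Gamma_\phi^{(T)} .
\]
Consequently
\[
\log\frac{\det(V_T+\Lambda)}{\det\Lambda} \le d_\phi\log\frac{C d_\phi}{\delta} + \log\det\big(\Gamma_\phi^{(T)}/\lambda_{\min}(\Gamma_\phi^{(\tau-1)})\big),
\]
where the factors of $T$ cancel between $V_T$ and $\Lambda$. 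For the dimension factor, use $d_\phi = d_u+(d_u+1)d_x \lesssim d_u(d_u+1)d_x$, and absorb the constants into the $25$ appearing in the statement.

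A union bound over the three events (martingale bound, persistence of excitation, Markov upper bound) gives total failure probability at most $\delta$. Collecting constants produces the factor $2\sqrt2\sigma$ together with the terms $2d_x\log 5 + 2\log(3/\delta)$ in the statement.
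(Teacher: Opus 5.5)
Your proposal is correct and follows essentially the same route as the paper: the self-normalized bound with the deterministic regularizer $\lambda = T\lambda_{\min}(\Gamma_\phi^{(\tau-1)})/8$, the factor-$\sqrt{2}$ comparison on the persistence-of-excitation event, the matrix Markov bound $V_T \preceq \frac{3d_\phi T}{\delta}\Gamma_\phi^{(T)}$ for the log-determinant, and a union bound over three events. Your causal filtration works as well as the paper's, which conditions on the whole input sequence. The one real difference is in the constant: the paper bounds $\det(I+V_T/\lambda)$ directly and uses $I \preceq \Gamma_\phi^{(T)}/\lambda_{\min}(\Gamma_\phi^{(\tau-1)})$ together with $1+24d_\phi/\delta \le 25d_\phi/\delta$, whereas passing through $\det(2V_T)$ gives roughly $48$ instead of $25$. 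So the $25$ cannot literally be ``absorbed'', though the fix is immediate. Replacing $d_\phi$ by $d_u(d_u+1)d_x$ is a loose step that the paper's proof makes as well.
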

\begin{proof}
To begin, we define,
\begin{align}
V_T := \sum_{t=1}^{T} \phi_{t} \phi_{t}^\top , \quad \text{and} \quad \bar V_T := \sum_{t=\tau}^{T} \phi_{t} \phi_{t}^\top + \lambda I_{d_\phi}. \label{eqn:Vbar_V_def_BDS_PO_app}
\end{align}
Let $\Fcal_t:=\sigma(u_0,\ldots,u_T, w_t, \ldots w_{t})$ denote the filtration generated by the entire input sequence, and noise process when $t \geq 0$. Furthermore, let $\Fcal_{-1}:=\sigma(u_0,\ldots,u_T)$. Then, $x_t $ is $\Fcal_{t-1}$-measurable, and as a result $\phi_t $ is also $\Fcal_{t-1}$-measurable. Moreover, $w_t$ is $\Fcal_t$-measurable, and conditioned on $\cF_{t-1}$, it is $\sigma^2$-sub-Gaussian. Hence applying Proposition, with probability at least $1-\delta/3$, we have
\begin{equation}
\begin{aligned}\label{eqn:apply_self_mart_prop_BDS_FO_v1}
    \opn{ \left( \sum_{t=1}^T  w_t \phi_t ^\top \right) \bar V_T^{-1/2}}^2 &\leq 4\sigma^2\log \left(\frac{\det(\bar V_T)}{\det (\lambda I_{d_\phi})}\right) + 8\sigma^2 \log\left(\frac{3 \cdot 5^{d_x}}{\delta}\right).
\end{aligned}
\end{equation}
Next, under the event that persistency of excitation holds, we have $V_T \succeq T \lambda_{\min}\left(\Gamma_{\phi}^{(\tau-1)}\right)/8$. Then, setting $\lambda = T \lambda_{\min}\left(\Gamma_{\phi}^{(\tau-1)}\right)/8$ guarantees that
\begin{align}
    V_T^{-1} \preceq 2 (V_T + \lambda I_{d_\phi} )^{-1}.
\end{align}
Hence, under the event that persistency of excitation holds, with probability at least $1-\delta/3$, we have

\begin{equation}
\begin{aligned}\label{eqn:apply_self_mart_prop_BDS_FO_v2}
     \opn{ \left( \sum_{t=1}^T  w_t \phi_t ^\top \right) V_T^{-1/2}}^2 & \leq 2\opn{ \left( \sum_{t=1}^T  w_t \phi_t ^\top \right) \bar V_T^{-1/2}}^2, \\
    &\leq 8\sigma^2\log \left(\frac{\det(\bar V_T)}{\det (\lambda I_{d_\phi})}\right) + 16\sigma^2 \log\left(\frac{3 \cdot 5^{d_x}}{\delta}\right).
\end{aligned}
\end{equation}
It remains to upper bound the log-determinant term. For this purpose, we use the same arguments as used in the proof of Theorem 5.1 in \cite{ziemann2023tutorial}. First, we use Lemma 5.2 in \cite{ziemann2023tutorial} (matrix Markov's inequality) to obtain:
\begin{align}
    V_T \preceq \frac{3d_{\phi}}{\delta} \sum_{t=1}^T \E[\phi_t \phi_t^\T] = \frac{3d_u(d_u+1) d_x}{\delta} \sum_{t=1}^T \Gamma_\phi^{(t)} \preceq \frac{3d_u(d_u+1) d_x}{\delta} T \Gamma_\phi^{(T)},
\end{align}
with probability at least $1-\delta/3$. To proceed, following the proof of Theorem 5.1 in \cite{ziemann2023tutorial}, when persistence of excitation holds, we have
\begin{align}
    \frac{\det(V_T + \lambda I_{d_\phi})}{\det (\lambda I_{d_\phi})} &\leq \det \left( I_{d_u(d_u+1) d_x} + 24 \frac{d_u(d_u+1) d_x}{\delta \lambda_{\min}\left(\Gamma_{\phi}^{(\tau-1)}\right)}  \Gamma_\phi^{(T)} \right), \nn \\
    &\leq \left(25 \frac{d_u(d_u+1) d_x}{\delta}\right)^{d_u(d_u+1) d_x} \det\left(\Gamma_\phi^{(T)}/\lambda_{\min}\left(\Gamma_{\phi}^{(\tau-1)}\right)\right).
\end{align}
Plugging this back into \eqref{eqn:apply_self_mart_prop_BDS_FO_v2}, and union bounding, we get:
\begin{equation}
\begin{aligned}\label{eqn:apply_self_mart_prop_BDS_FO_final}
     &\opn{ \left( \sum_{t=1}^T  w_t \phi_t ^\top \right) V_T^{-1/2}}^2 \\ 
    &\qquad\leq 8\sigma^2\log \left(\frac{\det(\bar V_T)}{\det (\lambda I_{d_\phi})}\right) + 16\sigma^2 \log\left(\frac{3 \cdot 5^{d_x}}{\delta}\right), \\
    &\qquad\leq 8 \sigma^2 \left( d_u(d_u+1) d_x \log \left(25 \frac{d_u(d_u+1) d_x}{\delta}\right)  + \log \left( \det\left(\frac{\Gamma_\phi^{(T)}}{\lambda_{\min}\left(\Gamma_{\phi}^{(\tau-1)}\right)}\right)\right)\right) \\
    &\qquad + 16 \sigma^2 \left( d_x \log(5) + \log(3/\delta) \right),
\end{aligned}
\end{equation}
with probability at least $1-\delta$. Taking the square root on both sides gives us the statement of the Proposition~\ref{prop:self_normalized_bds_fo_app}. This completes the proof.

\end{proof} \section{Proofs of Main Results in Section~\ref{sec:sysid}}\label{app sysid proofs}

\subsection{Proof of Theorem~\ref{thm:main_LDS_BO}}\label{app:LDS-BO}
Theorem~\ref{thm:main_LDS_BO} follows from combining Propositions~\ref{prop:persistence_LDS_BO_app}, \ref{prop:truncation_bias_lds_bo_app}, and \ref{prop:self_normalized_lds_bo_app}. Specifically, define the events:
\begin{align*}
    \Ecal_1 &:= \Bigg\{ \lambda_{\min}\left( \sum_{t=\tau}^{T} \phi_t^\lds \phi_t^{\lds\T} \right) \geq (T-\tau+1)/4 \Bigg\}, \\
    \Ecal_2 &:=  \Bigg\{ \left\Vert \sum_{t=\tau}^T \phi_t z_t^\top \right\Vert_\op \\
    &\le  \frac{c_2 \kappa \rho^{\tau-1} }{1-\rho}\sqrt{(T-\tau+1)d_u(d_u+1)(\tau-1)\left( \log\left(\frac{e}{\delta}\right) + d_y + d_u(d_u+1)(\tau-1) \right)} \Bigg\}, \\
    \Ecal_3 &:= \Bigg \{ \opn{ \left( \sum_{t=\tau}^T  \eta_t \phi_t ^\top \right) \left(\sum_{t=\tau}^{T} \phi_{t} \phi_{t}^\top\right)^{-1/2}} \\ 
    &\leq c_4\sigma  \sqrt{d_u(d_u+1)(\tau-1) \log(2) + 2d_y\log(5) + 2\log\left(\frac{\tau}{\delta}\right)}, \Bigg\}
\end{align*}
Recalling the estimation error decompositions \eqref{eqn:estimation_error} and \eqref{eqn:estimation_error_decomposition}, we see that when the event $\cE_1 \cap \cE_2 \cap \cE_3$ holds, then the upper bound on the estimation error presented in Theorem \ref{thm:main_LDS_BO} follows. Now, we remark that by union bound, we have 
\begin{align*}
    \PP(\cE_1 \cap \cE_2 \cap \cE_3) = 1 - \PP(\cE_1^c \cup \cE_2^c \cup \cE_3^c) \ge 1 -  \PP(\cE_1^c) - \PP ( \cE_2^c) - \PP(\cE_3^c). 
\end{align*}
Thus, using Propositions~\ref{prop:persistence_LDS_BO_app}, \ref{prop:truncation_bias_lds_bo_app}, and \ref{prop:self_normalized_lds_bo_app}, we obtain  $ \PP(\cE_1 \cap \cE_2 \cap \cE_3) \le 3 \delta$, provided the conditions in Theorem \ref{thm:main_LDS_BO} hold. This concludes the proof.

\subsection{Proof of Theorem~\ref{thm:main_BDS_FO}}\label{app:BDS-FO}
Theorem~\ref{thm:main_BDS_FO} follows from combining Propositions~\ref{prop:persistence_BDS_FO_app} and \ref{prop:self_normalized_bds_fo_app}.

\subsection{Proof of Theorem~\ref{thm:main_BDS_PO}}\label{app:BDS-PO}
Theorem~\ref{thm:main_BDS_PO} follows from combining Proposition~\ref{prop:persistence_BDS_PO_app}, \ref{prop:truncation_bias_bds_po_app}, and \ref{prop:self_normalized_bds_po_app}.
 \section{Supporting Lemmas}\label{app:supporting}
In this Appendix, we provide a list of auxiliary lemmas that will be useful to derive our main results.

\subsection{Stability of Bilinear Dynamical Systems} \label{app:stability of BDS}

In this Appendix, we present some results that concerns the stability of bilinear system in the sense of Definition \ref{def:stability}. In Lemma \ref{lem:stability notion}, we present a condition under which a bilinear system satisfies uniform stability. In Lemma \ref{lem:upper bound stability}, we present a bound on $\Vert F_t \Vert_\op$ which follows under our stability assumption.

\begin{lemma}[Stability in BDS]\label{lem:stability notion}
    Let $\cU$ be a bounded and non-empty subset of $\RR^{d_u}$ such that $\rho(\cU \circ A) < 1$, then for all $ \rho \in (\rho(\cU \circ A), 1)$, there exists $\kappa \ge 1$ such that the  system is $(\cU , \kappa, \rho)$-uniformly stable.
\end{lemma}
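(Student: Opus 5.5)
The plan is to unpack the definition of the joint spectral radius and turn the limit defining $\rho(\cU\circ A)$ into a uniform bound on products of every length. Write $\Mcal := \cU\circ A$. Since $\cU$ is bounded and $u\mapsto u\circ A$ is affine, $\Mcal$ is a bounded set of matrices. Define $M := \sup_{N\in\Mcal}\opn{N} < \infty$ and
\[
s_k := \sup_{N_1,\dots,N_k\in\Mcal}\opn{N_1\cdots N_k}.
\]
By submultiplicativity of the operator norm, $s_{k+\ell}\le s_k s_\ell$, so $\{s_k\}$ is submultiplicative. Fekete's lemma then shows that the limit in the definition of $\rho(\Mcal)$ exists and equals $\inf_k s_k^{1/k}$.

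Next, fix $\rho\in(\rho(\Mcal),1)$. Because $\lim_k s_k^{1/k}=\rho(\Mcal)<\rho$, there is an index $K$ such that $s_k\le \rho^k$ for all $k\ge K$. In fact the infimum characterization alone already gives a single $K$ with $s_K\le \rho^K$, which is all that is needed below.

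For an arbitrary product of length $k$, write $k = qK + r$ with $0\le r<K$. Submultiplicativity gives
\[
s_k\le s_K^{q}\,s_r\le \rho^{qK}M^r ,
\]
using the convention $s_0=1$. Therefore
\[
\frac{s_k}{\rho^k}\le \Big(\frac{M}{\rho}\Big)^r \le \max_{0\le r<K}\Big(\frac{M}{\rho}\Big)^r .
\]
The right-hand side is finite and independent of $k$. Set
\[
\kappa := \max\Big\{1,\ \max_{0\le r<K}(M/\rho)^r\Big\},
\]
so that $\kappa(\Mcal,\rho)\le\kappa$ and $\kappa\ge1$. Both conditions (i) and (ii) of Definition~\ref{def:stability} then hold, and the system is $(\cU,\kappa,\rho)$-uniformly stable.

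The only delicate point is to get a bound that is uniform over all products, not just along a subsequence. This is exactly what the decomposition $k=qK+r$ handles, together with the finiteness of $M$, which comes from the boundedness of $\cU$. No other step presents an obstacle.
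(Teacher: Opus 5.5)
Your proof is correct, and it reaches the conclusion by a different route from the paper. The paper takes the existence of the limit defining $\rho(\cU\circ A)$ as given. It uses that limit to get an index $k_0$ with $s_k<\rho^k$ for every $k\ge k_0$, which is your first observation. It then bounds the finitely many short products $k<k_0$ directly, using boundedness of $\cU\circ A$, and absorbs them into $\kappa$.

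You instead need only a single good index $K$, obtained from Fekete's infimum characterization. You then propagate the decay to all lengths through $k=qK+r$ and submultiplicativity, which yields the explicit constant $\kappa=\max\{1,(M/\rho)^{K-1}\}$.

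Your version has three advantages:
\begin{itemize}
\item It proves, rather than assumes, that the limit exists.
\item It works unchanged when $\rho(\cU\circ A)=0$. In that case the paper's relative-error step, which requires $|s_k^{1/k}-\rho(\cU\circ A)|<\epsilon\,\rho(\cU\circ A)$, cannot hold.
\item It normalizes $\kappa$ correctly, as a bound on $s_k/\rho^k$. The paper's displayed $\kappa$ is built from $\|M_1\cdots M_k\|^{1/k}/\rho$, which for $k<k_0$ only gives $\|M_1\cdots M_k\|\le\kappa^k\rho^k$. It should instead read $\max\{1,\max_{k<k_0}s_k/\rho^k\}$, which is finite by the same boundedness.
\end{itemize}
Once that correction is made, the paper's route is marginally more direct, since it needs no Euclidean-division step.
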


\begin{proof}
    Let $\rho \in (\rho(\cU \circ \vA), 1)$. We recall that the joint spectral radius of $\cU \circ \vA$ is defined as follows: 
    \begin{align}
        \rho(\cU \circ \vA ) = \lim_{k \to \infty} \sup_{\vM_1, \dots, \vM_k \in \cU \circ \vA} \Vert \vM_1 \vM_2 \cdots \vM_k \Vert_\op^{1/k}. 
    \end{align}
    Since by assumption the limit exists, we have by definition that, 
    \begin{align*}
         \forall \; \epsilon > 0, \; \exists \; k_0 \ge 1: \;  \forall \; k \ge k_0, \quad  \left\vert \sup_{\vM_1, \dots, \vM_k \in \cU \circ \vA }\Vert \vM_1 \vM_2 \cdots \vM_k \Vert_\op^{1/k} - \rho(\cU \circ \vA) \right\vert < \epsilon \rho(\cU \circ \vA).  
ˇ    \end{align*}
    Thus choosing $\epsilon > 0$, such that $\rho \ge  (1+\epsilon) \rho(\cU \circ \vA)$, we can find $k_0$ such that, \begin{align*}
        \forall \; k \ge k_0, \qquad \sup_{\vM_1, \dots, \vM_k \in \cU \circ \vA }\Vert \vM_1 \vM_2 \cdots \vM_k \Vert_\op <  \rho^k.
    \end{align*}
    Now, we can further define 
    \begin{align}
        \kappa = \max\left\lbrace 1, \sup_{1 \le k \le k_0}\sup_{ \vM_1, \dots, \vM_k \in \cU \circ \vA }\frac{\Vert \vM_1 \vM_2 \cdots \vM_k \Vert_\op^{1/k}}{\rho}  \right\rbrace.
    \end{align}
    We note that $\kappa$ is well defined because $\cU \circ \vA$ is a bounded set of matrices since $\cU$ is bounded. Indeed, for all $\vM \in \cU \circ \vA$, we have $\Vert \vM\Vert_\op \le \max\lbrace 1, \sup_{\vu \in \cU }\Vert \vu \Vert_{\ell_\infty}\rbrace (\Vert \vA_0 \Vert_\op  + \dots + \Vert \vA_{d_u} \Vert_\op )$. This concludes the proof.
\end{proof}

\begin{lemma}\label{lem:upper bound stability}
Suppose Assumption \ref{assump:stability} holds and 
let $(\vu_t)_{t \ge 0}$ be a sequence of inputs taking values in $\sqrt{d_u} \cdot\cS^{d_u-1}$, and let $F_t =  \begin{bmatrix}
     I_{d_y} &
    C   &
    C \,  (u_{t-1}\circ A )  &
\cdots &
    C \prod_{\ell = 1}^{\tau-2} (u_{t-\ell}\circ A ) 
 \end{bmatrix}$. We have 
\begin{align}\label{eq:stability first bound}
    \forall \; t \ge 0, \qquad \left\Vert \prod_{\ell = 0}^t (\vu_\ell \circ \vA )\right\Vert_\op \le \kappa \rho^{t+1}.
\end{align}
Consequently, we have \begin{align}\label{eq:stability second bound}
    \left\Vert F_t  \right\Vert_\op \le   1 + \frac{\kappa \Vert \vC \Vert_\op  }{1- \rho}
\end{align}
\end{lemma}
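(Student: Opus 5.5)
The first bound \eqref{eq:stability first bound} follows directly from Assumption~\ref{assump:stability} and Definition~\ref{def:stability}. Since each $\vu_\ell \in \sqrt{d_u}\cdot\cS^{d_u-1}$, every factor $\vu_\ell\circ \vA$ lies in the set $\Mcal := (\sqrt{d_u}\cdot\cS^{d_u-1})\circ\vA$. Uniform stability gives $\kappa(\Mcal,\rho)\le\kappa$. By the definition of $\kappa(\Mcal,\rho)$ as a supremum over all $k\ge1$ and all $M_1,\dots,M_k\in\Mcal$, we have $\norm{M_1\cdots M_k}_\op\le \kappa\rho^k$. Taking $k=t+1$ and $M_i = \vu_{i-1}\circ\vA$, in whatever order the product is written, yields \eqref{eq:stability first bound}. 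No limiting argument is needed, because the supremum definition of $\kappa(\cdot,\rho)$ already covers every finite length.

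For \eqref{eq:stability second bound}, I bound the operator norm of the block row $F_t$ by the sum of the operator norms of its blocks. This holds because $\norm{[X_0~X_1~\cdots]}_\op \le \sum_j \norm{X_j}_\op$, which follows from $F_t \omega = \sum_j X_j\omega_j$ together with the triangle inequality and $\tn{\omega_j}\le\tn{\omega}$. The blocks and their bounds are:
\begin{itemize}
    \item the first block $I_{d_y}$, with norm $1$;
    \item the block $C$, with norm $\norm{\vC}_\op$;
    \item for $j=1,\dots,\tau-2$, the block $\vC\prod_{\ell=1}^{j}(u_{t-\ell}\circ\vA)$. Submultiplicativity and \eqref{eq:stability first bound} bound it by $\norm{\vC}_\op\kappa\rho^{j}$.
\end{itemize}
For the $C$ block, I use $\kappa\ge1$ and write $\norm{\vC}_\op\le \norm{\vC}_\op\kappa\rho^0$. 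Summing gives
\[
\norm{F_t}_\op \le 1+\norm{\vC}_\op\kappa\sum_{j=0}^{\tau-2}\rho^j \le 1+\frac{\kappa\norm{\vC}_\op}{1-\rho}.
\]

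There is no real obstacle. The one thing to check carefully is that the constant $\kappa$ in Definition~\ref{def:stability} is uniform over product length and over the choice of matrices. That uniformity is exactly what makes the geometric series bound valid for every $t$ and every input sequence in the sphere.
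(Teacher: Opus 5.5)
Your proof is correct and follows the paper's argument. The first bound comes directly from the supremum definition of $\kappa(\Ucal\circ\Acal,\rho)$ in Definition~\ref{def:stability}, and the second from bounding $\|F_t\|_{\op}$ by the sum of its block norms and a geometric series. Your explicit use of $\kappa\ge 1$ for the bare $C$ block is the step the paper leaves implicit when it bounds the empty product by $\kappa\rho^{0}$.
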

\begin{proof}
The first inequality \eqref{eq:stability first bound} holds immediately thanks to stability. The second inequality \eqref{eq:stability second bound} is an immediate consequence of \eqref{eq:stability first bound}. Indeed, we have 
 
\begin{align*}
        \Vert \vF_t \Vert_\op \le 1 + \Vert \vC\Vert_\op \sum_{\ell = 1}^{\tau-1} \left\Vert \prod_{i = 1}^{\ell -1} (\vu_{t- i} \circ A) \right\Vert_\op \le 1 + \Vert \vC\Vert_\op \sum_{\ell = 1}^{\tau-1} \kappa \rho^{\ell - 1} \le 1 + \frac{\kappa \Vert \vC \Vert_\op}{1- \rho}.  
\end{align*} 
This concluded the proof.
\end{proof}

\subsection{Properties of Hypercontractive Random Vectors} \label{app:moment_bounds}
\begin{lemma}[Bounded moment (weighted)]\label{lemma:hyper_to_moment_app}
     Let $Q, R \in \RR^{d_u \times d}$, and $\gamma >0$. Let $u$ be an isotropic and $(4,2,\gamma)$-hypercontractive random vector taking values in $\RR^{d_u}$. Then, we have
    \begin{align}
        \EE[(u^\top Q Q^\top u)(u^\top R R^\top u)] \leq \gamma \norm{Q}_F^2 \norm{R}_F^2 
    \end{align}
\end{lemma}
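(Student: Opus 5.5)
Write $M := QQ^\top$ and $N := RR^\top$; both are positive semidefinite. The quantity to bound is $\EE[(u^\top M u)(u^\top N u)]$. My plan is to reduce everything to scalar fourth moments of one-dimensional projections $v^\top u$, since these are what hypercontractivity (Definition~\ref{def:hypercont}) controls. Isotropy then turns second moments into squared Euclidean norms.

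First I expand both quadratic forms in columns. Let $q_1,\dots,q_d$ be the columns of $Q$ and $r_1,\dots,r_d$ the columns of $R$. Then
\begin{align*}
u^\top Q Q^\top u = \sum_{i=1}^d (q_i^\top u)^2, \qquad u^\top R R^\top u = \sum_{j=1}^d (r_j^\top u)^2 .
\end{align*}
Multiplying and taking expectations gives
\begin{align*}
\EE[(u^\top M u)(u^\top N u)] = \sum_{i,j} \EE\left[(q_i^\top u)^2 (r_j^\top u)^2\right].
\end{align*}
For each pair $(i,j)$, Cauchy--Schwarz gives
\begin{align*}
\EE[(q_i^\top u)^2(r_j^\top u)^2] \le \EE[(q_i^\top u)^4]^{1/2}\,\EE[(r_j^\top u)^4]^{1/2}.
\end{align*}
By $(4,2,\gamma)$-hypercontractivity and isotropy,
\begin{align*}
\EE[(q_i^\top u)^4] \le \gamma\,\EE[(q_i^\top u)^2]^2 = \gamma \tn{q_i}^4 ,
\end{align*}
and the same bound holds with $r_j$ in place of $q_i$. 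Hence each term is at most $\gamma \tn{q_i}^2\tn{r_j}^2$.

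Summing over $i,j$ then gives
\begin{align*}
\sum_{i,j} \gamma \tn{q_i}^2 \tn{r_j}^2 = \gamma \left(\sum_i \tn{q_i}^2\right)\left(\sum_j \tn{r_j}^2\right) = \gamma\,\tf{Q}^2\,\tf{R}^2 ,
\end{align*}
which is exactly the claim.

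There is no real obstacle here. The only point needing care is to expand into columns before applying Cauchy--Schwarz, rather than applying it to the full quadratic forms. Doing it the other way would require a fourth-moment bound on $u^\top M u$, and hypercontractivity does not directly provide one: it only controls one-dimensional projections.
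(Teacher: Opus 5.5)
Your proposal is correct and follows the paper's proof essentially step for step. Like the paper, you expand both quadratic forms over the columns $q_i$, $r_j$, apply Cauchy--Schwarz to each term $\EE[(q_i^\top u)^2(r_j^\top u)^2]$, use $(4,2,\gamma)$-hypercontractivity with isotropy to bound each term by $\gamma\tn{q_i}^2\tn{r_j}^2$, and sum to get $\gamma\tf{Q}^2\tf{R}^2$.
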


\begin{proof}
    Note that since $u$ is isotropic and $(4,2, \gamma)$-hypercontractive, therefore, for any $Q = [q_1~~q_2~~ \cdots~~q_d] \in \RR^{d_u \times d}$ and $R = [r_1~~r_2~~ \cdots~~r_d] \in \RR^{d_u \times d}$, we have 
\begin{align*}
    \EE[(u^\top Q Q^\top u)(u^\top R R^\top u)] & = \EE\left[\left( \sum_{i=1}^d (u^\top q_i)^2 \right) \left( \sum_{i=1}^d (u^\top r_i)^2 \right)  \right], \\
    & =  \sum_{1 \le i,j \le d} \EE\left[(u^\top q_i)^2(u^\top r_j)^2  \right], \\
    & \leqsym{a} \sum_{1 \le i,j \le d}\sqrt{\EE\left[ (u^\top q_i)^4   \right]} \sqrt{\EE\left[  (u^\top r_j)^4  \right]}, \\
    & \leqsym{b} \sum_{1 \le i,j \le d} \sqrt{\gamma} \EE\left[ (u^\top q_i)^2   \right] \sqrt{\gamma}\EE\left[  (u^\top r_j)^2  \right], \\
    & \le \gamma \sum_{1 \le i,j \le d}  \norm{q_i}_{\ell_2}^2  \norm{r_j}_{\ell_2}^2,  \\
    & \le \gamma  \norm{Q}_F^2 \norm{R}_F^2 ,  
\end{align*}
where we obtain (a) by using Cauchy-Schwarz inequality, and (b) from the $(4,2, \gamma)$-hypercontractivity assumption. This completes the proof. 
\end{proof}

\subsection{Proof of Lemma \ref{lem:one sided net} (Covering Argument)}\label{app:covering}

\begin{proof}
    Let $\mathcal{N}$ be a maximal $\epsilon$-separated subset of
    $\mathcal{S}^{d-1}$. Then $\mathcal{N}$ is an $\epsilon$-net, and the
    standard volumetric estimate yields
    \[
        |\mathcal{N}|
        \leq
        \left(1+\frac{2}{\epsilon}\right)^d.
    \]
    Fix a realization of $W$ such that $\lambda_{\min}(W)<\nu$. Since $W$
    is symmetric, there exists $u\in\mathcal{S}^{d-1}$ such that
    \[
        u^\top W u=\lambda_{\min}(W)<\nu.
    \]
    By the $\epsilon$-net property, we may choose $x\in\mathcal{N}$ such
    that $\|x-u\|_2\leq\epsilon$. Using
    \[
        x^\top W x-u^\top W u
        =(x-u)^\top W x+u^\top W(x-u),
    \]
    together with $\|x\|_2=\|u\|_2=1$, we obtain
    \begin{align*}
        \left|x^\top W x-u^\top W u\right|
        &\leq
        \|x-u\|_2\|Wx\|_2
        +\|u\|_2\|W(x-u)\|_2 \\
        &\leq
        2\epsilon\|W\|_{\op}.
    \end{align*}
    It follows that
    \[
        x^\top W x
        \leq
        u^\top W u+2\epsilon\|W\|_{\op}
        <
        \nu+2\epsilon\|W\|_{\op}.
    \]
    Consequently,
    \[
        \left\{\lambda_{\min}(W)<\nu\right\}
        \subseteq
        \bigcup_{x\in\mathcal{N}}
        \left\{
            x^\top W x
            <
            \nu+2\epsilon\|W\|_{\op}
        \right\}.
    \]
    Applying the union bound and the cardinality estimate for
    $\mathcal{N}$ gives
    \begin{align*}
        \PP\left(\lambda_{\min}(W)<\nu\right)
        &\leq
        \sum_{x\in\mathcal{N}}
        \PP\left(
            x^\top W x
            <
            \nu+2\epsilon\|W\|_{\op}
        \right) \\
        &\leq
        |\mathcal{N}|
        \max_{x\in\mathcal{N}}
        \PP\left(
            x^\top W x
            <
            \nu+2\epsilon\|W\|_{\op}
        \right) \\
        &\leq
        \left(1+\frac{2}{\epsilon}\right)^d
        \max_{x\in\mathcal{N}}
        \PP\left(
            x^\top W x
            <
            \nu+2\epsilon\|W\|_{\op}
        \right),
    \end{align*}
    as claimed.
\end{proof}
\subsection{Miscellaneous Lemmas \& Concentration Tools}

In this Appendix we present a set of lemmas and concentration inequalities that we persistently make use of in our proofs. First, we provide a version of Freedman's inequality which can also be deduced from Azuma-Hoeffding's inequality.  
Finally, we formalize the trick of a net arguments in the following lemma which is a classical argument that can be found in \cite{vershynin2010introduction}: 
\begin{lemma}[$\epsilon$-net argument]\label{lem:net argument}
    Let $W$ be a  $d_1 \times d_2$ random matrix and $\varepsilon \in (0,1/2)$. Let $\Mcal$  (resp. $\Ncal$) be an $\varepsilon$-net of $(\cS^{d_1 - 1}, \Vert \cdot \Vert_{\ell_2})$ (resp. $(\cS^{d_2 - 1}, \Vert \cdot \Vert_{\ell_2})$). For all $\rho>0$, it holds that  
    \begin{align*}
        \PP\left( \Vert W \Vert_\op  > \frac{\rho}{1-2\varepsilon}\right) \le \left(1 + \frac{2}{\varepsilon}\right)^{d_1+d_2}\max_{x \in \Mcal, y \in \Ncal} \PP( x^\T W y > \rho).
    \end{align*}
\end{lemma}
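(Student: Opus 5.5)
The statement is the two-sided $\epsilon$-net lemma for a bilinear form. We reduce $\Vert W\Vert_\op$ to a maximum of $x^\T W y$ over the finite nets $\Mcal\times\Ncal$, and then apply a union bound. Both nets may be taken of cardinality at most $(1+2/\varepsilon)^{d_1}$ and $(1+2/\varepsilon)^{d_2}$ by the standard volumetric estimate, as in the proof of Lemma~\ref{lem:one sided net} in Appendix~\ref{app:covering}. This gives the prefactor $(1+2/\varepsilon)^{d_1+d_2}$.

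The deterministic step is to show $\Vert W\Vert_\op \le \frac{1}{1-2\varepsilon}\max_{x\in\Mcal,y\in\Ncal} x^\T W y$ for any fixed matrix $W$. Pick unit vectors $u,v$ attaining $\Vert W\Vert_\op = u^\T W v$, which exist by compactness. Choose $x\in\Mcal$ and $y\in\Ncal$ with $\tn{u-x}\le\varepsilon$ and $\tn{v-y}\le\varepsilon$. Then write
\[
u^\T W v - x^\T W y = (u-x)^\T W v + x^\T W (v-y),
\]
so each term is at most $\varepsilon\Vert W\Vert_\op$ in absolute value, using $\tn{x}=\tn{v}=1$. Hence $x^\T W y \ge (1-2\varepsilon)\Vert W\Vert_\op$, and $\varepsilon<1/2$ keeps the factor positive. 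The maximum over the nets is therefore nonnegative whenever $W\neq 0$, and the sign of the bilinear form causes no issue because we picked the maximizing pair. If one prefers to bound $\max|x^\T W y|$, symmetry of the nets under $x\mapsto -x$ is not guaranteed, but it is not needed here.

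It then follows that $\{\Vert W\Vert_\op > \rho/(1-2\varepsilon)\}\subseteq \bigcup_{x\in\Mcal,y\in\Ncal}\{x^\T W y>\rho\}$. A union bound yields $\PP(\cdot)\le |\Mcal||\Ncal|\max_{x,y}\PP(x^\T W y>\rho)$, and inserting the cardinality bounds gives the claim.

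The only point needing care is the cardinality bound. The lemma is stated for arbitrary $\varepsilon$-nets $\Mcal,\Ncal$, so we interpret it, as in Lemma~\ref{lem:two-sided-net-nonsym}, with nets of minimal or maximal-separated cardinality; otherwise we replace $(1+2/\varepsilon)^{d_1+d_2}$ by $|\Mcal||\Ncal|$. Beyond that bookkeeping, the approximation inequality is routine and there is no genuine obstacle.
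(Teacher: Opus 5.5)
Your proof is correct and is the standard argument. The paper omits this proof as standard, citing Vershynin. Your decomposition $u^\top W v - x^\top W y = (u-x)^\top W v + x^\top W(v-y)$, followed by a union bound over $\Mcal\times\Ncal$, is the bilinear analogue of the paper's proof of Lemma~\ref{lem:one sided net}. Your cardinality caveat is apt: the prefactor $(1+2/\varepsilon)^{d_1+d_2}$ requires nets of minimal (or maximal-separated) cardinality, which is how the lemma is used in Proposition~\ref{prop:truncation_bias_bds_po_app}.
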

We omit the proofs of these Lemmas as they are standard results within the literature. 

\end{document}